 \gdef\xxxmark{%
   \expandafter\ifx\csname @mpargs\endcsname\relax 
     \expandafter\ifx\csname @captype\endcsname\relax 
       \marginpar{xxx}
     \else
       xxx 
     \fi
   \else
     xxx 
   \fi}
 \gdef\xxx{\@ifnextchar[\xxx@lab\xxx@nolab}
 \long\gdef\xxx@lab[#1]#2{{\bf [\xxxmark #2 ---{\sc #1}]}}
 \long\gdef\xxx@nolab#1{{\bf [\xxxmark #1]}}
\newcommand{\abs}[1]{|#1|}
\newcommand{\norm}[2]{\lVert#2\rVert_{#1}}
\newcommand{\wh}{\widehat}
\newcommand{\twid}[1]{\tilde{#1}}
\newcommand{\M}{{\mathcal M}}
\DeclareMathOperator{\supp}{supp}
\def\R{\mathbb{R}}
\def\C{\mathbb{C}}
\def\F{\mathcal{F}}
\def\B{\mathbb{B}}
\def\eps{\epsilon}
\newcommand{\expect}{{\bf \mbox{\bf E}}}
\newcommand{\prob}{{\bf \mbox{\bf Pr}}}
\definecolor{gray}{rgb}{0.5,0.5,0.5}
\newcommand{\e}{{\epsilon}}
\newcommand{\A}{{\mathcal{A}}}
\newcommand{\E}{{\mathcal{E}}}
\newtheorem{theorem}{Theorem}[section]
\newtheorem{lemma}[theorem]{Lemma}
\newtheorem{claim}[theorem]{Claim}
\newtheorem{definition}[theorem]{Definition}
\newtheorem{remark}[theorem]{Remark}
\newenvironment{proofof}[1]{\noindent{\bf Proof of #1:}}{$\qed$\par}
\newenvironment{proofsketch}{{\sc{Proof Outline:}}}{$\qed$\par}
\DeclareMathOperator{\err}{Err}
\newcommand{\poly}{\text{poly}}
\newcommand{\fc}{F}
\newcommand{\gl}{\mathcal{M}_{odd}}
\renewcommand{\H}{\mathcal{W}}
\newcommand{\h}{\mathbf{w}}
\newcommand{\quant}{\text{quant}}
\begin{document}

\begin{titlepage}
\title{Sample Efficient Estimation and Recovery in Sparse FFT via Isolation on Average}
\author{Michael Kapralov\thanks{School of Computer and Communication Sciences, EPFL, Lausanne, Switzerland. Email: {\tt michael.kapralov@epfl.ch} }}

\maketitle

\begin{abstract}
The problem of computing the Fourier Transform of a signal whose spectrum is dominated by a small number $k$ of frequencies  quickly and using a small number of samples of the signal in time domain (the Sparse FFT problem) has received significant attention recently. It is known how to approximately compute the $k$-sparse Fourier transform in $\approx k\log^2 n$  time [Hassanieh et al'STOC'12], or using the optimal number $O(k\log n)$ of samples [Indyk et al'FOCS'14] in time domain, or come within $(\log\log n)^{O(1)}$ factors of both these bounds simultaneously, but no algorithm achieving the optimal $O(k\log n)$ bound in sublinear time is known.

At a high level, sublinear time Sparse FFT algorithms operate by `hashing' the spectrum of the input signal into $\approx k$ `buckets', identifying frequencies that are `isolated' in their buckets, subtracting them from the signal and repeating until the entire signal is recovered. The notion of `isolation' in a `bucket', inspired by  applications of hashing in sparse recovery with arbitrary linear measurements, has been the main tool in the analysis of Fourier hashing schemes in the literature. However, Fourier hashing schemes,  which are implemented via filtering, tend to be `noisy' in the sense that a frequency that hashes into a bucket contributes a non-negligible amount to neighboring buckets. This leakage to  neighboring buckets makes identification and estimation challenging, and the standard analysis based on isolation becomes difficult to use without losing $\omega(1)$ factors in sample complexity.

In this paper we propose a new technique for analysing noisy hashing schemes that arise in Sparse FFT, which we refer to as {\em isolation on average}. We apply this technique to two problems in Sparse FFT: estimating the values of a list of frequencies using few samples and computing Sparse FFT itself, achieving sample-optimal results in $k\log^{O(1)} n$ time for both.  We feel that our approach will likely be of interest in designing Fourier sampling schemes for more general settings (e.g. model based Sparse FFT).
\end{abstract}

\thispagestyle{empty}
\end{titlepage}

\newcommand{\nsq}{{[n]}}

\newcommand{\ifshort}[1]{#1}
\newcommand{\iflong}[1]{}
\newcommand{\duallabel}[1]{\label{#1}}
\newcommand{\dualref}[1]{\ref{#1}}
\newcommand{\dualeqref}[1]{\eqref{#1}}

\renewcommand{\ifshort}[1]{}
\renewcommand{\iflong}[1]{#1}
\renewcommand{\duallabel}[1]{\label{#1-extended}}
\renewcommand{\dualref}[1]{\ref{#1-extended}}
\renewcommand{\dualeqref}[1]{\eqref{#1-extended}}

\section{Introduction}

The Discrete Fourier Transform (DFT) is a fundamental computational primitive with numerous applications in areas such as digital signal processing, medical imaging and data analysis as a whole. The fastest known algorithm for computing the Discrete Fourier Transform of a signal of length $n$ is the FFT algorithm, designed by Cooley and Tukey in 1965. The efficiency of FFT, which runs in time $O(n\log n)$ on any signal of length $n$, has contributed significantly to its popularity as a computational primitive, making FFT one of the top 10 most important algorithms of the 20th century~\cite{citeulike:6838680}. However, computational efficiency of FFT is not the only reason why the Fourier transform emerges in many applications: in signal processing the Fourier basis is often a convenient way of representing signals since it concentrates their energy on a few components, allowing compression (which is the rationale behind  image and video compression schemes such as JPEG and MPEG), and in medical imaging applications such as MRI the Fourier transform captures the physics of the measurement process (the problem of reconstructing an image from MRI data is exactly the problem of reconstructing a signal $x$ from Fourier measurements of $x$).  While FFT works for worst case signals,  signals arising in practice often exhibit structure that can be exploited to speed up the computation of the Fourier transform. For example, it is often the case that most of the energy of these signals is concentrated on  a small number of components in Fourier domain. In other words, the signals that arise in applications are often {\em sparse} (have a small number of nonzeros) or {\em approximately sparse} (can be well approximated by a small number of dominant coefficients) in the Fourier domain.  This motivates the question of (approximately) computing the Fourier transform of a signal that is (approximately) sparse in Fourier domain using {\em few samples of the signal in time domain (i.e. with small sample complexity)} and {\em small runtime}. We note that while runtime is a natural parameter to optimize, sample complexity is at least as important in applications such as medical imaging, where sample complexity governs the {\em measurement} complexity of the imaging process.

In this paper we consider the problem of computing a sparse approximation to a signal $x\in \mathbb{C}^n$ given access to its Fourier transform $\wh{x}\in \mathbb{C}^n$, which is equivalent to the problem above (since the inverse Fourier transform only differs from the Fourier transform by a conjugation), but  leads to somewhat more compact notation. This problem  has been studied extensively. The seminal work of~\cite{CTao,RV} in {\em compressed sensing}  first showed that length $n$ signals with at most $k$ Fourier coefficients can be recovered using only $k \log^{O(1)} n$ samples in time domain. The recovery algorithms are based on linear programming and run in time polynomial in $n$.  A different line of research on the {\em Sparse Fourier Transform} (Sparse FFT), originating from computational complexity and learning theory, has  resulted in algorithms that use $k\log^{O(1)} n$ samples  and $k\log^{O(1)} n$ runtime (i.e. the runtime is {\em sublinear} in the length of the input signal). Many 
such algorithms have been proposed in the literature, including \cite{GL,KM,Man,GGIMS,AGS,GMS,Iw,Ak,HIKP,HIKP2,LWC,BCGLS,HAKI,pawar2013computing,heidersparse, IKP, IK14a,K16,PZ15,ChenKPS16}.  Nevertheless, despite significant progress that has recently been achieved, important gaps in our understanding of sample and time efficient recovery from Fourier measurements remain. We address some of these gaps in this work.

The main contribution of this work is a new technique for designing and analyzing sample efficient sublinear time Sparse FFT algorithms. We refer to this technique as {\em isolation on average}. We apply our technique to two problems in the area of Sparse Fourier Transform computation, namely estimation and recovery with Fourier measurements.

\paragraph{Our results: estimation}   In the first problem we are given a subset $S\subseteq [n]$ of locations in the time domain and are asked to estimate $x_S$ from a few values of $\wh{x}$. Formally, we would like the algorithm to output a signal $x'$ with $\supp (x')\subseteq S$ such that 
\begin{equation}
\label{e:l2l2-est}
\| x-x'\|_2^2 \le (1+\e) \|x_{[n]\setminus S}\|_2^2
\end{equation}
In other words, we would like to output an estimate $x'$ of $x$ that is correct up to the 'noise', i.e. elements outside of $S$ (to achieve~\eqref{e:l2l2-est}, it suffices to ensure that $\| (x-x')_S\|_2^2 \le \e \|x_{[n]\setminus S}\|_2^2$). Note that in some of the applications described above one often has a good prior on which coefficients of $x$ are the dominant ones, and a natural question is whether one can recover the values of $x_S$ quickly, using few samples, and in a noise robust manner, i.e. solve~\eqref{e:l2l2-est}.  Our main result on estimation is Algorithm~\ref{alg:estimate-efficient} (presented in Section~\ref{sec:est}) together with
\begin{theorem}\label{thm:main-estimate}
For every $\e\in (1/n, 1), \delta\in (0, 1/2)$, $x\in \C^n$ and every integer $k\geq 1$, any $S\subseteq \nsq$, $|S|=k$, if $||x||_\infty\leq R^*\cdot ||x_{\nsq\setminus S}||_2/\sqrt{k}, R^*=n^{O(1)}$, an invocation of \textsc{Estimate}$(\hat x, S, k, \e, R^*)$ (Algorithm~\ref{alg:estimate-efficient}) returns $\chi^*\in \C^n$ such that 
$$
||(x-\chi^*)_S||_2^2\leq \e\cdot ||x_{\nsq\setminus S}||_2^2
$$
using $O_\delta(\frac1{\e}k)$ samples and $O_\delta(\frac1{\e}k \log^{3+\delta} n)$ time with at least $4/5$ success probability. 
\end{theorem}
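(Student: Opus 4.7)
The plan is to obtain the estimate by hashing the spectrum into $B = \Theta(k/\e)$ buckets via a random permutation of the Fourier domain, then reading off the bucket value corresponding to each $i \in S$ and aggregating the $|S|$ resulting estimates. A single hashing does not suffice because the hashing filter leaks into neighboring buckets, and elements of $S$ may collide; so I would iterate the procedure in $T = O(\log R^*) = O(\log n)$ peeling phases, where each phase subtracts off accepted estimates and drives $\|x - \chi^{(t)}\|_\infty$ down geometrically.

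First I would formalize the hashing primitive: using a flat window function $G$ whose DFT $\wh{G}$ approximates the indicator of $[-n/(2B), n/(2B)]$ (with tail decay of the form $|G_j| \lesssim (j B / n)^{-\alpha}$ for a parameter $\alpha$), together with a random permutation $\pi(i) = \sigma i + b \pmod n$ of frequencies, one can compute bucket values $u_0, \ldots, u_{B-1}$ from $O(B \cdot \mathrm{supp}(G)/n)$ samples of $\wh{x}$, where $u_{h(i)}$ (with $h(i) = \lfloor \pi(i) \cdot B/n\rfloor$) is a phase-rotated estimate of $x_i$ plus collision noise. The standard per-bucket bound is $\expect_\pi \bigl[|u_{h(i)} - (\text{signal})_i|^2\bigr] \le C\|x\|_2^2/B + (\text{leakage})$.

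The crucial \emph{isolation on average} step is to sum this inequality over $i \in S$ rather than union-bounding it: the total noise energy deposited into the $|S|=k$ buckets occupied by $S$ is at most $(k/B)\cdot \|x_{\nsq\setminus S}\|_2^2 = \e \|x_{\nsq\setminus S}\|_2^2$, plus leakage and intra-$S$ collision terms. Thus the expected squared error summed over $S$ is already of the desired order from a single hashing, and only a constant number of repetitions is needed to convert this expected-error bound into the $4/5$ success probability via Markov. This is how the analysis avoids the $\log n$ factor that median-based boosting would incur in the sample complexity. To control the intra-$S$ collision term, I would have the algorithm accept only estimates whose magnitude exceeds the current $\ell_\infty$ threshold for the residual, so that coordinates where the residual is already small are never corrupted.

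The main obstacles I foresee are: (i) bounding filter leakage simultaneously across all $i \in S$ rather than one-at-a-time---since the filter $G$ necessarily has only polynomial frequency decay for efficient time-domain support, I need to choose $\mathrm{supp}(G) = \Theta(B \log^{2+\delta} n / \e)$ carefully so the total leakage into $S$ is $O(\e/T) \|x_{\nsq\setminus S}\|_2^2$ per phase and yields the claimed $\log^{3+\delta} n$ runtime after multiplying by the $\log n$ phases; (ii) maintaining the invariant $\|x - \chi^{(t)}\|_\infty \le R^* \cdot 2^{-t} \cdot \|x_{\nsq\setminus S}\|_2/\sqrt{k}$ through the peeling without paying a union-bound factor---I would handle this by re-randomizing the hash function each phase, using a quantile-of-magnitudes rule (rather than median-of-estimates) to decide which estimates to commit, and applying Markov on the phase's expected error contribution. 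Combining the contributions from the $T = O(\log R^*)$ phases via a telescoping argument gives the final error bound $\|(x-\chi^*)_S\|_2^2 \le \e \|x_{\nsq\setminus S}\|_2^2$ with the stated sample and time bounds.
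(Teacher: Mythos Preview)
Your proposal has a genuine gap in the sample-complexity accounting. You run $T=O(\log R^*)=O(\log n)$ phases, and you explicitly say you re-randomize the hash function each phase. Even with only a constant number of repetitions per phase and a constant-sharpness filter, that already costs $\Omega((k/\e)\log n)$ samples, not $O_\delta(k/\e)$. Your suggested filter support $\Theta(B\log^{2+\delta} n/\e)$ makes this far worse: each hashing would then take $\Theta((k/\e^2)\log^{2+\delta} n)$ samples, and multiplying by $\log n$ phases blows up both samples and time well beyond the claimed bounds. The $\log^{3+\delta} n$ in the theorem is a \emph{runtime} factor, not a sample-complexity factor; you cannot pay for it in samples.

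The paper avoids this by \emph{reusing} a single fixed batch of measurements across all $O(\log R^*)$ iterations of the SNR-reduction loop, and the entire machinery of isolating partitions (Definition~\ref{def:isolating-partition}) and Lemma~\ref{lm:l1b-head} exists precisely to control the correlations this reuse introduces. Concretely, the measurements come from a \emph{hierarchy} of hashings $H_{t,s}$ with $B_t=\Theta(k/R_t^2)$ buckets and $R_t=\Theta(2^t)$ repetitions, so the total sample cost is $\sum_t R_t\cdot O(B_t)=O(k)\sum_t 1/R_t=O(k)$. The ``isolation on average'' statement is not the aggregate Markov bound you describe; it is the fact that, with respect to the partition $S=S_1\cup\cdots\cup S_T$, every element of $S$ collides with $S_t$ in at most an $R_t^{-\delta}$ fraction of the $R_t$ hashings at level $t$, which is what makes $\|e^{head}_{S_t}\|_1\le 20 R_t^{-\delta}\|x'_S\|_1$ hold even though $B_t\ll k$. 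Your aggregate-Markov argument would only control the error from $x_{[n]\setminus S}$; it does nothing about head--head interference under measurement reuse, and if you don't reuse measurements you pay the extra $\log n$. The final $\e$-dependent cleanup in the paper is a single pass with $B=\Theta(k/\e)$ and $O(1)$ repetitions, applied only after the $\ell_1$ norm on $S$ has already been driven down to $O(\|x_{[n]\setminus S}\|_2\sqrt{k})$ by the reusable-measurement loop.
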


A linear sketch with $O(k)$ measurements and $O(k)$ recovery time that provides the guarantee in~\eqref{e:l2l2-est} was presented in~\cite{Price11}, but this solution uses general linear measurements as opposed to the more restrictive Fourier measurements. To the best of our knowledge, the estimation problem with guarantees ~\eqref{e:l2l2-est} has not been studied explicitly in the setting of Fourier measurements.   We now describe `folklore' results and give a comparison with Theorem~\ref{thm:main-estimate}.

\paragraph{Estimation from Fourier measurements: least squares}  Recall that the problem is as follows: given a set $S\subseteq [n]$, estimate $x_S$ from a small number of Fourier measurements of $x$, i.e. from a small number of accesses to $\wh{x}$. 
A popular approach is to select a subset $T\subseteq [n]$ of frequencies and solve the least squares problem 
\begin{equation}\label{eq:least-squares}
\begin{split}
\min_{y\in \C^n, \supp y\subseteq S} ||\wh{y}_T-\wh{x}_T||_2^2.\\
\end{split}
\end{equation}
A natural choice is to let $T$ be a (multi)set of frequencies selected uniformly at random with replacement from $[n]$.  The solution to~\eqref{eq:least-squares} is then provided by the normal equations 
$y_{OPT}=(F_{T, S}^* F_{T, S})^{-1} F_{T, S}^*x$, where $Fx\in \C^n$ is the Fourier transform of $x$, and $F_{T, S}$ is the $T\times S$ submatrix of $F$ scaled by $\sqrt{n/|T|}$. Writing $x=x_S+x_{[n]\setminus S}$, so that $\wh{x}_T=F_{T, S}x_S+F_{T, [n]\setminus S} x_{[n]\setminus S}$, we get 
$y_{OPT}=(F_{T, S}^* F_{T, S})^{-1} F_{T, S}^*\wh{x}_T=x_S+(F_{T, S}^* F_{T, S})^{-1} F_{T, S}^* F_{T, [n]\setminus S} x_{[n]\setminus S}$,
where the second term corresponds to the estimation error due to tail noise. Thus, if $T$ is such that $\frac{1}{2}I_{S}\preceq F_{T, S}^* F_{T, S} \preceq 2I_S$, 
then $||y_{OPT}-x||_2=O(1)\cdot ||x_{\nsq\setminus S}||_2$ with constant probability.  A simple application of matrix Chernoff bounds shows that the spectral bound $\frac{1}{2}I_{S}\preceq F_{T, S}^* F_{T, S} \preceq 2I_S$ is satisfied when $|T|\geq C |S|\log |S|$ for an absolute constant $C$. Note that the analysis above is tight, as  for certain choices of $S\subseteq[n]$ at least $\Omega(|S|\log |S|)$ samples are needed even to ensure that $F_{S, T}^* F_{S, T}$ is invertible. For example, suppose that $S=(n/k)\cdot [k]$, where $k$ divides $n$, so that  the signal $\wh{x}$ is $k$-periodic. In this case $x$ only becomes recoverable from $\wh{x}_T$ as long as $T$ contains at least one element of every conjugacy class of $\mathbb{Z}_n$ modulo $k$, and by a Coupon Collection argument $\Omega(k\log k)$ samples are needed to ensure that this is the case. 
To summarize, the sample complexity of least squares with a random $T$ is at least $\Omega(k\log k)$. Another significant disadvantage of this approach is that solving the least squares problem requires at least $\Omega(k^2)$ runtime using current techniques. Of course, given the knowledge of $S$ one may be able to design a better than random set $T$, but no such construction is known for general supports $S$. As Theorem~\ref{thm:main-estimate} shows, there exists a distribution over sampling patterns $T$ that is {\em oblivious to $S$} and allows decoding from $O(k)$ samples in $k\log^{O(1)} n$ time.

\paragraph{Estimation from Fourier measurements: Fourier hashing}  Estimation of a subset $S$ of coefficients of $x$ using Fourier measurements can be performed using the idea of Fourier hashing (via filtering) commonly used in the Sparse FFT literature. In this approach one round of hashing allows one to compute estimates $w_i$ for $x_i, i\in S$ such that 
$$
|w_i-x_i|\leq \alpha ||x||_2^2/k
$$
using $O(k/\alpha)$ samples in Fourier domain and $O((k/\alpha) \log (k/\alpha))$ runtime. Here $\alpha\in (0, 1)$ is the oversampling parameter, which is normally set to a small constant, as it directly affects runtime and sample complexity. The approach is similar to standard hashing techniques such as \text{CountSketch}~\cite{CCF}, but the crucial difference is that the error bound depends on the {\em energy of the entire signal} as opposed to energy of the tail\footnote{One way to improve the error bound is to use strong filters~\cite{HIKP}, but that requires a $\Omega(k\log n)$ samples -- see Lemma~\ref{lem:semi_equi_std}}. Indeed, in general one can have $||x||_2^2\gg n^{\Omega(1)}\cdot||x_{[n]\setminus S}||_2^2$, meaning that one round of hashing gives results that are very far from estimating $x_S$ up to the energy of the `noise', i.e. elements outside $S$. This can be fixed by iterating the estimation process on the residual signal. A naive implementation and analysis results in $\Theta(k\log n)$  measurements (due to $\log n$ iterations of refinement) and $k\log^{O(1)} n$ time. Recent works on saving samples by reusing measurements~\cite{IK14a, K16} can lead to improvements over the factor $\log n$ blow up in sample complexity, but all prior approaches inherently lead to $\omega(1)$ factor loss in the number of samples, as we argue below.

\paragraph{Our results: recovery} The second version of the problem is the Sparse FFT (recovery) problem with $\ell_2/\ell_2$ guarantees: we are given access to $\wh{x}$, a precision parameter $\e>0$ and a sparsity parameter $k$, and would like to output $x'$ such that 
\begin{equation}
\label{e:l2l2}
\| x-x'\|_2^2 \le (1+\e) \min_{k \text{-sparse } y }  \|x-y\|_2^2,
\end{equation}
Note that here we are not provided with any information about the `heavy' coefficients of $x$, and the hardest and most sample intensive part of the problem is to recover the identities of the `heavy' elements.

  It is known that any (randomized, non-adaptive) algorithm whose output satisfies~\eqref{e:l2l2} with at least constant probability must use $m=\Omega(k\log (n/k))$ samples~\cite{DIPW}.  An algorithm that matches this bound for every $k\leq n^{1-\delta}$ was recently proposed by~\cite{IK14a}. The algorithm of~\cite{IK14a} required $\Omega(n)$ runtime, however, leaving open the problem of achieving sample-optimality  in $k\log^{O(1)} n$, or even just {\em sublinear time}. Sublinear time algorithms that come close to the optimal sample complexity (within an $O(\log\log n)$ factor) have been proposed~\cite{IKP, K16}, but no algorithm was able to match the lower bound to within constant factors using sublinear runtime\footnote{In this paper we are only interested in algorithms that work for worst case signals. If probabilistic assumptions on the signal are made, better results are possible in some settings (see, e.g.~\cite{GHIKPS}).}.  As we argue below, achieving the $O(k\log n)$ bound in sublinear time appears to require a fundamentally different approach to Fourier hashing, which we provide in this work. Our new technique results in an algorithm that matches the lower bound of~\cite{DIPW} up to constant factors for every $k$ polynomially bounded away from $n$ (i.e. $k\leq n^{1-c}$ for a constant $c>0$) in sublinear time:
\begin{theorem}\label{thm:main}
For any $\e\in (1/n, 1), \delta\in (0, 1/2)$, $x\in \C^n$ and any integer $k\geq 1$, if $R^*\geq ||x||_\infty/\mu, R^*=n^{O(1)}$, $\mu^2\geq ||x_{\nsq\setminus [k]}||_2^2/k$, $\mu^2=O(||x_{\nsq\setminus [k]}||_2^2/k)$, \textsc{SparseFFT}$(\hat x, k, \e, R^*, \mu)$  (Algorithm~\ref{alg:main-sublinear}) solves the $\ell_2/\ell_2$ sparse recovery problem using $O_\delta(k\log n)+O(\frac1{\e}k\log n)$ samples and 
$O_\delta(\frac1{\e}k \log^{4+\delta} n)$ time with at least $4/5$ success probability.
\end{theorem}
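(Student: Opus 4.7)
The plan is to reduce the sparse recovery task to a Fourier-hashing-based identification subroutine combined with repeated invocations of the estimation routine from Theorem~\ref{thm:main-estimate}. \textsc{SparseFFT} should proceed in $R = O(\log k)$ refinement stages: in each stage it hashes the residual spectrum $\wh{x}-\wh{\chi}$ into $B = \Theta(k)$ buckets using a randomly permuted and filtered sampling scheme, extracts from the buckets a candidate list $L$ of frequencies that carry significant energy, calls \textsc{Estimate} to obtain noise-robust values on $L$, and updates the running approximation $\chi$ by subtracting the estimates. The final stage is run at precision $\Theta(\e)$ to produce the $(1+\e)$ guarantee in~\eqref{e:l2l2}, while all earlier stages can afford constant precision.

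The identification step is where the isolation-on-average technique is essential. A classical analysis would demand that each heavy hitter be strictly isolated in its bucket, which, when combined with the non-negligible filter leakage inherent to Fourier hashing, forces $B=\omega(k)$ and thus $\omega(k\log n)$ samples overall. Instead, I would show that for a random permutation of the spectrum the heavy at a given position, together with its filter-induced leakage into nearby buckets, is separated from the remaining heavies and from the tail only in an averaged $\ell_2$ sense over the permutation. Coupled with a bit-test location procedure repeated $\Theta(\log n)$ times across independent hashings, this yields, with constant probability, a candidate list $L$ of size $O(k)$ that contains all but an $O(\e)$-fraction of the heavies' energy of the residual. The sample cost of identification across all stages is $O_\delta(k\log n)$ and the time cost is $O_\delta(k\log^{1+\delta} n)$ per stage.

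For the estimation step, invoking Theorem~\ref{thm:main-estimate} on $L$ with precision $\e'$ produces $\chi^*$ with $\|(x-\chi^*)_L\|_2^2 \le \e' \|x_{\nsq\setminus L}\|_2^2$ at the cost of $O_\delta(\tfrac1{\e'}k)$ samples and $O_\delta(\tfrac1{\e'}k\log^{3+\delta} n)$ time. Running the refinement with geometrically decreasing target errors across the $O(\log k)$ stages and setting the final stage to $\e' = \Theta(\e)$, the total sample budget is dominated by $O_\delta(k\log n)$ for identification together with $O(\tfrac1{\e}k\log n)$ for the final (sharp) estimation, matching the bound of the theorem. The runtime accounting adds a single $\log n$ factor for the number of stages and yields the claimed $O_\delta(\tfrac1{\e}k\log^{4+\delta} n)$ bound. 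To close the error recursion one verifies inductively that after stage $r$ the residual $x-\chi^{(r)}$ still satisfies $\|x-\chi^{(r)}\|_2^2 \le (1+O(2^{-r}))\|x_{\nsq\setminus[k]}\|_2^2$, which is exactly the input condition required by the next stage.

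The main obstacle is twofold. First, the isolation-on-average statement underlying identification is delicate because filter leakage is strongly correlated across neighboring buckets, so one cannot simply union-bound over the $k$ heavies; a careful second-moment computation over the random permutation, together with a pruning argument that discards buckets whose energy is dominated by leakage, is required. Second, to avoid a $\log n$ blow-up from running fresh randomness at each stage, one must either amortize location measurements across stages in the spirit of~\cite{IK14a, K16}, or argue that the conditional distribution of the residual on the samples used so far is still sufficiently generic to drive the next identification round; making this compatible with the averaging argument above is the most technical part of the proof.
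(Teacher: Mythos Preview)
Your proposal has a genuine structural gap. You treat Theorem~\ref{thm:main-estimate} as a black box to be called once per stage, but \textsc{Estimate} only succeeds with probability $4/5$; over $O(\log k)$ stages this does not compose, and there is no obvious way to boost it without re-opening its internals (which already reuse measurements across $O(\log R^*)$ inner iterations). More importantly, the invariant you state, $\|x-\chi^{(r)}\|_2^2\le (1+O(2^{-r}))\|x_{\nsq\setminus[k]}\|_2^2$, cannot hold for small $r$: initially $\|x\|_2^2$ can exceed the tail by a factor $R^{*2}=n^{O(1)}$, so you need $O(\log R^*)=O(\log n)$ stages, not $O(\log k)$. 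And the paper explicitly does \emph{not} know how to drive an $\ell_2$ invariant with crude filters; the entire SNR reduction works in $\ell_1$ (see the remark in the introduction about removing reliance on $\ell_1$ as future work).

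The paper's proof takes a different route. It does not call \textsc{Estimate}; instead the SNR loop directly interleaves \textsc{LocateSignal} (over the multi-level hashings $\{H_{t,s}\}$ with $B_t\approx k/R_t^2$ buckets and $R_t$ repetitions) with a \emph{separate} bank of $C\log n$ estimation measurements $\mathcal{M}^{est}$, giving $1-n^{-\Omega(1)}$ per-element estimation accuracy so that a threshold test (line~33 of Algorithm~\ref{alg:main-sublinear}) guarantees $\supp\chi^{(r,t)}\subseteq S$ throughout. The dependency issue you flag is resolved not by arguing genericity of the conditional distribution, but by the \emph{majorizing sequence} $y^{(r,t)}$ of Section~\ref{sec:maj}: one proves deterministically that $x-\chi^{(r,t)}\prec_S y^{(r,t)}$, and since $y^{(r,t)}$ is fixed once the hashings are fixed (independent of $\mathcal{M}^{est}$), the high-probability estimation bounds transfer via the monotonicity of $e^{head}$ (Lemma~\ref{lm:mon}). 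Your ``second-moment over the permutation'' sketch does not capture the key structural object, the $\delta$-isolating partition $S=S_1\cup\cdots\cup S_T$ (Definition~\ref{def:isolating-partition}), which is what makes Lemma~\ref{lm:l1b-head} go through and yields the $\ell_1$ contraction $\|y^{(r,t)}_S\|_1\lesssim R^*\mu k(1/4)^r$. Once $\|(x-\tilde\chi)_S\|_1=O(\mu k)$, a single call to \textsc{RecoverAtConstantSNR} (Lemma~\ref{lm:const-snr}) finishes the job; there is no geometrically-decreasing-$\e'$ schedule.
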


We now discuss the technical difficulties that our approach overcomes. In this discussion we concentrate mainly on the estimation problem, as it is easier than sparse recovery, but at the same time exhibits all the relevant technical challenges. We first describe known sample optimal and efficient solutions that use arbitrary linear measurements, and then outline the difficulties that one faces when working with Fourier measurements.

\paragraph{Estimation and sparse recovery with arbitrary linear measurements.}  
If arbitrary linear measurements are allowed, one takes, multiple times,  a set of $B=O(k)$
linear measurements of the form $\twid{u}_j = \sum_{i : h(i) = j} s_i x_i$ for a random hash function $h: [n] \to [B]$ and random signs
$s_i\in \{-1, +1\}$.  Since we are hashing in a number of buckets a constant factor (say, $100$) larger than the sparsity of the signal, a large fraction (say, $\approx 90\%$) of the top $k$ components are likely to be isolated in a bucket, and not have too much noise (i.e. elements other than the top $k$) hash into the same bucket. For such isolated elements we can approximate their value {\em up to the noise that hashes into the same bucket} (in the case of sparse recovery, we perform $O(\log (n/k))$ specially crafted linear measurements using the same hash function $h$ to recover the identity of the isolated element). This lets us estimate (resp. recover) $\approx 90\%$ of the top $k$ elements of the signal, we subtract them off and recurse on the remaining $\approx 10\%$ of the top $k$ elements, hashing  into $k/2$ buckets this time. In general, for $t=1,2,\ldots, O(\log k)$ we choose a random hash function $h_t: [n]\to [B_t]$, where $B_t=100k/2^{t-1}$, say (in the case of sparse recovery  we take $O(\log (n/k))$ measurements using each of these hash functions). One can show ~\cite{GLPS} that after $O(\log k)$ iterations of the hashing, recovery and subtraction process we recover an approximation to $x$ that satisfies \eqref{e:l2l2-est} (resp. ~\eqref{e:l2l2} in case of recovery). The sample complexity of this process is dominated by the sample complexity of the first iteration, where we use $B_1=100k$ buckets, resulting in a $O(k)$ (resp. $O(k\log (n/k))$) bound on the sample complexity overall. Note that the recovery process only uses every hash function $h_t$ once, at step $t$: those elements that are isolated under this hashing are perfectly recovered and essentially `disappear' from the system, so $h_t$ can be discarded!

\paragraph{A natural approach to estimation and recovery with Fourier measurements and why it fails.} 
In order to achieve $O(k)$ (resp. $O(k\log n)$) sample complexity using Fourier measurements (i.e. in Sparse FFT) it seems natural to revisit the original idea used in recovery from arbitrary linear measurements that we outlined above.  More precisely, we could follow the strategy of choosing, for $t=1,2,\ldots, O(\log k)$, a random hash function $h_t: [n]\to [B_t]$, where $B_t=100k/2^{t-1}$. The problem is that in order to ensure that we hash into $B$ buckets at the cost of $O(B)$ samples, we need to commit to working with rather low quality buckets implemented using crude filters (see section~\ref{sec:prelim}) and this causes `leakage' between hash buckets.  Given this complication, it is not clear at all if estimation (resp. recovery) can be made to work: while with `ideal' hashing each element isolated in a hashing was identified and estimated {\em up to amount of noise in its bucket}, here due to the leakage of our simple filters identification of nominally isolated elements can be precluded by interference from other head elements! This means that the elements that were isolated in the first hashing do not `disappear' from the system (as they essentially do with `ideal' hashing described above in the context of arbitrary linear measurements), but are reduced in value by only about a constant factor, and will influence the recovery process using the second hashing etc. To put this in perspective, note that when for each $t>10$, say, we hash into $B_t=100k/2^{t-1}$ buckets, we generally get $\Omega(k)$ original elements hashing to $\ll k$ buckets! These elements have of course been reduced in value somewhat, but not to the extent that their contribution to $B_t\approx k/2^{t-1}$ buckets is negligible. 

The discussion above implies that two difficulties must be overcome to achieve $O(k)$ (resp. $O(k\log n)$) sample complexity. First, since one round of hashing can at most reduce the `isolated' elements in the residual by a constant factor, $\Omega(\log n)$ iterations are necessary. Furthermore, the process must be set up in such a way that the $\Omega(\log n)$ iterations operate on the same hash functions, and at the same time no adversarial correlations arise to hinder the estimation process. The second difficulty is more subtle, but the harder one to deal with -- this is exactly where our main contribution comes in. Note that if several levels of hashing are used, as above there could be elements whose total contribution to estimation error {\em over all levels $t>1$ is $\omega(1)$}. Indeed, it is easy to see that some of the top $k$ elements will participate in repeated collisions for many values of $t>1$. Such elements could pose significant difficulties, as they introduce large errors to the identification and estimation process.  This issue arises because we reuse hashings that hash $\Omega(k)$ elements into $\ll k$ buckets.  Thus,  we cannot hope to rely on isolation properties that all prior work is based on, since there are more elements to be estimated than buckets.

\paragraph{Our techniques: a new hashing scheme and isolation on average.} To overcome the difficulties outlined above, we use the following approach. As above, we choose a sequence of hash functions $h_t$ that hash the signal into a geometrically decreasing number of buckets. However, a crucial modification is that for each $t$ we repeat the hashing process independently $R_t$ times for an increasing sequence $R_t$ (we use a geometrically increasing sequence; our hashings are denoted by $h_{t, s}$, $s=1,\ldots, R_t$ for each $t$). As we show below in Section~\ref{sec:l1}, the independent repetitions ensure, at a high level, that despite the fact that most elements collide in multiple hashings, the fraction of such collisions is small, ensuring that estimation errors do not propagate -- see Lemma~\ref{lm:l1b-head} and Remark~\ref{rm:overview} after the lemma.

We give a formal analysis of our scheme in the rest of the paper, and provide intuition as to why our scheme fixes the problem outlined above now. Specifically, we would like to see that the head elements do not contribute a large fraction of their weight as estimation error in hashings $h_{t, s}$ for $t>1$. The reason is that, as we show below, given the hash functions $\{h_{t, s}\}$ the set $S$ of head elements can be partitioned into sets $S=S_1\cup S_2\cup\ldots\cup S_T, |S_1|\gg |S_2| \gg \ldots \gg |S_T|$ so that for every $t>1$ every element of $S$ collides with at least one element of $S_t$ in {\em no more than $R_t^{1-\delta}$ out of the $R_t$ hashings $h_{t, s}$ at iteration $t$}, for some constant $\delta>0$ (choosing $\delta$ small improves runtime, at the expense of sample complexity; any small constant $\delta>0$ leads to asymptotically sample optimal results). Thus, even though there are many collisions, {\bf on average over $s\in [1:R_t]$} every element in $S$ collides with at most $\approx R_t^{-\delta}$ elements of $S_t$ -- we refer to this property as `isolation on average'. Since we choose the number of hashings $R_t$ to increase geometrically, the error contributed by an element of $S$ {\em over all hashings} is no more than $\sum_{t\geq 1} R_t^{-\delta}=O_\delta(R_t^{-\delta})\ll 1$. This fact allows us to argue that iterative decoding converges (see section~\ref{sec:l1}). Achieving small runtime with such a scheme requires a delicate balance of parameters, which we exhibit in Section~\ref{sec:est}.

\paragraph{Our techniques: majorizing sequences for controlling residual signals.} Lastly, one should note that the discussion above rests heavily on our ability to control the sequence of residual signals that arise throughout the update process (both in estimation and recovery). We achieve this by showing that  residual signals arising during the update process are  {\em majorized} by short (polylogarithmic length) sequence of signals (referred to as a {\em majorizing sequence}). See Section~\ref{sec:lm41} for the application in estimation and Section~\ref{sec:maj} for the application in recovery. 

\paragraph{Significance for future work.}
We feel that the idea of `isolation on average' may prove useful in further developments in the area. For example, it would be interesting to see if measurement reuse using our techniques can improve sample complexity of to sublinear algorithms for {\em model based sparse recovery} from Fourier measurements, i.e. to Sparse FFT algorithms that {\em exploit structure of input signals beyond the sparsity assumption} (the a sublinear time algorithm for model based Sparse FFT for the block-sparse model was recently presented in~\cite{CKSZ17}). A strong step in this direction would consist of removing the reliance of our techniques on the $\ell_1$ norm of the residual signal as the measure of progress, and introducing an approach to measurement reuse while provably reducing the $\ell_2$ norm of the residual during the iterative process.

\paragraph{Organization.}  The proofs of Theorem~\ref{thm:main-estimate} and Theorem~\ref{thm:main} rely on a shared set of lemmas that enable analysis via `isolation on average', with the main technical lemma being Lemma~\ref{lm:l1b-head} (see also Remark~\ref{rm:overview} after the lemma). We present these lemmas first (Sections~\ref{sec:prelim} and~\ref{sec:l1}), then prove Theorem~\ref{thm:main-estimate} (Section~\ref{sec:est}) as it is less notationally heavy but still uses all the main technical ideas, and then prove Theorem~\ref{thm:main} (Section~\ref{sec:sublinear}). 
Proofs omitted from the main body of the paper are given in the Appendices.

\section{Preliminaries and basic notation}\label{sec:prelim}

For a positive even integer $a$ we will use the notation $[a]=\{-\frac{a}{2}, -\frac{a}{2}+1, \ldots, -1, 0, 1,\ldots, \frac{a}{2}-1\}$. We will consider signals of length $n$, where $n$ is a power of $2$.  We use the notation $\omega=e^{2\pi i/n}$ for the root of unity of order $n$. The forward and inverse Fourier transforms are given by
\begin{equation}\label{eq:dft-forward}
\hat x_f=\frac1{\sqrt{n}}\sum_{i\in \nsq}  \omega^{-if}x_i \text{~~and~~}x_{j}=\frac1{\sqrt{n}}\sum_{f\in \nsq}  \omega^{jf}\hat x_f
\end{equation}
respectively, where $f, j\in \nsq$. We will denote the forward Fourier transform by $\F$. 
Note that we use the orthonormal version of the Fourier transform. Thus, we have $||\hat x||_2=||x||_2$ for all $x\in \C^n$ (Parseval's identity). We assume that entries of $x$ are integers bounded by a polynomial in $n$.

\subsection{Filters, hashing and pseudorandom permutations}\label{sec:prelims-permutations}
We will use pseudorandom spectrum permutations, which we now define.  We write $\gl$ for the set of odd numbers between $1$ and $n$.
For $\sigma\in \gl, q\in \nsq$ and $i\in \nsq$ let 
$\pi_{\sigma, q}(i)=\sigma(i-q) \mod n$.
Since $\sigma\in \gl$, this is a permutation. Our algorithm will use $\pi$ to hash heavy hitters into $B$ buckets, where we will choose $B\approx k$. We will often omit the subscript $\sigma, q$ and simply write $\pi(i)$ when $\sigma, q$ is fixed or clear from context. For $i\in [n]$ we let $h(i):=\text{round}((B/n)\pi(i))$ be a hash function that maps $[n]$ to $[B]$, and for $i, j\in \nsq$ we let $o_i(j)= \pi(j) - (n/B) h(i)$  be the ``offset'' of $j\in \nsq$ relative to $i\in \nsq$.  We always have $B$ a power of two.

\iflong{

\begin{definition}
  Suppose that $\sigma^{-1}$ exists $\bmod~n$. For $a, q\in \nsq$ we define the
  permutation $P_{\sigma, a, q}$ by $(P_{\sigma, a, q}\hat x)_i=\hat x_{\sigma(i-a)} \omega^{i\sigma q}$.  
\end{definition}
\begin{lemma}\label{lm:perm}
$\F^{-1}({P_{\sigma, a, q} \hat x})_{\pi_{\sigma, q}(i)}=x_i \omega^{a \sigma i}$
\end{lemma}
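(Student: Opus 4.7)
The plan is a direct unfolding of the inverse Fourier transform followed by a change of variables, using that $\sigma$ is invertible modulo $n$.

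First I would expand $\mathcal{F}^{-1}(P_{\sigma,a,q}\hat x)_j$ for a generic index $j$ using the definition~\eqref{eq:dft-forward}, substituting in the definition of $P_{\sigma,a,q}$:
\begin{equation*}
\mathcal{F}^{-1}(P_{\sigma,a,q}\hat x)_j = \frac{1}{\sqrt n}\sum_{f\in [n]} \omega^{jf}\, \hat x_{\sigma(f-a)}\, \omega^{f\sigma q}.
\end{equation*}
Then I would change summation variable to $f' = \sigma(f-a) \bmod n$, which is a bijection on $[n]$ since $\sigma\in\gl$ is invertible mod $n$. This gives $f = \sigma^{-1} f' + a$, so $jf = j\sigma^{-1}f' + ja$ and $f\sigma q = f' q + a\sigma q$. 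Collecting the $f'$-independent phases out of the sum and recognizing what remains as an inverse Fourier sum at index $j\sigma^{-1} + q$, I would obtain
\begin{equation*}
\mathcal{F}^{-1}(P_{\sigma,a,q}\hat x)_j = \omega^{ja + a\sigma q}\cdot x_{j\sigma^{-1} + q}.
\end{equation*}

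Finally, I would substitute $j = \pi_{\sigma,q}(i) = \sigma(i-q)\bmod n$ into this identity. Under this substitution, $j\sigma^{-1} + q \equiv (i-q) + q = i \pmod n$, so the argument of $x$ becomes $i$, and the phase simplifies as $ja + a\sigma q = a(\sigma(i-q) + \sigma q) = a\sigma i$, yielding $\omega^{a\sigma i} x_i$ as required.

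The main (minor) obstacle is purely bookkeeping: tracking the two phase factors $\omega^{jf}$ and $\omega^{f\sigma q}$ through the change of variables and making sure the residues mod $n$ line up so that the phases collapse cleanly. There is no conceptual difficulty; the lemma is essentially a statement that $P_{\sigma,a,q}$ implements the time-domain permutation $\pi_{\sigma,q}$ up to the modulation $\omega^{a\sigma i}$.
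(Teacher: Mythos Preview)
Your proof is correct and is exactly the standard direct computation one would expect: expand the inverse DFT, change variables via the bijection $f'=\sigma(f-a)$, and substitute $j=\pi_{\sigma,q}(i)$. The paper itself does not give a proof but defers to~\cite{IK14a}; the argument there is the same unfolding you carry out, so there is nothing to compare.
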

The proof is given in~\cite{IK14a} and we do not repeat it here.  Define 
\begin{equation}\label{eq:mu-def}
\begin{split}
\err_k(x)=\min_{k-\text{sparse}~y} ||x-y||_2\text{~~and~~}\mu^2=\err_k^2(x)/k.
\end{split}
\end{equation}}
\ifshort{Given access to samples of $\wh{x}$, we recover a signal $z$ such that 
$||x-z||_2\leq (1+\e)\min_{k-\text{~sparse~} y} ||x-y||_2$. Define 
$\err_k(x)=\min_{k-\text{sparse}~y} ||x-y||_2$ and $\mu^2=\err_k^2(x)/k$. } 
In this paper, we assume knowledge of $\mu$ (a constant factor upper bound on $\mu$ suffices). We also assume that the signal to noise ratio is bounded by a polynomial in the length $n$ of the signal, namely 
that $R^*:=||x||_\infty/\mu\leq n^C$ for a constant $C>0$. It will be convenient to use the notation $\B_\infty(x, r)$ to denote the interval of radius $r$ around $x$:
$\B_\infty(x, r)=\{y\in \nsq: |x-y|_\circ\leq r\}$, where $|x-y|_{\circ}$ is the circular distance on $\mathbb{Z}_n$.  \iflong{For a real number $a$ we write $|a|_+$ to denote the positive part of $a$, i.e. $|a|_+=a$ if $a\geq 0$ and $|a|_+=0$ otherwise.}

We will use the following 
\begin{definition}[Flat filter with $B$ buckets and sharpness $F$] \label{def:filterG}
    A sequence $G \in \R^n$ symmetric about zero with Fourier transform $\wh{G} \in \R^n$ is called a \emph{flat filter with $B$ buckets and sharpness $F$} if {\bf (1)} $G_j \in [0,1]$ for all $j \in [n]$; {\bf (2)} $G_j \ge 1 - \big(\frac{1}{4}\big)^{F-1}$ for all  $j \in [n]$ such that $|j| \le \frac{n}{2B}$; and {\bf (3)} $G_f \le \big(\frac{1}{4}\big)^{F-1} \big( \frac{n}{B|j|} \big)^{F-1}$ for all $j \in [n]$ such that $|j| \ge \frac{n}{B}$.
\end{definition}

We use a construction of such filters from~\cite{CKSZ17}:

\begin{lemma}[~\cite{CKSZ17}, Lemma~2.1] \label{lem:filter_properties}
    \emph{(Compactly supported flat filter with $B$ buckets and sharpness $F$)}
    Fix the integers $(n,B,F)$ with $n$ a power of two, $B < n$, and $F \ge 2$ an even number.  There exists an $(n,B,F)$-flat filter $G \in \R^n$, whose Fourier transform $\wh{G}$ is supported on a length-$O(FB)$ window centered at zero in time domain.
\end{lemma}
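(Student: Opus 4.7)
The plan is to construct $\wh G$ directly in time domain as a product of two sequences, where one factor has narrow support (of size $O(FB)$, ensuring the compact-support condition on $\wh G$) and the other is chosen so that in frequency the resulting $G$ has the desired flat top and polynomial decay.

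Explicitly, I would set $\wh G(t) := \sigma(t) \cdot \omega(t)$ in time, where $\sigma$ is the inverse DFT of the indicator $\mathbf{1}_{|f|\le n/(2B)}$ in frequency (an ``ideal'' but not compactly-supported flat filter in time) and $\omega$ is a smooth windowing sequence in time supported on $[-CFB, CFB]$ for some absolute constant $C$. By the convolution theorem, in frequency $G = \wh\sigma * \wh\omega$, i.e.\ $G$ is the indicator of $[-n/(2B), n/(2B)]$ smoothed by $\wh\omega$. The compact support of $\omega$ in time immediately yields $|\supp(\wh G)| = O(FB)$.

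A good choice for $\omega$ is a B-spline of order $F$ in time, namely the $F$-fold convolution of a rectangle of width $\Theta(B)$, normalized so that $\sum_f \wh\omega(f) = 1$. Its Fourier transform $\wh\omega(f) \propto \sinc(cBf/n)^F$ has bulk concentrated near $f = 0$ and polynomial tail $\wh\omega(f) = O((1/4)^{F-1}(n/(B|f|))^{F-1})$ inherited from the $F$-fold product of sinc factors. Property (1), $G_j \in [0,1]$, follows from non-negativity and unit normalization of $\wh\omega$. For property (2), on $|f|\le n/(2B)$ I would write $G(f) = 1 - \sum_{|g| > n/(2B)} \wh\omega(f-g)$ and bound the tail sum by $(1/4)^{F-1}$ using the polynomial decay of $\wh\omega$. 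For property (3), on $|f| \ge n/B$ I would observe that $G(f) = \sum_{|g|\le n/(2B)} \wh\omega(f-g)$ is a tail sum of $\wh\omega$ at distances of order $|f|$, which is directly bounded by $(1/4)^{F-1}(n/(B|f|))^{F-1}$.

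The main obstacle is the tight coupling of all three constraints into the single parameter $F$: the time support must stay at $O(FB)$ (rather than, say, $O(F^2 B)$), the flatness error must decay as $(1/4)^{F-1}$, and the tail exponent must be $F-1$. Balancing these forces a precise choice of the rectangle width in the B-spline $\omega$, so that the bulk of $\wh\omega$ is confined well inside the ideal flat region $[-n/(2B), n/(2B)]$ and the polynomial tail kicks in by $|f| = n/B$. I expect property (2) to be the most delicate step, since it requires bounding the leakage of $\wh\omega$'s tail across the sharp boundary of the ideal indicator at $|f| = n/(2B)$ by an exponentially small (in $F$) quantity, which draws on the full strength of the $F$-fold convolution structure.
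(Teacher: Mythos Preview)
The paper does not prove this lemma; it is quoted from \cite{CKSZ17} and used as a black box, so there is no proof here to compare against. That said, your B-spline construction is essentially the standard one used in the Sparse FFT literature (and in \cite{CKSZ17} specifically): take the ideal boxcar filter in frequency and smooth it by convolving with an $F$-fold self-convolved narrow bump, which in time corresponds to multiplying the Dirichlet-kernel-like $\sigma$ by a compactly supported B-spline window of width $O(FB)$.

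One point worth making explicit in your write-up: your claim that $\wh\omega\ge 0$ (needed for property (1)) is exactly where the assumption that $F$ is \emph{even} enters, since $\wh\omega$ is an $F$-th power of a (signed) Dirichlet-type kernel. You use this implicitly but do not say so. Also, to get the flatness bound $(1/4)^{F-1}$ on the passband and the transition to polynomial decay already at $|j|=n/B$, the rectangle width in the B-spline must be chosen small enough (roughly $B/2$ or so rather than $B$) so that the bulk of $\wh\omega$ fits well inside $[-n/(2B),n/(2B)]$; this is the constant-tuning you flag as the main obstacle, and it is indeed where the care goes. With those two points pinned down, the argument goes through.
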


Note that most of the mass of the filter is concentrated in an interval of side $O(n/B)$, approximating the ``ideal'' filter (whose value would be equal to $1$ for entries within the square and equal to $0$ outside of it). 
Note that for each $i\in \nsq$ one has $G_{o_i(i)}^{-1}\leq 2$. We refer to the parameter $F$ as the {\em sharpness} of the filter. 
\iflong{Our hash functions are not pairwise independent, but possess a property that still makes hashing using our filters efficient:
\begin{lemma}[Lemma~3.2 in~\cite{IK14a}]\label{lemma:limitedindependence}
Let $i, j\in \nsq$. Let $\sigma$ be uniformly random odd number between $1$ and $n$. Then for all $t\geq 0$ one has $
\Pr[|\sigma(i-j)|_\circ \leq t] \leq 2(2t/n)$.
\end{lemma}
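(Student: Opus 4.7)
The plan is to reduce the problem to counting odd residues modulo a smaller power of two. Assume $i \neq j$ (the degenerate case $i=j$ gives probability $1$, which is only used in contexts where $t \ge n/4$). Let $d := (i-j) \bmod n$ be nonzero, and factor $d = 2^a m$ with $m$ odd and $0 \le a < \log_2 n$. Because $n$ is a power of two, every odd $m$ is a unit in $\Z/n\Z$, so multiplication by $m$ permutes the $n/2$ odd residues modulo $n$. Hence, if $\sigma$ is uniform over odd residues mod $n$, so is $\sigma m \bmod n$; reducing modulo $n/2^a$, the random variable $U := \sigma m \bmod (n/2^a)$ is uniform over the $n/2^{a+1}$ odd residues in $[0, n/2^a)$, each taking probability $2^{a+1}/n$. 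Moreover $\sigma d \equiv 2^a U \pmod n$.

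Next, I would count the $U$'s for which $|2^a U|_\circ \le t$. Because $2^a$ divides $n$, the multiples of $2^a$ in $[0,n)$ at circular distance at most $t$ from $0$ are precisely
\[
\{0, 2^a, 2\cdot 2^a,\ldots\} \cap [0,t] \ \cup\ \{n-2^a, n-2\cdot 2^a,\ldots\}\cap [n-t, n-1].
\]
If $t < 2^a$, the only multiple of $2^a$ in range is $0$, which forces $U=0$ and contradicts $U$ being odd; hence the bad event is empty and the bound is trivial. If $t \ge 2^a$, each side contributes at most $\lfloor t/2^a\rfloor$ values, and roughly half of those are odd, so the total count of bad odd $U$ is at most $t/2^a + 1$.

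Combining, the probability is at most
\[
(t/2^a + 1)\cdot \frac{2^{a+1}}{n} \;=\; \frac{2t}{n} + \frac{2^{a+1}}{n} \;\le\; \frac{4t}{n},
\]
where the last step uses $2^a \le t$ (which holds in the non-trivial case). This is exactly $2(2t/n)$.

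I expect the only subtle point to be the uniformity claim for $U$: verifying that reduction of uniform odd residues mod $n$ to odd residues mod $n/2^a$ is itself uniform (with multiplicity $2^a$). Once that change of variables is in place, the remainder of the argument is a straightforward counting of multiples of $2^a$ near $0 \bmod n$, which is where the extra additive term $2^{a+1}/n$ appears and gets absorbed into the factor of $2$ slack in the target bound.
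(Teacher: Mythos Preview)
Your argument is correct and is essentially the standard proof of this fact. The paper itself does not prove this lemma---it merely cites it from~\cite{IK14a}---so there is no in-paper proof to compare against, but the reduction you give (factor $i-j=2^a m$ with $m$ odd, use that multiplication by $m$ permutes odd residues mod $n$, reduce mod $n/2^a$, and count odd $U$ with $2^a U$ within circular distance $t$ of $0$) is exactly the classical argument. The two points worth being careful about are handled correctly: the uniformity of $U$ over odd residues mod $n/2^a$ follows because each such residue has exactly $2^a$ odd preimages in $[0,n)$, and the additive slack $2^{a+1}/n$ is absorbed using $2^a\le t$ in the nontrivial regime. Your caveat about $i=j$ is also appropriate: the lemma is only ever invoked for distinct $i,j$ in the paper.
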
}
\ifshort{Pseudorandom spectrum permutations (defined above) give us the ability to `hash' the elements of the input signal into a number of buckets (denoted by $B$). We formalize this using the notion of a {\em hashing}. A hashing is a tuple consisting of a pseudorandom spectrum permutation $\pi$, target number of buckets $B$ and a sharpness parameter $F$ of our filter, denoted by $H=(\pi, B, F)$. Formally, $H$ is a function that maps a signal $x$ to $B$ signals, each corresponding to a hash bucket, allowing us to solve the $k$-sparse recovery problem on input $x$ by reducing it to $1$-sparse recovery problems on the bucketed signals. We give the formal definition below.}

\subsection{Measurements of the signal,  notation for estimation error and basic bounds}
\iflong{Pseudorandom spectrum permutations combined with a filter $G$ give us the ability to `hash' the elements of the input signal into a number of buckets (denoted by $B$). We formalize this using the notion of a {\em hashing}. A hashing is a tuple consisting of a pseudorandom spectrum permutation $\pi$, target number of buckets $B$ and a sharpness parameter $F$ of our filter, denoted by $H=(\pi, B, F)$. Formally, $H$ is a function that maps a signal $x$ to $B$ signals, each corresponding to a hash bucket, allowing us to solve the $k$-sparse recovery problem on input $x$ by reducing it to $1$-sparse recovery problems on the bucketed signals. We give the formal definition below.}

\begin{definition}[Hashing $H=(\pi, B, F)$]
For a permutation $\pi=(\sigma, q)$, parameters $B>1$ and $F$,  a {\em hashing} $H:=(\pi, B, F)$ is a function mapping a signal $x\in \C^n$ to $B$ signals $H(x)=(u_s)_{s\in [B]}$, where $u_s\in \C^n$ for each $s\in [B]$, such that for each $i\in \nsq$
$ u_{s, i}  = \sum_{j\in \nsq} G_{\pi(j)-(n/B)\cdot s}x_j \omega^{i \sigma j}\in \C$,
where $G$ is a filter with $B$ buckets and sharpness $F$ constructed in Lemma~\ref{lem:filter_properties}.
\end{definition}

For a hashing $H=(\pi, B, F), \pi=(\sigma, q)$ we sometimes write $P_{H, a}, a\in \nsq$ to denote $P_{\sigma, a, q}$. 

\begin{definition}[Measurement $m=m(x, H, a)$]\label{def:measurements}
For a signal $x\in \C^n$, a hashing $H=(\pi, B, F)$ and a parameter $a\in \nsq$, a {\em measurement} $m=m(x, H, a)\in \C^B$ is the $B$-dimensional complex valued vector of evaluations of a hashing $H(x)$ at a point $a\in \nsq$, i.e. for $s\in [B]$
$m_s  = \sum_{j\in \nsq} G_{\pi(j)-(n/B)\cdot s}x_j \omega^{a \sigma j}$,
where $G$ is a filter with $B$ buckets and sharpness $F$ constructed in Lemma~\ref{lem:filter_properties}.
\end{definition}

We access the signal $x$ in Fourier domain via the function $\Call{HashToBins}{\hat x, \chi, (H, a)}$, which evaluates the hashing $H$ of residual signal $x-\chi$ at point $a\in \nsq$, i.e. computes the measurement $m(x, H, a)$ (the computation is done with polynomial precision). We will use the following lemma, which is rather standard (the proof is given in Appendix~\ref{sec:hash2bins} for completeness):
\begin{lemma}\label{l:hashtobins}
  \Call{HashToBins}{$\wh{x}, \chi, (H, a)$}, where $H=(\pi, B, F)$, computes
  $u\in \C^B$ such that for any $i \in [n]$, $u_{h(i)} = \Delta_{h(i)} + \sum_j G_{o_i(j)}(x - \chi)_j \omega^{a\sigma j}$,  where $G$ is the filter defined in section~\ref{sec:prelim}, and for all $i\in [n]$ we have that $\Delta_{h(i)}^2 \leq \norm{2}{\chi}^2
\cdot n^{-c}$ is a negligible error term (and $c>0$ is an absolute constant that governs the precision that semi-equispaced FFT, i.e. Lemma~\ref{lem:semi_equi_std}, is invoked with).  It takes $O(B \fc)$
  samples, and $O(F\cdot B\log B+\norm{0}{\chi} \log n)$ time.
\end{lemma}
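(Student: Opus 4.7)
The plan is to realize $u$ as a numerical approximation to the measurement $m(x-\chi, H, a)\in\C^B$ from Definition \ref{def:measurements}: by the definition $o_i(j)=\pi(j)-(n/B)h(i)$, the coordinate $m_{h(i)}(x-\chi,H,a)$ equals exactly $\sum_j G_{o_i(j)}(x-\chi)_j\omega^{a\sigma j}$, so matching this up to a small additive error $\Delta$ gives the claim. By linearity $m(x-\chi,H,a)=m(x,H,a)-m(\chi,H,a)$, and I would compute the two summands separately: the first exactly from $O(FB)$ samples of $\wh x$, the second approximately via semi-equispaced FFT on the sparse signal $\chi$.

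The core step is an algebraic identity rewriting $m_s(x,H,a)$ in Fourier domain. Substituting $x_j=\frac{1}{\sqrt n}\sum_f\omega^{jf}\wh x_f$ into $m_s=\sum_j G_{\pi(j)-(n/B)s}x_j\omega^{a\sigma j}$, changing variable to $t=\pi(j)$, and using the convolution theorem together with Lemma \ref{lem:filter_properties} (which gives that $\wh G$ is supported in a window of length $O(FB)$ around zero), I would derive
\[
m_s(x,H,a)=\sum_{|f|\le O(FB)}\omega^{(n/B)sf+\sigma qf}\,\wh G_f\,\wh x_{\sigma(f-a)\bmod n}.
\]
Aliasing the outer phase $\omega^{(n/B)sf}$ modulo $B$ reduces the computation, simultaneously for all $s\in[B]$, to a single length-$B$ inverse DFT of a length-$B$ aliased vector assembled from $O(FB)$ samples of $\wh x$ read along the arithmetic progression $\{\sigma(f-a)\bmod n : |f|\le O(FB)\}$. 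This costs $O(FB)$ samples and $O(FB+B\log B)=O(FB\log B)$ time.

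For the $\chi$ part, the same identity holds with $\chi$ in place of $x$, but we lack direct access to $\wh\chi$. Instead, semi-equispaced FFT (Lemma \ref{lem:semi_equi_std}) approximates $\wh\chi$ at those $O(FB)$ arithmetic-progression points in $O(FB+\norm{0}{\chi}\log n)$ time, with per-evaluation additive error at most $n^{-c''}\norm{2}{\chi}$ for any desired constant $c''>0$ fixed in advance. Setting $u:=m(x,H,a)-\widetilde{m(\chi,H,a)}$, the coordinate-wise error $\Delta_s$ is a sum of $O(FB)$ such errors weighted by $|\wh G_f|\le 1$; choosing $c''$ large enough (polynomially so, depending on $c$ and the polynomial bound on $FB$) yields $|\Delta_{h(i)}|^2\le n^{-c}\norm{2}{\chi}^2$ as required. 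Adding the sample and runtime costs over the two parts gives the claimed bounds.

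The only nontrivial step is the algebraic identity expressing $m_s$ as the aliased $B$-point inverse DFT above; once this is in place, the computational and approximation-error bookkeeping is routine. I expect the main obstacle to be tracking the phase factors arising from composing the modulation $\omega^{a\sigma j}$, the permutation $\pi_{\sigma,q}$, and the change of variable $t=\pi(j)$ inside the filter convolution, ensuring that the final progression of sample locations is indeed the simple arithmetic one required by semi-equispaced FFT.
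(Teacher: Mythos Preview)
Your proposal is correct and is essentially the paper's own argument: the paper too writes $u$ as $\sqrt{n}\,\mathcal F^{-1}\bigl(\wh G\cdot P_{\sigma,a,q}(\wh x-\wh{\chi'})\bigr)$ evaluated on the coset $(n/B)[B]$, uses the compact support of $\wh G$ to restrict to $O(FB)$ samples of $\wh x$ along the progression $\{\sigma(f-a)\}$, approximates $\wh\chi$ there via Lemma~\ref{lem:semi_equi_std}(b), and then invokes the convolution theorem together with Lemma~\ref{lm:perm} to identify the result with $\sum_j G_{o_i(j)}(x-\chi)_j\omega^{a\sigma j}$ plus the error $\Delta$. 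The only cosmetic difference is that the paper bounds $|\Delta_{h(i)}|$ via $\|\Delta\|_2=\|\wh\Delta\|_2$ and a crude $\|\wh G\|_1\le n$ estimate, whereas you bound it coordinatewise as a sum of $O(FB)$ per-point errors; both routes lose only polynomial factors and are absorbed into the choice of the precision constant.
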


\if 0
The following lemma is a straightforward modification of Lemma~9.2 in~\cite{K16}:

\begin{lemma}[Lemma~9.2 in~\cite{K16}]\label{l:hashtobins}
  \Call{HashToBins}{$\wh{x}, \chi, (H, a)$} computes
  $u$ such that for any $i \in [n]$, $u_{h(i)} = \Delta_{h(i)} + \sum_j G_{o_i(j)}(x - \chi)_j \omega^{a\sigma j}$
  where $G$ is the filter defined in Lemma~\ref{lm:filter-prop}, and $|\Delta_{h(i)}| \leq n^{-\Omega(c)}$ is a negligible error term, where $c>0$ is a sufficiently large constant.  It takes $O(B \fc)$
  samples, and $O(F B \log B+\norm{0}{\chi} \log n)$ time.\xxx{Check or merge with with next lemma}
\end{lemma}
\fi


\iflong{

We now introduce relevant notation for bounding the error induced by our measurements in locating or estimating an element $i\in [n]$. For a hashing $H=(\pi, B, F)$ and an evaluation point $z\in [n]$, we have by Definition~\ref{def:measurements}
  \[
  m_{h(i)}(x, H, z)  = \sum_{j\in \nsq} G_{o_i(j)}x_j \omega^{z \sigma j},
  \]
where the filter $G_{o_i(j)}$ is the filter corresponding to hashing $H$ (note that $o_i(j)$ implicitly depends on $\pi$). In particular, one has:
\begin{equation*}
\begin{split}
G_{o_i(i)}^{-1}m_{h(i)} \omega^{-z \sigma i} = x_i + \underbrace{G_{o_i(i)}^{-1}\sum_{j\in \nsq\setminus \{i\}} G_{o_i(j)}x_j \omega^{z \sigma (j-i)}}_{\text{noise term}}
\end{split}
\end{equation*}

A common idea underlying our analysis of estimation and recovery is to split the estimation/recovery error induced on an element $i$ into the contribution from the carefully defined `head' of the signal and the contribution from the `tail'. The `head' of the signal is denoted by a set $S\subseteq [n]$ throughout the paper.  For each $i\in [n]$ we write
\begin{equation}\label{eq:uexp1}
\begin{split}
G_{o_i(i)}^{-1}m_{h(i)}\omega^{-z \sigma i}&=x_i+\underbrace{G_{o_i(i)}^{-1}\cdot \sum_{j\in S\setminus \{i\}} G_{o_i(j)}x_j \omega^{z \sigma (j-i)}}_{\text{noise from `heavy' elements}}+\underbrace{G_{o_i(i)}^{-1}\cdot \sum_{j\in \nsq\setminus (S\cup \{i\})} G_{o_i(j)}x_j \omega^{z \sigma (j-i)}}_{\text{`tail' noise}}\\
\end{split}
\end{equation}

We now define special notation for the two noise terms in~\eqref{eq:uexp1}. These two noise terms will be handled very differently in our analysis.

\paragraph{Noise from heavy hitters.} The first term in \eqref{eq:uexp1} corresponds to noise from $x_{S\setminus \{i\}}$, i.e. noise from `head' of the signal. For every $i\in S$, hashing $H$  we let 

\begin{equation}\label{eq:eh-pi}
e^{head}_{i}(H, x):=G_{o_i(i)}^{-1}\cdot \sum_{j\in S\setminus \{i\}} G_{o_i(j)} |x_j|.\\
\end{equation}

\begin{remark}\label{rm:dependenceonS}
Note that $e^{head}$ depends implicitly on the set $S$. We do not make this dependence explicit to avoid complicated notation, but state which set $S$ the quantity $e^{head}$ is defined with respect to whenever this quantity is used. We also note that $e^{head}$ provides a bound on the error induced by the tail of the signal that (for a random hashing) depends on the $\ell_1$ norm of the head of the signal.
\end{remark}

We thus get that $e^{head}_i(H, x)$ upper bounds the absolute value of the first error term in~\eqref{eq:uexp1} {\em for every value of evaluation point $z$} (note that $e^{head}(H, x)$ only depends on the hashing $H$ and $x$). Note that $G\geq 0$ by Lemma~\ref{lem:filter_properties} and Definition~\ref{def:filterG} as long as $F$ is even, which is the setting that we are in. We will often use several hashings to estimate or locate an element $i$. It is thus convenient to define, for a sequence of hashings $H_1,\ldots, H_r$
\begin{equation}\label{eq:eh}
e^{head}_{i}(\{H_r\}, x):=\quant^{1/5}_r e^{head}_i(H_r, x),
\end{equation}
where for a list of reals $u_1,\ldots, u_s$ and a number $f\in (0, 1)$ we let $\quant^{f}(u_1,\ldots, u_s)$ denote the $\lceil f \cdot s\rceil$-th largest element of $u_1,\ldots, u_s$.   

\paragraph{Tail noise.} 
To capture the second term in \eqref{eq:uexp1} (corresponding to tail noise), we define, for any $i\in [n], z\in \nsq$, permutation $\pi=(\sigma, q)$ and hashing $H=(\pi, B, F)$
\begin{equation}\label{eq:et-pi}
e^{tail}_i(H, z, x):=\left|G_{o_i(i)}^{-1}\cdot \sum_{j\in \nsq\setminus (S\cup \{i\})} G_{o_i(j)}x_j \omega^{z \sigma (j-i)}\right|.
\end{equation}

\begin{remark}
Note that $e^{tail}$ depends implicitly on the set $S$. We do not make this dependence explicit to avoid complicated notation, but state which set $S$ the quantity $e^{tail}$ is defined with respect to whenever this quantity is used. We also note that $e^{tail}$ provides a bound on the error induced by the tail of the signal that (for a random hashing and a random evaluation point $z$) depends on the $\ell_2$ norm of the tail (as opposed to the $\ell_1$ norm used by $e^{head}$).
\end{remark}

With this definition in place $e^{tail}_i(H, z, x)$ upper bounds the absolute value of second term in~\eqref{eq:uexp1}. We will sometimes use several hashings and values $z$ to obtain better estimates. For a sequence $\{(H_r, z_r)\}_{r=1}^{r_{max}}$ for some $r_{max}\geq 1$ we let 
\begin{equation}\label{eq:et-pi-quant}
e^{tail}_i(\{H_r, z_r\}, x):=\quant^{1/5}_r \left|G_{o_i(i)}^{-1}\cdot \sum_{j\in \nsq\setminus (S\cup \{i\})} G_{o_i(j)}x_j \omega^{z \sigma (j-i)}\right|,
\end{equation}
where $o_i(j)$ on the rhs implicitly depends on the hashing $H$. 

The definitions above are sufficient for our analysis of the signal estimation procedure in Section~\ref{sec:est}. The analysis of the sparse recovery procedure requires several further specialized definitions, which are presented in Appendix~\ref{sec:location}. With the definitions above we can state the following simple guarantees on the performance of a basic estimation procedure that is the main building block of our analysis of the more powerful \textsc{Estimate} primitive in~Section~\ref{sec:est}. 

\begin{lemma}[Bounds on estimation quality for Algorithm~\ref{alg:est}]\label{lm:estimate-l1l2}
For every $x, \chi\in \C^n$,  every $L\subseteq \nsq$, every set $S\subseteq \nsq$ the following conditions hold for functions $e^{head}$ and $e^{tail}$ defined with respect to $S$ (see ~\eqref{eq:eh-pi} and~\eqref{eq:et-pi}). If $r_{max}$ is larger than an absolute constant, then for every sequence $H_r=(\pi_r, B, F), r=1,\ldots, r_{max}$ of hashings and every sequence $a_1,\ldots, r_{max}$ of evaluation points the output $w$ of 
$$
\Call{EstimateValues}{\chi, L, \{(H_r, a_r, m(x, H_r, a_r))\}_{r=1}^{r_{max}}}
$$
 satisfies, for each $i\in L$  
$$
|w_i- (x-\chi)_i|\leq 2\cdot\quant^{1/5}_r e^{head}_i(H_r, x-\chi)+ 2\cdot\quant^{1/5}_r e^{tail}_i(H_r, a_r, x-\chi)+n^{-\Omega(c)},
$$
where $c\geq 2$ is an absolute constant that governs the precision of our approximate semi-equispaced FFT computations (see \textsc{HashToBins}, Lemma~\ref{l:hashtobins}). The sample complexity is bounded by $O(F B r_{max})$, and the runtime by $O((F\cdot B\cdot \log n+||\chi||_0\log n+|L|)\cdot r_{max})$.
\end{lemma}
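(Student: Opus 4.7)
The plan is to expand each per-hashing estimate via equation~\eqref{eq:uexp1}, bound its error by the appropriate $e^{head}$ and $e^{tail}$ terms, and then argue that the aggregation step in \textsc{EstimateValues} (a median taken coordinate-wise in the complex plane) inherits a bound governed by the $\quant^{1/5}$ quantile of these errors.

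First I would invoke Lemma~\ref{l:hashtobins} on each pair $(H_r, a_r)$ with residual signal $\chi$ to obtain a vector $u^{(r)}\in\C^B$ satisfying
\[
u^{(r)}_{h_r(i)} = \Delta^{(r)}_{h_r(i)} + \sum_{j\in\nsq} G_{o_i(j)}(x-\chi)_j \omega^{a_r \sigma_r j},
\]
with $|\Delta^{(r)}_{h_r(i)}| \le \|\chi\|_2 \cdot n^{-c/2}$, which I absorb into a term of magnitude $n^{-\Omega(c)}$ using the polynomial bound on the entries of $x$ (and hence $\chi$). The natural per-hashing estimate is $\tilde w_i^{(r)} := G_{o_i(i)}^{-1} u^{(r)}_{h_r(i)} \omega^{-a_r \sigma_r i}$, and expanding via~\eqref{eq:uexp1} gives
\[
\tilde w_i^{(r)} = (x-\chi)_i + \eta^{head}_{r}(i) + \eta^{tail}_{r}(i) + O(n^{-\Omega(c)}),
\]
where $|\eta^{head}_{r}(i)| \le e^{head}_i(H_r, x-\chi)$ follows from the triangle inequality together with $G \ge 0$ (even sharpness), and $|\eta^{tail}_{r}(i)| = e^{tail}_i(H_r, a_r, x-\chi)$ by the definitions in~\eqref{eq:eh-pi} and~\eqref{eq:et-pi}.

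Next I would specify that \textsc{EstimateValues} returns, for each $i\in L$, the complex number whose real and imaginary parts are the medians of the real and imaginary parts of $\{\tilde w_i^{(r)}\}_{r=1}^{r_{max}}$. By the definition of $\quant^{1/5}$ at most $\lceil r_{max}/5 \rceil$ indices $r$ have $e^{head}_i(H_r, x-\chi)$ strictly exceeding $\quant^{1/5}_r e^{head}_i(H_r, x-\chi)$, and similarly for $e^{tail}$; hence by union bound at least $3r_{max}/5$ indices satisfy both inequalities simultaneously, which exceeds $r_{max}/2$ once $r_{max}$ is larger than an absolute constant. For each such $r$ the real-part error of $\tilde w_i^{(r)}-(x-\chi)_i$ is bounded by $\quant^{1/5}_r e^{head}_i(H_r, x-\chi) + \quant^{1/5}_r e^{tail}_i(H_r, a_r, x-\chi) + O(n^{-\Omega(c)})$; since a strict majority of real parts lie within this distance of $\mathrm{Re}((x-\chi)_i)$, the median (being sandwiched between the two middle values) inherits the same bound. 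The same argument applies to imaginary parts. A complex error is at most $\sqrt{2}$ times the maximum of the real and imaginary axis errors, and this $\sqrt{2}$ is absorbed into the factor $2$ multiplying each of the two quantile terms in the statement.

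For the resource bounds, each call to \textsc{HashToBins} uses $O(FB)$ samples and $O(FB\log B+\|\chi\|_0\log n)$ time by Lemma~\ref{l:hashtobins}, so $r_{max}$ calls contribute $O(FBr_{max})$ samples and $O((FB\log n+\|\chi\|_0\log n)r_{max})$ time; forming the $|L|$ per-hashing estimates and taking medians across $r_{max}$ values contributes an additional $O(|L|\cdot r_{max})$, giving the claimed totals. I do not expect any serious obstacle here: the only delicate point is the book-keeping around the $n^{-\Omega(c)}$ slack when $\|\chi\|_2$ is itself signal-scale, which is handled uniformly by using $R^* = n^{O(1)}$ to absorb the precision loss into the exponent constant.
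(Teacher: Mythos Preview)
The proposal is correct and follows essentially the same approach as the paper's proof: expand each per-hashing estimate via \textsc{HashToBins} and the head/tail decomposition, observe that at most a $2/5$ fraction of indices $r$ can have either $e^{head}_i$ or $e^{tail}_i$ exceed its $1/5$-quantile, and conclude that the coordinatewise median inherits the quantile bound on each axis; the only cosmetic difference is that you combine real and imaginary errors via $\sqrt{2}\cdot\max$ whereas the paper uses $|a|+|b|$, both of which are dominated by the stated factor $2$.
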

 The proof of the lemma is given in Appendix~\ref{sec:app}. The proof is rather standard modulo our definitions of $e^{head}$ and $e^{tail}$, as well as the fact that the statement of the lemma is entirely deterministic. We will later apply this lemma to random hashings and evaluation points, but the deterministic nature of the claim will be crucial in analyzing measurement reuse.

As both our \textsc{Estimate} and \textsc{SparseFFT} algorithms (Section~\ref{sec:est} and Section~\ref{sec:sublinear}) respectively iteratively update the signal, we will need to analyse the performance of \textsc{EstimateValues} on various residual signals derived from the original input signal $x$.  The notion of a {\em majorant} is central to this part of our analysis:
\begin{definition}[Majorant]\label{def:majorant}
For any $S\subseteq \nsq$ and any $x, y\in \C^n$ we say that $y$ is an majorant for $x$ with respect to $S$ if $|x_i|\leq |y_i|$ for all $i\in S$.
\end{definition}

With this definition and definition of $e^{head}$ above the following crucial lemma follows immediately:
\begin{lemma}\label{lm:mon}
For every hashing $H$, every set $S\subseteq \nsq$ one has for every pair $x, y\in \C^n$ that if $x\prec_S y$, then for every $i\in \nsq$
$e^{head}_i(H, x)\leq  e^{head}_i(H, y)$.
\end{lemma}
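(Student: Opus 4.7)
The plan is simply to unfold Definition~\eqref{eq:eh-pi} on both sides and compare term by term, exploiting non-negativity of the filter. First I would recall that, by Lemma~\ref{lem:filter_properties} together with Definition~\ref{def:filterG}, our filter satisfies $G_m \in [0,1]$ for every $m$, since $F$ is even in all our applications. In particular $G_{o_i(j)} \geq 0$ for all $i,j \in \nsq$, and $G_{o_i(i)} > 0$ (indeed $G_{o_i(i)}^{-1} \leq 2$, as noted after Lemma~\ref{lem:filter_properties}), so the scalar $G_{o_i(i)}^{-1} G_{o_i(j)}$ is a non-negative weight.

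Next I would unpack the hypothesis $x \prec_S y$ using Definition~\ref{def:majorant}: this means $|x_j| \leq |y_j|$ for every $j \in S$, and hence in particular for every $j \in S \setminus \{i\}$, independently of whether $i$ itself belongs to $S$. Fix $i \in \nsq$; multiplying by the non-negative weights $G_{o_i(i)}^{-1} G_{o_i(j)}$ and summing over $j \in S \setminus \{i\}$ yields
\[
e^{head}_i(H, x) \;=\; G_{o_i(i)}^{-1} \sum_{j \in S \setminus \{i\}} G_{o_i(j)} |x_j| \;\leq\; G_{o_i(i)}^{-1} \sum_{j \in S \setminus \{i\}} G_{o_i(j)} |y_j| \;=\; e^{head}_i(H, y),
\]
which is the claim.

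There is essentially no obstacle: the definition of $e^{head}$ in~\eqref{eq:eh-pi} was deliberately phrased in terms of the absolute values $|x_j|$ (rather than $x_j$ itself), and the filter values are arranged to be non-negative, precisely so that majorization on $S$ passes through the weighted sum term by term. The only point worth flagging is that the conclusion is claimed for all $i \in \nsq$, not merely $i \in S$; this causes no issue because the index set $S \setminus \{i\}$ of the sum is identical on both sides of the inequality, and the hypothesis $|x_j| \leq |y_j|$ is assumed throughout $S$.
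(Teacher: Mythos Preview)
Your proposal is correct and follows essentially the same approach as the paper's own proof: both unfold the definition of $e^{head}_i(H,\cdot)$, invoke non-negativity of the filter $G$ (from Definition~\ref{def:filterG} and Lemma~\ref{lem:filter_properties}), and compare the weighted sums term by term using $|x_j|\leq |y_j|$ on $S$. Your write-up is slightly more explicit about the positivity of $G_{o_i(i)}$ and about why the conclusion holds for all $i\in\nsq$, but the argument is the same.
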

\begin{proof}

Recall that by~\eqref{eq:eh-pi} one has 
$e^{head}_{i}(H, x)=G_{o_i(i)}^{-1}\cdot \sum_{j\in S\setminus \{i\}} G_{o_i(j)} |x_j|$. Since $G\geq 0$ by Definition~\ref{def:filterG} and Lemma~\ref{lem:filter_properties}, we have by using $x\prec_S y$ that 
$e^{head}_{i}(H, x)=G_{o_i(i)}^{-1}\cdot \sum_{j\in S\setminus \{i\}} G_{o_i(j)} |x_j|\leq G_{o_i(i)}^{-1}\cdot \sum_{j\in S\setminus \{i\}} G_{o_i(j)} |y_j|=e^{head}_{i}(H, y)$ as required.

\end{proof}

}

\section{Isolating partitions}\label{sec:l1}

In this section we prove the main lemmas that allow us to reason about performance of the SNR reduction process in our algorithms (Algorithm~\ref{alg:estimate-efficient} and Algorithm~\ref{alg:main-sublinear}). Both algorithms perform SNR reduction (see lines~16 to~22 in Algorithm~\ref{alg:estimate-efficient} and lines~25-36 in Algorithm~\ref{alg:main-sublinear}) using two loops. The first loop (over $r$), controls the $\ell_1$ norm of the signal, with the (upper bound on the) norm being reduced by a factor of $4$ in each iteration. This reduction is achieved via a sequence of calls to \textsc{EstimateValues} (in \textsc{Estimate}, Algorithm~\ref{alg:estimate-efficient}) or \textsc{LocateSignal} (in Section~\ref{sec:location}) using a separate collection of  hashings for each $t=1,\ldots, T$. Our correctness analysis for this process proceeds by showing that, with high constant probability over the choice of the hashings $\{\{H_{t, s}\}_{s=1}^{R_t}\}_{t=1}^T$ any set $S$ of size $\approx k$ the hashings induce a partition of $S$ into at most $T$ sets $S_1\cup\ldots\cup S_T$ such that hashings used in the $t$-th round allow the algorithm to make progress on elements in $S_t$. The main result of this section is a formalization of this claim, achieved by  Lemma~\ref{lm:l1b-head}.

\begin{lemma}\label{lm:l1b-head}
For every integer $k\geq 1$, every $S\subseteq \nsq, |S|\leq k$,  every $\delta\in (0, 1/2)$, if the parameters $B_t, R_t$ are selected to satisfy {\bf (p1)} $R_t=C_1\cdot 2^t$ and {\bf (p2)} $B_t\geq C_2\cdot k/R_t^2$ for every $t\in [0:T]$, where $C_1$ is a sufficiently large constant and $C_2$ is sufficiently large as a function of $C_1$ and $\delta$, then the following conditions hold.

For every collection of hashings $\{\{H_{t, s}\}_{s=1}^{R_t}\}_{t=1}^T$, if the filters used in hashings $H_{t, s}$ are at $F$-sharp for even $F\geq 6$ for every $t\in [1:T]$, and $S$ admits a $\delta$-isolating partition (as per Definition~\ref{def:isolating-partition}) $S=S_1\cup\ldots\cup S_T$ with respect to $\{H_{t, s}\}$, then for every $x, \chi\in \C^n, x'=x-\chi$, 
for every $t\in [1:T]$  one has 
$||e^{head}_{S_t}(\{H_{t, s}\}_{s\in [1:R_t]}, x')||_1\leq 20 R_t^{-\delta}||x'_S||_1$.
\end{lemma}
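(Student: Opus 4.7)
The plan is to replace the $\quant^{1/5}$ by a plain average over $s$ and then swap the order of summation so that the outer sum is indexed by $j\in S$. For any nonnegative list $u_1,\ldots,u_{R_t}$ the $\lceil R_t/5\rceil$-th largest entry satisfies $\quant^{1/5}_s u_s\le \tfrac{5}{R_t}\sum_s u_s$, since the top $\lceil R_t/5\rceil$ terms alone contribute at least $\lceil R_t/5\rceil\cdot\quant^{1/5}_s u_s$ to the total. Applying this to $u_s=e^{head}_i(H_{t,s},x')$, summing over $i\in S_t$, expanding via~\eqref{eq:eh-pi}, and swapping sums yields
\[
\sum_{i\in S_t} e^{head}_i(\{H_{t,s}\}_s,x')\;\le\;\frac{5}{R_t}\sum_{j\in S}|x'_j|\cdot \Phi_{t,j},\qquad \Phi_{t,j}:=\sum_{s=1}^{R_t}\sum_{i\in S_t\setminus\{j\}} G_{o_i(i)}^{-1}\,G_{o_i(j)},
\]
where the filter offsets depend implicitly on $s$ through $H_{t,s}$. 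It therefore suffices to prove $\Phi_{t,j}\le 4R_t^{1-\delta}$ for every $j\in S$, since this immediately gives the claimed $\le 20 R_t^{-\delta}\|x'_S\|_1$ bound.

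To bound $\Phi_{t,j}$ I would split each pair $(s,i)$ into a \emph{colliding} part ($h_{t,s}(i)=h_{t,s}(j)$) and a \emph{non-colliding} part. Since $G\in[0,1]$ and $G_{o_i(i)}^{-1}\le 2$, the colliding part contributes at most $2$ per pair, while for non-colliding pairs one has $|o_i(j)|\ge n/(2B_t)$ and Definition~\ref{def:filterG}(3) gives the polynomial decay $G_{o_i(j)}\le (1/4)^{F-1}(n/(B_t|o_i(j)|))^{F-1}$. Shell-decomposing by $\ell=|h_{t,s}(i)-h_{t,s}(j)|\bmod B_t\ge 1$ and letting $N_\ell^{(s,j)}$ count the elements of $S_t\setminus\{j\}$ in shell $\ell$ under $H_{t,s}$, this yields
\[
\Phi^{\mathrm{nc}}_{t,j}\;\le\;2\bigl(\tfrac14\bigr)^{F-1}\sum_{s=1}^{R_t}\sum_{\ell\ge 1} N_\ell^{(s,j)}\bigl(\ell-\tfrac12\bigr)^{-(F-1)}.
\]
For $F\ge 6$ the weight sequence $(\ell-\tfrac12)^{-(F-1)}$ is absolutely summable with sum $O(1)$, so the non-colliding contribution is at most a small constant times $(1/4)^{F-1}\sum_s\sum_{\ell\ge 1}N_\ell^{(s,j)}$.

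The final and main step is to combine these two ingredients with the properties of a $\delta$-isolating partition (Definition~\ref{def:isolating-partition}) together with the parameter constraints $R_t=C_1 2^t$, $B_t\ge C_2 k/R_t^2$. The definition of a $\delta$-isolating partition is precisely tailored so that (a) for each $j\in S$ the total number of pairs $(s,i)$ with $i\in S_t\setminus\{j\}$ colliding with $j$ in hashing $H_{t,s}$ is $O(R_t^{1-\delta})$, which controls $\Phi^{\mathrm{col}}_{t,j}$, and (b) the aggregated per-shell loads $\sum_s N_\ell^{(s,j)}$ are bounded strongly enough that, multiplied by the suppression factor $(1/4)^{F-1}$ coming from $F\ge 6$, the non-colliding contribution is also $O(R_t^{1-\delta})$. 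Here $C_1$ is chosen large enough that the collision bound in (a) has a small enough constant, while $C_2$ is chosen large in terms of $C_1$ and $\delta$ so that the bucket count $B_t\ge C_2 k/R_t^2$ spreads $S_t$ sufficiently for (b). Adding the colliding and non-colliding bounds yields $\Phi_{t,j}\le 4R_t^{1-\delta}$, completing the proof.

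The main obstacle is the non-colliding term: unlike in the ``ideal'' hashing used with arbitrary linear measurements, flat filters leak polynomially into neighbouring buckets, so even in hashings where $j$ is ``perfectly isolated'' from $S_t$ there is still a nonzero contribution from elements of $S_t$ lying in nearby buckets. Balancing this leakage against the desired $R_t^{-\delta}$ accuracy forces the interplay between the filter sharpness $F\ge 6$, the bucket count $B_t\ge C_2 k/R_t^2$, and the fine structure of the $\delta$-isolating partition to be set up carefully; this is exactly what the choice of $C_1,C_2$ in the hypothesis is meant to accomplish.
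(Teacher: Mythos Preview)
Your opening moves match the paper's: bound the quantile by $5/R_t$ times the sum over $s$, expand $e^{head}$ via~\eqref{eq:eh-pi}, swap sums so the outer sum runs over $j\in S$, and reduce to bounding a quantity essentially identical to your $\Phi_{t,j}$ (the paper calls the per-$s$ version $D_j^s$). Your target $\Phi_{t,j}\le 4R_t^{1-\delta}$ is also exactly what is needed.

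The gap is in how you handle the non-colliding term. First, the inference ``the weight sequence $(\ell-\tfrac12)^{-(F-1)}$ is summable, so the weighted sum is at most a constant times the unweighted sum $\sum_s\sum_\ell N_\ell^{(s,j)}$'' is not valid: summability of the weights lets you bound $\sum_\ell N_\ell w_\ell$ by $(\max_\ell N_\ell)\sum_\ell w_\ell$, not by $C\sum_\ell N_\ell$. Second, and more seriously, even if that step were patched, the quantity you arrive at is hopeless: $\sum_s\sum_{\ell\ge 1}N_\ell^{(s,j)}=R_t\cdot|S_t\setminus\{j\}|$, which for $t=1$ is $\Theta(R_1 k)$. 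The suppression factor $(1/4)^{F-1}$ is an absolute constant (for $F=6$ it is $1/1024$), so it cannot turn $\Theta(k)$ into $O(R_1^{1-\delta})$. Your claim (b) that the isolating partition makes $\sum_s N_\ell^{(s,j)}$ small enough is simply not true at this level of aggregation; the definition of an isolating partition says nothing that bounds $|S_t|$ below $k$ for $t=1$.

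What the paper does differently is two things. It puts the cut not at ``same bucket vs.\ different bucket'' but at $R_t$ buckets: the near part $Z_j^s$ counts $i\in S_t$ with $|\pi(i)-\pi(j)|\le (n/B_t)R_t$, and the far part $X_j^s$ collects the rest. For $Z_j^s$, property~(2) of Definition~\ref{def:isolating-partition} (no $R_t^{-3}$-crowding at scale $q=1+\log_2 R_t$) forces $Z_j^s\in\{0,1\}$, and then property~(3) gives $\sum_s Z_j^s\le R_t^{1-\delta}$. For $X_j^s$, the paper again uses property~(2): via a triangle-inequality detour through the nearest $i^*\in S_t$, the crowding bound yields $|N(j,q)|\lesssim 2^{2q}$ for each dyadic shell $q\ge\log_2 R_t$, which against the filter decay $G\lesssim 2^{-5q}$ sums to $X_j^s\le R_t^{-2}$, hence $\sum_s X_j^s\le R_t^{-1}$. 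The point you are missing is that property~(2) is used \emph{per shell} to control $N_\ell$, not merely in aggregate; without that, the leakage from far shells swamps the budget. Splitting at one bucket instead of $R_t$ buckets also misaligns the colliding count with property~(3), which is stated in terms of $R_t$-collisions.
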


\begin{remark}\label{rm:overview}
Note that the result of Lemma~\ref{lm:l1b-head} implies that the cumulative error induced by the entire set $S$ of `heavy' coefficients on $S_t$ is only a $\approx R_t^{-\delta}$ fraction of the $\ell_1$ norm of $x'_S$, despite the fact that when estimating $S_t$ we hash into $B_t\ll k$ buckets, and in general each bucket will contain many elements of $S$. Furthermore, if we choose $R_t$ to increase fast enough so that $\sum_{t\geq 1} R_t^{-\delta}\ll 1$, we get that the cumulative contribution of elements in $S$ to estimation/location error over all $t\geq 1$ is less than $1$, meaning that errors do not accumulate much. This is exactly what we achieve by setting $R_t=C_1 2^t$ for a large constant $C_1>1$ -- see proof of Lemma~\ref{lm:main-estimate-snr-reduction} in Section~\ref{sec:est}.
\end{remark}

In the rest of the section we first introduce relevant notation and in particular define the central notion of an {\em isolating partition} of the set $S$ of head elements (in section~\ref{sec:notation-isolating}), then prove that any fixed set $S$ of size about $k$ admits an isolating partition with at least high constant probability over the choice of hashings $\{\{H_{t, s}\}_{s=1}^{R_t}\}_{t=1}^T$ (section~\ref{sec:partitions-exist}), and show how to construct the partition efficiently (Lemma~\ref{lm:construct-partition-runtime}) if the set $S$ is given explicitly (used in Algorithm~\ref{alg:estimate-efficient}, line~13). Using this result we then give a proof of Lemma~\ref{lm:l1b-head}.

\subsection{Main definitions}\label{sec:notation-isolating}
Let $S$ be any subset of $\nsq$ (we will later instantiate $S$ to the set of `large' elements of $x$).  We now define a decomposition of the set $S$ into $T=\frac1{1-\delta}\log_2 \log (k+1)+O(1)$ disjoint sets $S_1,S_2,\ldots, S_T$ with respect to a sequence $1\leq R_0\leq R_1 \leq R_2\leq \ldots\leq R_T$ and hashings $\{\{H_{t, s}\}_{s=1}^{R_t}\}_{t=1}^T$.  We start with several auxiliary definitions.

\begin{definition}[$t$-Collision]
We say that an element $a\in [n]$ participates in a $t$-collision with another element $b\in [n]$ under hashing $H=(\pi, B, F)$ if $a$ hashes within at most $t$ buckets of $b$ under  $H$, i.e. if $|\pi(a)-\pi(b)|\leq \frac{n}{B}(t-1)$.
\end{definition}

\begin{definition}[$\delta$-bad element]\label{def:bad}
For $\delta\in (0, 1)$, for each $t\in [1:T]$, sequence $1\leq R_0\leq R_1\leq\ldots\leq R_T$, where $T\geq 1$ is an integer, 
we say that an element $a$ of $S$ is {\em $\delta$-bad for $S_t$} with respect to a partition $S=S_1\cup S_2\cup\ldots\cup S_T$ and hashings $\{\{H_{t, s}\}_{s=1}^{R_t}\}_{t=1}^T$ if $a$ participates in an $R_t$-collision with at least one element of $S_t$ under more than a $R_t^{-\delta}$ fraction of hashings $H_{t, 1},\ldots, H_{t, R_t}$.
 \end{definition}

 \begin{definition}[$\lambda$-crowded element]\label{def:lambda-crowded}
For a hashing $H=(\pi, B, F)$ and a real number $\lambda\in (0, 1)$, an element $a\in \nsq$ is $\lambda$-crowded at scale $q\geq 0$ by a set $Q\subseteq \nsq$ if 
$\left|\mathbb{B}\left(\pi(a), \frac{n}{B}\cdot 2^q\right) \cap \pi(Q\setminus \{a\})\right|\geq \lambda 2^{2q}$. We say that an element $a$ is simply $\lambda$-crowded if it is $\lambda$-crowded at least at one scale $q\geq 0$.
\end{definition}
\begin{remark}
The intuition for the definition above is that if the permutation $\pi$ was pairwise independent,  then for every $a\in \nsq$ the expectation of $\left|\mathbb{B}\left(\pi(a), \frac{n}{B}\cdot 2^q\right) \cap \pi(Q\setminus \{a\})\right|$ would be about $2^{q}$ if $|Q|\leq B$, i.e. about the number of buckets that fall into the interval. We say that an element is $\lambda$-crowded when the number of elements in its vicinity exceeds $\lambda 2^{2q}$ for at least one scale $q$, i.e. exceeds expectation by a $\lambda 2^{q}$ factor. Our choice of $\lambda=R_t^{-3}$ serves the purpose of enforcing that no element of $S_t$ is hashed too close to another element of $S_t$, and no (large) neighborhood of any element of $S_t$ it too crowded. These two parameter regimes have somewhat distinct applications in the proof of Lemma~\ref{lm:l1b-head} -- see footnotes 3 and 4 in the proof of the lemma on pages~13 and~14 respectively.
\end{remark}
 
 \begin{definition}[$\delta$-isolating partition]\label{def:isolating-partition}
For $\delta\in (0, 1)$, for any $k\geq 1$ and any $S\subseteq \nsq, |S|\leq k$, a partition $S=S_1\cup S_2\cup \ldots \cup S_T$ of $S\subseteq [n]$ into disjoint subsets is {\em $\delta$-isolating} with respect to a sequence of integers $1\leq R_0\leq R_1\leq \ldots\leq R_T$ and hashings $\{H_{t, s}\}_{s=1}^{R_t}$ for $t=[1:T]$ if the following conditions are satisfied for each $t\in [1:T]$:
\begin{description}
\item[(1)] $|S_t|\leq k\cdot \frac{R_0}{R_{t-1}}2^{-2^{(1-\delta)\cdot (t-1)}+1}$ (the sizes of $S_t$'s decay doubly exponentially);
\item[(2)] no element of $S_t$ is $R_t^{-3}$-crowded by $S_t$ under any of $\{H_{t, s}\}_{s=1}^{R_t}$ (elements of $S_t$ are rather uniformly spread under hashings $H_{t, s}$);
\item[(3)] no element of $S_t$ $R_t$-collides with an element of $S$ that is $\delta$-bad for $S_t$ under any of $\{H_{t, s}\}_{s=1}^{R_t}$ (collisions between $S$ and $S_t$ are rare).
\end{description}
 \end{definition}
 
\iflong{ Note that the bound on the size of $S_1$ is at most $k$, which will be trivial for our instantiation of $S$. Nontrivial decay of the $S_t$ starts at $t=2$.}  Also note that property {\bf (3)} is exactly why we call our approach `isolation on average': as the proof of Lemma~\ref{lm:l1b-head} below shows, the fact that no element of $S$ that is $\delta$-bad for $S_t$ collides with $S_t$ under any of the hashings implies that every element of $S$ has a limited contribution to estimation error, and errors do not propagate.

\subsection{Existence of isolating partitions of $S$}\label{sec:partitions-exist}

In this section we prove that any set $S$ of size at most $k$ admits an isolating partition with respect to a random set of hashings as in Algorithm~\ref{alg:main-sublinear} with at least high constant probability. We prove this claim by giving an algorithm that we show constructs such a partition successfully with high constant probability. The algorithm is Algorithm~\ref{alg:partition}, presented below.  We prove that Algorithm~\ref{alg:partition} terminates correctly in Lemma~\ref{lm:isolating-partition-construction} below, assuming that the number of hashings at each step $t$ and their parameters are chosen appropriately. 

If the set $S$ is known explicitly, the partition can be constructed efficiently by running Algorithm~\ref{alg:partition} -- the details are given in Lemma~\ref{lm:construct-partition-runtime}. An efficient construction is needed in our sample efficient primitive (see Algorithm~\ref{alg:estimate-efficient}). Our sample-optimal Sparse FFT algorithm is oblivious to the actual partition, as its task is to identify the set $S$. However, as we show in Section~\ref{sec:sublinear}, the existence of an isolating partition of $S$ is sufficient for the algorithm to work.

\begin{algorithm}
\caption{Construction of an isolating partition $\{S_j\}$}\label{alg:partition}
\begin{algorithmic}[1]
\Procedure{ConstructPartition}{\{$\{H_{t, s}\}_{s\in [1:R_t]}\}_{t=1}^T$}\Comment{Hashings sampled uniformly}
\State $S^1_1\gets S$, $t\gets 1$
\While{$S^t_t\neq \emptyset$} 
\State $\text{Bad}_t\gets \{\text{elements of~}S\text{~that are $\delta$-bad wrt~$S_t^t$ under $\{H_{t, s}\}_{s\in [1:R_t]}$}\}$
\State $U_t\gets \{\text{elements $a\in S^t_t$ that $R_t$-collide with $\text{Bad}_t$ under at least one of $\{H_{t, s}\}_{s\in [1:R_t]}$}\}$
\State $V_t\gets \{\text{elements $a\in S^t_t$ that are $R_t^{-3}$-crowded by $S^t_t$ under at least one of $\{H_{t, s}\}_{s\in [1:R_t]}$}\}$
\State Set $S^{t+1}_{t+1}\gets \text{Bad}_t\cup U_t\cup V_t$ and $S^{t+1}_j\gets S^t_j\setminus S^{t+1}_{t+1}$ for $j=1,\ldots, t$
\State $t\gets t+1$ 
\EndWhile
\State \textbf{return} the partition $\{S^t_j\}_{j=1}^t$
\EndProcedure 
\end{algorithmic}
\end{algorithm}

\iflong{

We now argue that Algorithm~\ref{alg:partition} constructs an isolating partition of any set $S\subseteq \nsq$ that satisifies $|S|\leq k$ with at least high constant probability:
\begin{lemma}\label{lm:isolating-partition-construction}
For every integer $k\geq 1$, every $S\subseteq \nsq, |S|\leq k$,  every $\delta\in (0, 1/2)$, if the parameters $B_t, R_t$ are selected to satisfy {\bf (p1)} $R_t=C_1\cdot 2^t$ and {\bf (p2)} $B_t\geq C_2\cdot k/R_t^2$ for every $t\in [0:T]$, where $C_1$ is a sufficiently large constant and $C_2$ is sufficiently large as a function of $C_1$ and $\delta$, then the following conditions hold.  

With probability at least $1-1/25$ over the choice of hashings $\{\{H_{t, s}\}_{s\in [1:R_t]}\}_{t=1}^T$  Algorithm~\ref{alg:partition} terminates in $T=\frac1{1-\delta}\log_2 \log (k+1)+O(1)$ steps. When the algorithm terminates, the output partition $\{S_j\}_{j=1}^T$  is isolating as per Definition~\ref{def:isolating-partition}.
\end{lemma}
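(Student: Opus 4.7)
My plan is to proceed by induction on $t$, with inductive hypothesis $\sigma_t := |S^t_t| \leq k\cdot (R_0/R_{t-1})\cdot 2^{-2^{(1-\delta)(t-1)}+1}$, which is exactly property~{\bf (1)} of Definition~\ref{def:isolating-partition} for $S^t_t$, hence for its subset $S^T_t$. The base case $t=1$ is immediate since $|S^1_1|=|S|\leq k$ and the right-hand side equals $k$. Properties {\bf (2)} and {\bf (3)} hold for the output by construction combined with a monotonicity observation: being $R_t^{-3}$-crowded by a set and being $\delta$-bad with respect to a set are both monotone in the reference set, so removing the crowded elements $V_t$ and the $\text{Bad}_t$-colliders $U_t$ from $S^t_t$ guarantees these properties relative to any subset of $S^t_t$, in particular relative to the final $S^T_t$. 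Termination in $T=\frac{1}{1-\delta}\log_2\log(k+1)+O(1)$ rounds is automatic once the inductive bound on $\sigma_T$ falls strictly below $1$.

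For the inductive step I would condition on the hashings used in rounds $1,\ldots,t-1$ (so $S^t_t$ is determined while $\{H_{t,s}\}_{s=1}^{R_t}$ is independent of the conditioning) and bound $\expect[|\text{Bad}_t|]$, $\expect[|V_t|]$, $\expect[|U_t|]$ separately, then apply Markov's inequality at each round with failure parameter $\Theta(1/(25T))$ and union bound across the $T$ rounds. For $\text{Bad}_t$, Lemma~\ref{lemma:limitedindependence} and a union bound over $b\in S^t_t$ yield per-hashing collision probability $p=O(R_t\sigma_t/B_t)$; inserting $B_t\geq C_2 k/R_t^2$ together with the inductive bound on $\sigma_t$ shows $p\leq R_t^{-\delta}/2$ once $C_2$ is chosen large enough as a function of $C_1$ and $\delta$. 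A Chernoff bound over the $R_t$ independent hashings then gives $\prob[a\in \text{Bad}_t]\leq e^{-\Omega(R_t^{1-\delta})}$, so $\expect[|\text{Bad}_t|]\leq k\cdot 2^{-\Omega(2^{(1-\delta)t})}$.

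For $V_t$ I would combine Lemma~\ref{lemma:limitedindependence} with Markov scale-by-scale: at scale $q$ the expected size of the intersection of $\mathbb{B}(\pi(a),(n/B_t)2^q)$ with $\pi(S^t_t\setminus\{a\})$ is $O(2^q\sigma_t/B_t)$, so Markov gives per-scale crowding probability $O(R_t^3\sigma_t/(B_t2^q))$; summing the geometric series in $q$ and union bounding over the $R_t$ hashings yields $\prob[a\in V_t]=O(R_t^4\sigma_t/B_t)$ and hence $\expect[|V_t|]=O(R_t^6\sigma_t^2/(C_2 k))$. For $U_t$, conditioning on $\text{Bad}_t$ gives $\prob[a\in U_t\mid \text{Bad}_t]=O(R_t^2|\text{Bad}_t|/B_t)$, hence $\expect[|U_t|]=O(R_t^4\sigma_t\cdot \expect[|\text{Bad}_t|]/(C_2k))$, a quantity strictly dominated by $\expect[|\text{Bad}_t|]$.

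The main obstacle is the algebraic verification that $|S^{t+1}_{t+1}|\leq |\text{Bad}_t|+|V_t|+|U_t|$ fits inside the target $k(R_0/R_t)2^{-2^{(1-\delta)t}+1}$. The $|\text{Bad}_t|$ piece beats the target because its decay rate $2^{-\Omega(2^{(1-\delta)t})}$ is faster than the target's rate $2^{-2^{(1-\delta)t}}$ provided the hidden constant exceeds $1$, which is achieved by taking $C_1$ large. The $|V_t|$ piece is more delicate: after substituting the inductive bound on $\sigma_t$ and using $R_{t-1}=R_t/2$, the ratio to the target becomes $O(R_t^5 R_0/C_2)\cdot 2^{-2^{(1-\delta)(t-1)}\cdot 2(1-2^{-\delta})}$, and since $2(1-2^{-\delta})>0$ for any $\delta\in(0,1/2)$ the doubly exponential factor dominates the polynomial-in-$t$ prefactor $R_t^5 R_0=\poly(C_1)\cdot 2^{O(t)}$ for all $t\geq 1$, provided $C_2$ is chosen sufficiently large as a function of $C_1$ and $\delta$ to absorb the remaining constants and the $25T$ Markov overhead.
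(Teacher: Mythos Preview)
Your overall plan matches the paper's proof: induction on $t$ with the stated size bound on $S_t^t$, decomposition $S_{t+1}^{t+1}=\text{Bad}_t\cup U_t\cup V_t$, and the monotonicity argument for properties {\bf (2)} and {\bf (3)} of Definition~\ref{def:isolating-partition}. Your bounds on $\text{Bad}_t$ and $V_t$ are essentially identical to the paper's.

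There is, however, a real gap in your treatment of $U_t$. The assertion ``conditioning on $\text{Bad}_t$ gives $\prob[a\in U_t\mid \text{Bad}_t]=O(R_t^2|\text{Bad}_t|/B_t)$'' is not justified: the set $\text{Bad}_t$ is a function of \emph{all} the hashings $\{H_{t,s}\}_{s=1}^{R_t}$, so once you condition on it none of these hashings is uniform and Lemma~\ref{lemma:limitedindependence} cannot be invoked for the $O(R_t/B_t)$ collision bound. The paper resolves this with a leave-one-out trick. Fix $a\in S_t^t$ and $s\in[1:R_t]$ and let $Q_s(a)\subseteq S$ be the elements $R_t$-colliding with $a$ under $H_{t,s}$. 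For any $j\in Q_s(a)$, since $a\in S_t^t$, the collision with $a$ under $H_{t,s}$ already counts as one hashing in which $j$ hits $S_t^t$; hence $j$ can be $\delta$-bad only if $j$ collides with $S_t^t$ in at least $R_t^{1-\delta}$ of the \emph{remaining} hashings $\{H_{t,s'}\}_{s'\neq s}$. The latter event is independent of $H_{t,s}$, so its probability (the same Chernoff bound $e^{-\Omega(R_t^{1-\delta})}$) can legitimately be multiplied by $\expect_{H_{t,s}}[|Q_s(a)|]=O(R_t|S|/B_t)$. Summing over $s$ and $a$ recovers a bound of the same order you claim, but with a valid argument.

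A smaller issue: imposing a uniform Markov failure parameter $\Theta(1/(25T))$ at every round would force (expectation)/(target) $\leq 1/(25T)$; at $t=1$ this translates to $C_2\gtrsim T$, hence $C_2$ depending on $k$, contrary to the hypothesis. The paper instead uses step-dependent failure probability $O(1/R_t)$, which sums to $O(1/C_1)\leq 1/25$. Your expectation bounds are in fact sharp enough for this (the ratio to the inductive target is $O_{C_1,\delta}(1)/(C_2 R_t)$ by the same doubly-exponential-beats-polynomial reasoning you already invoke), so you should sum the step-dependent failure probabilities rather than impose a uniform one.
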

\begin{proofsketch}
The proof consists of two parts: showing that once the algorithm terminates, the resulting partition is $\delta$-isolating, and then showing that the algorithm terminates with large constant probability as long as parameters are set appropriately (to satisfy {\bf p1} and {\bf p2}). The first part is rather direct from definitions, and is presented in Appendix~\ref{app:isolating-partition} together with the full proof of Lemma~\ref{lm:isolating-partition-construction}. We now outline the proof of the second part, i.e. that the algorithm terminates.

The crux of the proof consists of bounding the sizes of the sets $V_t, U_t, \text{Bad}_t$ obtained at time step $t=1,\ldots, T$.  It is easiest to start with the intuition for $V_t$, i.e. elements in $S_t^t$ that are crowded by other elements of $S_t^t$. An element $a$ is crowded by $S_t^t$ under permutation $\pi$ if there are too many elements of $S_t^t$ in a neighborhood of a certain size of $a$. The definition of being crowded (Definition~\ref{def:lambda-crowded} above) involves the notion of being crowded at a given scale, and one can see that an element $a$ is more likely to fail because of small scales as opposed to large scales. This means that for every $a\in S_t^t$ one has 
$$
\prob[a\in V^t]\approx \poly(R_t)\cdot k_t/B_t=\poly(R_t)\cdot k_t/B, \text{~~~~~(since $B_t=\Theta(B/R_t^2)$)}
$$
i.e., up to terms polynomial in $R_t$, the probability of being crowded is about the probability of $O(1)$-colliding with one other element of $S_t^t$ (see below for a formalization of this claim).  Since decay of the size of $S_t^t$ that we will exhibit is doubly exponential in $t$, factors polynomial in $R_t$ are negligible, as they are only singly exponential in $t$. Given the expression for $\prob[a\in V^t]$ above, we get that 
$$
\expect[|V_t|]=\sum_{a\in S_t^t} \poly(R_t)\cdot k_t/B=\poly(R_t)\cdot (k_t/B)^2\cdot B.
$$
This means that if $V_t$ were the only contribution to $S_{t+1}^{t+1}$, then, discounting the $\poly(R_t)$ terms, we would get the recurrence 
$k_{t+1}/B=(k_t/B)^2$, which implies that $k_t\approx B\cdot 2^{-2^t}$. The $\poly(R_t)$ terms do not affect the decay substantially, and in the actual proof below we get that $k_t\leq O(k\cdot 2^{-2^{(1-\delta) t}})$ for a small constant $\delta>0$.

The more interesting term is the contribution from $U_t$ and $\text{Bad}_t$. We first describe the intuition behind the asymptotic growth of $\text{Bad}_t$. For an element $a\in S$ we have 
$$
\prob[a\text{~~$R_t$-collides with~an element of $S_t^t$}]\approx k_t/B_t=\poly(R_t) \cdot k_t/B\ll 1/(10R_t),
$$
since $k_t\approx B\cdot 2^{-2^t}$ and $R_t$ are only singly exponential in $t$. Note that this means that the expected number of collisions {\bf over all $R_t$ hashings $H_{t, s}$, $s=1,\ldots, R_t$, is less than $1/10$!} At the same time recall that $a$ is $\delta$-bad (as per Definition~\ref{def:bad} above) if $a$ collides with at least one element of $S_t^t$ under at least $R_t^{1-\delta}$ hashings $\{H_{t, s}\}_{s=1}^{R_t}$. Since the hashings are independent, we have by Chernoff bounds that the probability of the latter event is bounded by about 
$e^{-R_t^{1-\delta}}=e^{-C_1^{1-\delta} 2^{(1-\delta)t}}$,
so we again get doubly exponential decay, but for a different reason this time: by our choice of parameters $R_t$ to grow singly exponentially, and Chernoff bounds show that the probability of getting at least $R_t^{1-\delta}$ collisions while the expected number of collisions is less than $1/10$ decays exponentially in $R_t^{1-\delta}$. This is exactly the point at which we say that most elements of $S$ are `isolated on average'. A formal version of this argument lets us argue that the size of $\text{Bad}_t$ is about $k e^{-C_1^{1-\delta} 2^{(1-\delta) t}}$. It remains to bound the size of $U_t$, i.e. elements that collide with a bad element in at least one of the hashings. Since the number of bad elements is doubly exponentially small and the number of hashings $R_t$ is only single exponential, a union  bound essentially shows that this quantity is small as well. Some care is needed in arguing this due to a dependency issue, but the intuition is the same as the one described for $\text{Bad}_t$ above. The formal proof is given in Appendix~\ref{app:isolating-partition}.
}
\end{proofsketch}

If the set $S$ is known explicitly, Algorithm~\ref{alg:partition} admits a simple efficient implementation (the proof of the lemma is given in Appendix~\ref{app:isolating-partition}):
\begin{lemma}\label{lm:construct-partition-runtime}
For any integer $k\geq 1$, any $S\subseteq \nsq, |S|\leq k$,  if the hashings $\{\{H_{t, s}\}_{s\in [1:R_t]}\}_{t=1}^T$ are such that the partition  $\{S_j\}_{j=1}^T$ defined by Algorithm~\ref{alg:partition} is isolating as per Definition~\ref{def:isolating-partition}, this partition can be constructed explicitly in time $O\left((\sum_{t=1}^T R_t)\cdot |S|\log |S|\right)$.
\end{lemma}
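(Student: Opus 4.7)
The plan is to show that each outer iteration $t$ of Algorithm~\ref{alg:partition} executes in time $O(R_t \cdot |S| \log |S|)$; summing over $t \in [1:T]$ then yields the claimed bound. Since the hypothesis guarantees that the partition constructed by the algorithm is $\delta$-isolating, the outer loop runs for at most $T$ rounds, and within each round the work is dominated by (i) per-hashing preprocessing for the $R_t$ hashings $\{H_{t,s}\}_{s=1}^{R_t}$ and (ii) computing the three auxiliary sets $\text{Bad}_t$, $U_t$, $V_t$ used to update the partition.

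The main data-structural primitive I would build for each hashing $H_{t,s}=(\pi_{t,s},B_t,F)$ is fast range-counting on the multisets $\pi_{t,s}(S)$, $\pi_{t,s}(S_t^t)$, and (after line~4 is completed) $\pi_{t,s}(\text{Bad}_t)$. Concretely, I would evaluate $\pi_{t,s}(a)$ for each relevant $a$ in $O(|S|)$ time, compute bucket indices $h_{t,s}(a) = \text{round}((B_t/n)\pi_{t,s}(a)) \in [B_t]$, and then precompute prefix sums of the bucket-count array over the $B_t \le |S|$ buckets in $O(|S|)$ time. With these prefix sums in hand, any query of the form "how many elements of $X$ lie in the circular interval $\mathbb{B}(\pi_{t,s}(a), r \cdot n/B_t)$?" is answered in $O(1)$ time up to boundary rounding of $O(1)$ points. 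The only $\log |S|$ factor in the per-hashing cost comes from the initial sort used to assemble the bucket-count array.

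Given these structures, the $R_t$-collision tests in lines~4--5 of Algorithm~\ref{alg:partition} reduce to a single range query each: for each $a \in S$ and each $s \in [1:R_t]$ I would ask whether $\mathbb{B}(\pi_{t,s}(a),(R_t-1)n/B_t) \cap \pi_{t,s}(S_t^t)$ is non-empty, incrementing a per-element counter, and then declare $a \in \text{Bad}_t$ whenever the counter exceeds $R_t^{1-\delta}$. A second analogous pass, with $\text{Bad}_t$ in place of $S_t^t$, computes $U_t$ (the fact that this is done only after $\text{Bad}_t$ has been finalized is why two passes over the $R_t$ hashings are needed). Both passes cost $O(R_t |S|)$ beyond the preprocessing.

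I expect the main obstacle to be the construction of $V_t$, because $R_t^{-3}$-crowding (Definition~\ref{def:lambda-crowded}) is defined by a quantifier over \emph{all} scales $q \ge 0$. Here I would first observe that scales beyond $q = O(\log |S|)$ are vacuous, since once $R_t^{-3} 2^{2q} > |S_t^t|$ the crowding inequality cannot hold, and then use the same prefix-sum primitive to evaluate $|\pi_{t,s}(S_t^t) \cap \mathbb{B}(\pi_{t,s}(a), (n/B_t)2^q)|$ in $O(1)$ time at each of the $O(\log |S|)$ relevant scales. This gives $O(\log |S|)$ work per element of $S_t^t$ and $O(|S|\log |S|)$ per hashing for the crowding check. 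Combining the three contributions, the per-round cost is $O(R_t \cdot |S| \log |S|)$, and summing yields the claimed total $O\bigl((\sum_{t=1}^T R_t)\cdot |S| \log |S|\bigr)$. The one step where care is needed is to ensure that the $O(1)$ coarsening error at bucket boundaries does not corrupt the crowding test in regimes where $R_t^{-3}2^{2q} = O(1)$; there one reverts to an $O(\log |S|)$ binary-search query on the sorted array, which is absorbed by the same asymptotic bound.
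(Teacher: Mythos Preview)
Your proposal is correct and follows essentially the same plan as the paper: for each hashing build a range-counting structure on the permuted points $\pi_{t,s}(\cdot)$, then decide membership in $\text{Bad}_t$, $U_t$, $V_t$ via range queries (the paper uses augmented binary search trees on the two halves of the circle rather than bucket prefix sums, but these are interchangeable here). One small correction: the assertion $B_t \le |S|$ is not justified by the lemma's hypotheses (only $|S|\le k$ is assumed, while $B_t=\Theta(k/R_t^2)$ can exceed $|S|$), so the prefix-sum array may be larger than $|S|$; falling back to the sorted-array/binary-search queries you already mention fixes this and is exactly what the paper does. Your explicit treatment of the $O(\log|S|)$ relevant scales in the crowding test is in fact more detailed than the paper's terse accounting.
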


\iflong{
\subsection{Proof of main technical lemma (Lemma~\ref{lm:l1b-head})}

We now prove the main result of this section, Lemma~\ref{lm:l1b-head}. This lemma then forms the basis of our sample-efficient estimation primitive, as well as the location primitive. The proof crucially relies on the existence of an isolating partition of the set $S$ of `head elements'  (which is guaranteed by Lemma~\ref{lm:isolating-partition-construction} from the previous section with at least high constant probability). 

{\noindent {\bf Lemma~\ref{lm:l1b-head}} (restated) \em
For every integer $k\geq 1$, every $S\subseteq \nsq, |S|\leq k$,  every $\delta\in (0, 1/2)$, if the parameters $B_t, R_t$ are selected to satisfy {\bf (p1)} $R_t=C_1\cdot 2^t$ and {\bf (p2)} $B_t\geq C_2\cdot k/R_t^2$ for every $t\in [0:T]$, where $C_1$ is a sufficiently large constant and $C_2$ is sufficiently large as a function of $C_1$ and $\delta$, then the following conditions hold.

For every collection of hashings $\{\{H_{t, s}\}_{s=1}^{R_t}\}_{t=1}^T$, if the filters used in hashings $H_{t, s}$ are at $F$-sharp for even $F\geq 6$ for every $t\in [1:T]$, and $S$ admits a $\delta$-isolating partition (as per Definition~\ref{def:isolating-partition}) $S=S_1\cup\ldots\cup S_T$ with respect to $\{H_{t, s}\}$, then for every $x, \chi\in \C^n, x'=x-\chi$, 
for every $t\in [1:T]$  one has 
$||e^{head}_{S_t}(\{H_{t, s}\}_{s\in [1:R_t]}, x')||_1\leq 20 R_t^{-\delta}||x'_S||_1$.
}
\begin{proof}
Fix $t\in [1:T]$. For every $s\in [1:R_t]$ by \eqref{eq:eh-pi} for all $i\in S$ we have
\begin{equation}\label{eq:2398hfgxc}
\begin{split}
e^{head}_{i}(H_{t, s}, x')&= G_{o_i(i)}^{-1}\cdot \sum_{j\in S\setminus  \{i\}} G_{o_i(j)}\cdot |x'_j|\\
\end{split}
\end{equation}
where $o=o_{H_{t, s}}$ implicitly depends on the hashing $H$. We omit the dependence on $H$ when $H$ is fixed to simplify notation.
By summing~\eqref{eq:2398hfgxc} over $i\in S_t$ we get
\begin{equation}\label{eq:asdf111}
\begin{split}
e^{head}_{S_t}(H_{t, s})= \sum_{i\in S_t} G_{o_i(i)}^{-1}\cdot \sum_{j\in  S\setminus  \{i\}} G_{o_i(j)} \cdot |x'_j|&\leq 2\sum_{j\in S} |x'_j|\cdot \sum_{i\in S_t\setminus  \{j\}} G_{o_i(j)}\\
&=2\sum_{j\in S} |x'_j|\cdot D_j^s,
\end{split}
\end{equation}
where for all $j\in S$ and $s\in [1:R_t]$ we let $D^{s}_j:=\sum_{i\in S_t\setminus  \{j\}} G_{o_i(j)}$, and used the fact that $G_{o_i(i)}\geq 1/2$ by Definition~\ref{def:filterG} and assumption that $F\geq 6$.  Note that $D^s_j$ depends on $t$, but we omit $t$ to simplify notation. This will not cause confusion since $t$ is fixed for the entire proof. We now bound $D^s_j$ for $j\in S$. We have for $j\in S$
\begin{equation}\label{eq:sum}
\begin{split}
D^s_j&=\sum_{i\in S_t\setminus  \{j\}} G_{o_i(j)}\leq \left|\left\{i\in S_t\setminus  \{j\}: |\pi_{s, t}(i)-\pi_{s, t}(j)|_\circ<\frac{n}{B_t}\cdot R_t\right\}\right|+\sum_{\substack{i\in S_t\setminus  \{j\}: \\|\pi_{s, t}(i)-\pi_{s, t}(j)|_\circ\geq \frac{n}{B_t}\cdot R_t}} G_{o_i(j)}\\
&=:Z^s_j+X^s_j.
\end{split}
\end{equation}
We used the fact that $||G||_\infty\leq 1$ by Definition~\ref{def:filterG}, {\bf (1)} and assumption that $F$ is even, to go from the first line to the second.

 \paragraph{Bounding $Z^s_j$.} We start by showing that $Z^s_j\in \{0, 1\}$ for all $s, j$. We have by  property {\bf (2)} of an isolating partition (see Definition~\ref{def:isolating-partition}) that  no element of $S_t$ is $R_t^{-3}$-crowded with respect to $S_t$\footnote{Note that the assumption that no element of $S_t$ is $\lambda$-crowded with respect to $S_t$ for $\lambda=R_t^{-3}$ is used twice in the proof: first to show that $Z^s_j\in \{0, 1\}$, since no two elements of $S_t$ can hash too close to each other by the choice of $\lambda$, and then later to upper bound the number of neighbors in $S_t$ an element can have. The specific choice of $\lambda=R_t^{-3}$ is only used here, however: for the other application $\lambda=O(1)$ would have been sufficient.}. This in particular means that no $i\in S_t$ is $R_t^{-3}$-crowded at scale $q=1+\log_2 R_t$ by $S_t$, i.e.
$$
\left|\mathbb{B}_{\infty}\left(\pi_{s, t}(i), \frac{n}{B_t}\cdot (2R_t)\right) \cap \pi(S_t\setminus \{i\})\right|\leq R_t^{-3} 4\cdot 2^{2q}\leq 4R_t^{-3} R_t^2\leq 4/R_t<1
$$
for all $i\in S_t$, as long as $C_1>4$ (recall that $R_t=C_1 2^t$ and $C_1$ is larger than an absolute constant). For every $j\in S$ (as opposed to $S_t$) and every $a, b\in S_t$ we have  by triangle inequality
$|\pi_{s, t}(a)-\pi_{s, t}(b)|_\circ\leq |\pi_{s, t}(a)-\pi_{s, t}(j)|_\circ+|\pi_{s, t}(b)-\pi_{s, t}(j)|_\circ$.
This means that if for some $j\in S$ 
$$
Z^s_j=\left|\left\{i\in S_t\setminus  \{j\}: |\pi_{s, t}(i)-\pi_{s, t}(j)|_\circ<\frac{n}{B_t}\cdot R_t\right\}\right|> 1,
$$
we have  $|\pi_{s, t}(a)-\pi_{s, t}(b)|_\circ\leq |\pi_{s, t}(a)-\pi_{s, t}(j)|_\circ+|\pi_{s, t}(b)-\pi_{s, t}(j)|_\circ<\frac{n}{B_t}\cdot 2R_t$, a contradiction. Thus,
$Z^s_j\in \{0, 1\}$ for all $j\in S$, and  by property {\bf (3)} of an isolating partition we now conclude that
\begin{equation}\label{eq:z-sum}
\sum_{s=1}^{R_t} Z^s_j\leq R_t^{1-\delta}
\end{equation}
for all $j\in S$.

\paragraph{Bounding $X^s_j$.}We now turn to bounding $X^s_j$ (i.e. second term on the rhs of ~\eqref{eq:sum}). Let 
\begin{equation}\label{eq:njq-def}
N(j, q):=\{i\in S_t\setminus \{j\}\text{~s.t.~} |\pi_{s, t}(j)-\pi_{s, t}(i)|_\circ\leq \frac{n}{B_t}(2^{q+1}-1)\text{~and~} |\pi_{s, t}(j)-\pi_{s, t}(i)|_\circ\geq \frac{n}{B_t}\cdot R_t\}
\end{equation}
for convenience. We have
\begin{equation}\label{eq:2pogh}
\begin{split}
X^s_j=&\sum_{i\in S_t\setminus  \{j\}: |\pi_{s, t}(i)-\pi_{s, t}(j)|_\circ\geq \frac{n}{B_t}R_t} G_{o_i(j)}\leq \sum_{q\geq \log_2 R_t} \sum_{\substack{i\in S\setminus \{j\}\text{~s.t.~}\\ |\pi_{s, t}(j)-\pi_{s, t}(i)|_\circ\in \frac{n}{B_t}[2^q, 2^{q+1}-1)}} G_{o_i(j)}\\
&\leq  \sum_{q\geq \log_2 R_t} \left|N(j, q)\right|\cdot \max_{|\pi_{s, t}(j)-\pi_{s, t}(i)|_\circ\geq \frac{n}{B_t}2^q} G_{o_i(j)}.\\
\end{split}
\end{equation}
We now upper bound both terms in the last line of the equation above. 

To bound the second term it suffices to note that for every $q\geq 0$
 every $i, j$ with $|\pi_{t, s}(j)-\pi_{t, s}(i)|_\circ\geq \frac{n}{B_t}2^q$ satisfy
$|o_i(j)|_\circ= |\pi_{t, s}(j)-\frac{n}{B_t}\cdot h_{t, s}(i)|_\circ\geq |\pi_{t, s}(j)-\pi_{t, s}(i)|_\circ-|\pi_{t, s}(i)-\frac{n}{B_t}h_{t, s}(i)|_\circ\geq \frac{n}{B_t}\cdot (2^q-1)$.
Using this bound together with Definition~\ref{def:filterG}, {\bf (3)}, and assumption that the filter $G$ is at least $6$-sharp we have for $q\geq \log_2 R_t\geq 1$
\begin{equation}\label{eq:cell-ub}
\max_{|\pi(j)-\pi(i)|_\circ\geq \frac{n}{B_t}\cdot 2^q} G_{o_i(j)}\leq \left(\frac{1}{4(2^q-1)}\right)^{5}\leq (2^q)^{-5}\leq 2^{-5q}.
\end{equation}
Note that we also used the assumption that $q\geq \log_2 R_t\geq 1$ to lower bound $4(2^q-1)$ by $2^q$.

We now upper bound the first term on the last line of~\eqref{eq:2pogh}, i.e. the size of $N(j, q)$ for every $j\in S$. Let  $i^*:=\text{argmin}_{i\in S_t} |\pi_{s, t}(j)-\pi_{s, t}(i)|_\circ$ denote a point in $S_t$ that is mapped closest to $j$. Let $L^*:=|\pi_{s, t}(j)-\pi_{s, t}(i^*)|_\circ$ denote the distance to this point, and let $q^*$ be the smallest such that $(n/B_t)2^{q^*}\geq L^*$.  By triangle inequality we have for all $i\in \nsq$
\begin{equation}\label{eq:pi-triangle}
|\pi_{t, s}(i)-\pi_{t, s}(i^*)|_\circ\leq |\pi_{t, s}(j)-\pi_{t, s}(i)|_\circ+L^*,
\end{equation}
allowing us to upper bound the size of $N(j, q)$ (points not too far from $j$) using the fact that $S_t$ is not crowded (property {\bf (2)} of an isolating partition; see Definition~\ref{def:isolating-partition}).
Specifically, for every $q\geq 0$ we have, combining~\eqref{eq:pi-triangle} and~\eqref{eq:njq-def},
\begin{equation*}
\begin{split}
N(j, q)& \subseteq \{i\in S_t\setminus \{j\}\text{~s.t.~} |\pi_{s, t}(i)-\pi_{s, t}(j)|_\circ\leq \frac{n}{B_t}(2^{q+1}-1)\}\text{~~~~~~~~~~~~~~~~~~(by~\eqref{eq:njq-def})}\\
&\subseteq \{i\in S_t\setminus \{j\}\text{~s.t.~} |\pi_{s, t}(i)-\pi_{s, t}(i^*)|_\circ\leq \frac{n}{B_t}(2^{q+1}+2^{q^*}-1)\} \text{~~~~~~~(by~\eqref{eq:pi-triangle})}
\end{split}
\end{equation*}
By property {\bf (2)} of an isolating partition (see Definition~\ref{def:isolating-partition}) we have for any $q\geq 0$ the number of $i\in S_t$ such that  $|\pi_{s, t}(i)-\pi_{s, t}(i^*)|_\circ\leq \frac{n}{B_t}(2^{q+1}+2^{q^*}-1)$ is bounded by $R_t^{-3} (2^{q+1}+2^{q^*}-1)^2+1$, where the $+1$ accounts for $i^*$ itself. Since we will only use the bound for $q\geq \log_2 R_t$, the $R_t^{-3}$ factor in first term will not be consequential\footnote{Note that this is the second time we are using the assumption that no element of $S_t$ is $\lambda$-crowded with respect to $S_t$, but in this case the choice of $\lambda=R_t^{-3}$ is not important, any constant, even $\lambda>1$, would have sufficed to this particular application.}, and we hence use the simpler form 
$R_t^{-3} (2^{q+1}+2^{q^*}-1)^2+1\leq  (2^{q+1}+2^{q^*}-1)^2+1\leq 2(2^{q+1}+2^{q^*}-1)^2$,
where we used the fact that $q\geq \log_2 R_t\geq 0$.
  We thus have for all $q\geq 0$
\begin{equation}\label{eq:n-bound}
|N(j, q)|\leq \left\lbrace
\begin{array}{cc}
0&\text{~if~}q<q^*\\
2(2^{q+1}+2^{q^*})^2&\text{~o.w.}
\end{array}
\right.
\end{equation}

Substituting ~\eqref{eq:cell-ub} and~\eqref{eq:n-bound} into~\eqref{eq:2pogh}, we get 
\begin{equation*}
\begin{split}
X^s_j&=\sum_{i\in S_t\setminus  \{j\}: |\pi_{s, t}(i)-\pi_{s, t}(j)|_\circ\geq \frac{n}{B_t}R_t} G_{o_i(j)}\leq  \sum_{q\geq \log_2 R_t} \max_{|\pi_{s, t}(j)-\pi_{s, t}(i)|_\circ\geq \frac{n}{B_t}2^q} G_{o_i(j)}\cdot \left|N(j, q)\right|\\
&\leq  \sum_{\substack{q\geq \log_2 R_t\\q\geq q^*}} 2^{-5q}\cdot (2\cdot (2^{q+1}+2^{q^*})^2)\leq  \sum_{\substack{q\geq \log_2 R_t\\q\geq q^*}} 2^{-5q}\cdot (32 \cdot 2^{2q})\\
&\leq  32 \cdot \sum_{\substack{q\geq \log_2 R_t\\q\geq q^*}} 2^{-3q}\text{~~~~(summing the geometric sum)}\\
&\leq  64\cdot R_t^{-3}\leq R_t^{-2}
\end{split}
\end{equation*}
as long as $C_1$ is larger than $64$ (since $R_t=C_1\cdot 2^t\geq C_1$ by assumption {\bf p1} of the lemma).  Substituting this bound into~\eqref{eq:sum}, we get $D^s_j\leq Z^s_j+X^s_j\leq Z^s_j+R_t^{-2}$,
which means that 
\begin{equation}\label{eq:d-final-bound}
\sum_{s=1}^{R_t} D^s_j\leq R_t^{1-\delta}+R_t^{-1}.
\end{equation}

To complete the argument, recall that by \eqref{eq:eh} one has $e^{head}_i(\{H_{t, s}\}, x')=\quant^{1/5}_{s\in [1:R_t]} e^{head}_i(H_{t, s}, x')$. This means that for each $i\in S_t$ there exist at least $(1/5)r_{max}$ values of $s\in [1:R_t]$ such that 
$e^{head}_i(H_{t, s}, x')>e^{head}_i$, and hence
\begin{equation}\label{eq:pingrg43g}
||e^{head}_{S_t}(\{H_{t, s}\}, x')||_1\leq \frac1{(1/5)R_t}\sum_{s=1}^{R_t} ||e^{head}_{S_t}(H_{t, s}, x')||_1.
\end{equation}

Substituting the bound from~\eqref{eq:pingrg43g} into ~\eqref{eq:asdf111}, we get 
\begin{equation*}
\begin{split}
||e^{head}_{S_t}(\{H_{t, s}\}, x')||_1\leq \frac1{(1/5)R_t}\sum_{s=1}^{R_t} ||e^{head}_{S_t}(H_{t, s}, x')||_1&\leq \frac2{(1/5)R_t} \sum_{s=1}^{R_t}\sum_{j\in S} |x'_j|\cdot D_j^s\\
&\leq \sum_{j\in S} |x'_j|\cdot \left(\frac2{(1/5)R_t} \sum_{s=1}^{R_t} D_j^s\right).\\
\end{split}
\end{equation*}


Substituting the above into~\eqref{eq:pingrg43g}, we get
\begin{equation}\label{eq:923hthgg}
\begin{split}
||e^{head}_{S_t}(\{H_{t, s}\}, x')||_1&\leq \sum_{j\in S} |x'_j|\cdot  \left(\frac2{(1/5)R_t}\sum_{s=1}^{R_t} D_j^s\right)\leq \sum_{j\in S} |x'_j|\cdot  \left(\frac2{(1/5)R_t}(R_t^{1-\delta}+R_t^{-1})\right)\\
&\leq \sum_{j\in S} |x'_j|\cdot  \left(\frac{20}{R_t} R_t^{1-\delta}\right)\text{~~~~~(since $R_t^{-1}\leq R_t^{1-\delta}$)}\\
&\leq 20 R_t^{-\delta} ||x'_S||_1.\\
\end{split}
\end{equation}

Substituting the bound from~\eqref{eq:923hthgg} into~\eqref{eq:pingrg43g}  we get $||e^{head}_{S_t}(\{H_{t, s}\}, x')||_1\leq 20 \cdot R_t^{-\delta} ||x'_S||_1$, as required.
\end{proof}

}

\section{Sample efficient estimation}\label{sec:est}
In this section we state our sample optimal estimation algorithm (Algorithm~\ref{alg:estimate-efficient}) and provide its analysis.

\subsection{Algorithm and overview of analysis}

 Our algorithm (Algorithm~\ref{alg:estimate-efficient}) contains three major components: it starts by taking measurements $m$ of the signal $x$ (accessing the signal in Fourier domain, i.e. accessing $\wh{x}$), then uses these measurements to perform a sequence of $\ell_1$ norm reduction steps, reducing $\ell_1$ norm of the residual signal on the target set $S$ of coefficients to about the noise level. Finally, a simple cleanup procedure is run to convert the $\ell_1$ norm bounds on the residual to $\ell_2/\ell_2$ guarantees of~\eqref{e:l2l2-est}.  In this section we will use the functions $e^{head}, e^{tail}$ (see Section~\ref{sec:prelim}) defined with respect to the set $S$. 

\noindent{\bf Measuring $\wh{x}$.} All measurements that the algorithm takes are taken in lines~6-12, and then line~31. The measurements in lines~6-12 are taken over $T=\frac1{1-\delta}\log_2\log (k+1)+O(1)$ rounds for small constant $\delta\in (0, 1/2)$, where in round $t=1,\ldots, T$ we are hashing the signal into $B_t\approx k/R_t^2$ buckets, where $R_t$ grows exponentially with $t$. For each $t$ we perform $R_t$ independent hashing experiments of this type. This matches the setup of Lemma~\ref{lm:l1b-head}, which is our main analysis tool (see proof of Lemma~\ref{lm:main-estimate-snr-reduction}).

\noindent{\bf $\ell_1$ norm reduction loop.} Once the samples have been taken, Algorithm~\ref{alg:estimate-efficient} proceeds to the $\ell_1$ norm reduction loop (lines~16-22). The objective of this loop is to reduce the $\ell_1$ norm of the residual signal on the target set $S$ of coefficients that we would like to estimate to about the noise level, namely to $O(||x_{[n]\setminus S}||_2\sqrt{k})$. The formal guarantees are provided by 

\begin{lemma}\label{lm:main-estimate-snr-reduction}
For every $\delta\in (0, 1/2)$, if $C_1, C_2$ (parameters in Algorithm~\ref{alg:estimate-efficient}) are sufficiently large constants, then the following conditions hold.

For every $x\in \C^n$, every integer $k\geq 1$, every $S\subseteq \nsq$, $|S|=k$, if $||x||_\infty\leq R^*\cdot ||x_{\nsq\setminus S}||_2/\sqrt{k}, R^*=n^{O(1)}$, the vector $\tilde \chi$ computed in line~23 of an invocation of \textsc{Estimate}$(\hat x, S, k, \e, R^*)$ (Algorithm~\ref{alg:estimate-efficient}) satisfies 
$$
||(x-\tilde \chi)_S||_1\leq O(||x_{\nsq\setminus S}||_2\cdot \sqrt{k})
$$ conditioned on an event $\E_{maj}$ that occurs with probability at least $1-2/25$.
\end{lemma}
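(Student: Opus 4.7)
\textbf{Proof plan for Lemma~\ref{lm:main-estimate-snr-reduction}.}

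The plan is to condition on two good events whose intersection we will call $\E_{maj}$. The first is the event $\E_{iso}$ that the randomly sampled hashings $\{\{H_{t,s}\}_{s=1}^{R_t}\}_{t=1}^T$ admit a $\delta$-isolating partition $S = S_1 \cup \cdots \cup S_T$ as in Definition~\ref{def:isolating-partition}; this holds with probability at least $1-1/25$ by Lemma~\ref{lm:isolating-partition-construction}, provided $C_1,C_2$ are chosen sufficiently large. The second is the event $\E_{tail}$ that at each round $t$ the $1/5$-quantile tail-noise quantity $\quant^{1/5}_s e^{tail}_i(H_{t,s}, a_{t,s}, x')$ averaged over $i\in S_t$ concentrates at its expectation, which by standard Fourier hashing (following from $B_t \geq C_2 k/R_t^2$ and the random choice of $\sigma$ and evaluation points $a_{t,s}$, combined with Markov and a union bound over rounds) holds with probability at least $1-1/25$. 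Setting $\E_{maj} = \E_{iso} \cap \E_{tail}$ gives $\Pr[\E_{maj}] \ge 1-2/25$. From here the argument will be entirely deterministic in $\E_{maj}$.

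The central object of the analysis is a \emph{majorant sequence} $y^{(r)} \in \C^n$, one vector per iteration of the outer $\ell_1$-reduction loop in lines~16--22, with the property that at the beginning of iteration $r$ we have $|(x-\chi^{(r)})_i| \leq |y^{(r)}_i|$ for every $i\in S$ (i.e.\ $x-\chi^{(r)} \prec_S y^{(r)}$). The payoff is Lemma~\ref{lm:mon}: because $e^{head}$ is monotone in the majorant, the head-noise bound of Lemma~\ref{lm:l1b-head} applied to $y^{(r)}$ is an upper bound on the head noise produced by the actual residual. The construction of $y^{(r+1)}$ from $y^{(r)}$ uses the same deterministic per-coordinate error bound that the \textsc{EstimateValues} primitive guarantees in Lemma~\ref{lm:estimate-l1l2}: the $i$-th entry of $y^{(r+1)}$ is defined as $2 \,e^{head}_i(\{H_{t,s}\},y^{(r)}) + 2\,e^{tail}_i(\{H_{t,s},a_{t,s}\}, x) + n^{-\Omega(c)}$ for the round $t$ in which $i$ is processed, which by Lemma~\ref{lm:estimate-l1l2} indeed dominates $|(x-\chi^{(r+1)})_i|$ pointwise.

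The heart of the proof is then the inductive claim
\[
\|y^{(r+1)}_S\|_1 \;\le\; \tfrac14 \, \|y^{(r)}_S\|_1 \;+\; C \, \|x_{[n]\setminus S}\|_2 \sqrt{k}
\]
for an absolute constant $C$. Summing the head contribution over $t$, Lemma~\ref{lm:l1b-head} gives a cumulative head error of at most $\bigl(\sum_{t=1}^T 40\, R_t^{-\delta}\bigr)\,\|y^{(r)}_S\|_1$, which is at most $\tfrac14\|y^{(r)}_S\|_1$ once $C_1$ is large enough because $R_t = C_1 \cdot 2^t$ makes the series geometrically convergent. The cumulative tail error is at most $C' \sum_{t=1}^T (|S_t|/\sqrt{B_t})\,\|x_{[n]\setminus S}\|_2$; using property~{\bf (1)} of the isolating partition, namely $|S_t| \le k\cdot (R_0/R_{t-1})\,2^{-2^{(1-\delta)(t-1)}+1}$, together with $B_t \ge C_2 k/R_t^2$, a direct calculation shows $\sum_t |S_t|/\sqrt{B_t} = O(\sqrt{k})$ (the sum is dominated by the $t=1$ term, and subsequent terms decay doubly exponentially). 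Iterating the recurrence $O(\log R^*) = O(\log n)$ times, which matches the number of outer iterations, drives $\|y^{(r)}_S\|_1$ down from its initial value of at most $k\,R^*\,\|x_{[n]\setminus S}\|_2/\sqrt{k} = R^*\sqrt{k}\,\|x_{[n]\setminus S}\|_2$ to $O(\|x_{[n]\setminus S}\|_2 \sqrt{k})$, which dominates the final $\ell_1$ norm of $(x-\tilde\chi)_S$ by construction of the majorant.

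The main obstacle I anticipate is verifying the majorant invariant across iterations: we need a vector $y^{(r)}$ that simultaneously (i) pointwise dominates $|(x-\chi^{(r)})_S|$ after the algorithm's subtraction step, and (ii) has its $\ell_1$ norm controllable by Lemma~\ref{lm:l1b-head} alone. This is delicate because $e^{head}$ and $e^{tail}$ are defined as $1/5$-quantiles over the $R_t$ hashings, so the deterministic per-coordinate bound of Lemma~\ref{lm:estimate-l1l2} matches the quantity that Lemma~\ref{lm:l1b-head} bounds, which is exactly what makes the majorant sequence well-defined and analyzable. A secondary care-point is the bookkeeping needed to ensure that tail error from round $t$ is not re-amplified in later rounds, which is why we separate $\E_{tail}$ from $\E_{iso}$ and bound the tail once, globally, rather than inside the head-driven recurrence.
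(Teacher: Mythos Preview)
Your overall architecture matches the paper's: define $\E_{maj}$ as the intersection of the isolating-partition event and a tail-noise event, build a majorizing sequence, and use Lemma~\ref{lm:l1b-head} plus monotonicity (Lemma~\ref{lm:mon}) to push an $\ell_1$ recurrence. However, your simplification of indexing the majorant by the outer counter $r$ alone introduces a genuine gap.

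Concretely, you define $y^{(r+1)}_i := 2\,e^{head}_i(\{H_{t,s}\},y^{(r)}) + 2\,e^{tail}_i(\ldots)$ for $i\in S_t$ and assert that, by Lemma~\ref{lm:estimate-l1l2}, this dominates $|(x-\chi^{(r+1)})_i|$. But Lemma~\ref{lm:estimate-l1l2} bounds the error at step $(r,t)$ in terms of $e^{head}_i(\{H_{t,s}\}, x-\chi^{(r,t-1)})$, where $\chi^{(r,t-1)}$ has \emph{already been updated on $S_1,\ldots,S_{t-1}$ during this same outer iteration}. To replace $x-\chi^{(r,t-1)}$ by $y^{(r)}$ via monotonicity you would need $x-\chi^{(r,t-1)}\prec_S y^{(r)}$; on $S_q$ with $q<t$ the residual is only known to be bounded by the \emph{new} values $y^{(r+1)}_{S_q}$, and there is no reason these are pointwise $\le y^{(r)}_{S_q}$ (a single unlucky collision under $H_{q,s}$ can make $e^{head}_j$ exceed $y^{(r)}_j$). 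Hence your majorant invariant is not established across the inner loop, and the chain of inequalities leading to the recurrence collapses.

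The paper closes exactly this gap by indexing the majorant by the pair $(r,t)$: it defines $y^{(r,t)}$ by updating $y^{(r',t')}$ only on $S_t$, and proves a finer inductive statement with per-block bounds
\[
\|y^{(r,t)}_{S_q}\|_1\le\bigl(R^*(1/4)^{r+1}\mu+2\mu\bigr)\cdot k\cdot(R_0/R_{q-1})^\delta\quad(q\le t),\qquad
\|y^{(r,t)}_{S_q}\|_1\le\bigl(R^*(1/4)^{r}\mu+2\mu\bigr)\cdot k\cdot(R_0/R_{q-1})^\delta\quad(q> t),
\]
together with the pointwise invariant $(x-\chi^{(r,t)})\prec_S y^{(r,t)}$. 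Summing these gives the global $\ell_1$ bound, and because $y^{(r',t')}$ is always the argument fed to $e^{head}$ (not $y^{(r)}$), the monotonicity step is legitimate at every inner iteration. Your outline becomes correct once you carry out this $(r,t)$-level bookkeeping; alternatively you could patch it by using a majorant that is the coordinate-wise maximum of the old and new vectors and absorbing the resulting constant into the recurrence, but that is not what you wrote.
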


\noindent{\bf Cleanup  phase and final result.} Once the $\ell_1$ norm of the residual on $S$ has been reduced to $O(||x_{[n]\setminus S}||_2\sqrt{k})$, we run the \textsc{EstimateValues} procedure once to convert $\ell_1$ norm bounds on the residual into $\ell_2/\ell_2$  guarantees~\eqref{e:l2l2-est}. This results in a proof of Theorem~\ref{thm:main-estimate}, restated below for convenience of the reader. The theorem establishes correctness of Algorithm~\ref{alg:estimate-efficient}, as well as its runtime and sample complexity bounds:

{\noindent {\bf Theorem~\ref{thm:main-estimate}} (Restated) \em 
For every $\e\in (1/n, 1), \delta\in (0, 1/2)$, $x\in \C^n$ and every integer $k\geq 1$, any $S\subseteq \nsq$, $|S|=k$, if $||x||_\infty\leq R^*\cdot ||x_{\nsq\setminus S}||_2/\sqrt{k}, R^*=n^{O(1)}$, an invocation of \textsc{Estimate}$(\hat x, S, k, \e, R^*)$ (Algorithm~\ref{alg:estimate-efficient}) returns $\chi^*\in \C^n$ such that 
$$
||(x-\chi^*)_S||_2^2\leq \e\cdot ||x_{\nsq\setminus S}||_2^2
$$
using $O_\delta(\frac1{\e}k)$ samples and $O_\delta(\frac1{\e}k \log^{3+\delta} n)$ time with at least $4/5$ success probability. 
}

\begin{algorithm}[h!]
\caption{\textsc{Estimate}($\hat x, S, k, \e, R^*$)}\label{alg:estimate-efficient} 
\begin{algorithmic}[1] 
\Procedure{Estimate}{$\hat x, S, k, \e, R^*$}\Comment{List $S$ of size $k$}
\State $T\gets \frac1{1-\delta}\log_2 \log (k+1)+O(1)$ for a small constant $\delta\in (0, 1/2)$
\State $R_t\gets C_1\cdot 2^t$ for $t\in [1:T]$ \Comment{$C_1>0$ sufficiently large absolute constant}
\State $B_t\gets C_2 \cdot k/R_t^2$ for $t\in [1:T]$ \Comment{$C_2>0$ sufficiently large absolute constant}
\State $G_t\gets$ filter with $B_t$ buckets and sharpness $\fc=8$.
\For {$t=1$ to $T$}\Comment{Take samples}
\For {$s=1$ to $R_t$}
\State Choose $\sigma\in \gl$, $q\in \nsq$ u.a.r., let $\pi_{t, s}\gets (\sigma, q)$, $H_{t, s}:=(\pi_{t, s}, B_t, \fc)$
\State Let $a_{t, s}\gets $ an element of $\nsq$ u.a.r.
\State $m(x, H_{t, s}, a_{t, s}) \gets \Call{HashToBins}{\hat x, 0, (H_{t, s}, a_{t, s})}$
\EndFor
\EndFor
\State Explicitly construct a $\delta$-isolating partition $S=S_1\cup S_2 \ldots \cup S_T$ \Comment{As per Lemma~\ref{lm:construct-partition-runtime}}
\State $\chi^{(0, 0)}\gets 0$
\State $r'\gets 0, t'\gets 0$
\For{$r = 0, 1, \dotsc, C\log_4 R^*$} \Comment{For any constant $C\geq 1$}
\For{$t=1$ to $T$}
\State $\chi' \gets \Call{EstimateValues}{\chi^{(r', t')}, S_t, \{(H_{t, s}, a_{t, s}, m(x, H_{t, s}, a_{t, s}))\}_{s=1}^{R_t}}$
\State $\chi^{(r, t)}\gets \chi^{(r', t')}+\chi'$ \Comment{$(r', t')$ are the previous indices}
\State $r'\gets r$, $t'\gets t$
\EndFor
\EndFor
\State $\tilde \chi \gets \chi^{(C\log_4 R^*, T)}$ \Comment{$\tilde \chi$ is the final residual computed by the loop}
\State $B\gets C_2 \cdot k/\e$
\State $G\gets$ filter with $B$ buckets and sharpness $\fc=8$.
\State $r_{max}\gets O(1)$ \Comment{A sufficiently large absolute constant}
\For {$r=1$ to $O(1)$}
\State Choose $\sigma_r\in \gl$, $q_r, a_r\in \nsq$ u.a.r., let $\pi_r\gets (\sigma_r, q_r)$, $H_r:=(\pi_r, B, \fc)$
\State $m(x, H_r, a_r) \gets \Call{HashToBins}{\hat x, \tilde \chi, H_r, a_r}$
\EndFor
\State $\chi'' \gets \Call{EstimateValues}{\tilde \chi, S, \{(H_r, a_r, m(x, H_r, a_r))\}_{r=1}^{r_{max}}}$
\State $\chi^*\gets \tilde \chi+\chi''$
\State \textbf{return} $\chi^*$
\EndProcedure 
\end{algorithmic}
\end{algorithm}

\subsection{Proof of Lemma~\ref{lm:main-estimate-snr-reduction}}\label{sec:lm41}
We now give 

\begin{proofof}{Lemma~\ref{lm:main-estimate-snr-reduction}}
Recall that in this section we use the quantities $e^{head}$ and $e^{tail}$ defined with respect to the set $S$.
We will also use an isolating partition of $S$, denoted by $S=S_1\cup S_2\cup \ldots\cup S_T$. We argue the existence of such a partition with high probability later.

We prove by induction on the pair $(r, t)$ that conditional on a high probability success event $\E_{maj}$ (defined below) the residual signals $x-\chi^{(r, t)}$ are {\em majorized} on $S$ (in the sense of Definition~\ref{def:majorant}) by a fixed sequence $y^{(r, t)}$ whose $\ell_1$ norm converges to $O(||x_{\nsq\setminus S}||_2\cdot \sqrt{k})$ after $O(\log R^*)$ iterations. Since we only update elements in $S$, this gives the result. We now give the details of the argument. In what follows we let $\mu^2:=||x_{\nsq\setminus S}||_2^2/k$ for convenience. Note, however, that Algorithm~\ref{alg:estimate-efficient} is oblivious to the value of $\mu$: we only need an upper bound on $\log R^*$.

We start by defining the majorizing sequence $y^{(r, t)}$. We first let $y^{(0, 0)}_i=R^* \mu$ for all $i\in S$ and $y^{(0, 0)}_i=x_i$ otherwise. Note that $y^{(0, 0)}$ trivially majorizes $x$ as $||x||_\infty\leq R^*\cdot \mu$ by assumption of the lemma. 
The construction of $y^{(r, t)}$ proceeds by induction on $(t, r)$. Given $y^{(r', t')}$, as per Algorithm~\ref{alg:estimate-efficient} the next signal to be defined is $y^{(r, t)}$ with $(r, t)=(r', t'+1)$ if $t<T$ and $(r, t)=(r'+1, 1)$ otherwise (as per lines 15-22 of Algorithm~\ref{alg:estimate-efficient}).  We now define the signal $y^{(r, t)}$ by letting for each $i\in [n]$ (recall that $S_t$ is the $t$-th set in an isolating partition $S=S_1\cup S_2\cup \ldots \cup S_T$)
\begin{equation}\label{eq:maj-def-estimation}
y^{(r, t)}_i:=\left\{\begin{array}{ll}
20 e^{head}_i(\{H_{t, s}\}_{s\in [1:R_t]}, y^{(r',t')})+20 e^{tail}_i(\{H_{t, s}, a_{t, s}\}_{s\in [1:R_t]}, x)+n^{-\Omega(c)}&\text{~~if~}i\in S_t\\
y^{(r', t')}_i&\text{~~o.w.}
\end{array}
\right.
\end{equation}
Here $n^{-\Omega(c)}$ corresponds to the (negligible) error term due to polynomial precision of our computations. Note that there are two contributions to $y^{(r, t)}$: one coming from the previous signal in the majorizing sequence, namely $y^{(r', t')}$, and the other coming from the tail of the signal $x$. 

We now prove by induction on $(t, r)$ that the loop in our estimation primitive reduces the $\ell_1$ norm of the residual to $O(\mu \cdot k)$ (recall that $\mu^2=||x_{\nsq\setminus S}||_2^2/k$).
Specifically, we prove that  there exists an event $\E_{maj}$ with $\prob_{\{\{H_{t, s}\}_{s\in [1:R_t]}\}\}_{t=1}^T}[\E_{maj}]\geq 1-2/25$ such that conditioned on $\E_{maj}$ the set $S$ admits an isolating partition $S=S_1\cup S_2\cup\ldots\cup S_T$ with respect to $\{\{H_{t, s}\}\}$, and for every $(r, t)\in ([0:+\infty)\times [1:T])\cup \{(0, 0)\}$ 
\begin{description}
\item[(A)] for all $q\in [1: t]$ one has $||y^{(r, t)}_{S_q}||_1\leq (R^*\cdot (1/4)^{r+1}\mu+2\mu) \cdot k\cdot (R_0/R_{q-1})^\delta$;
\item[(B)] for all $q\in [t+1: T]$ one has $||y^{(r, t)}_{S_q}||_1\leq (R^*\cdot (1/4)^r\mu +2\mu)\cdot k\cdot (R_0/R_{q-1})^\delta$;
\item[(C)] $||y^{(r, t)}_S||_1\leq (2/\delta)\cdot (R^*(1/4)^r\mu+2\mu)\cdot k$;
\item[(D)] $(x-\chi^{(r, t)})\prec_S y^{(r, t)}$ and $\supp \chi^{(r, t)}\subseteq S$.
\end{description}

First, note that the set $S$ admits an isolating partition with respect to the hash functions $\{\{H_{t, s}\}\}$ with probability at least $1-1/25$ by Lemma~\ref{lm:isolating-partition-construction}. Denote the success event by $\E_{partition}$. We condition on this event in what follows. We give the inductive argument, and finally define the event $\E_{maj}$ as the intersection of $\E_{partition}$ with several other high probability success events.

The {\bf base} is provided by $r=0$ and $t=0$. Indeed, by property {\bf (1)} of an isolating partition (see Definition~\ref{def:isolating-partition}) we have for any $q\in [1:T]$
$$
||y||_{S_q}\leq R^*\mu\cdot |S_q|\leq R^*\mu\cdot k\cdot \frac{R_0}{R_{q-1}}2^{-2^{(1-\delta)(q-1)}+1}\leq R^*\mu\cdot k\cdot (R_0/R_{q-1})^\delta
$$
since $2^{-2^{(1-\delta)(q-1)}+1}\leq 1$ for all $q\geq 1$ and $\frac{R_0}{R_{q-1}}\leq (R_0/R_{q-1})^\delta$ (as $\delta<1$ by assumption of the lemma).

We now prove the {\bf inductive step}. Let $(r', t'):=(r, t-1)$ if $t>1$, else let $(r', t'):=(r-1, T)$ if $r>1$ and $t=1$, and $(r', t')=(0, 0)$ otherwise. Note that $(t', r')$ is the element preceding $y^{(r, t)}$ in the majorizing sequence (as per lines 15-22 of Algorithm~\ref{alg:estimate-efficient}).

\noindent {\bf Proving (C)}. We start with an upper bound on the $\ell_1$ norm of $y^{(r', t')}$, i.e. prove {\bf (C)}. Using the inductive hypothesis {\bf (A)} and {\bf  (B)} for $(t', r')$, we get
\begin{equation}\label{eq:3ibrbrg-est}
\begin{split}
||y^{(r', t')}_S||_1&\leq \sum_{q=1}^{t'} (R^*(1/4)^{r'+1}\mu+2\mu)\cdot k\cdot (R_0/R_{q-1})^\delta+\sum_{q=t'+1}^{\infty} (R^*(1/4)^r\mu+2\mu)\cdot k\cdot (R_0/R_{q-1})^\delta\\
&\leq (R^*(1/4)^{r'}\mu+2\mu)\cdot k\cdot \sum_{q=1}^{\infty} (R_0/R_{q-1})^\delta\\
&= (R^*(1/4)^{r'}\mu+2\mu)\cdot k\cdot \sum_{q=1}^{\infty} 2^{-(q-1) \delta}\text{~~~~~~(since $R_t=C_1 2^t$ by {\bf p1})}\\
&\leq \frac1{2^\delta-1}\cdot (R^*k(1/4)^{r'}\mu+2\mu)\\
&\leq \frac1{e^{\delta \ln 2}-1}\cdot (R^*k(1/4)^{r'}\mu+2\mu)\\
&\leq (2/\delta)\cdot (R^*k(1/4)^{r'}\mu+2\mu)\text{~~~~~~~(since $e^x-1\geq x$, and $\ln 2> 1/2$)}\\
\end{split}
\end{equation}

This establishes {\bf (C)}, and we now turn to {\bf (A)} and {\bf (B)}.  By definition the signal $y^{(r, t)}$ is obtained from $y^{(r', t')}$ by modifying the latter on $S_t$. We need to bound the error introduced by head and tail elements of $y^{(r', t')}$ to $y^{(r, t)}_{S_t}$ (see~\eqref{eq:maj-def-estimation}).  We now bound both terms.

\noindent {\bf Proving (A) and (B):  analyzing contribution from the tail $e^{tail}$.}  By Lemma~\ref{lm:hashing}, {\bf (2)}, one has  for every $i\in [n]$, $t=1,\ldots, T$ and $s=1,\ldots, R_t$
$$
\expect_{H_{t, s}, a_{t, s}}\left[(e^{tail}_{i}(H_{t, s}, a_{t, s}, x))^2\right]\leq \nu^2
$$
for some $\nu>0$ such that $\nu^2=O(||x_{[n]\setminus S}||_2^2/B_t)$. 
By Jensen's inequality we thus have
$$
\expect_{H_{t, s}, a_{t, s}}\left[e^{tail}_{i}(H_{t, s}, a_{t, s}, x)\right]\leq \nu.
$$

To upper bound $\expect_{\{H_{t, s}, a_{t, s}\}}\left[||e^{tail}_{S_t}(\{H_{t, s}, a_{t, s}\}_{s\in [1:R_{t}]}, x)||_1\right]$, we note that by conditioning on $\E_{partition}$ we have 
$|S_t|\leq k\frac{R_0}{R_{t-1}}2^{-2^{(1-\delta)(t-1)}+1}$.
Letting $U:=k\frac{R_0}{R_{t-1}}2^{-2^{(1-\delta)(t-1)}+1}$ to simplify notation, we get that
\begin{equation}\label{eq:9hewGGFBNXNxx}
\begin{split}
\expect_{\{H_{t, s}, a_{t, s}\}}\left[||e^{tail}_{S_t}(\{H_{t, s}, a_{t, s}\}_{s\in [1:R_{t}]}, x)||_1\right]&\leq \expect_{\{H_{t, s}, a_{t, s}\}}\left[\max_{Q\subseteq S, |Q|\leq U} ||e^{tail}_Q(\{H_{t, s}, a_{t, s}\}_{s\in [1:R_{t}]}, x)||_1\right]\\
\end{split}
\end{equation}

We now recall that by ~\eqref{eq:et-pi-quant} one has
\begin{equation*}
e^{tail}_i(\{H_{t, s}, a_{t, s}\}, x):=\quant^{1/5}_{s=1,\ldots, R_t} e^{tail}_{i}(H_{t, s}, a_{t, s}, x),
\end{equation*}
 and apply Lemma~\ref{lm:noisiest-buckets} with $\gamma=1/5$, $m=|S|, n=R_t$ and 
$$
X_i^s=e^{tail}_{i}(H_{t, s}, a_{t, s}, x)\text{~~~for~}i\in S\text{~and~}s=1,\ldots, R_t,
$$
so that $\expect_{H_{t, s}, a_{t, s}}[X_i^s]\leq \nu$ for each $i\in S$, $s=1,\ldots, R_t$. Note that $Y_i:=\quant^{1/5}_{s=1,\ldots, R_t} X_i^s=e^{tail}_i(\{H_{t, s}, a_{t, s}\}, x)$ is exactly the quantity that we are interested in. 
We thus have by Lemma~\ref{lm:noisiest-buckets}
\begin{equation}\label{eq:pwhego23g}
\begin{split}
\expect_{\{H_{t, s}, a_{t, s}\}}\left[\max_{Q\subseteq S, |Q|\leq U} ||e^{tail}_Q(\{H_{t, s}, a_{t, s}\}_{s\in [1:R_{t}]}, x)||_1\right]&=\expect_{\{H_{t, s}, a_{t, s}\}}\left[\max_{Q\subseteq S, |Q|\leq U} \sum_{i\in Q}Y_i\right]\\
&\leq U\cdot (20e\nu)\cdot \left(|S|/U\right)^{10/R_t}
\end{split}
\end{equation}

Since $R_{t'}=C_1 2^{t'}$ for every $t'$,  $|S|=|S_0|\leq k$ and $U=k\frac{R_0}{R_{t-1}}2^{-2^{(1-\delta)(t-1)}+1}=k2^{-2^{(1-\delta)(t-1)}+1-(t-1)}$, we have
$$
\left(|S|/U\right)^{10/R_t}=2^{10(2^{(1-\delta)(t-1)}-1+(t-1))/(C_1 2^t)}\leq 2^{10(1+(t-1)/2^t)/C_1}\leq 2^{20/C_1}\leq 2
$$
for all $t\geq 1$ as long as $C_1>20$. Substituting the above into~\eqref{eq:pwhego23g}, we get 
\begin{equation*}
\expect_{\{H_{t, s}, a_{t, s}\}}\left[\max_{Q\subseteq S, |Q|\leq U} ||e^{tail}_Q(\{H_{t, s}, a_{t, s}\}_{s\in [1:R_{t}]}, x)||_1\right]\leq (40e) \cdot U\cdot \nu, 
\end{equation*}

and thus by~\eqref{eq:9hewGGFBNXNxx}
\begin{equation*}
\begin{split}
\expect_{\{H_{t, s}, a_{t, s}\}}\left[||e^{tail}_{S_t}(\{H_{t, s}, a_{t, s}\}_{s\in [1:R_{t}]}, x)||_1\right]&=O(U\cdot \nu)\\
&=O(2k\frac{R_0}{R_{t-1}} R_t 2^{-2^{(1-\delta)(t-1)}+1}\cdot ||x_{[n]\setminus S}||_2/\sqrt{C_2 k})\\
&=\mu k\frac{1}{R_{t-1}^2}\cdot O\left(R_t^2 R_0 2^{-2^{(1-\delta)(t-1)}+1}/\sqrt{C_2}\right)\\
&=\mu k\frac{1}{R_{t-1}^2}\cdot \xi_t, \text{~~~~}\xi_t=O\left(R_t^2 R_0 2^{-2^{(1-\delta)(t-1)}+1}/\sqrt{C_2}\right).\\
\end{split}
\end{equation*}
Since $R_t=C_1 2^t$ increases only exponentially, whereas the second multiplier decreases at a doubly exponential rate, as long as $C_2$ is larger than a constant, we get that $\xi_t\leq 1/10000$ for all $t\geq 1$ (formally, this follows by Claim~\ref{cl:max-expr}). By Markov's inequality, for each $t\geq 1$ we have
\begin{equation*}
||e^{tail}_{S_t}(\{H_{t, s}, a_{t, s}\}_{s\in [1:R_{t}]}, x)||_1\leq \frac1{200}\mu k\frac{1}{R_{t-1}}
\end{equation*}
with probability at least $1-1/(50 R_{t-1})\geq 1-1/(50 \cdot 2^{t-1})$. Thus, by a union bound over all $t\geq 1$ we have with probability at least $1-1/25$
\begin{equation}\label{eq:tail-bound-12345-est}
||e^{tail}_{S_t}(\{H_{t, s}, a_{t, s}\}_{s\in [1:R_{t}]}, x)||_1\leq \frac1{200}\mu k\frac{1}{R_{t-1}}.
\end{equation}
Denote the success event above by $\mathcal{E}_{small-noise}$. 

\noindent {\bf Proving (A) and (B):  analyzing contribution from the head $e^{head}$.} By Lemma~\ref{lm:l1b-head} we have 
\begin{equation}\label{eq:head-bound-12345-est}
||e^{head}_{S_t}(\{H_{t, s}\}_{s\in [1:R_t]}, y^{(r', t')})||_1\leq 20  R_t^{-\delta}||y^{(r', t')}_S||_1.
\end{equation}
  We now define the event $\E_{maj}$ by letting $\E_{maj}:=\E_{small-noise}\cap \E_{partition}$. Note that $\prob[\E_{maj}]\geq 1-2/25$ by a union bound, as required. We condition on $\E_{maj}$ for the rest of the proof.

\noindent {\bf Proving (A) and (B):  putting it together.} We now use the bounds above to prove the result. By definition of $y$ above we have 
\begin{equation*}
\begin{split}
y^{(r, t)}_i:=20 e^{head}_i(\{H_{t, s}\}_{s\in [1:R_t]}, y^{(r',t')})+20 e^{tail}_i(\{H_{t, s}, a_{t, s}\}_{s\in [1:R_t]}, x)+n^{-\Omega(c)},
\end{split}
\end{equation*}
so 
\begin{equation*}
\begin{split}
||y^{(r, t)}_{S_t}||_1&\leq \sum_{i\in S_t}\left(20 e^{head}_i(\{H_{t, s}\}_{s\in [1:R_t]}, y^{(r',t')})+20 e^{tail}_i(\{H_{t, s}, a_{t, s}\}_{s\in [1:R_t]}, x)+n^{-\Omega(c)}\right)\\
&\leq 400\cdot  R_t^{-\delta} \cdot ||y^{(r',t')}_S||_1+20||e^{tail}_{S_t}(\{H_{t, s}, a_{t, s}\}, x)||_1+n^{-\Omega(c)}\\
\end{split}
\end{equation*}
We now substitute ~\eqref{eq:3ibrbrg-est} together with~\eqref{eq:tail-bound-12345-est} and~\eqref{eq:head-bound-12345-est} into the last line above, and obtain
\begin{equation*}
\begin{split}
||y^{(r, t)}_{S_t}||_1&\leq 400\cdot  R_t^{-\delta} (2/\delta)\cdot (R^*(1/4)^{r'}\mu+2\mu)k+  \frac1{10}\mu k/R_{t-1}+n^{-\Omega(c)}\\
&\leq 400\cdot (2/\delta)\cdot  R_0^{-\delta}\cdot (R^*(1/4)^{r'}\mu+2\mu)k\cdot (R_0/R_{t-1})^\delta+  \frac1{10}\mu k\cdot (R_0/R_{t-1})^\delta+n^{-\Omega(c)}\text{~~~~~~(since $\delta\in (0, 1)$)}\\
&\leq \left[400\cdot (2/\delta)\cdot  C_1^{-\delta}+\frac1{10}\right]\cdot (R^*(1/4)^{r'}\mu+2\mu)k\cdot (R_0/R_{t-1})^\delta+  n^{-\Omega(c)}\\
\end{split}
\end{equation*}
where we upper bounded $\frac1{10}\mu k/R_{t-1}$ by $\frac1{10}\mu k\cdot (R_0/R_t)^\delta$ (which is justified since $R_0\geq 1$ and $\delta\in (0, 1)$)  and used the bound 
$R_t^{-\delta}=R_0^{-\delta}\cdot (R_0/R_t)^\delta$.

We now conclude that as long as $C_1\geq (40000/\delta)^{1/\delta}$, we have 
\begin{equation*}
\begin{split}
||y^{(r, t)}_{S_t}||_1&\leq \left[400\cdot (2/\delta)\cdot  C_1^{-\delta}+\frac1{10}\right]\cdot (R^*(1/4)^{r'}\mu+2\mu)k\cdot (R_0/R_{t-1})^\delta+  n^{-\Omega(c)}\\
&\leq  (R^*(1/4)^{r'+1}\mu+2\mu)k\cdot (R_0/R_{t-1})^\delta+  n^{-\Omega(c)}.
\end{split}
\end{equation*}
This completes the proof of the inductive step for {\bf (A)} and {\bf (B)}. It remains to prove {\bf (D)}.

\noindent {\bf Proving (D).} Our main tool in arguing {\bf (D)} is Lemma~\ref{lm:estimate-l1l2}, which we invoke with the set $S$. By that lemma we have for every $i\in S$
$$
|x_i-\chi^{(r', t')}_i-\chi'_i|\leq 2\quant^{1/5}_s e^{head}_i(H_{t, s}, x-\chi^{(r', t')})+2\quant^{1/5} e^{tail}_i(H_{t, s}, a_{t, s}, x)+n^{-\Omega(c)},
$$
since $\text{supp~} \chi^{(r', t')} \subseteq S$ by the inductive hypothesis.
This implies by definition of $y^{(r, t)}$ that for every $i\in S_t$ 
\begin{equation}\label{eq:2o3nognewgwge}
\begin{split}
|x_i-\chi^{(r', t')}_i-\chi'_i|&\leq 2\quant^{1/5}_r e^{head}_i(H_{t, s}, x-\chi^{(r', t')})+2\quant^{1/5}_s e^{tail}_i(H_{t, s}, a_{t, s}, x)+n^{-\Omega(c)}\\
&\leq 20\quant^{1/5}_s e^{head}_i(H_{t, s}, x-\chi^{(r', t')})+20\quant^{1/5}_s e^{tail}_i(H_{t, s}, a_{t, s}, x)+n^{-\Omega(c)}.\\
\end{split}
\end{equation}

By part {\bf (D)} of the inductive hypothesis we have 
$x-\chi^{(r', t')}\prec_S y^{(r', t')}$, and thus by Lemma~\ref{lm:mon} together with~\eqref{eq:2o3nognewgwge}  for every $i\in S_t$
\begin{equation*}
\begin{split}
|x_i-\chi^{(r', t')}_i-\chi'_i|&\leq 20\quant^{1/5}_s e^{head}_i(H_{t, s}, x-\chi^{(r', t')})+20\quant^{1/5}_s e^{tail}_i(H_{t, s}, a_{t, s}, x)+n^{-\Omega(c)}\\
&\leq 20\quant^{1/5}_s e^{head}_i(H_{t, s}, y^{(r', t')})+20\quant^{1/5}_s e^{tail}_i(H_{t, s}, a_{t, s}, x)+n^{-\Omega(c)}\\
&= y^{(r, t)}_i.
\end{split}
\end{equation*}

We thus have for every $i\in S_t$
$$
|x_i-\chi^{(r', t')}_i-\chi'_i|\leq y^{(r, t)}_i.
$$
Since $y^{(r, t)}_i=y^{(r', t')}_i$ for $i\not \in S_t$, $\chi^{(r, t)}_i=\chi^{(r', t')}_i$ for $i\not \in S_t$ and $x-\chi^{(r', t')}\prec_S y^{(r', t')}$ by the inductive hypothesis, we get 
$$
x-\chi^{(r', t')}-\chi'\prec_S y^{(r, t)}
$$
as required. Since we only update elements of $S$, we have $\text{supp~} \chi^{(r, t)}\subseteq \text{supp~} \chi^{(r', t')}\cup \text{supp~} \chi'\subseteq S$. This completes the proof of {\bf (D)}, and the proof of the induction.

To obtain the final result of the lemma, we note that by part {\bf (C)} of the inductive claim for every $r\geq C\log_4 R^*$ (for any $C\geq 1$) one has 
$$
||y^{(r, T)}_S||_1\leq (2/\delta)\cdot (R^*(1/4)^r\mu+2\mu)\cdot k\leq (6/\delta)\cdot \mu k.
$$
Now recall that by line~23 of Algorithm~\ref{alg:estimate-efficient} we have $\tilde \chi=\chi^{(C\log_4 R^*, T)}$, which implies by part {\bf (D)} of the inductive claim, since $(x-\chi^{(C\log_4 R^*, T)})\prec_S y^{(C\log_4 R^*, T)}$, that 
$$
||(x-\tilde \chi)_S||_1=||(x-\chi^{(C\log_4 R^*, T)})_S||_1\leq ||y^{(C\log_4 R^*, T)}_S||_1\leq (6/\delta)\cdot \mu k=O(\mu k),
$$
as required.

\end{proofof}

\subsection{Proof of Theorem~\ref{thm:main-estimate}}
We now give 

\begin{proofof}{Theorem~\ref{thm:main-estimate}}
Recall that in this section we use the quantities $e^{head}$ and $e^{tail}$ defined with respect to the set $S$.
By Lemma~\ref{lm:main-estimate-snr-reduction} we have that conditioned on a high probability event $\E_{maj}$ (which occurs with probability at least $1-2/25$) the vector $\tilde \chi$ computed in line~23 satisfies
\begin{equation}\label{eq:l1-bound-final-est}
||(x-\tilde \chi)_S||_1=O(||x_{[n]\setminus S}||_2 \sqrt{k})
\end{equation}

To complete the proof, we show that the output $\chi''$ of the invocation of \textsc{EstimateValues} in line~31, when added to $\tilde \chi$, yields guarantee claimed by the lemma. First, by Lemma~\ref{lm:estimate-l1l2}  with $S$ one has for each $i\in S$  
\begin{equation}\label{eq:diff-bound}
|\chi''_i- (x-\tilde \chi)_i|\leq 2\cdot\quant^{1/5}_r e^{head}_i(H_r, x-\tilde \chi)+ 2\cdot\quant^{1/5}_r e^{tail}_i(H_r, a_r, x)+n^{-\Omega(c)},
\end{equation}
since $\supp \tilde \chi\subseteq S$. 

Squaring both sides of ~\eqref{eq:diff-bound}, using the bound $(a+b)^2\leq 2a^2+2b^2$ and taking expectations over the randomness in measurements taken in lines~27-30,  we get
\begin{equation}\label{eq:est-est}
\expect[|\chi''_i- (x-\tilde \chi)_i|^2]\leq 8\cdot\expect\left[(\quant^{1/5}_r e^{head}_i(H_r, x-\tilde \chi)^2\right]+ 8\cdot\expect\left[(\quant^{1/5}_r e^{tail}_i(H_r, a_r, x))^2\right]+n^{-\Omega(c)}.
\end{equation}

We now upper bound the expectation of~\eqref{eq:est-est}. By Lemma~\ref{lm:median-etail-ehead}, {\bf (1)} one has, letting $Z^{head}:=\quant^{1/5}_r e^{head}_i(H_r, x-\tilde \chi)$ to simplify notation,
$$
\expect\left[(Z^{head})^2\right]=O\left(\left(\frac1{B}||(x-\tilde \chi)_S||_1\right)^2\right)=O\left(\left(\frac1{C_2 k/\e}||(x-\tilde \chi)_S||_1\right)^2\right)=O(\e^2 ||x_{[n]\setminus S}||_2^2/(C_2k)),
$$
where we used that by conditioning on $\E_{maj}$ one has $||(x-\tilde \chi)_S||_1=O(||x_{[n]\setminus S}||_2 \sqrt{k})$ (by~\eqref{eq:l1-bound-final-est}).

By Lemma~\ref{lm:median-etail-ehead}, {\bf (2)} with $S$ one has, letting $Z^{tail}:=\quant^{1/5}_r e^{tail}_i(H_r, a_r, x)$ to simplify notation, 
$$
\expect\left[(Z^{tail})^2\right]=O(||(x-\tilde \chi)_{[n]\setminus S}||_2^2/B)=O(\e ||x_{[n]\setminus S}||_2^2/(C_2 k)),
$$ 
where we used the fact that $\supp \tilde\chi \subseteq S$.

Substituting these bounds into~\eqref{eq:est-est} and summing over all $i\in S$, we get  

\begin{equation*}
\expect[||(x-\tilde \chi-\chi'')_S||^2]\leq O(\e^2 ||x_{[n]\setminus S}||_2^2/C_2)+O(\e ||x_{[n]\setminus S}||_2^2/C_2)\leq (\e/1000) ||x_{[n]\setminus S}||_2^2
\end{equation*}
as long as $C_2$ is sufficiently large.

An application of Markov's inequality then gives $||(x-\tilde \chi-\chi'')_S||^2\leq \e ||x_{[n]\setminus S}||_2^2$ with probability at least $1-1/1000$. By a union bound over this failure event and $\bar \E_{maj}$, we conclude that the algorithm outputs the correct answer with probability at least $1-3/25\geq 4/5$.

We now upper bound the sample complexity and runtime.

{\bf Sample complexity.} The sample complexity of lines 6-11 is bounded by $\sum_{t=1}^T \sum_{s=1}^{R_t} O(F\cdot B_t )=\sum_{t=1}^T R_t\cdot O(F\cdot k/R_t^2 )=O(k)\cdot \sum_{t=1}^T 1/R_t=O(k)$ by the choice of $R_t$ as geometrically increasing. The sample complexity of lines~27-30 is upper bounded by $O(F\cdot B)=O(k/\e)$ by Lemma~\ref{lm:estimate-l1l2} and choice of $F=O(1)$.

{\bf Runtime.} The runtime of \textsc{HashToBins} in line~10 of Algorithm~\ref{alg:estimate-efficient} is $O(F\cdot B_t \log B_t)=O(B_t \log B_t)$ by Lemma~\ref{l:hashtobins}, the setting of $F=O(1)$ and the fact that the residual signal passed to the call is zero. Since this line is executed for $t=1,\ldots, T$ and $s=1,\ldots, R_t$, the total runtime of the loop is 
$$
\sum_{t=1}^T \sum_{s=1}^{R_t}  O(B_t \log B_t)= O\left(\sum_{t=1}^T R_t \cdot (C_2 k/R_t^2) \log (C_2k)\right)=O(k\log k)\cdot \sum_{t=1}^T 1/R_t=O(k\log k).
$$

The runtime for construction of the partition $S_1\cup S_2\cup \ldots \cup S_T$ in line~13 is $O((\sum_{t=1}^T R_t) |S|\log |S|)=O(R_T k\log  k)$ by Lemma~\ref{lm:construct-partition-runtime} and the fact that $\sum_{t=1}^T R_t=O(R_T)$.
We now note that since $T=\frac1{1-\delta}\log_2 \log (k+1)+O(1)$, then 
\begin{equation}\label{eq:rt}
R_T=C_1 2^T= C_1 2^{\frac1{1-\delta}\log_2 \log k+O(1)}=O(\log_2^{1/(1-\delta)} (k+1))=O(\log_2^{1+2\delta} (k+1)),
\end{equation}
where we used the fact that $1/(1-\delta)\leq 1+2\delta$ for $\delta\in (0, 1/2)$.
Thus, the runtime for construction of the partition $S_1\cup S_2\cup \ldots \cup S_T$ in line~13 is $O(k\log^{2+2\delta}  k)$.

By Lemma~\ref{lm:estimate-l1l2} each invocation of \textsc{EstimateValues} takes time $O((||\chi^{(r, t)}||_0\log n+F B_t \log n)\cdot R_t)=O(R_t k\log n+R_t B_t \log n)$, as $F=O(1)$ by choice of parameters in line~25 of Algorithm~\ref{alg:estimate-efficient}. 
The total runtime per iteration in lines~16-22 is thus 
\begin{equation*}
\begin{split}
&\sum_{t=1}^T O(R_t k\log n+R_t B_t \log n)\\
&=O\left(k  R_T\log n+\sum_{t=1}^T B_t R_t \log n\right)\text{~~~~~(since $\sum_{t=1}^T R_t=O(R_T)$, as $R_t$ grow geometrically)}\\
&=O(k R_T \log n)+O\left(\sum_{t=1}^T k/R_t\right) \log n\text{~~~~~(since $B_t=C_2 k/R_t^2$)}\\
\end{split}
\end{equation*}
We now note that $\sum_{t=1}^T k/R_t=O(k)$ since $R_t$ grow geometrically, and thus the expression on the last line above is $O(kR_T\log n)=k\log^{2+2\delta} n$ by~\eqref{eq:rt}. Since the loop in lines~16-22 proceeds over $O(\log n)$ iterations, the final runtime bound is  $k\log^{3+2\delta} n$, as required (after rescaling $\delta$).

Finally, lines 27-31 take $O(\frac1{\e}k\log n)$ time for the invocation of \textsc{HashToBins}  by Lemma~\ref{l:hashtobins} and $O(\frac1{\e}k\log n)$ time for \textsc{EstimateValues}  by Lemma~\ref{lm:estimate-l1l2}. Putting the bounds above together, we obtain the runtime of $k\log^{3+2\delta} n+O(\frac1{\e} k\log n)$, as required.
\end{proofof}

\section{Sample efficient recovery}\label{sec:sublinear}

In this section we state our algorithm for sparse recovery from Fourier measurements that achieves $O(k\log n)$ sample complexity in $k\log^{O(1)} n$ runtime, give an outline of the analysis, and then present the formal proof. 
The proof reuses the core primitives developed in Section~\ref{sec:l1} together with the idea of majorizing sequences used in Section~\ref{sec:est} to argue about correctness of our estimation primitive to analyze the performance of a natural iterative recovery scheme.

\subsection{Algorithm and outline of the analysis}
 Our algorithm (Algorithm~\ref{alg:main-sublinear}) contains three major components: it starts by taking measurements $m$ of the signal $x$ (accessing the signal in Fourier domain, i.e. accessing $\wh{x}$), then uses these measurements to perform a sequence of recovery steps that reduce the $\ell_1$ norm of the `heavy' elements of $x$ down to (essentially) noise level $\mu$. Finally, a simple cleanup procedure (\textsc{RecoverAtConstantSNR}) is run to achieve the $\ell_2/\ell_2$ sparse recovery guarantees (see ~\eqref{e:l2l2}). We reuse the location primitive from~\cite{K16} (\textsc{LocateSignal}, Algorithm~\ref{alg:location}).

\noindent{\bf Measuring $\wh{x}$.} All measurements that the algorithm takes are taken in lines~6-22. Two sets of measurements are taken: one for location (\textsc{LocateSignal}), another for estimation purposes (calls to \textsc{EstimateValues} in line~32 of Algorithm~\ref{alg:main-sublinear}). Location relies on a very structured set of measurements: the measurements are taken over $T=\frac1{1-\delta}\log_2\log (k+1)+O(1)$ rounds for small constant $\delta\in (0, 1/2)$, where in round $t$ we are hashing the signal into $B_t\approx k/R_t^2$ buckets, where $R_t$ grows exponentially with $t$. For each $t$ we perform $R_t$ independent hashing experiments of this type. For each hashing $H_{t, s}, t=1,\ldots, T, s=1,\ldots, R_t$ we select a random set $\A_{t, s}\subseteq\nsq\times \nsq$ that encodes the locations that our measurements access.
Besides measurements used for location we take a separate set of measurements to use in the call to \textsc{EstimateValues}. These measurements are quite unstructured: we simply make measurements using $C\log n$ random hashings and evaluation points for sufficiently large constant $C>0$. It is crucial that these measurements are independent of the measurements used for location. Intuitively, the first set of measurements allows us to decode dominant coefficients of the residual signal in sublinear time, whereas the second (unstructured) set of measurements allows us to prune false positives, ensuring that no erroneous coefficients are introduced throughout the update process. The latter idea is similar to the approach used in~\cite{IK14a}, but is harder to implement in our setting as the number of possible trajectories along which the decoding process can evolve is larger. We handle this issue by using the notion of majorizing sequences introduced in Section~\ref{sec:prelim} (see Definition~\ref{def:majorant} and Lemma~\ref{lm:mon}) and used to analyze Algorithm~\ref{alg:estimate-efficient} in Section~\ref{sec:est}.

\noindent{\bf Signal to noise ratio (SNR) reduction loop.} Once the samples have been taken, Algorithm~\ref{alg:main-sublinear} proceeds to the signal to noise (SNR) reduction loop (lines~25-36). The objective of this loop is to reduce the mass of the top (about $k$) elements in the residual signal to roughly the noise level $\mu\cdot k$, where $\mu\geq ||x_{\nsq\setminus [k]}||_2/\sqrt{k}$. Specifically, we define the set $S$ of `head elements' in the original signal $x$ as 
\iflong{\begin{equation}\label{eq:s-def-l1}
S=\{i\in \nsq: |x_i|>\mu\}.
\end{equation}
 Note that we have $|S|\leq 2k$. Indeed, if $|S|>2k$, more than $k$ elements of $S$ belong to the tail, amounting to more than $\mu^2\cdot k=\err_k^2(x)$ tail mass. The quantities $e^{head}$ and $e^{tail}$ (see~\eqref{eq:eh} and~\eqref{eq:et-pi} in Section~\ref{sec:prelim}) used in this section are defined with respect to this set $S$.

 The SNR reduction loop of Algorithm~\ref{alg:main-sublinear} constructs a vector $\tilde \chi$ supported only on $S$ such that 
\begin{equation}\label{eq:l1snr-o1}
 ||(x-\tilde \chi)_S||_1=O(\mu k)\text{~~~and~~~}\supp \tilde \chi\subseteq S,
\end{equation}
 i.e. the $\ell_1$-SNR of the residual signal on the set $S$ of heavy elements is reduced to a constant. }

The main technical contribution lies in our SNR reduction loop, and our main technical result in this section is
\begin{theorem}\label{thm:l1snr}
For any  $\delta\in (0, 1/2)$, for any $x\in \C^n$, any integer $k\geq 1$, if $\mu^2\geq \err_k^2(x)/k$ and $R^*\geq ||x||_\infty/\mu, R^*=n^{O(1)}$, the following conditions hold for the set $S:=\{i\in \nsq: |x_i|>\mu\}\subseteq \nsq$.

Then the SNR reduction loop of Algorithm~\ref{alg:main-sublinear} (lines~25-36) returns $\tilde \chi$ such that 
\begin{equation*}
\begin{split}
&||(x-\tilde \chi)_S||_1=O_\delta(\mu k)\\
&\supp \tilde \chi\subseteq S
\end{split}
\end{equation*}
with probability at least $1-3/25$ over the internal randomness used by Algorithm~\ref{alg:main-sublinear}. The sample complexity is $O_\delta(k\log n)$. The runtime is bounded by $O_\delta(k\log^{4+2\delta} n)$.
\end{theorem}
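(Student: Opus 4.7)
The plan is to mirror the majorizing-sequence analysis of Lemma~\ref{lm:main-estimate-snr-reduction}, adapted to the recovery setting where the heavy set $S$ is not known to the algorithm. First observe $|S|\le 2k$: if $|S|>2k$, then after removing any $k$ indices at least $|S|-k>k$ elements of $S$ remain in the tail, contributing more than $k\mu^2$, contradicting $\mu^2\ge \err_k^2(x)/k$. Thus Lemma~\ref{lm:isolating-partition-construction} applies to $S$ with the hashings $\{\{H_{t,s}\}\}$ generated in lines~6--18 and yields an isolating partition $S = S_1\cup\cdots\cup S_T$, $T = \frac{1}{1-\delta}\log_2\log(k+1) + O(1)$, with probability $\ge 1-1/25$. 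The algorithm does not know this partition; the analysis uses it only as an abstract device to bound the evolution of the residual, while the algorithm operates on the candidate list returned by \textsc{LocateSignal} and prunes with independent estimation measurements.

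I would condition on three high-probability events whose intersection $\E_{maj}$ has probability $\ge 1-3/25$. First, $\E_{partition}$ from Lemma~\ref{lm:isolating-partition-construction}. Second, $\E_{small-noise}$: for every $t\in[1:T]$, $\|e^{tail}_{S_t}(\{H_{t,s}, a_{t,s}\}, x)\|_1 \le \frac{1}{200}\mu k/R_{t-1}$, proved exactly as in~\eqref{eq:tail-bound-12345-est} via Lemma~\ref{lm:noisiest-buckets}, Markov, and a geometric union bound over $t$. Third, $\E_{estim}$: the independent estimation measurements of lines~19--22 satisfy the conclusion of Lemma~\ref{lm:estimate-l1l2} uniformly across the (polynomially many) candidate sets that \textsc{LocateSignal} can return over the course of the loop, so that after pruning, each surviving coordinate $i$ obeys $|(x-\chi)_i - w_i| \le 2\quant^{1/5}_r e^{head}_i(H_r, x-\chi) + 2\quant^{1/5}_r e^{tail}_i(H_r, a_r, x) + n^{-\Omega(c)}$, and any coordinate outside $S$ that is not in the true support is discarded by a magnitude threshold controlled by the current majorant.

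Next, define a majorizing sequence $y^{(r,t)}$ by the template of~\eqref{eq:maj-def-estimation}: set $y^{(0,0)}_i = R^*\mu$ for $i\in S$ and $y^{(0,0)}_i = x_i$ otherwise, and for successive $(r,t)$ let
\begin{equation*}
y^{(r,t)}_i = \begin{cases} c_1 e^{head}_i(\{H_{t,s}\}_{s\in[1:R_t]}, y^{(r',t')}) + c_2 e^{tail}_i(\{H_{t,s}, a_{t,s}\}_{s\in[1:R_t]}, x) + n^{-\Omega(c)} & i\in S_t, \\ y^{(r',t')}_i & \text{otherwise} \end{cases}
\end{equation*}
for suitable absolute constants $c_1,c_2$. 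By induction on $(r,t)$ establish properties (A)--(D) of Lemma~\ref{lm:main-estimate-snr-reduction} verbatim: the $\ell_1$ mass of $y^{(r,t)}$ on $S_q$ decays like $(R_0/R_{q-1})^\delta$, the total mass on $S$ is $O_\delta((R^*(1/4)^r\mu+\mu)k)$, and $(x-\chi^{(r,t)})\prec_S y^{(r,t)}$ with $\supp\chi^{(r,t)}\subseteq S$. The inductive step combines Lemma~\ref{lm:l1b-head} (giving $\|e^{head}_{S_t}(\{H_{t,s}\}, y^{(r',t')})\|_1 \le 20 R_t^{-\delta}\|y^{(r',t')}_S\|_1$), the $\E_{small-noise}$ tail bound, and Lemma~\ref{lm:mon} (to transfer the $e^{head}$ bound from the true residual $x-\chi^{(r',t')}$ to the majorant $y^{(r',t')}$). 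Choosing $C_1$ large as in Section~\ref{sec:est} ensures the prefactor $\big[400(2/\delta)C_1^{-\delta}+\tfrac{1}{10}\big]\le 1/4$, so $y^{(r,T)}_{S_t}$ contracts by a factor of $4$ per outer pass; after $r=C\log_4 R^*$ passes, property (C) gives $\|y^{(r,T)}_S\|_1 = O_\delta(\mu k)$ and hence $\|(x-\tilde\chi)_S\|_1 \le \|y^{(C\log_4 R^*, T)}_S\|_1 = O_\delta(\mu k)$ by property (D), with $\supp\tilde\chi\subseteq S$.

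The hard part will be making (D) go through with the recovery pipeline: unlike in Lemma~\ref{lm:main-estimate-snr-reduction}, the updated vector $\chi^{(r,t)}$ is assembled by \textsc{LocateSignal} followed by \textsc{EstimateValues}-based pruning rather than direct estimation on a known $S_t$. To preserve $\supp\chi^{(r,t)}\subseteq S$, any spurious location $i\not\in S$ that enters the candidate list must be killed by the pruning threshold; this requires the independent estimation measurements $(H_r,a_r)$ of lines~19--22 to certify that $|w_i|$ is below the tail-noise level $O(\mu)$ for such $i$, which is exactly the guarantee provided by Lemma~\ref{lm:estimate-l1l2} once $x-\chi^{(r',t')}\prec_S y^{(r',t')}$ is known inductively (via Lemma~\ref{lm:mon}). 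For correct-support coordinates $i\in S_t$, the update error is bounded by $e^{head}_i(\{H_{t,s}\}, y^{(r',t')}) + e^{tail}_i(\cdot,x) + n^{-\Omega(c)}$, matching the definition of $y^{(r,t)}_i$ and closing the induction. Finally, sample complexity follows from $\sum_t R_t B_t F = \sum_t k/R_t = O(k)$ for each evaluation-point layer, with an extra $O(\log n)$ factor for the location measurements and a further $O(\log n)$ for the independent estimation set, giving $O_\delta(k\log n)$ total; runtime mirrors Theorem~\ref{thm:main-estimate} with one additional $\log n$ factor from location and one from the $O(\log n)$ outer SNR-reduction iterations needed to convert $R^*=n^{O(1)}$ down to $O(1)$, yielding the claimed $O_\delta(k\log^{4+2\delta} n)$ after rescaling $\delta$.
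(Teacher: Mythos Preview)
Your high-level plan is correct and tracks the paper closely, but there is a genuine gap in the definition of your majorizing sequence and, correspondingly, in your proof of the invariant $(x-\chi^{(r,t)})\prec_S y^{(r,t)}$. You define $y^{(r,t)}_i$ on $S_t$ as $c_1 e^{head}_i+c_2 e^{tail}_i+n^{-\Omega(c)}$, mirroring the estimation proof. But in recovery the algorithm applies a \emph{threshold} (line~33): an element $i\in S_t$ may be located (so $i\in L$) yet have its estimate fall below $(1/16)R^*\mu(1/4)^r$ and therefore not be updated. For such $i$, the residual is unchanged, $|(x-\chi^{(r,t)})_i|=|(x-\chi^{(r',t')})_i|$, and there is no reason this should be bounded by $c_1 e^{head}_i+c_2 e^{tail}_i$: the location guarantee (Lemma~\ref{cor:loc}) only controls elements \emph{not} in $L$, and the estimation error bound only controls elements that were actually updated. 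The paper handles this by defining the majorant with a max,
\[
y^{(r,t)}_i:=\max\Bigl\{20 e^{head}_i(\{H_{t,s}\},y^{(r',t')})+20 e^{tail,\H}_i(\{H_{t,s},\A_{t,s}\},x)+n^{-\Omega(c)},\ \tfrac{1}{8}(1/4)^r R^*\mu\Bigr\},
\]
so that the located-but-pruned case is absorbed by the second branch (via the triangle inequality and the estimation accuracy bound~\eqref{eq:est-bound}). Without this term your induction on (D) does not close.

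Two smaller points. First, the tail quantity relevant to location is $e^{tail,\H}$ (equation~\eqref{eq:et-pi-a}), not the plain $e^{tail}$ you use; location evaluates at the structured grid $\{\alpha+\h\beta\}_{\h\in\H}$ and one must control the worst $\h$, which is what $e^{tail,\H}$ does and what Lemma~\ref{eq:tail-bound} bounds. Second, your description of the pruning mechanism (``certify $|w_i|$ is below the tail-noise level $O(\mu)$'') misstates the threshold: it is $(1/16)R^*\mu(1/4)^r$, and the reason a spurious $i\notin S$ is rejected is that $|x_i|\le\mu$ together with the uniform estimation accuracy $|w_i-(x-\chi)_i|\le\sqrt{\alpha}(R^*(1/4)^r\mu+\mu)$ forces $|w_i|<(1/16)R^*\mu(1/4)^r$ as long as $r\le\lfloor\log_4 R^*\rfloor-3$. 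The paper obtains this uniform accuracy not by a union bound over candidate sets $L$ but by applying Lemma~\ref{lm:median-etail-ehead}, (3)--(4), with $r_{max}=C\log n$ to the fixed majorizing signals $y^{(r,t)}$ and then invoking Lemma~\ref{lm:mon}; your $\E_{estim}$ as written suggests the former, which would also work but is not what the paper does.
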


\noindent{\bf Recovery at constant $\ell_1$-SNR and final result.} Once ~\eqref{eq:l1snr-o1} has been achieved, we run the \textsc{RecoverAtConstantSNR} primitive from~\cite{K16} on the residual signal. Adding the correction that it outputs to the output  of the SNR reduction loop gives the final output of the algorithm.  Given Theorem~\ref{thm:l1snr}, the proof of the main result is simple using 

\begin{lemma}[Lemma~3.4 of~\cite{K16}]\label{lm:const-snr}
For any $\e>0$, $\hat x, \chi\in \C^n$, $x'=x-\chi$ and any integer $k\geq 1$ if $||x'_{[2k]}||_1\leq O(||x_{\nsq\setminus [k]}||_2\sqrt{k})$ and $||x'_{\nsq\setminus [2k]}||_2^2\leq ||x_{\nsq\setminus [k]}||_2^2$, the following conditions hold. If $||x||_\infty/\mu=n^{O(1)}$, then the output $\chi'$ of 
\Call{RecoverAtConstantSNR}{$\hat x, \chi, 2k, \e$} satisfies $||x'-\chi'||^2_2\leq (1+O(\e))||x_{\nsq\setminus [k]}||_2^2$
with at least $99/100$ probability over its internal randomness. The sample complexity is $O(\frac1{\e}  k\log n)$, and the runtime complexity is at most $O(\frac1{\e}  k \log^2 n).$
\end{lemma}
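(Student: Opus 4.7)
Since Lemma~\ref{lm:const-snr} is quoted verbatim as Lemma~3.4 of~\cite{K16}, the formal ``proof'' in this paper would be a one-line invocation of the cited result after verifying that our hypotheses match theirs. Still, it is worth outlining the ideas underlying the cited proof, since they explain how \textsc{RecoverAtConstantSNR} interfaces with the SNR-reduction loop established in Theorem~\ref{thm:l1snr}. The plan is to analyze a single round of bucket-based sparse recovery applied to the residual $x' = x - \chi$, using the two hypotheses $||x'_{[2k]}||_1 = O(||x_{\nsq\setminus [k]}||_2 \sqrt{k})$ and $||x'_{\nsq\setminus [2k]}||_2^2 \leq ||x_{\nsq\setminus [k]}||_2^2$ as the bridge between what the SNR-reduction loop guarantees and what a constant-SNR Sparse FFT needs.

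First, I would hash $x'$ into $B = \Theta(k/\e)$ buckets using a flat filter of the type in Lemma~\ref{lem:filter_properties}. The per-bucket squared noise decomposes into a head-leakage term bounded by $(||x'_{[2k]}||_1/B)^2 \cdot 2k = O(\e \cdot ||x_{\nsq\setminus [k]}||_2^2/k)$ (using the head $\ell_1$ hypothesis and $B = \Theta(k/\e)$) and a tail-leakage term bounded by $O(||x'_{\nsq\setminus [2k]}||_2^2/B) = O(\e \cdot ||x_{\nsq\setminus[k]}||_2^2/k)$ (using the tail $\ell_2$ hypothesis). Thus every bucket has squared noise at most a $\Theta(\e)$ fraction of the average head coordinate magnitude squared, which is exactly the regime in which one round of \textsc{LocateSignal} followed by \textsc{EstimateValues} recovers heavy coordinates with high probability and to within the bucket noise.

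Next, I would sum over recovered coordinates. At most $2k$ coordinates are output, each contributing estimation error $O(\e \cdot ||x_{\nsq\setminus[k]}||_2^2/k)$, so the aggregate contribution from successfully recovered coordinates is $O(\e \cdot ||x_{\nsq\setminus [k]}||_2^2)$. Missed heavy coordinates are necessarily small enough that their contribution is absorbed into the $||x_{\nsq\setminus [k]}||_2^2$ term, and the tail $||x'_{\nsq\setminus [2k]}||_2^2$ is directly bounded by hypothesis. Adding these pieces produces the claimed bound $||x'-\chi'||_2^2 \leq (1+O(\e))||x_{\nsq\setminus [k]}||_2^2$, and the sample and runtime bounds $O(\frac{1}{\e} k \log n)$ and $O(\frac{1}{\e} k \log^2 n)$ follow from $B = \Theta(k/\e)$ buckets times $O(\log n)$ location measurement patterns, with the $\log^2 n$ factor coming from the semi-equispaced FFT cost per hashing (Lemma~\ref{lem:semi_equi_std}).

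The main obstacle, handled in~\cite{K16}, is controlling false positives: spurious coordinates flagged by the location step could in principle blow up the $\ell_2$ error. This is resolved by an independent estimation-based pruning step that discards any coordinate whose estimated magnitude does not comfortably exceed the bucket noise threshold, exploiting fresh randomness in a second batch of measurements so that the pruning decisions are independent of the location output conditioned on the residual. Because this entire argument is carried out in~\cite{K16} in full generality, in our paper we simply verify that our residual $x' = x - \tilde\chi$ produced by the SNR-reduction loop satisfies both hypotheses (the head $\ell_1$ bound is Theorem~\ref{thm:l1snr}, and the tail $\ell_2$ bound will follow from $\supp \tilde\chi \subseteq S$ together with a standard fact about the set of heavy coordinates), and then invoke the cited lemma as a black box.
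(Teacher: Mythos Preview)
Your proposal is correct and matches the paper's approach: the paper gives no proof of Lemma~\ref{lm:const-snr} at all, treating it purely as a black-box citation of Lemma~3.4 in~\cite{K16} and invoking it directly in the proof of Theorem~\ref{thm:main}. Your additional sketch of the underlying argument (hash into $\Theta(k/\e)$ buckets, bound head and tail noise per bucket, locate, estimate, prune) is not something the paper includes, but it is a reasonable summary of the construction in~\cite{K16} and does no harm here.
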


{\noindent {\bf Theorem~\ref{thm:main}}  (Restated) \em 
For any $\e\in (1/n, 1), \delta\in (0, 1/2)$, $x\in \C^n$ and any integer $k\geq 1$, if $R^*\geq ||x||_\infty/\mu, R^*=n^{O(1)}$, $\mu^2\geq ||x_{\nsq\setminus [k]}||_2^2/k$, $\mu^2=O(||x_{\nsq\setminus [k]}||_2^2/k)$, \textsc{SparseFFT}$(\hat x, k, \e, R^*, \mu)$  (Algorithm~\ref{alg:main-sublinear}) solves the $\ell_2/\ell_2$ sparse recovery problem using $O_\delta(k\log n)+O(\frac1{\e}k\log n)$ samples and 
$O_\delta(\frac1{\e}k \log^{4+\delta} n)$ time with at least $4/5$ success probability.
}
\begin{proof}
Let the set $S\subseteq [n]$ be defined as in Theorem~\ref{thm:l1snr}.
By Theorem~\ref{thm:l1snr} one has that 
$||(x-\tilde \chi)_S||_1=O_\delta( \mu)$ and $\supp \tilde \chi\subseteq S$ with probability at least $1-3/25$. 
Thus, the signal $x-\tilde \chi$ satisfies preconditions of Lemma~\ref{lm:const-snr}, and we get
$||x-\tilde \chi-\chi'||_2\leq  (1+O(\e))\err_{k}(x)$  with probability at least $99/100$, resulting in success probability at least $1-3/25-1/100\geq 4/5$ overall.

The sample complexity of the SNR reduction loop is $O(k\log n)$ by Theorem~\ref{thm:l1snr}. The sample complexity of \textsc{RecoverAtConstantSNR} is $O(\frac1{\e}k\log n)$. The runtime of the SNR reduction loop is bounded by $k \log^{4+2\delta} n$ by Theorem~\ref{thm:l1snr}, and the runtime of \textsc{RecoverAtConstantSNR} is at most $O(\frac1{\e}k \log^2 n)$ by Lemma~\ref{lm:const-snr}, so the final runtime bound follows (after rescaling $\delta$).
\end{proof}

\iflong{
\begin{algorithm}[h!]
\caption{\textsc{SparseFFT}($\hat x, k, \e, R^*, \mu$)}\label{alg:main-sublinear} 
\begin{algorithmic}[1] 
\Procedure{SparseFFT}{$\hat x, k, \e, R^*, \mu$}
\State $\H\gets \{\mathbf{0}\}$, $\Delta\gets 2^{\lfloor \frac1{2}\log_2 \log_2 n\rfloor}$, $N\gets \Delta^{\lceil \log_\Delta n\rceil}$
\For {$g=1$ to $\log_\Delta N$}
\State $\H\gets \H\cup \{N \Delta^{-g}\}$
\EndFor
\State $T\gets \frac1{1-\delta}\log_2 \log (k+1)+O(1)$ 
\State $R_t\gets C_1\cdot 2^t$ for $t\in [1:T]$ \Comment{$C_1>0$ an absolute constant, $\delta\in (0, 1/2)$ small constant}
\State $B_t\gets C_2 \cdot k/R_t^2$ for $t\in [1:T]$ \Comment{$C_2$ sufficiently large}
\State $G_t\gets$ filter with $B_t$ buckets and sharpness $\fc=8$.
\For {$t=1$ to $T$}\Comment{Take samples to be used for location}
\For {$s=1$ to $R_t$}
\State Choose $\sigma\in \gl$ u.a.r., let $\pi_{t, s}\gets (\sigma, 0)$, $H_{t, s}:=(\pi_{t, s}, B_t, \fc)$
\State Let $\A_{t, s}\gets $ $C\log\log n$ elements of $\nsq\times \nsq$ u.a.r.
\State $m(x, H_{t, s}, \alpha+\h\cdot \beta) \gets \Call{HashToBins}{\hat x, 0, (H_{t, s}, \alpha+\h\cdot \beta)}$ for $(\alpha, \beta)\in \A_{t, s}, \h\in \H$
\EndFor
\EndFor
\State $B\gets k/\alpha^2$, $\alpha\in (0, 1)$ smaller than a constant 
\For {$t=1$ to $C\log n$}
\State Choose $\sigma\in \gl$, $q, z_t\in \nsq$ u.a.r., let $\pi^{est}_t\gets (\sigma, q)$, $ H^{est}_t:=(\pi^{est}_t, B, \fc)$
\State $m(x, H^{est}_t, z_t) \gets \Call{HashToBins}{\hat x, 0, (H^{est}_t, z_t)}$
\EndFor
\State $\M^{est}\gets \{(H^{est}_t, z_t, m(x, H^{est}_t, z_t))\}_{t=1}^{C\log n}$
\State $\chi^{(0, 0)}\gets 0$, $\chi'\gets 0$
\State $r'\gets 0, t'\gets 0$
\For{$r = 0, 1, \dotsc,\left\lfloor \log_4 R^*\right\rfloor-3$}
\For{$t=1$ to $T$}
\State $\chi^{(r, t)}\gets \chi^{(r', t')}+\chi'$
\For{$s=1$ to $R_t$} \Comment{Invocation of \textsc{LocateSignal} below does not take any fresh samples}
\State $L_s\gets \Call{LocateSignal}{\chi^{(r', t')},  H_{t, s},  \{m(x, H_{t, s}, \alpha+\h\cdot \beta)\}_{(\alpha, \beta)\in \A_{t, s}, \h\in \H}}$
\EndFor
\State $L\gets \bigcup_{s=1}^{R_t} L_s$\Comment{Invocation of \textsc{EstimateValues} below does not take any fresh samples}
\State $\chi \gets \Call{EstimateValues}{\chi^{(r', t')}, L, \M^{est}}$
\State For all $j\in \supp  \chi$ let $\chi'_j\gets  \chi_j$ if $| \chi_j|\geq \frac1{16}R^*\mu(1/4)^r$ and $\chi'_j\gets 0$ otherwise
\State $r'\gets r$, $t'\gets t$
\EndFor
\EndFor
\State $\tilde \chi \gets \chi^{(r', t')}+\chi'$
\State $\chi'' \gets \textsc{RecoverAtConstantSNR}(\hat x, \tilde \chi, 2k, \eps)$
\State $\chi^*\gets \tilde \chi+\chi''$
\State \textbf{return} $\chi^*$
\EndProcedure 
\end{algorithmic}
\end{algorithm}

}

\iflong{In the rest of this section we prove performance guarantees for the SNR reduction loop in Algorithm~\ref{alg:main-sublinear} (lines~23-35). These guarantees are formally stated in Theorem~\ref{thm:l1snr}, our main result in the rest of the section.
The main tool in our analysis is the notion of a majorizing sequence for the intermediate residual signals that arise in the SNR reduction loop: we show that with high probability over the measurements taken, the intermediate residual signals that arise during the execution of the algorithm are (assuming perfect estimation) majorized by a fixed sequence of signals $y^{(r, t)}$, constructed in section~\ref{sec:maj}. \ifshort{We then argue that all signals in the majorizing sequence are approximated by the measurements $\M^{est}$ well with high probability, which lets us prove by induction that all residual signals arising in our algorithm are majorized appropriately.}

To prove that the residual signal is indeed with high probability majorized by this sequence $y^{(r, t)}$, we use the fact that our estimation primitive uses $C\log n$ random measurements and hence yields precise bounds for all signals $y^{(r, t)}$ in the majorizing sequence. This means that estimates provided by \textsc{EstimateValues} essentially provide perfect estimation for our algorithm, and a simple inductive argument shows that $y^{(r, t)}$ majorizes $x-\chi^{(r, t)}$ at each iteration indeed, and no false positives are created. This argument crucially relies on the definition of a majorant (see Definition~\ref{def:majorant}) and a monotonicity property of $e^{head}$ (Lemma~\ref{lm:mon}). We first state notation relevant to bounding the effect of tail noise on location in section~\ref{sec:loc-tail-noise}. Then the construction of the majorizing sequence is given in section~\ref{sec:maj}, and then section~\ref{sec:snr-loop} proves Theorem~\ref{thm:l1snr}.

\subsection{Notation for bounding tail noise in location}\label{sec:loc-tail-noise}

 Our location algorithm (presented in Appendix~\ref{sec:location}) uses several values of $(\alpha, \beta)\in \A_r\subseteq \nsq\times \nsq$ to perform location, a more robust version of $e^{tail}_i(H, z)$ will be useful. To that effect we let for any $\mathcal{Z}\subseteq \nsq$ 
\begin{equation}\label{eq:et-pi-a-h}
e^{tail}_i(H, \mathcal{Z}, x):=\quant^{1/5}_{z\in \mathcal{Z}} \left|G_{o_i(i)}^{-1}\cdot \sum_{j\in \nsq\setminus S} G_{o_i(j)}x_j \omega^{z \sigma (j-i)}\right|.
\end{equation}
Note that our Sparse FFT algorithm (Algorithm~\ref{alg:main-sublinear})  at various iterations $r$, first selects sets $\A_r\subseteq \nsq\times \nsq$, and then accesses the signal at locations $\mathcal{Z}=\{\alpha+\h\cdot \beta\}_{(\alpha, \beta)\in \A_r}$ for various $\h\in \H$. It should also be noted here that in the definition above the quantile is taken over all values of $z\in \mathcal{Z}$ for a fixed hashing $H$.

The definition of $e^{tail}_i(H, \{\alpha+\h \cdot \beta\}, x)$ for a fixed $\h\in \H$ above allows us to capture the amount of noise that our measurements that use $H$ suffer from for locating a specific set of bits of $\sigma i$. Since the algorithm requires all $\h\in \H$ to be not too noisy in order to succeed, the following quantity will be useful in analysis. We define
\begin{equation}\label{eq:et-pi-a}
e^{tail, \H}_i(H, \A, x):=40\mu_{H, i}(x)+\sum_{\h\in \H} \left|e^{tail}_i(H, \{\alpha+\h \cdot \beta\}_{(\alpha, \beta)\in \A}, x)-40\mu_{H, i}(x)\right|_+
\end{equation}
where for any $\eta\in \mathbb{R}$ one has $|\eta|_+=\eta$ if $\eta>0$ and $|\eta|_+=0$ otherwise. 

The following definition is useful for bounding the norm of elements $i\in S$ that are not discovered by several calls to \textsc{LocateSignal} on a sequence of hashings $\{H_r\}$. For a sequence of measurement patterns $\{H_r, \A_r\}$ we let
\begin{equation}\label{eq:et}
e^{tail, \H}(\{H_r, \A_r\}, x):=\quant^{1/5}_r e^{tail, \H}_i(H_r, \A_r, x).
\end{equation}

We will use the following lemma, whose proof is given in Appendix~\ref{sec:location}:

\begin{lemma}\label{cor:loc}
For any integer $r_{max}\geq 1$,  for any sequence of $r_{max}$ hashings $H_r=(\pi_r, B, R), r\in [1:r_{max}]$ and evaluation points $\A_r\subseteq \nsq\times \nsq$,  for every $S\subseteq \nsq$ and for every $x, \chi\in \C^n, x':=x-\chi$, the following conditions hold.
If for each $r\in [1:r_{max}]$ $L_r\subseteq \nsq$ denotes the output of \Call{LocateSignal}{$\wh{x}, \chi, H_r, \{m(x, H_r, \alpha+\h\cdot \beta)\}_{(\alpha, \beta)\in \A_r, \h\in \H}$}, $L=\bigcup_{r=1}^{r_{max}} L_r$, and the sets $\{\beta\}_{(\alpha, \beta)\in \A_r}$ are balanced  $r\in [1:r_{max}]$, then
\begin{equation}
||x'_{S\setminus L}||_1\leq 20 ||e^{head}_S(\{H_r\}, x')||_1+20 ||e^{tail, \H}_S(\{H_r, \A_r\}, x)||_1+|S|\cdot n^{-\Omega(c)}.\tag{*}
\end{equation}
Furthermore, every element $i\in S$ such that 
\begin{equation}
|x'_i|>20 (e^{head}_i(\{H_r\}, x')+e^{tail, \H}_i(\{H_r, \A_r\}, x))+n^{-\Omega(c)}\tag{**}
\end{equation}
belongs to $L$.
\end{lemma}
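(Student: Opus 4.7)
The plan is to prove the pointwise identification guarantee (**) first, and then deduce (*) as a near-immediate corollary by contrapositive summation.

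For (**), fix $i\in S$ satisfying the hypothesis $|x'_i|>20\,e^{head}_i(\{H_r\}, x')+20\,e^{tail,\H}_i(\{H_r, \A_r\}, x)+n^{-\Omega(c)}$. First I would unpack the two quantiles: by definition $\quant^{1/5}_r u_r$ is the $\lceil r_{max}/5\rceil$-th largest among $u_1,\ldots, u_{r_{max}}$, so the number of indices $r$ at which $e^{head}_i(H_r, x')$ strictly exceeds $e^{head}_i(\{H_r\}, x')$ is at most $\lceil r_{max}/5\rceil$, and analogously for $e^{tail,\H}$. A union bound then gives at least $r_{max}-2\lceil r_{max}/5\rceil \ge (3/5)r_{max}-2$ indices $r$ -- call them \emph{good} -- at which simultaneously $e^{head}_i(H_r, x')\le e^{head}_i(\{H_r\}, x')$ and $e^{tail,\H}_i(H_r, \A_r, x)\le e^{tail,\H}_i(\{H_r, \A_r\}, x)$. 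For every good $r$ the two noise terms affecting bucket $h_r(i)$ are each bounded by $|x'_i|/20$ (up to the negligible polynomial-precision term).

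Next I would invoke a single-hashing location guarantee for \textsc{LocateSignal}, established separately in Section~\ref{sec:location} (following the blueprint of~\cite{K16}), that reads: whenever the set $\{\beta\}_{(\alpha, \beta)\in \A}$ is balanced and $|x'_i|$ strictly exceeds an absolute-constant multiple of $e^{head}_i(H, x')+e^{tail,\H}_i(H, \A, x)+n^{-\Omega(c)}$, the call \textsc{LocateSignal}$(\wh x, \chi, H, \{m(x, H, \alpha+\h\beta)\}_{(\alpha, \beta)\in \A, \h\in \H})$ outputs a list containing $i$. The constant $20$ in the statement of the lemma is chosen to simultaneously absorb the factor-of-two loss from the two quantiles and the absolute constant from the single-hashing guarantee. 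Applying this guarantee to any good $r$ yields $i\in L_r\subseteq L$, proving (**).

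Having (**), the L1 bound (*) follows at once. If $i\in S\setminus L$, then $i$ must fail the hypothesis of (**), so $|x'_i|\le 20\,e^{head}_i(\{H_r\}, x')+20\,e^{tail,\H}_i(\{H_r, \A_r\}, x)+n^{-\Omega(c)}$. Summing over $i\in S\setminus L$ and dominating each partial sum by the full sum over $S$ gives $\|x'_{S\setminus L}\|_1\le 20\|e^{head}_S(\{H_r\}, x')\|_1+20\|e^{tail,\H}_S(\{H_r, \A_r\}, x)\|_1+|S\setminus L|\cdot n^{-\Omega(c)}$, which is exactly (*) after bounding $|S\setminus L|\le |S|$.

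The main obstacle is the single-hashing location guarantee referenced above; the quantile-over-$r$ argument is otherwise routine. Justifying that guarantee requires a careful analysis of \textsc{LocateSignal}: at each scale $\h\in \H$ the algorithm aggregates measurements $G_{o_i(i)}^{-1}m_{h(i)}(x',H,\alpha+\h\beta)\omega^{-(\alpha+\h\beta)\sigma i}$ over $(\alpha, \beta)\in \A$ and reads a block of bits of $\sigma i$ from the phase, so one must bound (i) the deterministic head contribution, which is exactly $e^{head}_i(H, x')$, and (ii) the aggregated tail contribution per scale, where the balancedness of $\{\beta\}_{(\alpha, \beta)\in \A}$ and the quantile-over-$(\alpha, \beta)$ structure inside the definition of $e^{tail}_i(H, \{\alpha+\h\beta\}, x)$ in~\eqref{eq:et-pi-a-h} let us control the phase error via the sum-of-excesses aggregation in the definition~\eqref{eq:et-pi-a} of $e^{tail,\H}_i(H, \A, x)$. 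The $40\mu_{H,i}(x)$ baseline inside~\eqref{eq:et-pi-a} corresponds to the typical per-scale tail level, and the $|\cdot|_+$ summation captures the scales at which the actual tail noise exceeds this baseline. This single-hashing argument is deferred to Section~\ref{sec:location}, and Lemma~\ref{cor:loc} is then obtained from it by the quantile aggregation above.
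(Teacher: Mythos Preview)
Your proposal is correct and follows essentially the same approach as the paper: both reduce to the single-hashing guarantee (Lemma~\ref{lm:loc}) and then pass to the multi-hashing statement via a quantile argument, with (*) following from (**) by summing the contrapositive over $S\setminus L$. The only cosmetic difference is direction: the paper argues by contrapositive (if $i\notin L$ then for every $r$ one of the two noise conditions of Lemma~\ref{lm:loc} fails, so by pigeonhole one of them fails on at least half the $r$'s, forcing the corresponding $1/5$-quantile to be large), whereas you argue forward by exhibiting a good $r$; the bridging inequality $e^{tail}_i(H,\{\alpha+\h\beta\},x)\le e^{tail,\H}_i(H,\A,x)$ for every $\h\in\H$ (which follows immediately from the sum-of-positive-parts structure of~\eqref{eq:et-pi-a}) is the same in both, only the paper makes it explicit inside the proof of Lemma~\ref{cor:loc} while you fold it into your statement of the single-hashing guarantee.
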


We will also use the following lemma, whose proof is given in Appendix~\ref{app:tail-bounds}:
\begin{lemma}\label{eq:tail-bound}
For every $C_1$ larger than an absolute constant, every integer $k\geq 1$ and every $x\in \C^n$, if the parameter $\mu$ satisfies $\mu\geq ||x_{\nsq\setminus [k]}||_2/\sqrt{k}$, the following conditions hold.  If hashings $\{\{H_{t, s}\}_{s=1}^{R_t}\}_{t=1}^T$ and locations $\{\{\A_{t, s}\}_{s=1}^{R_t}\}_{t=1}^T$ are selected as in Algorithm~\ref{alg:main-sublinear}, lines 6-16,  the sequence $R_1,\ldots, R_T$ satisfies 
\begin{description}
\item[q1] $R_t=C_1 2^t$ for all $t\geq 0$;
\item[q2] $B_t=C_2 (2k)/R_t^2$, 
\end{description}
wheret $C_2>0$ is sufficiently large (as a function of $C_1$), then there exists an event $\E_{small-noise}$ (that depends on $H_{t, s}$ and $\A_{t, s}$) with $\prob[\bar \E_{small-noise}\wedge  \E_{partition}]\leq 1/1000$ (where $\E_{partition}$ is the success event for Lemma~\ref{lm:isolating-partition-construction}) such that  the following conditions hold conditioned on $\E_{small-noise}\cap \E_{partition}$. For $e^{tail, \H}$ defined with respect to $S:=\left\{i\in \nsq: |x_i|>\mu\right\}$  one has for all $t\in [1:T]$ simultaneously 
$||e^{tail, \H}_{S_t}(\{H_{t, s}, \A_{t, s}\}_{s\in [1:R_t]}, x)||_1\leq \frac1{200}||x_{\nsq\setminus k}||_2 \sqrt{k}/R_{t-1}$.
\end{lemma}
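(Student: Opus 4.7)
The plan is to mirror the tail analysis of Lemma~\ref{lm:main-estimate-snr-reduction} (the derivation leading to~\eqref{eq:tail-bound-12345-est}), replacing the single-evaluation quantity $e^{tail}$ with the robustified $e^{tail,\H}$ from~\eqref{eq:et-pi-a}. Conditioned on the isolating-partition event $\E_{partition}$ guaranteed by Lemma~\ref{lm:isolating-partition-construction}, the proof reduces to (i) a per-$(i,t,s)$ expectation bound, (ii) a quantile-maximum-over-subsets argument, and (iii) Markov combined with a union bound over $t$.

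The first step is to show that for each $i\in S$, $t\in[1:T]$ and $s\in[1:R_t]$,
\[
\expect_{H_{t,s},\A_{t,s}}\bigl[e^{tail,\H}_i(H_{t,s},\A_{t,s},x)\bigr] \;=\; O(\nu_t), \qquad \nu_t := \|x_{[n]\setminus S}\|_2/\sqrt{B_t}.
\]
By~\eqref{eq:et-pi-a}, $e^{tail,\H}_i$ equals $40\mu_{H,i}(x)$ plus a sum over $\h\in\H$ of positive excesses $\left|e^{tail}_i(H,\{\alpha+\h\beta\}_{(\alpha,\beta)\in\A},x)-40\mu_{H,i}(x)\right|_+$. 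The leading term is $O(\nu_t)$ in expectation by Jensen applied to the variance identity $\expect_z[(e^{tail}_i(H,z,x))^2]=\mu_{H,i}(x)^2$ combined with the hashing bound $\expect[\mu_{H,i}(x)^2]=O(\nu_t^2)$ from Lemma~\ref{lm:hashing}. For the excesses, I would use that each individual tail evaluation exceeds $40\mu_{H,i}(x)$ with probability at most $1/1600$ by Chebyshev; since $|\A_{t,s}|=C\log\log n$ independent samples are used, a binomial Chernoff argument gives that the $1/5$-quantile exceeds $40\mu_{H,i}(x)$ with probability $1/\operatorname{poly}(\log n)$ (for $C$ large), and integrating against the worst-case polynomial bound on $e^{tail}_i$ yields $\expect[|\cdot|_+]=O(\mu_{H,i}(x)/\operatorname{poly}(\log n))$ per $\h$. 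Summing over $|\H|=O(\log n/\log\log n)$ leaves the total bounded by $O(\nu_t)$.

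Given this per-$(i,s)$ bound, the remainder follows the template of Section~\ref{sec:lm41}. Apply Lemma~\ref{lm:noisiest-buckets} with $X_i^s := e^{tail,\H}_i(H_{t,s},\A_{t,s},x)$, $Y_i := \quant^{1/5}_s X_i^s$, $m=|S|$, and the $n$-parameter equal to $R_t$, taking the maximum over $Q\subseteq S$ of size at most $U_t := 2k \cdot 2^{-(t-1)} \cdot 2^{-2^{(1-\delta)(t-1)}+1}$ (the size bound on $|S_t|$ from condition (1) of an isolating partition, available under $\E_{partition}$). Exactly as in the passage from~\eqref{eq:9hewGGFBNXNxx} to~\eqref{eq:pwhego23g}, the factor $(|S|/U_t)^{10/R_t}$ is bounded by an absolute constant since $R_t = C_1 2^t$ grows geometrically. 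Substituting $B_t = C_2(2k)/R_t^2$ and using $\mu \ge \|x_{[n]\setminus[k]}\|_2/\sqrt{k}$ yields
\[
\expect\bigl[\|e^{tail,\H}_{S_t}(\{H_{t,s},\A_{t,s}\}_{s\in[1:R_t]},x)\|_1\bigr] \;\le\; \bigl(\|x_{[n]\setminus[k]}\|_2\sqrt{k}/R_{t-1}\bigr)\cdot \xi_t,
\]
where the remaining factor $\xi_t$ inherits a doubly exponential decay in $t$ from $2^{-2^{(1-\delta)(t-1)}}$ after the geometric $R_t$-factors are collected; by Claim~\ref{cl:max-expr}, $\xi_t \le 1/(2^{20}\cdot 2^t)$ uniformly in $t\ge 1$ once $C_2$ is large enough.

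Markov's inequality then shows the desired bound fails at step $t$ with probability at most $1/(4000\cdot 2^{t-1})$; a union bound over $t\in[1:T]$ yields total failure probability at most $1/1000$, so defining $\E_{small-noise}$ as the intersection of the per-$t$ success events gives $\prob[\bar\E_{small-noise}\wedge \E_{partition}]\le 1/1000$ as required. The main obstacle is the first step: controlling the sum of positive excesses over the $\Theta(\log n/\log\log n)$ shifts in $\H$ so the total expectation stays $O(\nu_t)$ rather than blowing up by a $|\H|$ factor. This is what forces $|\A_{t,s}| = C\log\log n$ with $C$ large enough for the binomial tail $1/\operatorname{poly}(\log n)$ to beat $|\H|$, and requires a careful tail-integration step using the polynomial signal-to-noise assumption; once this concentration is in place, the rest of the argument is essentially a transcription of the tail half of Lemma~\ref{lm:main-estimate-snr-reduction} with $a_{t,s}\mapsto\A_{t,s}$ and $e^{tail}\mapsto e^{tail,\H}$.
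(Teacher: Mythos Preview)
Your proposal is correct and follows essentially the same route as the paper. The only difference is that the paper does not derive the per-$(i,s)$ expectation bound $\expect[e^{tail,\H}_i(H_{t,s},\A_{t,s},x)]=O(\nu_t)$ from scratch; it invokes Lemma~\ref{lm:loc-tail-small} (Lemma~6.6 of~\cite{K16}), which directly gives $\expect[e^{tail,\H}_i]\le C(40+|\H|\,2^{-\Omega(c_{max})})\|x_{[n]\setminus[k]}\|_2/\sqrt{B_t}$ with $c_{max}=|\A_{t,s}|=\Theta(\log\log n)$, so the $|\H|$-term is $o(1)$. Your Chebyshev-plus-Chernoff sketch of this step is exactly the content of that cited lemma, and once it is in hand the remainder of your argument (Lemma~\ref{lm:noisiest-buckets}, the bound $(|S|/U_t)^{10/R_t}\le 2$, the $\xi_t$ simplification via Claim~\ref{cl:max-expr}, and the Markov-plus-union-bound finish) matches the paper line for line.
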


\subsection{Construction of a majorizing sequence} \label{sec:maj}
We now construct a sequence of vectors $y^{t, r}\in \R_+^\nsq$, where $t=1,\ldots, T$ and $r=0, 1,\ldots, \lfloor \log_2 R^*\rfloor -3$, which, as we show later, will majorize the actual sequence of residual signals that arise in the execution of our algorithm on the set of head elements $S$ assuming expected behaviour of our estimation primitive. These two properties together will later ensure that the update vectors $\chi^{(r', t')}$ that the SNR reduction loop computes are always supported on $S$.

To define the majorizing sequence, we first let $y^{(0, 0)}_i=R^* \mu$ for all $i\in S$ and $y^{(0, 0)}_i=0$ otherwise. Note that $y^{(0, 0)}$ trivially majorizes every $x$ with the property that $||x||_\infty\leq R^*\cdot \mu$ on $S$. 
The construction of $y^{(r, t)}$ proceeds by induction on $(r, t)$. Given $y^{(r', t')}$, the next signal to be defined is $y^{(r, t)}$, where $(r, t)=(r', t'+1)$ if $t'<T$ and $(r, t)=(r'+1, 1)$ otherwise (note that this notation matches the notation in lines~23-35) of Algorithm~\ref{alg:main-sublinear}, i.e. the SNR reduction loop.  We now define the signal $y^{(r, t)}$ by letting for each $i\in S_t$
\begin{equation}\label{eq:maj-def}
y^{(r, t)}_i:=\max\left\{20 e^{head}_i(\{H_{t, s}\}_{s\in [1:R_t]}, y^{(r',t')})+20 e^{tail, \H}_i(\{H_{t, s}, \A_{t, s}\}_{s\in [1:R_t]}, x)+n^{-\Omega(c)},\frac1{8}\cdot (1/4)^r R^*\mu \right\}
\end{equation}
and letting $y^{(r, t)}_i:=y^{(r', t')}_i$ otherwise. Here $n^{-\Omega(c)}$ corresponds to the (negligible) error term due to polynomial precision of our computations. Note that there are two contributions to $y^{(r, t)}$: one coming from the previous signal in the majorizing sequence, namely $y^{(r', t')}$, and the other coming from the tail of the signal $x$. Also, recall that the quantities $e^{head}$ and $e^{tail}$ (see~\eqref{eq:eh} and~\eqref{eq:et-pi} in Section~\ref{sec:prelim}) used in this section are defined with respect to the set $S$ given by~\eqref{eq:s-def-l1}.

The $\ell_1$ norm of the majorizing sequence satisfies useful decay properties:

\begin{lemma}\label{lm:majorizing-sequence}
For every $\delta\in (0, 1/2)$,  every even $\fc\geq 6$, every $x\in \C^n$, every integer $k\geq 1$, if $\mu\geq ||x_{\nsq\setminus [k]}||_2/\sqrt{k}$, $R^*\geq ||x||_\infty/\mu, R^*=n^{O(1)}$, and $S=\{i\in \nsq: |x_i|\geq \mu\}$, then the following conditions hold. 

If $e^{head}, e^{tail, \H}$ are defined with respect to $S$, hashings $\{H_{t, s}\}$, sets $\{\A_{t, s}\}$ are defined as in Algorithm~\ref{alg:main-sublinear}, parameters $R_t, B_t$ satisfy 
\ifshort{ {\bf (q1)} $R_t=C_1 2^t$ for all $t\geq 0$ and {\bf (q2)}
$B_t=C_2 (2k)/R_t^2$, where $C_2$ is a function of $C_1$
}
\iflong{\begin{description}
\item[q1] $R_t=C_1 2^t$ for all $t\geq 0$, $C_1$ larger than a function of $\delta$;
\item[q2] $B_t= C_2 (2k)/R_t^2$, where $C_2$ is larger than a function of $C_1$ and $\delta$,
\end{description}}
and the sequence $y^{(r, t)}$ is defined as in~\eqref{eq:maj-def}, then there exists an event $\E_{maj}$ with $\prob_{\{\{H_{t, s}\}_{s\in [1:R_t]}\}\}_{t=1}^T}[\E_{maj}]\geq 1-2/25$ such that conditioned on $\E_{maj}$ the set $S$ admits an isolating partition (as per Definition~\ref{def:isolating-partition}) $S=S_1\cup S_2\cup \ldots\cup S_T$, and the following hold.

For every $(r, t)\in [1:T]\times [0:\lfloor \log_4 R^*\rfloor]\cup \{(0, 0)\}$ 
\ifshort{{\bf (A)} for all $q\in [1: t]$ one has $||y^{(r, t)}_{S_q}||_1\leq R^*\cdot (1/4)^{r+1} k\cdot (R_0/R_{q-1})^\delta$, 
{\bf (B)} for all $q\in [t+1: T]$ one has $||y^{(r, t)}_{S_q}||_1\leq R^*\cdot (1/4)^r k\cdot (R_0/R_{q-1})^\delta$ and {\bf (C)}
 $||y_S||_1\leq 2\cdot R^*k(1/4)^r$.
}
\iflong{\begin{description}
\item[(A)] for all $q\in [1: t]$ one has $||y^{(r, t)}_{S_q}||_1\leq R^*\mu\cdot (1/4)^{r+1} \cdot (2k)\cdot (R_0/R_{q-1})^\delta$;
\item[(B)] for all $q\in [t+1: T]$ one has $||y^{(r, t)}_{S_q}||_1\leq R^*\mu\cdot (1/4)^r\cdot (2k)\cdot (R_0/R_{q-1})^\delta$;
\item[(C)] $||y^{(r, t)}_S||_1\leq (2/\delta)\cdot R^*\mu(1/4)^r\cdot (2k)$
\end{description}
}
\end{lemma}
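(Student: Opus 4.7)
The plan is to mirror the argument of Lemma~\ref{lm:main-estimate-snr-reduction}, adapted to the recovery setting: we condition on a good event $\E_{maj}$ under which (i) $S$ admits a $\delta$-isolating partition and (ii) the tail-induced contribution $e^{tail,\H}$ is small at every level $t$, and then prove (A), (B), (C) simultaneously by induction on $(r,t)$ in the lexicographic order induced by lines~25--36 of Algorithm~\ref{alg:main-sublinear}.

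\textbf{Defining $\E_{maj}$.} First invoke Lemma~\ref{lm:isolating-partition-construction} to obtain an event $\E_{partition}$ of probability at least $1-1/25$ on which $S$ (of size at most $2k$) admits a $\delta$-isolating partition $S = S_1 \cup \ldots \cup S_T$ with respect to $\{\{H_{t,s}\}_{s=1}^{R_t}\}_{t=1}^T$; note that the hypotheses $R_t = C_1 2^t$ and $B_t = C_2(2k)/R_t^2$ match assumptions \textbf{p1}, \textbf{p2} of that lemma (with $k$ there set to $2k$ here). Then invoke Lemma~\ref{eq:tail-bound} to obtain $\E_{small-noise}$ on which $\|e^{tail,\H}_{S_t}(\{H_{t,s},\A_{t,s}\},x)\|_1 \leq \tfrac{1}{200}\|x_{\nsq\setminus[k]}\|_2\sqrt{k}/R_{t-1} \leq \tfrac{1}{200}\mu k/R_{t-1}$ for every $t\in[1{:}T]$. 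Set $\E_{maj} := \E_{partition} \cap \E_{small-noise}$, which has probability $\geq 1 - 2/25$ by a union bound.

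\textbf{Base case.} For $(r,t)=(0,0)$, by construction $y^{(0,0)}_i = R^*\mu$ for $i\in S$ and property \textbf{(1)} of an isolating partition gives $|S_q|\leq 2k\cdot (R_0/R_{q-1}) 2^{-2^{(1-\delta)(q-1)}+1}\leq 2k(R_0/R_{q-1})^\delta$, so (B) holds for all $q\in[1{:}T]$; (A) is vacuous and (C) follows by summing a geometric series in $(R_0/R_{q-1})^\delta = 2^{-(q-1)\delta}$, precisely as in equation~(\ref{eq:3ibrbrg-est}) of Lemma~\ref{lm:main-estimate-snr-reduction}.

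\textbf{Inductive step.} Assume (A), (B), (C) at the predecessor $(r',t')$. For any $q\neq t$, the definition~\eqref{eq:maj-def} gives $y^{(r,t)}_{S_q} = y^{(r',t')}_{S_q}$, and the needed bound is inherited directly (the indices $r$ vs. $r'$ match up because exactly the $q=t$ slot is the one that gets updated). For $q=t$, use $\|y^{(r',t')}_S\|_1 \leq (2/\delta)\cdot R^*\mu(1/4)^{r'}\cdot(2k)$ from part (C) of the inductive hypothesis together with Lemma~\ref{lm:l1b-head} (applied to $x':=y^{(r',t')}$) to get
\[
\|e^{head}_{S_t}(\{H_{t,s}\},y^{(r',t')})\|_1 \leq 20 R_t^{-\delta} \|y^{(r',t')}_S\|_1 \leq 20(2/\delta)\,R^*\mu(1/4)^{r'}\cdot(2k)\cdot R_t^{-\delta}.
\]
Combine this with the $e^{tail,\H}$ bound from $\E_{small-noise}$ and the additional cleanup term $\tfrac{1}{8}(1/4)^r R^*\mu \cdot |S_t|$ (bounded using $|S_t|\leq (2k)(R_0/R_{t-1})2^{-2^{(1-\delta)(t-1)}+1}$, which decays doubly exponentially in $t$). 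Writing $R_t^{-\delta} = R_0^{-\delta} (R_0/R_t)^\delta$ and choosing $C_1 = C_1(\delta)$ large enough that $400\cdot(2/\delta)\cdot C_1^{-\delta} + \tfrac{1}{10} + \tfrac{1}{8} \leq 1/4$, the sum is at most $R^*\mu(1/4)^{r+1}\cdot(2k)\cdot(R_0/R_{t-1})^\delta$, establishing (A) for $q=t$.

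\textbf{Part (C).} This follows from (A), (B) by summing the geometric series $\sum_{q\geq 1}(R_0/R_{q-1})^\delta = \sum_{q\geq 1} 2^{-(q-1)\delta} \leq 2/\delta$ (using $e^x-1\geq x$ and $\ln 2 > 1/2$), exactly as in~(\ref{eq:3ibrbrg-est}).

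\textbf{Main obstacle.} The one genuinely new ingredient, absent from the estimation argument, is the $\max$ with $\tfrac{1}{8}(1/4)^r R^*\mu$ in~\eqref{eq:maj-def}, which is introduced so that the thresholded update $\chi'$ in line~33 behaves correctly. This cleanup floor contributes at most $\tfrac{1}{8}(1/4)^r R^*\mu\cdot|S_t|$ to $\|y^{(r,t)}_{S_t}\|_1$; thanks to the doubly-exponential decay of $|S_t|/(2k)$ guaranteed by property \textbf{(1)} of an isolating partition, this contribution is dominated by the desired bound $R^*\mu(1/4)^{r+1}(2k)(R_0/R_{t-1})^\delta$ for every $t\geq 1$, provided $C_1$ is chosen large enough as a function of $\delta$. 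Everything else is a direct transcription of the argument in Section~\ref{sec:lm41}.
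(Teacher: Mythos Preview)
Your proposal is correct and follows essentially the same route as the paper: define $\E_{maj}=\E_{partition}\cap\E_{small\text{-}noise}$ via Lemma~\ref{lm:isolating-partition-construction} and Lemma~\ref{eq:tail-bound}, prove (A)--(C) by induction on $(r,t)$ using Lemma~\ref{lm:l1b-head} for the head contribution, the conditioned tail bound for $e^{tail,\H}$, and the size bound $|S_t|\leq (2k)(R_0/R_{t-1})2^{-2^{(1-\delta)(t-1)}+1}$ for the floor term, with (C) recovered from (A), (B) by summing the geometric series in $(R_0/R_{q-1})^\delta$. The paper's own proof is organized slightly differently (it derives the $\ell_1$ bound on $y^{(r',t')}_S$ first and defines $\E_{maj}$ mid-proof rather than up front), but the ingredients and the arithmetic are the same.
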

\begin{proof}

By Lemma~\ref{lm:isolating-partition-construction} applied to the set $S$ (recall that $|S|\leq 2k$) we get that conditioned on an event $\E_{partition}$ that occurs with probability at least $1-1/25$ there exists an isolating partition $S=S_1\cup \ldots\cup S_T$. We condition on $\E_{partition}$ in what follows, and the event $\E_{maj}$ that we construct later will be a subset of $\E_{partition}$.

We prove the claims by induction on $(r, t)$.
The {\bf base} is provided by $r=0$ and $t=0$. Indeed, by property {\bf (1)} of an isolating partition (see Definition~\ref{def:isolating-partition}) and the fact that $|S|\leq 2k$ we have for any $q\in [1:T]$
$$
||y||_{S_q}\leq R^*\mu\cdot |S_q|\leq R^*\mu\cdot (2k)\cdot \frac{R_0}{R_{q-1}}2^{-2^{(1-\delta)(q-1)}+1}\leq R^*\mu (2k)\cdot (R_0/R_{q-1})^\delta
$$
since $\delta\in (0, 1)$  by assumption of the lemma and $2^{-2^{(1-\delta)(q-1)}+1}\leq 1$ for all $q\geq 1$.

We now prove the {\bf inductive step}.  There are two cases, depending on whether $t\in [1:T-1]$ or $t=T$. Let $t'=t-1$, $r=r'$ if $t>1$ and $t'=T, r'=r-1$ otherwise. If $t=1, r=0$, then let $t'=0, r'=0$. Note that $(r', t')$ is the element preceding $y^{(r, t)}$ in the majorizing sequence.

We start with an upper bound on the $\ell_1$ norm of $y^{(r', t')}$. Using the inductive hypothesis for $(r', t')$, we get
\begin{equation}\label{eq:3ibrbrg}
\begin{split}
||y^{(r', t')}_S||_1&\leq \sum_{q=1}^{t'} R^*\mu\cdot (2k)\cdot(1/4)^{r'+1}\cdot (R_0/R_{q-1})^\delta+\sum_{q=t'+1}^{\infty} R^*\mu\cdot (2k)\cdot(1/4)^r\cdot (R_{q-1}/R_0)^{-\delta}\\
&\leq R^*\mu\cdot (2k)\cdot(1/4)^{r'}\sum_{q=1}^{\infty} (R_{q-1}/R_0)^{-\delta}\\
&= R^*\mu\cdot (2k)\cdot(1/4)^{r'}\sum_{q=1}^{\infty} 2^{-(q-1) \delta}\text{~~~~~~(since $R_t=C_1 2^t$ by {\bf p1})}\\
&\leq \frac1{2^\delta-1}\cdot R^*\mu\cdot (2k)\cdot(1/4)^{r'}\\
&\leq \frac1{e^{\delta \ln 2}-1}\cdot R^*\mu(1/4)^{r'}\cdot (2k)\\
&\leq (2/\delta)\cdot R^*\mu(1/4)^{r'}\cdot (2k)\text{~~~~~~~(since $e^x-1\geq x$ when $x\leq 1$ and $\ln 2>1/2$)}\\
\end{split}
\end{equation}

 By definition of the majorizing sequence~\eqref{eq:maj-def} the signal $y^{(r, t)}$ is obtained from $y^{(r', t')}$ by modifying the latter on $S_t$. We need to bound the error introduced by head and tail elements of $y^{(r', t')}$ to $y^{(r, t)}_{S_t}$ (see~\eqref{eq:maj-def}).  We now bound both terms. By Lemma~\ref{eq:tail-bound} conditioned on $\E_{small-noise}\cap \E_{partition}$ (defined in the lemma) we have
\begin{equation}\label{eq:tail-bound-12345}
||e^{tail, \H}_{S_t}(\{H_{t, s}, \A_{t, s}\}_{s\in [1:R_{t}]}, x)||_1\leq \frac1{200}\mu k/R_{t-1}.
\end{equation}
By Lemma~\ref{lm:l1b-head} we have 
\begin{equation}\label{eq:head-bound-12345}
||e^{head}_{S_t}(\{H_{t, s}\}_{s\in [1:R_t]}, y^{(r', t')})||_1\leq 40  R_t^{-\delta}||y^{(r', t')}_S||_1
\end{equation}
as long as $S$ admits an isolating partition with respect to the hash functions $\{\{H_{t, s}\}\}$, which it does with probability at least $1-1/25$ by Lemma~\ref{lm:isolating-partition-construction}  (the success event is denoted by $\E_{partition}$).  We now define the event $\E_{maj}$ by letting $\E_{maj}:=\E_{small-noise}\cap \E_{partition}$. Note that $\prob[\E_{maj}]\geq 1-2/25$, as required. We condition on $\E_{maj}$ for the rest of the proof.

We now use the bounds above to prove the result. By definition of the majorizing sequence~\eqref{eq:maj-def} we have 
\begin{equation*}
\begin{split}
y^{(r, t)}_i:=\max\left\{20 e^{head}_i(\{H_{t, s}\}_{s\in [1:R_t]}, y^{(r',t')})+20 e^{tail, \H}_i(\{H_{t, s}, \A_{t, s}\}_{s\in [1:R_t]}, x)+n^{-\Omega(c)},\frac1{8}\cdot (1/4)^r R^*\mu \right\}.
\end{split}
\end{equation*}
so using the bound from~\eqref{eq:head-bound-12345} we get
\begin{equation*}
\begin{split}
||y^{(r, t)}_{S_t}||_1&\leq \sum_{i\in S_t}\left(20 e^{head}_i(\{H_{t, s}\}_{s\in [1:R_t]}, y^{(r',t')})+20 e^{tail, \H}_i(\{H_{t, s}, \A_{t, s}\}_{s\in [1:R_t]}, x)+n^{-\Omega(c)}\right.\\
&\left.+\frac1{8}\cdot (1/4)^r R^*\mu\right)\\
&\leq 400\cdot  R_t^{-\delta} \cdot ||y^{(r',t')}_S||_1+\frac1{8}((1/4)^r R^*\mu)\cdot |S_t|+20||e^{tail, \H}_{S_t}(\{H_{t, s}, \A_{t, s}\}, x)||_1+n^{-\Omega(c)}\\
\end{split}
\end{equation*}
We now substitute ~\eqref{eq:3ibrbrg} together with~\eqref{eq:tail-bound-12345} into the last line above, and use the bound $|S_t|\leq 2k\cdot \frac{R_0}{R_{t-1}}2^{-2^{(1-\delta)(t-1)}+1}\leq 2k \frac{R_0}{R_{t-1}}$ (from the definition of an isolating partition, Definition~\ref{def:isolating-partition}) to obtain
\begin{equation}\label{eq:ibtg43yg84jhGHIOG}
\begin{split}
||y^{(r, t)}_{S_t}||_1&\leq 400\cdot  R_t^{-\delta} ((2/\delta)\cdot R^*\mu \cdot k(1/4)^{r'})+\frac1{8} R^*\mu k(1/4)^{r'}\cdot \frac{R_0}{R_{t-1}}+  \frac1{10}\mu k/R_{t-1}+n^{-\Omega(c)}\\
&\leq \left((2000/\delta)\cdot (R_t/R_{t-1})^{-\delta}\cdot R_0^{-\delta}+\frac{1}{8}+\frac1{10}\right)\cdot R^*\mu \cdot k(1/4)^{r'}\cdot (R_0/R_{t-1})^\delta+n^{-\Omega(c)},\\
&\leq \left((2000/\delta)\cdot R_0^{-\delta}+\frac{1}{8}+\frac1{10}\right)\cdot R^*\mu \cdot k(1/4)^{r'}\cdot (R_0/R_{t-1})^\delta+n^{-\Omega(c)},\\
\end{split}
\end{equation}
where we used the assumption that $r\leq \lfloor \log_4 R^*\rfloor$, so that $R^*\mu(1/4)^r\geq \mu$.

We now note that for every $t\geq 1$, since $R_t=C_1 2^t$  by assumption of the lemma, we have
$$
(2000/\delta)\cdot R_0^{-\delta}=(2000/\delta)\cdot C_1^{-\delta}.
$$
Thus, as long as $C_1\geq (2000\cdot 100 \cdot \delta)^{1/\delta}$, the rhs is upper bounded by $1/100$. Substituting this into ~\eqref{eq:ibtg43yg84jhGHIOG}, we get 
$$
||y^{(r, t)}_{S_t}||_1\leq \left(\frac1{100}+\frac1{8}+\frac1{10}\right)R^* \mu \cdot k(1/4)^{r'}\cdot (R_0/R_{t-1})^\delta+n^{-\Omega(c)}\leq R^* \mu k(1/4)^{r'+1}\cdot (R_0/R_{t-1})^\delta+n^{-\Omega(c)}.
$$

This completes the proof of the inductive step.
\end{proof}
}

\iflong{\subsection{Proof of Theorem~\ref{thm:l1snr}}\label{sec:snr-loop}

We now prove the following lemma, which captures the correctness part of Theorem~\ref{thm:l1snr}. We then put it together with runtime and sample complexity estimates to obtain a proof of Theorem~\ref{thm:l1snr}.

\begin{lemma}\label{lm:main}
For every $\delta\in (0, 1/2)$, every even $\fc\geq 6$, every $x\in \C^n$ if the parameter $\mu$ satisfies $\mu\geq ||x_{\nsq\setminus [k]}||_2/\sqrt{k}$, $R^*= ||x||_\infty/\mu, R^*=n^{O(1)}$, the following conditions hold for the SNR reduction loop in Algorithm~\ref{alg:main-sublinear}. If the hashings $\{H_{t, s}\}$ and locations $\{\A_{t, s}\}$ are chosen as in Algorithm~\ref{alg:main-sublinear}, and parameters satisfy
\begin{description}
\item[q1] $R_t=C_1 2^t$ for all $t\geq 0$, $C_1$ larger than a function of $\delta$;
\item[q2] $B_t=C_2 (2k)/R_t^2$, where $C_2$ larger than a function of $C_1$ and $\delta$,
\end{description}
then the following conditions hold. If $S:=\left\{i\in \nsq: |x_i|>\mu\right\}$, 
then the output $\tilde \chi$ of the $\ell_1$-SNR reduction loop in Algorithm~\ref{alg:main-sublinear} satisfies
$$
||(x-\chi)_S||_1=O_\delta(||x_{\nsq\setminus [k]}||_2\cdot \sqrt{k}),
$$  
and all intermediate $\chi^{r', t'}$ satisfy 
$$
\supp \chi^{r', t'}\subseteq S
$$ 
with probability at least $1-3/25$ over the randomness used in the measurements.
\end{lemma}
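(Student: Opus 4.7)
The plan is to mirror the structure of the proof of Lemma~\ref{lm:main-estimate-snr-reduction}: we prove by induction on the pair $(r,t)$ in lexicographic order that, conditional on a high-probability event $\mathcal{E}$ to be defined, the residual $x - \chi^{(r,t)}$ is $S$-majorized by the fixed sequence $y^{(r,t)}$ constructed in Lemma~\ref{lm:majorizing-sequence}, \emph{and} the update vectors $\chi^{(r,t)}$ remain supported on $S$. Once established, the statement of the lemma follows by taking the terminal pair $(r,t) = (\lfloor \log_4 R^*\rfloor - 3, T)$ and invoking property \textbf{(C)} of Lemma~\ref{lm:majorizing-sequence}, which gives $\|y^{(r,t)}_S\|_1 \le (2/\delta)\cdot R^*\mu(1/4)^{r}\cdot (2k) = O_\delta(\mu k)$ since $R^*(1/4)^{r}=O(1)$ at the end of the loop; combined with the hypothesis $\mu \le \|x_{[n]\setminus[k]}\|_2/\sqrt{k}$ this yields the claimed $O_\delta(\|x_{[n]\setminus[k]}\|_2\sqrt{k})$ bound.

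First I would define $\mathcal{E} = \mathcal{E}_{maj}\cap \mathcal{E}_{est}$, where $\mathcal{E}_{maj}$ is the event from Lemma~\ref{lm:majorizing-sequence} (probability $\ge 1-2/25$) and $\mathcal{E}_{est}$ is the event that the estimation measurements $\mathcal{M}^{est}$ produce accurate per-coordinate estimates simultaneously for every signal in the (fixed) sequence $\{y^{(r,t)}\}$. More precisely, by Lemma~\ref{lm:median-etail-ehead} applied to the $C\log n$ independent hashings in $\mathcal{M}^{est}$ with $B = k/\alpha^2$ buckets, for any fixed $\chi'$ with $\supp\chi'\subseteq S$ the $\quant^{1/5}$ values of $e^{head}_i$ and $e^{tail}_i$ computed against $\mathcal{M}^{est}$ are bounded by $\mu/1000$ at every $i\in [n]$ with failure probability $n^{-\Omega(1)}$. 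A union bound over the $O(T\log R^*) = \mathrm{polylog}(n)$ signals $y^{(r,t)}$ and all $n$ coordinates gives $\prob[\bar{\mathcal{E}}_{est}]\le 1/25$, so $\prob[\mathcal{E}]\ge 1-3/25$.

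Now for the inductive step, fix $(r,t)$ with predecessor $(r',t')$ in the SNR loop and assume $(x-\chi^{(r',t')})\prec_S y^{(r',t')}$ and $\supp\chi^{(r',t')}\subseteq S$. Let $L$ be the output of the $R_t$ calls to \textsc{LocateSignal} and $\chi$ the output of \textsc{EstimateValues}$(\chi^{(r',t')}, L, \mathcal{M}^{est})$. We consider any index $j$ and analyze the post-threshold value $\chi'_j$ of line~33:
\textbf{(i)} If $j\notin S$ then $(x-\chi^{(r',t')})_j = x_j$ with $|x_j|\le \mu$, so by $\mathcal{E}_{est}$ the estimate satisfies $|\chi_j|\le 2\mu$, which is below the threshold $\tfrac{1}{16}R^*\mu(1/4)^r$ whenever $r \le \lfloor\log_4 R^*\rfloor-3$; thus $\chi'_j=0$, preserving the support condition.
\textbf{(ii)} If $j\in S$ and $j\notin L$, Lemma~\ref{cor:loc} gives $|(x-\chi^{(r',t')})_j|\le 20(e^{head}_j(\{H_{t,s}\},x-\chi^{(r',t')}) + e^{tail,\H}_j(\{H_{t,s},\A_{t,s}\},x)) + n^{-\Omega(c)}$, and by Lemma~\ref{lm:mon} and the inductive hypothesis $e^{head}_j(\{H_{t,s}\},x-\chi^{(r',t')})\le e^{head}_j(\{H_{t,s}\},y^{(r',t')})$.
\textbf{(iii)} If $j\in S$ and $j\in L$ but $|\chi_j|<\tfrac{1}{16}R^*\mu(1/4)^r$, then $\chi'_j=0$ and $|(x-\chi^{(r,t)})_j|\le |\chi_j|+\mu/1000\le \tfrac{1}{8}R^*\mu(1/4)^r$.
\textbf{(iv)} If $j\in S$ and $j\in L$ and $\chi'_j=\chi_j$, then by $\mathcal{E}_{est}$ the residual is bounded by $\mu/1000$, well within the majorant.
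Combining the four cases, in every case $|(x-\chi^{(r,t)})_j|$ is at most $\max\{20 e^{head}_j(\{H_{t,s}\},y^{(r',t')}) + 20 e^{tail,\H}_j(\{H_{t,s},\A_{t,s}\},x) + n^{-\Omega(c)}, \tfrac{1}{8}R^*\mu(1/4)^r\}$, which is exactly $y^{(r,t)}_j$ by \eqref{eq:maj-def}. This closes the induction.

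The main obstacle is case \textbf{(iii)}: the thresholding can zero out a coefficient whose true residual is as large as roughly the threshold itself, and the majorant must absorb this without blowing up. This is exactly why the threshold $\tfrac{1}{16}R^*\mu(1/4)^r$ in the algorithm is a factor of two below the $\tfrac{1}{8}R^*\mu(1/4)^r$ floor baked into the majorant in \eqref{eq:maj-def}, with the remaining gap absorbing the $O(\mu)$ estimation error guaranteed by $\mathcal{E}_{est}$. A secondary subtlety is that the estimation randomness $\mathcal{M}^{est}$ is reused across all iterations and thus correlated with the chain of residuals, but passing through the fixed \emph{majorizing} sequence $\{y^{(r,t)}\}$ (via Lemma~\ref{lm:mon}) decouples the required high-probability bounds from the algorithmic trajectory, reducing everything to a union bound over a polylogarithmic deterministic list.
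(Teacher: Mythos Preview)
Your overall approach is the paper's: condition on $\E_{maj}$, pin down estimation precision for the fixed majorizing sequence via the $C\log n$ independent measurements in $\M^{est}$ and Lemma~\ref{lm:median-etail-ehead}, then induct on $(r,t)$ using Lemma~\ref{lm:mon} to pass from the random residuals to the deterministic $y^{(r,t)}$. The skeleton is right, but two of your quantitative claims are wrong as written.

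First, the estimation error is not $\mu/1000$. Lemma~\ref{lm:median-etail-ehead}~\textbf{(3)},\textbf{(4)} give $Z^{head}=O(\|y^{(r',t')}_S\|_1/B)$ and $Z^{tail}=O(\|x_{[n]\setminus S}\|_2/\sqrt{B})$; with $B=k/\alpha^2$ and $\|y^{(r',t')}_S\|_1\le (4k/\delta)R^*\mu(1/4)^{r'}$ from Lemma~\ref{lm:majorizing-sequence}~\textbf{(C)}, the head contribution is of order $(\alpha^2/\delta)R^*\mu(1/4)^r$, which for small $r$ is far larger than $\mu$. The correct bound is $|w_i-(x-\chi^{(r',t')})_i|\le \sqrt{\alpha}\bigl(R^*\mu(1/4)^r+\mu\bigr)$ as in \eqref{eq:est-bound}. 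Your cases (i), (iii), (iv) still close with this correction (using $R^*(1/4)^r\ge 64$ on the loop range and $\alpha$ small), but the intermediate statements ``$|\chi_j|\le 2\mu$'' and ``residual $\le \mu/1000$'' are false.

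Second, your closing line ``which is exactly $y^{(r,t)}_j$ by \eqref{eq:maj-def}'' is only valid for $j\in S_t$. For $j\in S\setminus S_t$ one has $y^{(r,t)}_j=y^{(r',t')}_j$, and the $\max\{\cdot\}$ expression you wrote is neither equal to it nor obviously below it. The paper handles this split explicitly: for $j\notin L$ (or $\chi'_j=0$) the residual is unchanged and the inductive hypothesis applies directly; for $j\in S\setminus S_t$ with $\chi'_j\neq 0$ it uses the monotonicity step \eqref{eq:monontone}, i.e.\ $|(x-\chi^{(r,t)})_j|\le |(x-\chi^{(r',t')})_j|\le y^{(r',t')}_j$. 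You need one of these two observations to close the induction outside $S_t$.
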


\begin{proof}

We start with an outline of the proof. Throughout the proof we rely on the quantities $e^{head}$ and $e^{tail}$ defined with respect to the set $S=\left\{i\in \nsq: |x_i|>\mu\right\}$ defined in the lemma. 
The proof is by induction on the number of iterations of the SNR reduction loop. We will show that with high probability over the initial measurements the residual signals $x-\chi^{(r, t)}$ are majorized on $S$ by the sequence $y^{(r, t)}$ defined in~\eqref{eq:maj-def}. This lets us argue that {\bf (1)} with high probability over the measurements used for \textsc{EstimateValues} estimation error {\em on the signals $y^{(r, t)}$} is small, and then {\bf (2)} conclude that since $x-\chi^{(r, t)}$ are majorized by $y^{(r, t)}$ on $S$, \textsc{EstimateValues} gives precise estimates for all such residuals. This lets us argue that updates of the residual are always confined to the set $S$, and the residual is still majorized appropriately at the next iteration, giving the inductive proof. In what follows we condition on the event $\E_{maj}$ defined in Lemma~\ref{lm:majorizing-sequence}, which occurs with probability at least $1-2/25$.

{\bf Precision bounds for \textsc{EstimateValues}.} We first prove bounds on the precision of the estimates provided by calls to \textsc{EstimateValues} in the SNR reduction loop of Algorithm~\ref{alg:main-sublinear} (line~32). We have by Lemma~\ref{lm:estimate-l1l2}, {\bf (1a)} applied to the signals $y^{(r, t)}+x_{\nsq\setminus S}$ and the set $S$, that with probability $1-n^{-2}$ over the choice of measurements ${\mathcal M}^{est}$ (lines 17-21) of Algorithm~\ref{alg:main-sublinear} {\bf for any $\chi^{(r, t)}\in \C^n$ such that $\supp \chi^{(r, t)}\subseteq S$ and $x-\chi^{(r, t)}$ is majorized by $y^{(r, t)}$ on $S$} (as per Definition~\ref{def:majorant}), one has that the estimates $w_i$ computed in the call $\Call{EstimateValues}{\chi^{(r, t)}, L, \M^{est}}$ in line~32 satisfy
\begin{equation*}
|w_i-(x-\chi^{(r, t)})_i|\leq 2\quant^{1/5}_r e^{head}(\{H_r\}, x-\chi^{(r, t)})+2\quant^{1/5}_r e^{tail}(\{H_r, a_r\}, x)
\end{equation*}

So in particular by Lemma~\ref{lm:mon} if $x-\chi^{(r, t)}\prec_S y^{(r, t)}$ and $\supp \chi^{(r, t)} \subseteq S$, we get 
\begin{equation*}
\begin{split}
|w_i-(x-\chi^{(r, t)})_i|&\leq 2\quant^{1/5}_r e^{head}(\{H_r\}, x-\chi^{(r, t)})+2\quant^{1/5}_r e^{tail}(\{H_r, a_r\}, x)\\
&\leq 2\quant^{1/5}_r e^{head}(\{H_r\}, y^{(r, t)})+2\quant^{1/5}_r e^{tail}(\{H_r, a_r\}, x).
\end{split}
\end{equation*}

By Lemma~\ref{lm:median-etail-ehead}, {\bf (3)} and {\bf (4)} one has $\quant^{1/5}_r e^{head}(H_r, y)=O(||y_S||_1/B)$ and $\quant^{1/5}_r e^{tail}(H_r, a_r, x)=O(||x_{[n]\setminus S}||_2/\sqrt{B})$ with probability $1-2^{-\Omega(C\log n)}\geq 1-n^{-C/2}$ as long as the constant $C$ is sufficiently large.\footnote{Note that this is the place where we crucially use the notion of majorizing sequences: even though the actual residual signals that arise throughout the update process depend on the measurements $\mathcal{M}^{est}$,  it suffices to invoke Lemma~\ref{lm:median-etail-ehead} on the majorizing sequence $y$, which is fixed and independent of $\mathcal{M}^{est}$.}

Since $B=k/\alpha^2$ by our setting of parameters (line~17 of Algorithm~\ref{alg:main-sublinear}), we have 
$$
O(||y_S||_1/B)\leq O(\alpha) \cdot\frac1{5} \alpha ||y_S||_1/k\leq \frac1{5} \alpha ||y_S||_1/k
$$
and 
$$
O(||x_{[n]\setminus S}||_2/\sqrt{B})\leq O(\sqrt{\alpha}) \cdot \frac1{2}\sqrt{\alpha} ||x_{[n]\setminus S}||_2/\sqrt{k}\leq \frac1{2}\sqrt{\alpha} ||x_{[n]\setminus S}||_2/\sqrt{k}
$$
as long as $\alpha$ is smaller than a constant, and in particular smaller than $\delta^2$ (we will need to set $\alpha$ smaller than $\delta^2$ below to offset the $2/\delta$ factor in the upper bound on the $\ell_1$ norm of $y^{(r, t)}$ in Lemma~\ref{lm:majorizing-sequence}, {\bf (C)}). We thus get that the estimates computed in $\Call{EstimateValues}{\chi, L, \M^{est}}$ in line~32 satisify
\begin{equation}\label{eq:est-bound-12inweigf}
|w_i-(x-\chi^{(r, t)})_i|\leq \frac1{5}\alpha ||y^{(r, t)}_S||_1/k+\frac1{2}\sqrt{\alpha} ||x_{\nsq\setminus S}||_2/\sqrt{k}.
\end{equation}

We have  by Lemma~\ref{lm:majorizing-sequence}, {\bf (C)}, that $||y^{(r, t)}_S||_1\leq (2/\delta) R^*\mu\cdot (2k)(1/4)^r$, and we have by definition of $S$ that $||x_{[n]\setminus S}||_2^2\leq k\cdot ||x_{[n]\setminus S}||_\infty^2+||x_{[n]\setminus [k]}||_2^2\leq k\mu^2+||x_{[n]\setminus [k]}||_2^2\leq 2\mu^2 k$. Substituting these bounds into~\eqref{eq:est-bound-12inweigf}, we get by a union bound over all $i\in [n]$ and all sequences $y^{(r, t)}$ that if $x-\chi^{(r, t)}\prec_S y^{(r, t)}$, then  for all $i\in [n]$ and all $(r, t)$ one has 
\begin{equation}\label{eq:est-bound}
\begin{split}
|w_i-(x-\chi^{(r, t)})_i|&\leq \frac1{5}\alpha ||y^{(r, t)}_S||_1/k+\frac1{2}||x_{\nsq\setminus S}||_2/\sqrt{k}\\
&\leq \frac1{5}\alpha (4/\delta) R^*\mu\cdot  (1/4)^r+\frac1{2}\sqrt{\alpha}\sqrt{2}\mu\\
&\leq \sqrt{\alpha}\left(R^*\mu (1/4)^r+\mu\right),
\end{split}
\end{equation}
where we used the assumption that $\alpha\leq \delta^2$ to obtain the last inequality.

Equipped with the bounds on estimation quality in~\eqref{eq:est-bound}, we now give the proof of the theorem. The proof is by induction on $(r, t)$. We prove that for every $(r, t)\in [1:T]\times [0:\lfloor \log_4 R^*\rfloor]\cup \{(0, 0)\}$ 
\begin{description}
\item[(A)] $y^{(r, t)}$ majorizes $x-\chi^{(r, t)}$ on $S$;
\item[(B)] $\chi^{(r, t)}_{\nsq\setminus S}\equiv 0$.
\end{description}

The {\bf base} is provided by $(r, t)=(0, 0)$, where $y^{(0, 0)}_i=R^*\mu$ for $i\in S$ and $y^{(0, 0)}_i=0$ otherwise. Since $||x||_\infty\leq R^*\mu$ by assumption of the lemma and $\chi^{(0, 0)}=0$ in Algorithm~\ref{alg:main-sublinear},  the base of the induction holds. We now prove the {\bf inductive step}. Let $t'=t-1$, $r=r'$ if $t>1$ and $t'=T, r'=r-1$ otherwise. If $t=1, r=0$, then let $t'=0, r'=0$. Note that $(r', t')$ is the element preceding $y^{(r, t)}$ in the majorizing sequence.

Since $x-\chi^{(r', t')}\prec_S y^{(r', t')}$ and $\supp \chi^{(r', t')}\subseteq S$ by the inductive hypothesis, we have by~\eqref{eq:est-bound}, 
\begin{equation}\label{eq:est-concrete}
|w_i-(x-\chi^{(r', t')})_i|\leq \sqrt{\alpha}\left(R^* (1/4)^{r'} \mu+\mu\right),
\end{equation}
where $\alpha$ is smaller than an absolute constant (see line~17 of Algorithm~\ref{alg:main-sublinear}).

We first prove part {\bf (B)} of the inductive step. Since only elements with $|w_i|\geq (1/16)R^*\mu (1/4)^r$ are updated (by the pruning step in line~33 of Algorithm~\ref{alg:main-sublinear}), for all such $i$ we have by triangle inequality using~\eqref{eq:est-concrete} that 
\begin{equation}\label{eq:triangle}
|(x-\chi^{(r', t')})_i|\geq (1/16)R^*\mu (1/4)^r-\sqrt{\alpha}\left(R^* (1/4)^r \mu+\mu\right)\geq (1/32)R^*\mu (1/4)^r,
\end{equation}
where we used the assumption that $\alpha$ is smaller than a sufficiently small absolute constant. Since the upper bound for $r$ in the SNR reduction loop is $\lfloor \log_4 R^* \rfloor-3$ in the SNR reduction loop, we have $(1/32)R^*\mu (1/4)^r\geq 2\mu>\mu$ for all such $r$.
Since $\supp \chi^{(r', t')}\subseteq S$ by the inductive hypothesis, this means that the output $\chi'$ of the call to \textsc{EstimateValues} is such that any $i\in \nsq$ with $\chi'_i\neq 0$ belongs to $S$. We have shown that $\supp \chi^{(r, t)}\subseteq \supp \chi^{(r', t')}\cup \supp \chi'\subseteq S$, proving part {\bf (B)} of the inductive step. 

We now prove part {\bf (A)} of the inductive step, i.e. prove that $x-\chi^{(r,t)}=x-\chi^{(r', t')}-\chi'$ is majorized by $y^{(r, t)}$ (defined by \eqref{eq:maj-def}). 

\paragraph{Bounding elements reported in $L$.} 
We first consider $i\in L\cap \supp \chi'$, i.e. elements that were reported in $L$ and estimated as being above the threshold. For such $i\in L\cap \supp \chi'$ we have by ~\eqref{eq:est-concrete}
\begin{equation*}
\begin{split}
|(x-\chi^{(r, t)})_i|=|(x-\chi^{(r', t')}-\chi')_i|&\leq \sqrt{\alpha}\left(R^* (1/4)^r \mu+\mu\right)< (1/32)R^*\mu (1/4)^r.
\end{split}
\end{equation*}
At the same time for such elements ($i\in L\cap \supp \chi'$) we have by~\eqref{eq:triangle} 
$|(x-\chi^{(r', t')})_i|\geq (1/32)R^*\mu (1/4)^r$.
This means that for all $i\in L\cap \supp \chi'$ one has 

\begin{equation}\label{eq:monontone}
|(x-\chi^{(r, t)})_i|\leq |(x-\chi^{(r', t')})_i|,
\end{equation}
 as well as 
 \begin{equation}\label{eq:absolute}
 |(x-\chi^{(r', t')})_i|\leq (1/32)R^*\mu (1/4)^r.
 \end{equation}

At the same time for $i\in \nsq$ such that $\chi'_i=0$ we have
\begin{equation}\label{eq:ub-absolute}
|(x-\chi^{(r', t')})_i|=|(x-\chi^{(r, t)})_i|\leq (1/16)R^*\mu (1/4)^r+\sqrt{\alpha}(R^* (1/4)^r \mu+\mu)\leq (1/8)R^*\mu (1/4)^r
\end{equation}
as long as $\alpha$ is smaller than an absolute constant.

\paragraph{Bounding elements not reported in $L$.} Let $x':=x-\chi^{(r', t')}$ to simplify notation. If an element $i\in S_t$ is not reported in any of the calls to \textsc{LocateSignal} (i.e. does not belong to $L$), then by Corollary~\ref{cor:loc} it
satisfies 
\begin{equation}\label{eq:part1}
\begin{split}
|x'_i|&\leq 20 e^{head}_i(\{H_{t, s}\}_{s\in [1:R_t]}, x')+20 e^{tail, \H}_i(\{H_{t, s}, \A_{t, s}\}_{s\in [1:R_t]}, x)+n^{-\Omega(c)}\\
&\leq 20 e^{head}_i(\{H_{t, s}\}_{s\in [1:R_t]}, y^{(r, t)})+20 e^{tail, \H}_i(\{H_{t, s}, \A_{t, s}\}_{s\in [1:R_t]}, x)+n^{-\Omega(c)},
\end{split}
\end{equation}
where we used Lemma~\ref{lm:mon} to upper bound error induced by head elements of $x'$ by error induced by head elements of $y^{(r, t)}$, which majorizes $x'$ on $S$ by the inductive hypothesis.

\paragraph{Putting it together.} Recall that by~\eqref{eq:maj-def}  the signal $y^{(r, t)}$ is defined by letting for each $i\in S_t$
\begin{equation}\label{eq:maj-def-p}
y^{(r, t)}_i:=\max\left\{20 e^{head}_i(\{H_{t, s}\}_{s\in [1:R_t]}, y^{(r',t')})+20 e^{tail}_i(\{H_{t, s}, \A_{t, s}\}_{s\in [1:R_t]}, x)+n^{-\Omega(c)},\frac1{8}\cdot (1/4)^r R^*\mu \right\}
\end{equation}
and letting $y^{(r, t)}_i:=y^{(r', t')}_i$ otherwise.

We now have that any element $i\in S_t$ that is not reported in any of the calls to \textsc{LocateSignal} $x'_i$ satisfies~\eqref{eq:part1}, which is upper bounded by the first argument in the maximum above. By~\eqref{eq:ub-absolute} together with \eqref{eq:absolute} we have $|(x'-\chi')_i|\leq (1/8)R^*\mu (1/4)^r$ for all $i\in S_t$, which is upper bounded by the second term in the maximum above. Finally, for any $i\in \nsq$ (not necessarily in $S_t$) by ~\eqref{eq:monontone} we have $|(x'-\chi')_i|\leq |x'_i|$, so $|(x'-\chi')_i|\leq |x'_i|\leq y^{(r', t')}_i=y^{(r, t)}_i$ for such $i$ as well (note that we are also using the fact that $i\in S$ necessarily for all $i$ with $\chi'_i\neq 0$, by part {\bf (B)} of the inductive step, which we proved already).  This completes the proof of part {\bf (A)} of the inductive step, and the proof of the lemma.

Note that we conditioned on the event $\E_{maj}$ defined in Lemma~\ref{lm:majorizing-sequence}, which occurs with probability at least $1-2/25$, as well as a high probability ($1-1/\poly(n)$) success event for \textsc{EstimateValues}. Thus, success probability is at least $1-3/25$ by a union bound, as required.
\end{proof}

We can now give a proof of Theorem~\ref{thm:l1snr}, the main technical result of this section. We restate the theorem here for convenience of the reader:

{\em \noindent {\bf Theorem~\ref{thm:l1snr}} (Restated)
For any  $\delta\in (0, 1/2)$, for any $x\in \C^n$, any integer $k\geq 1$, if $\mu^2=\err_k^2(x)/k$ and $R^*\geq ||x||_\infty/\mu, R^*=n^{O(1)}$, the following conditions hold for the set $S:=\{i\in \nsq: |x_i|>\mu\}\subseteq \nsq$.

Then the SNR reduction loop of Algorithm~\ref{alg:main-sublinear} (lines~25-36) returns $\tilde \chi$ such that 
\begin{equation*}
\begin{split}
&||(x-\tilde \chi)_S||_1=O_\delta(\mu k)\\
&\supp \tilde \chi\subseteq S
\end{split}
\end{equation*}
with probability at least $1-3/25$ over the internal randomness used by Algorithm~\ref{alg:main-sublinear}. The sample complexity is $O_\delta(k\log n)$. The runtime is bounded by $O_\delta(k\log^{4+2\delta} n)$.}

\begin{proof}
Correctness follows by Lemma~\ref{lm:main} and setting of parameters in Algorithm~\ref{alg:main-sublinear}. It remains to bound the sample and runtime complexity. For each $t=1,\ldots, T$ we take $B_t$ measurements using a filter of sharpness $\fc=O(1)$, so the total sample complexity is 
\begin{equation*}
\begin{split}
\sum_{t=1}^T O( B_t) |\H|\cdot |\A_{t, s}| \cdot R_t&=\sum_{t=1}^T O( B_t) (\log n/\log\log n) (\log\log n) \cdot R_t\\
&\leq O(C_2 )((2k \log n)/R_t^2)\cdot R_t\\
&=O(C_2 k) \log n\cdot \sum_{t=1}^{T} R_t=O(k \log n),
\end{split}
\end{equation*} 
where we used the fact that $C_1$ is an absolute constant and $C_2$ as prescribed by Lemma~\ref{lm:isolating-partition-construction}, as well as the setting $|\A_{t, s}|=O(\log\log n)$ and $\H=O(\log_{\Delta} n)=O(\log n/\log\log n)$) Algorithm~\ref{alg:main-sublinear}.

{\bf Runtime.} We start by bounding the runtime for the SNR reduction loop
\begin{itemize}
\item Each call to \textsc{LocateSignal} costs $O(F B_t \log^2  n+||\chi||_0\log^2 n)$ by Lemma~\ref{lm:loc}.

The total cost for calls to \textsc{LocateSignal} in a single iteration (i.e. one value of $r$) is hence bounded by 
\begin{equation*}
\begin{split}
&O\left(\sum_{t=1}^T \sum_{s=1}^{R_t} B_t \log^2  n+(\max_{r', t'} ||\chi^{r', t'}||_0)\log^2 n\right)\\
&=O\left(\sum_{t=1}^T R_t B_t\right) \log^2  n+O(\sum_{t=1}^T  R_t k\log^2 n)\text{~~~~~(since $\supp \chi^{r', t'}\subseteq S$ for all $r', t'$ by Lemma~\ref{lm:main})}\\
&=\sum_{t=1}^T O(k/R_t) \log^2  n+O(R_T) ||\chi||_0\log^2 n \text{~~~~~(since $R_t$ increase geometrically by setting of parameters in line~7)}\\
&=O(k \log^2  n)+||\chi||_0\log^{3+2\delta} n\\
&=k\log^{3+2\delta} n,\\
\end{split}
\end{equation*}
where we used the fact that 
$$
R_T=C_1 2^T= C_1 2^{\frac1{1-\delta}\log_2 \log k+O(1)}=O(\log_2^{1/(1-\delta)} k)=O(\log_2^{1+2\delta} k)
$$
by setting of parameters in Algorithm~\ref{alg:main-sublinear} and the fact that $1/(1-\delta)\leq 1+2\delta$ for $\delta \in (0, 1/2)$. Finally, accounting for $O(\log n)$ iterations of the SNR reduction loop over $r$, we obtain a bound of $k\log^{4+2\delta} n$, as claimed.

\item Each call to \textsc{EstimateValues} costs 
$O(F B \cdot \log n\cdot C\log n +(\max_{r', t'} ||\chi^{r', t'}||_0)\cdot  \log n\cdot C\log n)$
by Lemma~\ref{lm:estimate-l1l2}.  The total runtime over $O(\log n)$ iterations of the SNR reduction loop is hence  $O(k\log^3 n)$.

\end{itemize}

Summing the contributions, we get runtime $k\log^{4+2\delta} n$, as required. Success probability follows from the success probability of Lemma~\ref{lm:main}.

\end{proof}

}

\begin{appendix}
\section{Proof of Lemma~\ref{lm:isolating-partition-construction}}\label{app:isolating-partition}
We restate the lemma for convenience of the reader:

{\noindent {\bf Lemma~\ref{lm:isolating-partition-construction}} (Restated) \em
For every integer $k\geq 1$, every $S\subseteq \nsq, |S|\leq k$,  every $\delta\in (0, 1/2)$, if the parameters $B_t, R_t$ are selected to satisfy {\bf (p1)} $R_t=C_1\cdot 2^t$ and {\bf (p2)} $B_t\geq C_2\cdot k/R_t^2$ for every $t\in [0:T]$, where $C_1$ is a sufficiently large constant and $C_2$ is sufficiently large as a function of $C_1$ and $\delta$, then the following conditions hold.  

With probability at least $1-1/25$ over the choice of hashings $\{\{H_{t, s}\}_{s\in [1:R_t]}\}_{t=1}^T$  Algorithm~\ref{alg:partition} terminates in $T=\frac1{1-\delta}\log_2 \log (k+1)+O(1)$ steps. When the algorithm terminates, the output partition $\{S_j\}_{j=1}^T$  is isolating as per Definition~\ref{lm:isolating-partition-construction}.}

We will use 
\begin{theorem}[Chernoff bound]\label{thm:chernoff-bounds}
Let $X_1,\ldots, X_n$ be independent Bernoulli random variables, let $\mu:=\expect[\sum_{i=1}^n X_i]$. Then 
for any $\eta>1$ one has 
$\prob[\sum_{i=1}^n X_i>(1+\eta)\mu]\leq e^{-\mu \eta/3}$.
\end{theorem}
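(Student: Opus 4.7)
\begin{proofsketch}
The plan is to follow the classical moment-generating-function (MGF) argument of Chernoff, and then verify that the resulting expression is dominated by $e^{-\mu\eta/3}$ on the range $\eta>1$.

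First I would write $X=\sum_{i=1}^n X_i$ and apply Markov's inequality to $e^{tX}$ for an arbitrary parameter $t>0$, obtaining
\[
\prob[X>(1+\eta)\mu]\;=\;\prob[e^{tX}>e^{t(1+\eta)\mu}]\;\leq\;\frac{\expect[e^{tX}]}{e^{t(1+\eta)\mu}}.
\]
Using independence of the $X_i$, the numerator factors as $\prod_i \expect[e^{tX_i}]$. Letting $p_i:=\prob[X_i=1]$, each factor equals $1+p_i(e^t-1)$, which is at most $\exp(p_i(e^t-1))$ by the elementary inequality $1+x\leq e^x$. Multiplying and using $\mu=\sum_i p_i$ yields $\expect[e^{tX}]\leq \exp(\mu(e^t-1))$, so
\[
\prob[X>(1+\eta)\mu]\;\leq\;\exp\bigl(\mu(e^t-1)-t(1+\eta)\mu\bigr).
\]

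Next I would optimize over $t>0$ by taking $t=\ln(1+\eta)$ (which is positive since $\eta>1$), giving the standard bound
\[
\prob[X>(1+\eta)\mu]\;\leq\;\left(\frac{e^{\eta}}{(1+\eta)^{1+\eta}}\right)^{\!\mu}\;=\;\exp\bigl(-\mu\bigl[(1+\eta)\ln(1+\eta)-\eta\bigr]\bigr).
\]

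The remaining step, which I expect to be the only nontrivial piece, is to verify the numeric inequality
\[
(1+\eta)\ln(1+\eta)-\eta\;\geq\;\eta/3\qquad\text{for all } \eta>1,
\]
since this immediately yields the claimed bound $e^{-\mu\eta/3}$. Setting $f(\eta):=(1+\eta)\ln(1+\eta)-\tfrac{4}{3}\eta$, one computes $f(1)=2\ln 2-\tfrac{4}{3}>0$ and $f'(\eta)=\ln(1+\eta)-\tfrac{1}{3}$, which is positive throughout $\eta\geq 1$ since $\ln 2>\tfrac{1}{3}$. Hence $f$ is strictly increasing on $[1,\infty)$ and remains nonnegative there, completing the proof. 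The MGF manipulation is routine; the only ``obstacle'' is this one-variable calculus verification, which is standard but worth doing explicitly to justify the specific constant $1/3$ appearing in the statement.
\end{proofsketch}
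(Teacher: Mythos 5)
Your proof is correct: it is the standard moment-generating-function derivation of the multiplicative Chernoff bound, followed by the elementary calculus check that $(1+\eta)\ln(1+\eta)-\eta\geq \eta/3$ for $\eta\geq 1$, and all the numerical details check out ($2\ln 2-\tfrac{4}{3}>0$ and $\ln 2>\tfrac{1}{3}$). The paper states this theorem as a standard fact without proof (it also states a related bound, Theorem~\ref{thm:chernoff}, in exactly the intermediate form $\exp\bigl(-\mu[(1+\eta)\ln(1+\eta)-\eta]\bigr)$ that you derive along the way), so there is no alternative argument in the paper to compare against; yours is the argument implicitly relied upon.
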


The following basic technical claim is crucial to our analysis (the short proof is given in Appendix~\ref{app:basic}):
\begin{claim}\label{cl:max-expr}
For every $C_1, C_2> 0, \delta\in (0, 1)$ there exists $C_3$ such that for every $C_4\geq C_3$ one has 
$\frac1{C_4}2^{C_1 t}\cdot 2^{-C_2 2^{(1-\delta)t}+1}\leq 1$ all $t\geq 0$.
\end{claim}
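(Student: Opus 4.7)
\begin{proofof}{Claim~\ref{cl:max-expr}}
The plan is to take logarithms and reduce the claim to the elementary observation that an exponential with positive rate eventually dominates a linear function.

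Taking $\log_2$ of both sides, the desired inequality is equivalent to
\[
g(t) := C_1 t + 1 - C_2 \cdot 2^{(1-\delta) t} \leq \log_2 C_4 \qquad \text{for all } t\geq 0.
\]
Since $1-\delta > 0$, the subtracted term $C_2 \cdot 2^{(1-\delta) t}$ grows exponentially in $t$, while the additive term $C_1 t$ grows only linearly. Hence $g(t) \to -\infty$ as $t \to \infty$, $g$ is continuous on $[0,\infty)$, and its supremum $M := \sup_{t \geq 0} g(t)$ is attained at some finite $t^* \geq 0$ (either at $t^* = 0$, giving $M \leq 1$, or at the unique interior critical point where $g'(t^*) = 0$). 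It therefore suffices to choose $C_3 := 2^{|M| + 1}$, as then any $C_4 \geq C_3$ satisfies $\log_2 C_4 \geq M$, which gives the claim.

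If one wants an explicit bound, the critical point equation $g'(t^*) = C_1 \ln 2 - C_2 (1-\delta) (\ln 2)^2 \cdot 2^{(1-\delta)t^*} = 0$ yields $2^{(1-\delta)t^*} = C_1 / (C_2 (1-\delta) \ln 2)$, hence
\[
t^* = \frac{1}{(1-\delta)\ln 2} \ln\!\left( \frac{C_1}{C_2 (1-\delta) \ln 2} \right)_+,
\]
and plugging back in gives $M \leq \frac{C_1}{(1-\delta)\ln 2}\bigl(\ln C_1 - \ln(C_2(1-\delta)\ln 2)\bigr)_+ + 1$, which is finite and depends only on $C_1, C_2, \delta$. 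The only thing to check is this finiteness; no obstacle is expected.
\end{proofof}
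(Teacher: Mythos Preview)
Your proof is correct and follows essentially the same approach as the paper: take logarithms and use that the exponential term $C_2\cdot 2^{(1-\delta)t}$ eventually dominates the linear term $C_1 t$, so the exponent is bounded above by a finite constant depending only on $C_1,C_2,\delta$. One small slip in your optional explicit computation: differentiating $g(t)=C_1 t + 1 - C_2\cdot 2^{(1-\delta)t}$ gives $g'(t)=C_1 - C_2(1-\delta)(\ln 2)\,2^{(1-\delta)t}$, not $C_1\ln 2 - C_2(1-\delta)(\ln 2)^2\,2^{(1-\delta)t}$; this does not affect the main argument.
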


Equipped with the technical claim above, we can now argue that Algorithm~\ref{alg:partition} constructs an isolating partition of any set $S\subseteq \nsq$ that satisfies $|S|\leq k$ with at least high constant probability and prove Lemma~\ref{lm:isolating-partition-construction}.

\begin{proofof}{Lemma~\ref{lm:isolating-partition-construction}}
The proof proceeds in four steps.
In {\bf Step (1)} we state a set of inductive claims that we will prove, then in {\bf Step (2)} argue that the inductive claims imply that Algorithm~\ref{alg:partition} terminates in $T=\frac1{1-\delta}\log_2 \log (k+1)+O(1)$ iterations, then in {\bf Step (3)} argue that the inductive claims imply that output partition is isolating and finally in {\bf Step (4)} prove the inductive claims (this step corresponds to the bulk of the proof). 

 {\bf Step (1)} Our argument proceeds inductively for $t=1,2,\ldots$, and we think of sampling the hashings $\{H_{t, s}\}_{s=1}^{R_t}$ independently at each step $t$.

For each $t\geq 1$ let $k_t:=|S_t^t|$. We show by induction on $t\geq 1$ that there exists a sequence of nested events ${\mathcal E}_1 \supseteq {\mathcal E}_2 \supseteq \ldots$ such that for all $t\geq 1$
\begin{description}
\item[(1)] event $\E_t$ depends only on the randomness up to time $t$;
\item[(2)] $\prob[{\mathcal E}_t]\geq 1-\frac{3}{100}\sum_{t'=1}^{t-1} \frac1{R_{t'}}$;
\item[(3)] conditioned on ${\mathcal E}_t$ one has  $k_t\leq k\cdot \frac{R_0}{R_{t-1}}2^{-2^{(1-\delta)(t-1)}+1}$. 
\end{description}

 {\bf Step (2)}  We now show that {\bf (3)} implies that the algorithm terminates in $T=\frac1{1-\delta}\log_2 \log (k+1)+O(1)$ steps. Indeed, by {\bf (3)} we have
$$
|S^T_T|\leq k\cdot \frac{R_0}{R_{T-1}} 2^{-2^{(1-\delta)(T-1)}+1}.
$$
Substituting $T=\frac1{1-\delta}(\log_2 \log_2 (k+1)+C)$, we get
$$
|S^T_T|\leq  k\cdot 2^{-2^{(1-\delta)(T-1)}+1}\leq k\cdot 2^{-\frac1{2} 2^{(1-\delta)T}+1}\leq k\cdot 2^{-\frac1{2} 2^C\cdot \log_2 (k+1) +1}\leq 4(k+1)^{1-2^{C-1}}<1
$$
as long as $C\geq 3$.

Also, {\bf (2)} implies that the algorithm terminates with probability at least 
\begin{equation*}
\begin{split}
\prob[\E_T]&\geq 1-\frac{3}{100}\sum_{t=1}^{T-1} \frac1{R_t}\\
&\geq 1-\frac{3}{100}\sum_{t=1}^{\infty} \frac1{C_1 2^t}\\
&\geq  1-\frac1{25}
\end{split}
\end{equation*}
where we used the assumption that $\{S_j\}_{j=1}^T$ satisfies property {\bf (1)} of an isolating partition (see Definition~\ref{lm:isolating-partition-construction}) as well as $R_t=C_1\cdot 2^t$ and $C_1$ is larger than an absolute constant.

 {\bf Step (3)}  We now that given the inductive claims from {\bf Step (1)}, the returned partition $S_1^T\cup \ldots\cup S_T^T$ satisfies the definition of a $\delta$-isolating partition (Definition~\ref{lm:isolating-partition-construction}).  We need to prove that 
\begin{enumerate}
\item $|S_t^T|\leq k\cdot \frac{R_0}{R_{t-1}}2^{-2^{(1-\delta)\cdot (t-1)}+1}$;
\item no element of $S_t^T$ is $R_t^{-3}$-crowded by $S_t^T$ under any of $\{H_{t, s}\}_{s=1}^{R_t}$;
\item no element of $S_t^T$ $R_t$-collides with a $\delta$-bad element for $S_t^T$ under any of $\{H_{t, s}\}_{s=1}^{R_t}$.
\end{enumerate}

To prove the first property, we note that the sizes of sets $S^{t'}_t$ are non-increasing in $t'\geq t$ for every $t$, as $S^{t'}_t\supseteq S^t_t$ (by line~7 of Algorithm~\ref{alg:partition}). Conditional on $\E_T$ we thus have
$$
|S^T_t|\leq |S^t_t|\leq k\cdot \frac{R_0}{R_{t-1}} 2^{-2^{(1-\delta)(t-1)}+1}
$$
for all $t\geq 1$, as required. 

For the second property, note that  no  element of $S_t^{t+1}$ is $R_t^{-3}$-crowded by $S_t^t$ under any $\{H_{t, s}\}_{s=1}^{R_t}$ by construction of $S_t^{t+1}$ (line~7 of Algorithm~\ref{alg:partition}). Since $S_t^{t+1}\subseteq S_t^t$, this means that no  element of $S_t^{t+1}$ is $R_t^{-3}$-crowded by $S_t^{t+1}$ under any $\{H_{t, s}\}_{s=1}^{R_t}$, and since $S_t^T\subseteq S_t^{t+1}$ this also means that no  element of $S_t^T$ is $R_t^{-3}$-crowded by $S_t^T$ under any $\{H_{t, s}\}_{s=1}^{R_t}$, so property 2 is satisfied.

For the third property, note that  no  element of $S_t^{t+1}$  $R_t$-collides with a $\delta$-bad element for $S_t^t$  under any $\{H_{t, s}\}_{s=1}^{R_t}$ by construction of $S_t^{t+1}$ (line~7 of Algorithm~\ref{alg:partition}). Since $S_t^T\subseteq S_t^{t+1}$, this means that no  element of $S_t^T$  $R_t$-collides with a $\delta$-bad element for $S_t^t$ under any $\{H_{t, s}\}_{s=1}^{R_t}$. Finally, note that since $S_t^T\subseteq S_t^t$, any element that is $\delta$-bad for $S^T_t$  is also  $\delta$-bad for $S^t_t$ by Definition~\ref{def:bad}. This shows that no  element of $S_t^T$  $R_t$-collides with a $\delta$-bad element for $S_t^T$ under any $\{H_{t, s}\}_{s=1}^{R_t}$ and establishes property 3 above. This completes the proof that the constructed partition $\{S^T_t\}$ is isolating.

{\bf Step (4)} In what follows we construct the events $\E_t, t=1,\ldots, T$ and prove properties {\bf (1)-(3)} above by induction on $t=1,\ldots, T$. The proof is by induction on $t$. 

\paragraph{Base:$t=1$} We let $S^1_1:=S$, so that the base is trivial (we let $\mathcal{E}_1$ be the trivial event that occurs with probability $1$). 

\paragraph{Inductive step: $t\to t+1$} Suppose that $|S_t^t|=k_t\leq k\cdot \frac{R_0}{R_{t-1}}2^{-2^{(1-\delta)(t-1)}+1}$.  We first bound the expected size of $U_t$ and $V_t$ conditional on ${\mathcal E}_t$,  and then put these bounds together to obtain a proof of the inductive step. 

\paragraph{Bounding the number of crowded elements in $S^t_t$ (size of $V_t$)}
 For each element $i\in [n]$ and every scale $q\geq 0$ we have, letting $H:=H_{t, s}, \pi:=\pi_{t, s}$ and $h:=h_{t, s}$ to simplify notation (recall that $h(i)=\text{round}((B/n)\pi(i))$; see section~\ref{sec:prelims-permutations}), 
\begin{equation}\label{eq:coll-bounds}
\begin{split}
\expect_H\left[\left|\pi(S^t_t\setminus \{i\})\cap \B(\pi(i), \frac{n}{B_t}2^q)\right|\right]\leq 4\cdot 2^q|S^t_t|/B_t\leq 4\cdot 2^qk_t/B_t,
\end{split}
\end{equation}
where we used the fact that $|S^t_t|\leq k_t$ by the inductive hypothesis, as well as Lemma~\ref{lemma:limitedindependence}. Thus by Markov's inequality for every $\lambda>0$
$$
\prob_H\left[\left|\pi(S^t_t\setminus \{i\})\cap \B\left(\pi(i), \frac{n}{B_t}2^q\right)\right|>\lambda \cdot 2^{2q}\right]\leq 4\lambda^{-1}2^{-q}k_t/B_t.
$$
By a union bound over all scales $q\geq 0$ (i.e. summing the rhs of the bound above over all scales $q\geq 0$) we conclude that 
\begin{equation}\label{eq:239htjeggg}
\prob_H[\text{$i$  is $\lambda$-crowded under hashing $H$}]\leq \sum_{q\geq 0} 4\lambda^{-1}2^{-q}k_t/B_t=8 \lambda^{-1}k_t/B_t.
\end{equation}

We thus have for every $i\in \nsq$ and a random hashing hashing $H=(\pi, B_t, G)$
\begin{equation}\label{eq:23gj4ggfFF}
\begin{split}
\prob_{H}[\text{$i$ is $R_t^{-3}$-crowded}]&\leq 8(R_t^3)\cdot k_t/B_t\text{~~~~~~(by~\eqref{eq:239htjeggg} with $\lambda=R_t^{-3}$)}\\
&\leq \frac{8}{C_2}(R_t^5)\cdot \frac{R_0}{R_{t-1}} 2^{-2^{(1-\delta)(t-1)}+1},
\end{split}
\end{equation}
where we used the bound $k_t\leq k\cdot \frac{R_0}{R_{t-1}} 2^{-2^{(1-\delta)(t-1)}+1}$ provided by the inductive hypothesis and the assumption that $B_t\geq C_2\cdot k/R_t^2$  by assumption {\bf p2} of the lemma to go from the first line to the second.

We thus have by a union bound over $R_t$ hashings $\{H_{t, s}\}_{s\in [1:R_t]}$, for every $i\in S_t^t$
\begin{equation}\label{eq:23gj4ggfFFihgeffg}
\begin{split}
\prob_{\{H_{t, s}\}_{s\in [1:R_t]}}[i\in V_t]&=\prob_{\{H_{t, s}\}_{s\in [1:R_t]}}[\text{$i$ is $R_t^{-3}$-crowded under at least one }H_{t, s}]\\
&\leq \frac{8}{C_2}(R_t^6)\cdot \frac{R_0}{R_{t-1}} 2^{-2^{(1-\delta)(t-1)}+1} \text{~~~~~(by a union bound applied to~\eqref{eq:23gj4ggfFF})}\\
&\leq \frac{16 C_1}{C_2}(R_t^5)\cdot 2^{-2^{(1-\delta)(t-1)}+1}\text{~~~~~~~(since $R_t=C_1\cdot 2^t$ by {\bf p1})}\\
\end{split}
\end{equation}
Using the upper bound on the size of $S^t_t$ given by the inductive hypothesis again, we obtain
\begin{equation}\label{eq:v-t}
\begin{split}
\expect_{\{H_{t, s}\}_{s\in [1:R_t]}}[|V_t|]&\leq \sum_{i\in S_t^t} \prob_{\{H_{t, s}\}_{s\in [1:R_t]}}[i\in V_t]\\
&\leq |S_t^t|\cdot  \prob_{\{H_{t, s}\}_{s\in [1:R_t]}}[i\in V_t]\text{~~~~~~~~~~~~(for any $i\in S_t^t$)}\\
&\leq \left(k\cdot \frac{R_0}{R_{t-1}} 2^{-2^{(1-\delta)(t-1)}+1}\right)\cdot  \prob_{\{H_{t, s}\}_{s\in [1:R_t]}}[i\in V_t]\text{~~~~~~~~~~~~(by the inductive hypothesis)}\\
&\leq \left(k\cdot \frac{R_0}{R_{t-1}} 2^{-2^{(1-\delta)(t-1)}+1}\right)\cdot \frac{16C_1}{C_2}\cdot R_t^5 2^{-2^{(1-\delta)(t-1)}+1}\text{~~~~~~~~~~(by \eqref{eq:23gj4ggfFFihgeffg})}\\
&\leq k\cdot \frac{32 C_1 R_0 R_t^5}{C_2\cdot R_{t-1}} \cdot 2^{-2^{(1-\delta)(t-1)+1}+1}\\
&\leq k\cdot \frac{64 C_1 R_0 R_t^4}{C_2} \cdot 2^{-2^{(1-\delta)t+\delta}+1}\\
&\leq k\cdot \frac{64 C_1 R_0 R_t^4}{C_2} \cdot 2^{-2^{(1-\delta)t}+1}\cdot 2^{-(2^\delta-1)2^{(1-\delta)t}}\\
&\leq \left(k\frac1{100 R_t^2}2^{-2^{(1-\delta)t}+1}\right) \cdot \xi_t\\
\end{split}
\end{equation}
where 
$$
\xi_t=\frac{6400 C_1 R_0 R_t^6}{C_2} \cdot 2^{-(2^\delta-1)2^{(1-\delta)t}}\leq \frac{6400 C_1^8}{C_2} 2^{6t}\cdot 2^{-\delta 2^{(1-\delta)t}},
$$
where we used the assumption that $R_t=C_1 2^t$ for a constant $C_1>0$, and the bound $e^x-1\geq x$ for all $x\geq 0$.

It remains to note that for every $\delta>0$, if $C_2$ is sufficiently large (depending on $C_1$ and $\delta$), we get that $\xi_t<1$ for all $t\geq 1$. Formally this follows by Claim~\ref{cl:max-expr}.

\paragraph{Bounding the number of bad elements ($\text{Bad}_t$).} Recall (Definition~\ref{def:bad}) that an element $a$ of $S$ is {\em bad for $S_t$} with respect to a partition $S=S_1\cup S_2\cup\ldots\cup S_T$ and hashings $\{\{H_{t, s}\}_{s=1}^{R_t}\}_{t=1}^T$ if $a$ participates in an $R_t$-collision with at least one element of $S_t$ under more than a $R_t^{-\delta}$ fraction of hashings $H_{t, 1},\ldots, H_{t, R_t}$. We now upper bound the probability that a given $i$ is bad. 

 For any $i\in \nsq$  the probability that $i$ $R_t$-collides with a given element $j\in S^t_t$ under a random hashing $H$ is upper bounded as follows. 
First recall that  that $\pi(i)=\sigma(i-q)$ for all $i\in \nsq$, so 
\begin{equation*}
\begin{split}
\prob_\pi[\text{$i$ and $j$ participate in an $R_t$-collision}]&=\prob_\pi[|\pi(i)-\pi(j)|_\circ\leq (n/B_t) R_t]\\
&=\prob_\sigma[|\sigma(i-j)|_\circ\leq (n/B_t) R_t]\leq 4 R_t/B_t,
\end{split}
\end{equation*}
where we used Lemma~\ref{lemma:limitedindependence} to obtain the last bound.

A union bound over all $j\in S^t_t$ then gives that for every $i\in S$ 
\begin{equation}\label{eq:prob-coll-i}
\prob_H[\text{$i$ $R_t$-collides with at least one element of $S^t_t$ under $H$}]\leq  4 R_t (k_t/B_t).
\end{equation} 

For each $s=1,\ldots, R_t$ let $X_s=1$ if $i$ $R_t$-collides with an element of $S^t_t$ under hashing $H_{t, s}$ and $X_s=0$ otherwise. We first bound $\expect[X_s]$, and then apply Chernoff bounds to $X:=\sum_{s=1}^{R_t} X_s$ to bound the number of bad elements in $S$  with respect to $S_t^t$ at step $t$. In order to bound the expected number of bad elements, it would be sufficient to bound $\prob[X>R_t^{1-\delta}]$. Instead, we will upper bound a slightly larger quantity that will be useful for upper bounding the expected number of elements that collide with a bad element (which is what we need to bound ultimately). Specifically, for any $s^*\in [1:R_t]$ we let 
$X_{-s^*}:=\sum_{s=1, s\neq s^*}^{R_t} X_s$. Note that $\prob[X>R_t^{1-\delta}]\leq \prob[X_{-s^*}\geq R_t^{1-\delta}]$ for any $s^*$, and it is the latter quantity that we bound now. We now have for every $s^*\in [1:R_t]$
\begin{equation*}
\begin{split}
\expect_{H_{t, s}}\left[X_{-s^*}\right]&\leq \sum_{s=1, s\neq s^*}^{R_t} \expect_{H_{t, s}}[X_s]\\
&\leq \sum_{s=1}^{R_t} 4 R_t\cdot  (k_t/B_t)\text{~~~~~(by~\eqref{eq:prob-coll-i} and definition of $X_s$)}\\
&\leq 4 R_t^2\cdot  (k_t/B_t)\\
\end{split}
\end{equation*}

By the inductive hypothesis we have $k_t\leq k\cdot \frac{R_0}{R_{t-1}}2^{-2^{(1-\delta)(t-1)}+1}$ and $B_t\geq C_2\cdot k/R_t^2$  by assumption {\bf p2} of the lemma. Substituting these bounds on the last line of the equation above, we obtain
\begin{equation*}
\begin{split}
&4 R_t^2\cdot \left[k\cdot \frac{R_0}{R_{t-1}}2^{-2^{(1-\delta)(t-1)}+1}\right]\cdot \left[C_2\cdot k/R_t^2\right]^{-1}\\
&=\frac1{10R_t}\left[\frac{40}{C_2}R_t^{5}\cdot \frac{R_0}{R_{t-1}}\cdot 2^{-2^{(1-\delta)(t-1)}+1}\right]\\
&\leq \frac1{10R_t}\left[\frac{40}{C_2}R_t^5 \cdot 2^{-2^{(1-\delta)(t-1)}+1}\right]\text{~~~~(since $R_0/R_{t-1}=2^{-(t-1)}\leq 1$ for all $t\geq 1$)}\\
&\leq \frac1{10 R_t} \text{~~~~~~(by Claim~\ref{cl:max-expr}, as long as $C_2$ is larger than a constant that may depend on $C_1$ and $\delta$)}.
\end{split}
\end{equation*}

Let $\mu_{-s^*}:=\expect\left[\sum_{s=1, s\neq s^*}^{R_t} X_s\right]$, and note that  by the bound on $\expect[X_s]$ above we have $\mu_{-s^*}\leq 1/10$ (we omit the subscript in $\mu_{-s^*}$ in what follows). Since the permutations $H_{t, s}$ were chosen independently, we have by Chernoff bounds (Theorem~\ref{thm:chernoff-bounds}) for any $\eta>1$
$\prob\left[X_{-s^*}\geq R_t^{1-\delta}\right]=\prob\left[X_{-s^*}\geq (1+\eta)\mu\right]$ with $\eta=R_t^{1-\delta}/\mu-1$.  Since $R_t^{1-\delta}>1$ by assumption of the lemma and $\mu\leq 1/10$, we have $R_t^{1-\delta}/\mu-1\geq R_t^{1-\delta}/(2\mu)$.
We thus have 
\begin{equation}\label{eq:prob-s}
\prob\left[X_{-s^*}\geq R_t^{1-\delta}\right]\leq e^{-R_t^{1-\delta}/6},
\end{equation} and by linearity of expectation 
\begin{equation}\label{eq:bb-t}
\expect_{H_{t, 1},\ldots, H_{t, R_t}}[|\text{Bad}_t|]\leq k\cdot e^{-R_t^{1-\delta}/6}.
\end{equation}

{\bf Bounding the number of elements $i\in S_t^t$ that $R_t$-collide with $\text{Bad}_t$.}  Consider $i\in S^t_t$.  For  fixed $s\in [1:R_t]$ let $Q_s(i)\subseteq S$ denote the set of elements that $i$ $R_t$-collides with under hashing $H_{t, s}$. We have by~\eqref{eq:coll-bounds}
$$
\expect_{H_{t, s}}[|Q_s(i)|]\leq \expect_{H_{t, s}}\left[\left|\pi(S\setminus \{i\})\cap \B(\pi(i), \frac{n}{B_t}\cdot R_t)\right|\right]\leq 4\cdot R_t (k/B_t)\leq  (4R_t^3/C_2)
$$
using assumption {\bf p2} of the lemma.

For every $j\in Q_s(i)$ one has that $j$ is bad only if $j$ collides with $S^t_t$ under $H_{t, s'}$ for at least $R_t^{1-\delta}$ values of $s'\in [1:R_t]\setminus \{s\}$. This probability is bounded by $\prob[X_{-s}\geq R_t^{1-\delta}]$, where $X_{-s}$ are as defined above. We thus have using ~\eqref{eq:prob-s} that $\prob[j\text{~is bad}|i\text{~collides with $j$ under $H_{t, s}$}]\leq e^{-R_t^{1-\delta}/6}$. Note that this is where we crucially use the fact that $X_{-s}$ does not depend on hashing $H_{t, s}$.
By a union bound over all $j\in Q_s(i)$ and all $s\in [1:R_t]$ that the probability that $i$ collides with a bad element is at most
\begin{equation}\label{eq:v-t-1}
\begin{split}
\sum_{s=1}^{R_t}\expect_{H_{t, s}}[|Q_s(i)|]\cdot e^{-R_t^{1-\delta}/6}&\leq \frac1{C_2}R_t\cdot (4\cdot R_t^3)\cdot e^{-R_t^{1-\delta}/6}\\
&\leq \frac{1}{C_2} 4R_t^4\cdot e^{-R_t^{1-\delta}/6}\\
&= e^{-R_t^{1-\delta}/12},\\
\end{split}
\end{equation}
where we used the fact that the last inequality holds for all $t\geq 1$ simultaneously as long as $C_2$ is larger than a constant that may depend on $C_1$ and $\delta$.  Summing over all elements in $S_t^t$, we get 
\begin{equation}\label{eq:u-t}
\expect_{H_{t, 1},\ldots, H_{t, R_t}}[|U_t|]\leq k_t e^{-R_t^{1-\delta}/12}\leq k e^{-R_t^{1-\delta}/12}.
\end{equation}

\paragraph{Putting it together.} Gathering bounds from~\eqref{eq:v-t},~\eqref{eq:bb-t} and ~\eqref{eq:u-t}, we get, using the fact that $S^{t+1}_{t+1}=\text{Bad}_t\cup U_t\cup V_t$ by Algorithm~\ref{alg:partition} (line~7),
\begin{equation}\label{eq:res-exp}
\begin{split}
\expect_{H_{t, 1},\ldots, H_{t, R_t}}\left[|S^{t+1}_{t+1}|\right]&=k\cdot \left(\frac1{100}\frac1{R_t^2} 2^{-2^{(1-\delta)t}+1}+e^{-R_t^{1-\delta}/6}+e^{-R_t^{1-\delta}/12}\right)\\
&=k\cdot \left(\frac1{100}\frac1{R_t^2} 2^{-2^{(1-\delta)t}+1}+e^{-C_1^{1-\delta} 2^{(1-\delta)t}/6}+e^{-C_1^{1-\delta} 2^{(1-\delta)t}/12}\right).\\
\end{split}
\end{equation}
We now show that for every $\delta\in (0, 1/2)$ if $C_1$ is larger than an absolute constant, the first term in parentheses on the last line above is at least as large as the other two for all $t\geq 1$. We first note that $e^{-C_1^{1-\delta} 2^{(1-\delta)t}/6}\leq e^{-C_1^{1-\delta} 2^{(1-\delta)t}/12}$ for all $C_1>0, \delta\in (0, 1/2), t\geq 1$, so it suffices to prove that 
\begin{equation}\label{eq:3terms}
\frac1{100}\frac1{R_t^2} 2^{-2^{(1-\delta)t}+1}\geq e^{-C_1^{1-\delta} 2^{(1-\delta)t}/12}
\end{equation}
for all $t\geq 1$ if $C_1$ is larger than an absolute constant. Intuitively, this is true because the rhs decays exponentially in $C_1$ for all $t\geq 0$, while the lhs decays only polynomially in $C_1$. More formally, taking the ratio of the two quantities, we get, assuming that $C_1\geq 12^2$,
\begin{equation*}
\begin{split}
e^{-C_1^{1-\delta} 2^{(1-\delta)t}/12}\cdot \left(\frac1{100}\frac1{R_t^2} 2^{-2^{(1-\delta)t}+1}\right)^{-1}&\leq e^{-\sqrt{C_1} 2^{(1-\delta)t}/12+(\ln 2) 2^{(1-\delta)t}}\cdot 100\cdot R_t^2\\
&\leq e^{-(\sqrt{C_1}/12-\ln 2)\cdot 2^{(1-\delta)t}}\cdot 100\cdot C_1^2 4^t\\
&\leq \exp\left(-(\sqrt{C_1}/12-\ln 2)\cdot 2^{(1-\delta)t}+(\ln 4) t+2\ln (100 C_1)\right)\
\end{split}
\end{equation*}
We now show that the exponent above is non-positive for all $t\geq 1$ as long as $C_1$ is larger than a constant. Indeed, taking the derivative of the exponent with respect to $t$, we get 
$$
\left(-(\sqrt{C_1}/12-\ln 2) 2^{(1-\delta) t}+(\ln 2) t+\ln (100C_1)\right)'= -(\sqrt{C_1}/12-\ln 2) (1-\delta) \ln 2\cdot 2^{(1-\delta)t}+\ln 2,
$$
which is nonpositive for all $t\geq 1$ as long as $C_1$ is larger than an absolute constant. This means that 
$$
\max_{t\geq 1} \left[-(\sqrt{C_1}/12-\ln 2) 2^{(1-\delta) t}+(\ln 2) t+\ln (100C_1)\right]\leq -(\sqrt{C_1}/12-\ln 2) 2^{1-\delta}+\ln 2+\ln (100C_1)\leq 0
$$
as long as $C_1$ is larger than an absolute constant (since $\sqrt{C_1}$ asympotitcally dominates $\ln C_1$). This establishes~\eqref{eq:3terms}.

Substituting this upper bound into~\eqref{eq:res-exp}, we get
$$
\expect_{H_{t, 1},\ldots, H_{t, R_t}}\left[|S^{t+1}_{t+1}|\right]\leq k\cdot \frac{3}{100}\frac1{R_t^2} 2^{-2^{(1-\delta)t}+1}\\
$$

By Markov's inequality applied to the last expression above we have 
\begin{equation}\label{eq:res-prob}
\begin{split}
\prob_{H_{t, 1},\ldots, H_{t, R_t}}\left[|S^{t+1}_{t+1}|> k\cdot \frac1{R_t} 2^{-2^{(1-\delta)t}+1}\right]\leq \frac{3}{100}\frac1{R_t}.
\end{split}
\end{equation}
Let ${\mathcal E}_t$ denote the intersection of ${\mathcal E_{t-1}}$ with the failure event in~\eqref{eq:res-prob}, we get, conditioned on ${\mathcal E}_t$ 
$$
|S^{t+1}_{t+1}|\leq k\cdot \frac1{R_t} 2^{-2^{(1-\delta)t}+1}.
$$
This completes the inductive step and the proof of the lemma.
\end{proofof}

\begin{proofof}{Lemma~\ref{lm:construct-partition-runtime}}
First note that there exists a (simple) efficient data structure for answering queries of the form `how many elements of a set $T$  hash within circular distance $\Delta$ of a point $x$ under hash function $\pi$?'.  Indeed, it suffices to cut the circle into two halves and for each of the halves construct a binary search tree on $T$, with each node annotated with the number of nodes in its subtree. Then each query about neighbors in the circular distance can be answered by answering at most two queries for the two data structure on the half-circles, for a total time of $O(\log |T|)$. We now show how to use this data structure to implement Algorithm~\ref{alg:partition}.

 For each $t$ at the beginning of the $t$-th iteration of Algorithm~\ref{alg:partition} for each $s=1,\ldots, R_t$ one constructs  two binary search trees on the permuted elements $\pi_{t, s}(S^t_t)$ as described above. This takes time $O(R_t |S_t^t|\log |S_t^t|)$. Equipped with this data structure, we can construct the set $\text{Bad}_t$ and the set $V_t$ in time $O(R_t\cdot S\log |S^t_t|)$.  Then construct a similar pair of augmented binary search trees for the set $\text{Bad}_t$ in time $O(|S|\log |S|)$. Using this data structure the set $U_t$ can be constructed in time $O(R_t |S|\log |S|)$. 
Summing over $t$ gives the final result.
\end{proofof}

\section{Properties of \textsc{EstimateValues}}\label{sec:app}
In this section we describe the procedure \textsc{EstimateValues} from~\cite{K16} (see Algorithm~\ref{alg:est}), which, given access to a signal $x$ in frequency domain (i.e. given $\wh{x}$), a partially recovered signal $\chi$ and a target list of locations $L\subseteq \nsq$,
estimates values of the elements in $L$, and outputs the elements that are above a threshold $\nu$ in absolute value. We need a slight strengthening of Lemma~9.1 from~\cite{K16}, which we state here. 

\begin{algorithm}
\caption{\textsc{EstimateValues}($\chi, L, \{(H_r, a_r, m(x, H_r, a_r)))\}_{r=1}^{r_{max}}$)}\label{alg:est} 
\begin{algorithmic}[1] 
\Procedure{EstimateValues}{$\chi, L, \{(H_r, a_r, m(x, H_r, a_r)))\}_{r=1}^{r_{max}}$}\Comment{$H_r$ are $(\pi_r, B, F)$-hashings}
\For{$r=1$ to $r_{max}$}\Comment{$\pi_r=(\sigma_r, q_r)$ for $r=1,\ldots, r_{max}$}
\State Compute $m_j(x-\chi, H_r, a_r)$ for $j\in [B]$ \Comment{Computation is done with polynomial precision, }
\State \Comment{using \textsc{HashToBins} as per Lemma~\ref{l:hashtobins}}
\EndFor
\For{$i\in L$}
\For{$r=1$ to $r_{max}$}\Comment{Note that $o_i(i)$ implicitly depends on $H_r$}
\State $w^r_i\gets  G^{-1}_{o_i(i)}m_{h_r(i)}(x-\chi, H_r, a_r) \omega^{-a_r \sigma_r i}$ \Comment{Estimate $(x-\chi)_i$ from each measurement}
\EndFor
\State $w_i\gets \text{median}\{w^r_i\}_{r=1}^ {r_{max}}$ \Comment{Median is taken coordinatewise}
\EndFor
\State \textbf{return} $w_L$
\EndProcedure 
\end{algorithmic}
\end{algorithm}

We will use 
\begin{definition}
For any $x\in \C^\nsq$ and any hashing $H=(\pi, B, F)$ define the vector $\mu^2_{H, \cdot}(x)\in \R^\nsq$ by letting for every $i\in \nsq$ 
$\mu^2_{H, i}(x):=\abs{G_{o_i(i)}^{-2}}\sum_{j \in \nsq\setminus \{i\}} \abs{x_j}^2 \abs{G_{o_i(j)}}^2$.
\end{definition}

The following properties of \textsc{HashToBins} will be using in the analysis of \textsc{EstimateValues}:
\begin{lemma}[Lemma 2.9 of~\cite{K16}]\label{lm:hashing}
There exists a constant $C>0$ such that for any integer $B\geq 1$, any $x, \chi\in \C^\nsq, x':=x-\chi$, if  $\sigma, a\in \nsq$, $\sigma$ odd,  are selected uniformly at random, the following conditions hold for every $q\in [n]$. 

Let $\pi=(\sigma, q)$, $H=(\pi, B, F)$, where $G$ is the filter with $B$ buckets and sharpness $F$ as per Definition~\ref{def:filterG}, and let
$u =\Call{HashToBins}{\hat x, \chi, (H, a)}$.  Then if $\fc\geq 2$, for any $i\in \nsq$
  \begin{description}
\item[(1)] For any $H$ one has $\max_{a\in \nsq} \abs{G_{o_i(i)}^{-1}\omega^{-a \sigma i}u_{h(i)} - x'_i}\leq G_{o_i(i)}^{-1}\cdot \sum_{j\in S\setminus \{i\}} G_{o_i(j)} |x'_j|$. Furthermore,
$\expect_H[G_{o_i(i)}^{-1}\cdot \sum_{j\in S\setminus \{i\}} G_{o_i(j)} |x'_j|]\leq C ||x'||_1/B+n^{-\Omega(c)}$;
  \item[(2)] $\expect_H[\mu^2_{H, i}(x')] \leq  C \norm{2}{x'}^2/B$,
\end{description}
Furthermore, 
\begin{description}
\item[(3)] for any hashing $H$, if $a$ is chosen uniformly at random from $\nsq$, one has 
  $$
  \expect_{a}[\abs{G_{o_i(i)}^{-1}\omega^{-a\sigma i}u_{h(i)} - x'_i}^2]\leq \mu^2_{H, i}(x')+n^{-\Omega(c)}.
  $$
\end{description}
Here $c>0$ is an absolute constant that can be chosen arbitrarily large at the expense of increasing runtime by a factor of $c$.
\end{lemma}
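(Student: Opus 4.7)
The plan is to reduce all three parts to the explicit expansion provided by Lemma~\ref{l:hashtobins}. Substituting that expansion into the estimator and isolating the $j=i$ term (which cancels $x'_i$) gives
\[
G_{o_i(i)}^{-1}\omega^{-a\sigma i} u_{h(i)} - x'_i \;=\; G_{o_i(i)}^{-1}\sum_{j \neq i} G_{o_i(j)}\, x'_j\, \omega^{a\sigma (j-i)} \;+\; G_{o_i(i)}^{-1}\omega^{-a\sigma i}\Delta_{h(i)},
\]
where the $\Delta_{h(i)}$ term contributes at most $\norm{2}{\chi}\cdot n^{-\Omega(c)}$ by Lemma~\ref{l:hashtobins}. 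This identity is the common starting point for all three parts; in every case the $\Delta_{h(i)}$ contribution is absorbed into the $n^{-\Omega(c)}$ slack.

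For {\bf (1)}, the pointwise $\max_a$ bound is immediate by the triangle inequality, using $|\omega^{a\sigma(j-i)}|=1$ and $G\ge 0$ (Definition~\ref{def:filterG}, $F$ even). For the expectation, I will show $\expect_H[G_{o_i(j)}] = O(1/B)$ for every fixed $j\neq i$; combined with $G_{o_i(i)}^{-1}\le 2$ (since $|o_i(i)|\le n/(2B)$) and linearity of expectation, this yields the bound $C\norm{1}{x'}/B$. To bound $\expect_H[G_{o_i(j)}]$, I partition by scale $q\geq 0$ according to whether $|\sigma(j-i)|_\circ \in [2^q n/B,\; 2^{q+1}n/B)$. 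By Lemma~\ref{lemma:limitedindependence} the probability of falling in the $q$-th annulus is $O(2^q/B)$; by Definition~\ref{def:filterG}{\bf (3)} the filter on that annulus is at most $O(2^{-q(F-1)})$; and the innermost annulus ($|\sigma(j-i)|_\circ \le n/B$) contributes at most $O(1/B)$ using $G\le 1$. Summing the geometric series (convergent for $F\ge 2$) gives $\expect_H[G_{o_i(j)}] = O(1/B)$.

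Part {\bf (2)} uses the identical scale decomposition applied to $G_{o_i(j)}^2$ instead of $G_{o_i(j)}$, giving $\expect_H[G_{o_i(j)}^2] = O(1/B)$ (the series converges even faster). Multiplying by $G_{o_i(i)}^{-2}\le 4$, weighting by $|x'_j|^2$, and summing over $j\neq i$ produces $\expect_H[\mu^2_{H,i}(x')] \le C\norm{2}{x'}^2/B$.

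For {\bf (3)}, this is the standard second-moment-under-random-phase computation. Expanding the squared modulus of the main error term from the displayed identity yields
\[
\Big|G_{o_i(i)}^{-1}\sum_{j\neq i} G_{o_i(j)} x'_j \omega^{a\sigma(j-i)}\Big|^2 \;=\; \sum_{j,k \neq i} G_{o_i(i)}^{-2} G_{o_i(j)} G_{o_i(k)}\, x'_j \overline{x'_k}\, \omega^{a\sigma(j-k)},
\]
and averaging over $a\in\nsq$ annihilates every off-diagonal term: since $\sigma$ is a unit mod $n$, the value $\sigma(j-k)$ is nonzero mod $n$ whenever $j\neq k$, so $\sum_{a\in\nsq}\omega^{a\sigma(j-k)}=0$. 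Only the diagonal survives, giving exactly $\sum_{j\neq i} G_{o_i(i)}^{-2}G_{o_i(j)}^2 |x'_j|^2 = \mu^2_{H,i}(x')$; the $n^{-\Omega(c)}$ slack absorbs the cross-terms with $\Delta_{h(i)}$ via the Lemma~\ref{l:hashtobins} tail estimate and Cauchy--Schwarz. The main technical hurdle is the filter-concentration calculation used in {\bf (1)}/{\bf (2)}; the observation that makes it work cleanly is that the scale-$q$ contribution is $(2^q/B)\cdot 2^{-q(F-1)p}$, a convergent geometric series whenever $p\ge 1$ and $F\ge 2$.
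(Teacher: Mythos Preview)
The paper does not actually prove this lemma; it quotes it as Lemma~2.9 of~\cite{K16} and remarks that ``the proof carries over directly'' to the present filter. Your argument is the standard one and is essentially what a direct proof looks like: the expansion from Lemma~\ref{l:hashtobins} plus the triangle inequality gives the deterministic part of {\bf (1)}; the dyadic scale decomposition combining Lemma~\ref{lemma:limitedindependence} with the filter decay in Definition~\ref{def:filterG}{\bf (3)} handles the expectations in {\bf (1)} and {\bf (2)}; and orthogonality of characters over $a$ (using that $\sigma$ is a unit mod $n$) gives {\bf (3)}. So there is nothing substantively different to compare against.

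One small correction: the scale-$q$ term you write for part {\bf (1)} is $(2^q/B)\cdot 2^{-q(F-1)}$, so the resulting series is $\sum_{q\ge 0} 2^{-q(F-2)}$, which is geometric only for $F\ge 3$; at $F=2$ the filter decays like $n/(B|j|)$ and the expectation acquires a $\log B$ factor. The lemma as stated in the paper also says $F\ge 2$, so you have likely inherited this slip from the statement; in any case every use of the lemma in the paper has $F\ge 6$, so the point is harmless. Relatedly, the ``$S\setminus\{i\}$'' in the first display of part {\bf (1)} is a typo in the paper (no set $S$ is introduced in the lemma); your reading of it as $\nsq\setminus\{i\}$ is the intended one.
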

We note that Lemma~\ref{lm:hashing} was proved in~\cite{K16} for a slightly different choice of filter $G$ and for a uniformly random $q$, but the proof carries over directly to our setting.

{\noindent {\bf Lemma~\ref{lm:estimate-l1l2}} (Restated; bounds on estimation quality for Algorithm~\ref{alg:est}) \em
For every $x, \chi\in \C^n$,  every $L\subseteq \nsq$, every set $S\subseteq \nsq$ the following conditions hold for functions $e^{head}$ and $e^{tail}$ are defined with respect to $S$ (see ~\eqref{eq:eh-pi} and~\eqref{eq:et-pi}). If $r_{max}$ is larger than an absolute constant, then for every sequence $H_r, r=1,\ldots, r_{max}$ of $(\pi_r, B, F)$ hashings  the output $w$ of 
$$
\Call{EstimateValues}{\chi, L, \{(H_r, a_r, m(x, H_r, a_r)))\}_{r=1}^{r_{max}}}
$$
 satisfies, for each $i\in L$  
$$
|w_i- (x-\chi)_i|\leq 2\cdot\quant^{1/5}_r e^{head}_i(H_r, x-\chi)+ 2\cdot\quant^{1/5}_r e^{tail}_i(H_r, a_r, x-\chi)+n^{-\Omega(c)}.
$$
The sample complexity is bounded by $O(F B r_{max})$. The runtime is bounded by $O((F\cdot B\cdot \log n+||\chi||_0\log n+|L|)\cdot r_{max})$.
}

\begin{proof}
 We have by definition of the measurements $m_j$  (see Definition~\ref{def:measurements}) for every hashing $H$ and $a\in [n]$
  \begin{align*}
    m_{h(i)} &= \sum_{j \in [n]} G_{o_i(j)} (x - \chi)_{j} \omega^{a \sigma j},
  \end{align*}
so
 \begin{align*}
    G_{o_i(i)}^{-1} m_{h(i)}  \omega^{-a \sigma i} &=(x - \chi)_i+ G_{o_i(i)}^{-1} \sum_{j \in [n]\setminus \{i\}} G_{o_i(j)} (x - \chi)_{j} \omega^{a \sigma (j-i)}.
  \end{align*}  
  We thus have by triangle inequality, splitting the rhs into the contribution of the `head' elements (i.e., elements in $S$) and `tail' elements (i.e. elements in $[n]\setminus S$), that
  
  \begin{align*}
    |G_{o_i(i)}^{-1} m_{h(i)}  \omega^{-a\sigma i}-(x - \chi)_i| &\leq  \left|G_{o_i(i)}^{-1}\sum_{j \in [n]\setminus \{i\}} G_{o_i(j)} (x - \chi)_{j} \omega^{a \sigma (j-i)}\right|\\
    &\leq  \left|G_{o_i(i)}^{-1}\sum_{j \in S\setminus \{i\}} G_{o_i(j)} (x - \chi)_{j} \omega^{a\sigma (j-i)}\right|\\
    &+\left|G_{o_i(i)}^{-1}\sum_{j \in [n]\setminus (S\cup \{i\})} G_{o_i(j)} (x - \chi)_{j} \omega^{a\sigma (j-i)}\right|\\
    &\leq  G_{o_i(i)}^{-1} \sum_{j \in S\setminus \{i\}} G_{o_i(j)} |(x - \chi)_{j}| +G_{o_i(i)}^{-1} \left|\sum_{j \in [n]\setminus (S\cup \{i\})} G_{o_i(j)} (x - \chi)_{j} \omega^{a\sigma (j-i)}\right|\\
    &+|\Delta_{(n/B)\cdot h(i)}|    \\
    &=  e^{head}_i(x - \chi, H) +e^{tail}_i(x-\chi, H, a)+|\Delta_{(n/B)\cdot h(i)}|
  \end{align*}    
  
  We now use the bound above to obtain the conclusion of the lemma.
  Recall that for each $i\in L$ the final estimate $w_i$ is computed as a median of $w_i^r$'s along real and imaginary axes in line~9 of Algorithm~\ref{alg:est}. Let $r'\in [1:r_{max}]$ and $r''\in [1:r_{max}]$ be such that 
  $$
  w_i=\text{Re}(w_i^{r'})+\text{Im}(w_i^{r''})\cdot \mathbf{i}.
  $$
  
We have 
\begin{equation}\label{eq:i2bgh23}
  |\text{Re}(w_i^{r'}-(x-\chi)_i)|\leq |w_i^{r'}-(x-\chi)_i|\leq e^{head}_i(H_{r'}, x - \chi) +e^{tail}_i(H_{r'}, a_{r'}, x-\chi),
\end{equation}
and  since $r'$ is the result of taking the median of the list $\{\text{Re}(w_i^{r'})\}$, we have  
  $$
  |\text{Re}(w_i^{r'}-(x-\chi)_i)|\leq \quant^{1/5}_r e^{head}_i(H_r, x - \chi) +\quant^{1/5}_r e^{tail}_i(H_r, a_r, x - \chi).
  $$
  
  Indeed, at most a $2/5$ fraction of the error terms on the rhs of ~\eqref{eq:i2bgh23}, namely $e^{head}_i(H_{r'}, x - \chi) +e^{tail}_i(H_{r'}, a_{r'}, x-\chi)$, 
  are larger than $\quant^{1/5}_r e^{head}_i(H_r, x - \chi) +\quant^{1/5}_r e^{tail}_i(H_r, a_r, x-\chi)$. These error terms correspond to either the bottom or the top of the list $\{\text{Re}(w_i^{r'})\}$, and since $2/5<1/2$, the median estimate satisfies the upper bound above.
  
  A similar argument for the imaginary part shows that 
  
  $$
  |\text{Im}(w_i^{r'}-(x-\chi)_i)|\leq \quant^{1/5}_r e^{head}_i(H_r, x - \chi) +\quant^{1/5} e^{tail}_i(H_r, a_r, x-\chi).
  $$
  
Putting the two estimates together and using the bound $|a+b\cdot \mathbf{i}|\leq |a|+|b|$, we get for each $i\in L$
  $$
|w_i-(x-\chi)_i|\leq 2\cdot\quant^{1/5}_r e^{head}_i(H_r, x-\chi)+ 2\cdot\quant^{1/5}_r e^{tail}_i(H_r, a_r, x-\chi)+n^{-\Omega(c)}
  $$
as required.

The sample complexity follows by Lemma~\ref{l:hashtobins}.
 
The runtime analysis is as follows:
 \begin{itemize}
 \item Computing $m_j(x-\chi, H_r, a_r)$ for $j\in [B]$ and $r=1,\ldots, r_{max}$ takes $O((F B\log B+||\chi||_0 \log n)r_{max})$ time Lemma~\ref{l:hashtobins}.
 \item Computing estimates for each $i\in L$. This takes time $|L|\cdot r_{max}$ since median can be found in linear time.
 \end{itemize}

\end{proof}

\begin{theorem}[Chernoff bound]\label{thm:chernoff}
Let $X_1,\ldots, X_n$ be independent $0/1$ Bernoulli random variables with $\sum_{i=1}^n \expect[X_i]=\mu$. Then for any $\delta>0$ one has $\prob[\sum_{i=1}^n X_i>(1+\delta)\mu]<e^{(\delta -(1+\delta)\ln (1+\delta)) \mu}$.
\end{theorem}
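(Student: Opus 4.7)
The plan is to follow the standard Bernstein/Chernoff moment-generating-function argument and then optimize the free parameter. The bound in the statement is exactly the sharpest form that falls out of this approach without any slack in the final inequality, so no intermediate simplifications are required.

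First, I would fix an arbitrary $t>0$ and apply the exponential Markov inequality:
\[
\prob\left[\sum_{i=1}^n X_i > (1+\delta)\mu\right] = \prob\left[e^{t\sum_i X_i} > e^{t(1+\delta)\mu}\right] \leq \frac{\expect\left[e^{t\sum_i X_i}\right]}{e^{t(1+\delta)\mu}}.
\]
By independence of the $X_i$, the numerator factorizes as $\prod_{i=1}^n \expect[e^{t X_i}]$. For each Bernoulli variable with $\expect[X_i] = p_i$ one has $\expect[e^{tX_i}] = 1 + p_i(e^t - 1)$, and using the pointwise inequality $1+z \leq e^z$ (valid for all real $z$) I get $\expect[e^{tX_i}] \leq e^{p_i(e^t-1)}$. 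Multiplying these estimates and using $\sum_i p_i = \mu$ yields $\expect[e^{t\sum_i X_i}] \leq e^{\mu(e^t-1)}$.

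Combining the two displays gives the one-parameter bound
\[
\prob\left[\sum_{i=1}^n X_i > (1+\delta)\mu\right] \leq \exp\bigl(\mu(e^t - 1) - t(1+\delta)\mu\bigr),
\]
valid for every $t > 0$. The final step is to choose $t$ to minimize the exponent. Differentiating $\mu(e^t - 1) - t(1+\delta)\mu$ with respect to $t$ and setting the result to zero gives $e^t = 1+\delta$, i.e.\ $t = \ln(1+\delta)$, which is strictly positive since $\delta > 0$. Substituting this choice back into the exponent yields
\[
\mu\bigl((1+\delta) - 1\bigr) - \ln(1+\delta)\cdot (1+\delta)\mu = \bigl(\delta - (1+\delta)\ln(1+\delta)\bigr)\mu,
\]
which is exactly the claimed bound. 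The inequality is strict because $1+p_i(e^t-1) < e^{p_i(e^t-1)}$ whenever $p_i(e^t-1) > 0$ (i.e.\ whenever $p_i > 0$), so as long as $\mu > 0$ we obtain the strict inequality in the statement.

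There is no real obstacle here; the only subtlety is remembering that the optimal $t$ satisfies $e^t = 1+\delta$ rather than attempting to bound $e^t$ by a Taylor expansion (which would produce the weaker Chernoff form $e^{-\delta^2 \mu/3}$ rather than the tight exponent stated in the theorem).
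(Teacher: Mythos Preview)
Your proof is correct and is the standard moment-generating-function argument for the Chernoff bound. The paper does not actually supply a proof of this theorem---it is stated as a classical fact and used as a black box---so there is nothing to compare against; your derivation is exactly the textbook one and would serve perfectly well if a proof were required.
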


We will use 

\begin{lemma}\label{lm:quant-exp}
Let $X_1,\ldots, X_n\geq 0$ be independent random variables with $\expect[X_i]\leq \mu$ for each $i=1,\ldots, n$. Then for any $\gamma\in (0, 1)$ if $Y\leq\quant^{\gamma} (X_1,\ldots, X_n)$,
then 
\begin{description}
\item[(1)] $\expect[\left|Y-4\mu/\gamma\right|_+]\leq (\mu/\gamma) \cdot 2^{-\Omega(\gamma n)}$;
\item[(2)] $\expect[\left|Y-4\mu/\gamma\right|^2_+]\leq (\mu/\gamma)^2 \cdot 2^{-\Omega(\gamma n)}$;
\item[(3)] $\prob[Y\geq 4\mu/\gamma]\leq 2^{-\Omega(\gamma n)}$;
\item[(4)] For every $t\geq 1$ one has 
$$
\prob[Y\geq t\mu/\gamma]\leq ( 0.99t/e)^{-0.99\gamma n}.
$$
\end{description}
\end{lemma}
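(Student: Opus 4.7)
The plan is to prove part (4) first and derive (3), (1), (2) from it. The key observation is that $Y \leq \quant^{\gamma}(X_1,\ldots,X_n)$ means $Y$ is bounded by the $\lceil \gamma n\rceil$-th largest of the $X_i$'s, so the event $\{Y \geq s\}$ implies that at least $\lceil \gamma n\rceil$ of the $X_i$'s exceed $s$. I would set $Z_i^{(s)} := \mathbf{1}[X_i \geq s]$, note by Markov that $\expect[Z_i^{(s)}] \leq \mu/s$, so $\expect[\sum_i Z_i^{(s)}] \leq n\mu/s$, and then apply the Chernoff bound from Theorem~\ref{thm:chernoff} (which uses independence of the $X_i$'s) to bound $\prob[\sum_i Z_i^{(s)} \geq \gamma n]$.

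For part (4), I would set $s = t\mu/\gamma$ with $t \geq 1$, so that the target count $\gamma n$ equals $(1+\delta) \mu_{\text{sum}}$ with $1+\delta = t$ and $\mu_{\text{sum}} = \gamma n/t$. Theorem~\ref{thm:chernoff} then yields
\[
\prob[Y \geq t\mu/\gamma] \;\leq\; \exp\!\bigl((t-1-t\ln t)\cdot \gamma n/t\bigr) \;=\; \bigl(e^{1-1/t}/t\bigr)^{\gamma n} \;\leq\; (e/t)^{\gamma n}.
\]
To match the stated form $(0.99\,t/e)^{-0.99\gamma n}$, I would split into the cases $t \geq e$ and $t \leq e$: for $t \geq e$ the base $e/t$ is at most $1$, so raising to a smaller positive power (and slightly inflating the denominator by $0.99$) only weakens the bound; for $1 \leq t \leq e$ the stated right-hand side already exceeds $1$ and the inequality is vacuous. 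For part (3), setting $t = 4$ in (4) (or in the tighter $(e/t)^{\gamma n}$ form directly) gives $(e/4)^{\gamma n} \leq 2^{-\Omega(\gamma n)}$.

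For parts (1) and (2), I would use the standard tail integration identities
\[
\expect[|Y - 4\mu/\gamma|_+] = \int_0^\infty \prob[Y \geq 4\mu/\gamma + s]\,ds, \qquad \expect[|Y - 4\mu/\gamma|_+^2] = 2\int_0^\infty s\cdot \prob[Y \geq 4\mu/\gamma + s]\,ds,
\]
and change variables via $s = (t-4)\mu/\gamma$, so that $ds = (\mu/\gamma)\,dt$ and the integrands reduce to
\[
(\mu/\gamma) \int_4^\infty \prob[Y \geq t\mu/\gamma]\,dt \quad\text{and}\quad 2(\mu/\gamma)^2 \int_4^\infty (t-4)\prob[Y \geq t\mu/\gamma]\,dt
\]
respectively. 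Using (4), the integrand decays super-polynomially in $t$ (like $t^{-0.99 \gamma n}$ times an exponentially small factor coming from the $t=4$ endpoint), and for $\gamma n$ exceeding an absolute constant (which is the meaningful regime) the integrals evaluate to $O(1)$ times the value of the integrand at $t = 4$. This yields bounds of $(\mu/\gamma)\cdot 2^{-\Omega(\gamma n)}$ and $(\mu/\gamma)^2 \cdot 2^{-\Omega(\gamma n)}$, as required.

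The main obstacle I expect is purely algebraic: chasing the constants in (4) so that the Chernoff estimate $(e^{1-1/t}/t)^{\gamma n}$ can be cleanly rewritten as the slightly weaker $(0.99\,t/e)^{-0.99\gamma n}$ uniformly in $t \geq 1$. The case split described above handles this, but care is needed at the transition $t \approx e$ to ensure the inequality holds with the specific constant $0.99$. All other steps are routine applications of Markov, Chernoff, and tail-integration, made possible by the independence hypothesis on the $X_i$'s.
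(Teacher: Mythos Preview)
Your proposal is correct and follows essentially the same route as the paper: Markov on each $X_i$ to bound $\prob[X_i\ge t\mu/\gamma]\le \gamma/t$, Chernoff on the indicator sum to get~(4), the specialization $t=4$ for~(3), and tail integration of~(4) for~(1) and~(2). The only cosmetic difference is in how the $0.99$ factors enter part~(4): the paper applies Chernoff at the slightly lower threshold $\gamma' n=0.99\gamma n$ (which also cleanly absorbs the strict vs.\ nonstrict inequality in Theorem~\ref{thm:chernoff}) and reads off $(0.99t/e)^{-0.99\gamma n}$ directly, whereas you first derive $(e/t)^{\gamma n}$ and then case-split on $t\lessgtr e$---both arguments are valid and equally short.
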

\begin{proof}

For any $t\geq 1$ by Markov's inequality $\prob[X_i>t \mu/\gamma]\leq \gamma/t$. Define indicator random variables $Z_i$ by letting $Z_i=1$ if $X_i>t \mu/\gamma$ and $Z_i=0$ otherwise. Note that 
$$
\expect[Z_i]\leq \gamma/t
$$
 for each $i$.
Then since $Y$ is bounded above by the $\gamma n$-th largest of $\{X_i\}_{i=1}^n$, we have $\prob[Y>t \mu/\gamma]\leq \prob[\sum_{i=1}^n Z_i\geq \gamma n]$. Let $\nu:=\sum_{i=1}^n \expect[Z_i]$.  We now apply the Chernoff bound (Theorem~\ref{thm:chernoff}) with $\delta=\gamma' n/\nu-1$, $\gamma'=0.99\gamma$, to the sequence $Z_i,i=1,\ldots, n$. Note that by our setting of $\delta$ we have $(1+\delta)\nu=\gamma' n$, so
\begin{equation*}
\begin{split}
\prob\left[\sum_{i=1}^n Z_i> \gamma' n\right]&\leq \exp\left((\gamma' n/\nu-1-(\gamma' n/\nu)\ln (\gamma' n/\nu))\nu\right)\\
&=\exp\left(\gamma' n-\nu-\gamma' n\ln (\gamma' n/\nu)\right)\\
&\leq \exp\left(\gamma' n(1-\ln (\gamma' n/\nu))\right)\\
&\leq \exp\left(\gamma' n(1-\ln ((\gamma'/\gamma) t))\right)~~~~~~\text{(since $\nu\leq n\gamma/t$)}\\
&= e^{\gamma' n} ((\gamma'/\gamma) t)^{-\gamma' n}.
\end{split}
\end{equation*}
We thus get 
\begin{equation}\label{eq:vi34g43qqg}
\prob\left[\sum_{i=1}^n Z_i\geq \gamma n\right]\leq \prob\left[\sum_{i=1}^n Z_i> \gamma' n\right]\leq  ( 0.99t/e)^{-0.99\gamma n}.
\end{equation}
This proves {\bf (4)}. Letting $t=4$ in the bound above proves {\bf (3)}.

For {\bf (1)} we have, as long as $n$ is sufficiently large (depending on $\gamma$), 
\begin{equation*}
\begin{split}
\expect[Y\cdot \mathbf{1}_{Y\geq 4\cdot \mu/\gamma}]&\leq \int_{4}^\infty t \mu \cdot \prob[Y\geq t\cdot \mu/\gamma]dt\\
&\leq \int_{4}^\infty t \mu (0.99 t/e)^{-0.99\gamma n}dt\text{~~~~~~~~~~ (by~\eqref{eq:vi34g43qqg})}\\
&\leq e^{-\gamma n/4}\int_{4}^\infty t \mu (0.99 t/e)^{-\gamma n/4}dt\\
&=O(\mu\cdot e^{-\gamma n/4}).
\end{split}
\end{equation*}

For {\bf (2)} we have, as long as $n$ is sufficiently large (depending on $\gamma$), 
\begin{equation*}
\begin{split}
\expect[Y\cdot \mathbf{1}_{Y\geq 4\cdot \mu/\gamma}]&\leq \int_{4}^\infty t^2 \mu^2 \cdot \prob[Y\geq t\cdot \mu/\gamma]dt\\
&\leq \int_{4}^\infty t^2 \mu^2 (0.99 t/e)^{-0.99\gamma n}dt\text{~~~~~~~~~~ (by~\eqref{eq:vi34g43qqg})}\\
&\leq e^{-\gamma n/4}\int_{4}^\infty t^2 \mu (0.99 t/e)^{- \gamma n/4}dt\\
&=O(\mu^2\cdot e^{-\gamma n/4}).
\end{split}
\end{equation*}
as required.

\end{proof}

We also have 
\begin{lemma}\label{lm:median-etail-ehead}
For every $x\in \C^n$, every $S\subseteq [n]$, every $i\in [n]$, every integer $r_{max}$ larger than an absolute constant, integers $B, F$ with $B$ a power of two and $F\geq 2$, the following conditions are satisfied for a sequence of random hashings $H_r=(\pi_r, B, F)$,  and random evaluation points $a_r$, $r=1,2,\ldots, r_{max}$. 

If 
$Z^{head}:=e^{head}_i(\{H_r\}, x)=\quant^{1/5}_r e^{head}_i(H_r, x)$ (as per ~\eqref{eq:eh})
and 
$Z^{tail}:=e^{tail}_i(\{H_r, a_r\}, x)=\quant^{1/5}_r e^{tail}_i(H_r, a_r, x)$ (as per ~\eqref{eq:et-pi-quant}), where $e^{head}$ and $e^{tail}$ are defined with respect to the set $S$, one has
 
\begin{description}
\item[(1)] $\expect_{\{H_r\}}\left[(Z^{head})^2\right]=O\left(\left(\frac1{B}||x_S||_1\right)^2\right)$;
\item[(2)] $\expect_{\{H_r, a_r\}}\left[(Z^{tail})^2\right]=O(||x_{[n]\setminus S}||_2^2/B$);
\item[(3)] $\prob_{\{H_r\}}\left[Z^{head}>O\left(\frac1{B}||x_S||_1\right)\right]=2^{-\Omega(r_{max})}$;
\item[(4)] $\prob_{\{H_r\}}\left[Z^{tail}>O(||x_{[n]\setminus S}||_2/\sqrt{B})\right]=2^{-\Omega(r_{max})}$.
\end{description}
\end{lemma}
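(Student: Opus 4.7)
}
My plan is to reduce everything to two per-sample expectation bounds that follow directly from Lemma~\ref{lm:hashing}, and then apply Lemma~\ref{lm:quant-exp} as a black box. First, for each fixed $r$, I would use Lemma~\ref{lm:hashing}{\bf (1)} (applied to the head set $S$ rather than the full signal) to obtain
\[
\expect_{H_r}\bigl[e^{head}_i(H_r,x)\bigr] \;\leq\; C\,\|x_S\|_1/B + n^{-\Omega(c)},
\]
and, for the tail term, first condition on $H_r$ and average over the uniform evaluation point $a_r$: since the squared modulus of a sum $\sum_j c_j\omega^{a_r\sigma(j-i)}$ with uniformly random $a_r$ has expectation $\sum_j|c_j|^2$, Parseval-style, we get
\[
\expect_{a_r}\bigl[(e^{tail}_i(H_r,a_r,x))^2\bigr] \;=\; \mu^2_{H_r,i}\bigl(x_{\nsq\setminus S}\bigr),
\]
and then Lemma~\ref{lm:hashing}{\bf (2)} applied to $x_{\nsq\setminus S}$ yields $\expect_{H_r,a_r}[(e^{tail}_i(H_r,a_r,x))^2] \leq C\|x_{\nsq\setminus S}\|_2^2/B$. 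The negligible $n^{-\Omega(c)}$ terms can be absorbed into the leading constants.

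For the concentration bounds {\bf (3)} and {\bf (4)}, the random variables $\{e^{head}_i(H_r,x)\}_{r=1}^{r_{max}}$ are i.i.d.\ nonnegative with expectation bounded as above, so I would apply Lemma~\ref{lm:quant-exp}{\bf (3)} with $\gamma=1/5$, $n=r_{max}$ and $\mu = C\|x_S\|_1/B$ to obtain $\Pr[Z^{head}\geq 20\mu] \leq 2^{-\Omega(r_{max})}$, which is {\bf (3)}. For {\bf (4)} I would instead work with the squared variables $Y_r := (e^{tail}_i(H_r,a_r,x))^2$, which are i.i.d.\ nonnegative with $\expect[Y_r]\leq C\|x_{\nsq\setminus S}\|_2^2/B$; because the $1/5$-quantile commutes with the squaring map on nonnegative reals, $\quant^{1/5}_r Y_r = (Z^{tail})^2$, and Lemma~\ref{lm:quant-exp}{\bf (3)} then gives $\Pr[(Z^{tail})^2\geq 20\,C\|x_{\nsq\setminus S}\|_2^2/B]\leq 2^{-\Omega(r_{max})}$, which is {\bf (4)}.

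For the expectation bounds {\bf (1)} and {\bf (2)}, I would use the elementary decomposition
\[
(Z^{head})^2 \;\leq\; 2\,(20\mu)^2 + 2\,\bigl|Z^{head}-20\mu\bigr|_+^{\,2}
\]
with $\mu = C\|x_S\|_1/B$; taking expectations and applying Lemma~\ref{lm:quant-exp}{\bf (2)} gives $\expect[(Z^{head})^2] \leq O(\mu^2) + O(\mu^2)\cdot 2^{-\Omega(r_{max})} = O((\|x_S\|_1/B)^2)$ once $r_{max}$ is at least an absolute constant. For {\bf (2)} I would apply Lemma~\ref{lm:quant-exp}{\bf (1)} to the squared sequence $Y_r$ above, using that $(Z^{tail})^2 = \quant^{1/5}_r Y_r$, to conclude $\expect[(Z^{tail})^2] \leq 20\,\expect[Y_r] + \expect[|(Z^{tail})^2 - 20\expect[Y_r]|_+] = O(\|x_{\nsq\setminus S}\|_2^2/B)$.

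There is no real obstacle here; the only subtlety is the tail-term calculation $\expect_{a_r}[(e^{tail}_i)^2]=\mu^2_{H_r,i}(x_{\nsq\setminus S})$, which needs the cross-terms $\omega^{a_r\sigma(j-j')}$ with $j\neq j'$ to average to zero over a uniformly random $a_r\in\nsq$. This is the standard orthogonality of characters on $\mathbb{Z}_n$ and is essentially what underlies Lemma~\ref{lm:hashing}{\bf (3)}, so I would simply invoke that lemma in the form already stated rather than re-derive it.
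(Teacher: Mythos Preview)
Your proposal is correct and follows essentially the same route as the paper: obtain per-sample expectation bounds from Lemma~\ref{lm:hashing} and then feed those into Lemma~\ref{lm:quant-exp} with $\gamma=1/5$. The paper's own proof handles {\bf (1)} via the decomposition $Z^{head}\le |Z^{head}-40\mu|_+ + 40\mu$ and Lemma~\ref{lm:quant-exp}{\bf (2)}, exactly as you do (up to constants), and then waves at {\bf (2)}--{\bf (4)} as analogous.

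The one place you deviate slightly is for the tail term: instead of first applying Jensen to turn the second-moment bound from Lemma~\ref{lm:hashing}{\bf (2)}+{\bf (3)} into a first-moment bound on $e^{tail}_i(H_r,a_r,x)$ and then running Lemma~\ref{lm:quant-exp} on that, you square first and use that the $1/5$-quantile commutes with the monotone map $t\mapsto t^2$ on nonnegatives, so that $(Z^{tail})^2=\quant^{1/5}_r Y_r$ with $Y_r=(e^{tail}_i(H_r,a_r,x))^2$. This lets you apply Lemma~\ref{lm:quant-exp}{\bf (1)} and {\bf (3)} directly to the $Y_r$'s, which is clean and avoids the Jensen step. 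Both routes arrive at the same bounds; yours is arguably a bit more direct for {\bf (2)} and {\bf (4)}.
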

\begin{proof}
We have $Z^{head}\leq |Z^{head}-40\expect[Z^{head}]|_++40\expect[Z^{head}]$, so, since $(a+b)^2\leq 2a^2+2b^2$ for all $a, b\in \R$,
$$
\expect\left[(Z^{head})^2\right]\leq 2\expect\left[(|Z^{head}-40\expect[Z^{head}]|_+)^2\right]+2 (40\expect[Z^{head}])^2.
$$
By Lemma~\ref{lm:quant-exp}, {\bf (2)} we have $\expect\left[(|Z^{head}-40\expect[Z^{head}]|_+)^2\right]=O((\expect[Z^{head}])^2)$ as long as $r_{max}$ is larger than an absolute constant, as assumed by the lemma. An application of Lemma~\ref{lm:hashing}, {\bf (1)} now gives the first bound. The proof of the second claim is analogous using Lemma~\ref{lm:hashing}, {\bf (2)}.

Claims {\bf (3)} and {\bf (4)} follow similarly using Lemma~\ref{lm:quant-exp}.
\end{proof}

\subsection{Properties of \textsc{HashToBins}}\label{sec:hash2bins}
\begin{algorithm}
\caption{Hashing using Fourier samples (analyzed in Lemma~\ref{l:hashtobins})}\label{alg:hash2bins} 
\begin{algorithmic}[1] 
\Procedure{HashToBins}{$\wh{x}, \chi, (H, a)$}\Comment{Hashing $H=(\pi, B, F)$, $a\in [n]$}
\State Compute $y'=\wh{G}\cdot P_{\sigma, a, q}(\hat x-\hat \chi')$, 
for some $\chi'$ with $\norm{\infty}{\wh{\chi}-\wh{\chi}'}<\norm{2}{\chi}\cdot n^{-c}$\Comment{Using Lemma~\ref{lem:semi_equi_std} with $\delta=n^{-2c}$, $c\geq 2$}
\State Compute $u_j = \sqrt{n}\F^{-1}(y')_{(n/B)\cdot j}$ for $j \in [B]$
\State {\bf return} $u$
\EndProcedure 
\end{algorithmic}
\end{algorithm}

The main lemma about the performance of \textsc{HashToBins} is

\noindent{{\bf Lemma~\ref{l:hashtobins}} (Restated) \em 
  \Call{HashToBins}{$\wh{x}, \chi, (H, a)$}, where $H=(\pi, B, F)$, computes
  $u\in \C^{B}$ such that for any $i \in [n]$, $u_{h(i)} = \Delta_{h(i)} + \sum_j G_{o_i(j)}(x - \chi)_j \omega^{a\sigma j}$,  where $G$ is the filter defined in section~\ref{sec:prelim}, and for all $i\in [n]$ we have that $\Delta_{h(i)}^2 \leq \norm{2}{\chi}^2
\cdot n^{-c}$ is a negligible error term (and $c>0$ is an absolute constant that governs the precision that semi-equispaced FFT, i.e. Lemma~\ref{lem:semi_equi_std}, is invoked with).  It takes $O(B \fc)$
  samples, and $O(F\cdot B\log B+\norm{0}{\chi} \log n)$ time.
}

\begin{proof}
  The {\bf first step} (line~2) in \textsc{HashToBins} is to compute
 \[
  y'=\wh{G} \cdot P_{\sigma, a, q}\wh{x - \chi'}=\wh{G} \cdot P_{\sigma, a, q}\wh{x - \chi}+\wh{G} \cdot P_{\sigma, a, q}\wh{\chi - \chi'},
  \]
  for an approximation $\wh{\chi}'$ to $\wh{\chi}$ obtained using Lemma~\ref{lem:semi_equi_std}, {\bf (b)}.  We now verify the runtime and precision guarantees. Recall that $\supp \wh{G}\subseteq [-O(FB), O(F B)]$ by Lemma~\ref{lem:filter_properties}. This means that it is sufficient to compute $\wh{\chi}_i$ on the set $S\subseteq [n]$ defined as  $S=\{i\in [n]: \sigma(i-a)\in [-O(FB), O(FB)]\}$.
  By~Lemma~\ref{lem:semi_equi_std}, {\bf (b)}, an approximation $\wh{\chi}'$ to
  $\wh{\chi}$ can be computed in $O(F\cdot B\log n)$ time such
  that
  \[
  |\wh{\chi}_i-\wh{\chi}'_i|<\norm{2}{\chi}\cdot n^{-2c}
  \]
  for all such $i$.  Since $\norm{1}{\wh{G}} \leq
  \sqrt{n}\norm{2}{\wh{G}} = \sqrt{n}\norm{2}{G} \leq n
  \norm{\infty}{G} \leq n$ and $\wh{G}$ is $0$ outside $S$, this
  implies that
  \begin{equation}\label{eq:gh-norm}
    \norm{2}{\wh{G}\cdot P_{\sigma, a, q}(\wh{\chi-\chi'})}\leq \norm{1}{\wh{G}}\max_{j \in S} \abs{\wh{\chi-\chi'}_i} \leq \norm{2}{\chi}\cdot n^{1-2c}.
  \end{equation}
  Define $\Delta$ by $\wh{\Delta}=\sqrt{n}\wh{G}\cdot P_{\sigma, a, q}(\wh{\chi-\chi'})$.  
  
  The {\bf second step} (line~3) in \textsc{HashToBins} is to compute $u \in
  \C^B$ such that for all $i$,
  \[
  u_{h(i)} =
  \sqrt{n}\F^{-1}(y')_{(n/B)\cdot h(i)}=\sqrt{n}\F^{-1}(y)_{(n/B)\cdot h(i)}+\Delta_{(n/B)\cdot h(i)},
  \]
  for $y = \wh{G} \cdot P_{\sigma, a, q}\wh{x - \chi}$.  This
  computation takes $O(\norm{0}{y'} + B \log B) =O( F B \log n)$ 
  time (alias $y'$ to length $B$ and compute a length $B$ FFT).  We have by the convolution theorem (see~\eqref{eq:convolution-thm}) that
  \begin{align*}
    u_{h(i)} &= \sqrt{n}\F^{-1}(\wh{G} \cdot P_{\sigma, a, q}\wh{(x - \chi)})_{(n/B)\cdot h(i)} + \Delta_{(n/B)\cdot h(i)}\\
    &= (G * \F^{-1}(P_{\sigma, a, q}\wh{(x - \chi)}))_{(n/B)\cdot h(i)}+\Delta_{(n/B)\cdot h(i)}\\
    &= \sum_{\pi(j) \in [n]} G_{(n/B)\cdot h(i)-\pi(j)} \F^{-1}(P_{\sigma, a, q}\wh{(x - \chi)})_{\pi(j)}+\Delta_{(n/B)\cdot h(i)}\\
    &= \sum_{j \in [n]} G_{o_i(j)} (x - \chi)_{j} \omega^{a\sigma j}+\Delta_{(n/B)\cdot h(i)}
  \end{align*}
  where the last step is the definition of $o_i(j)$ and Lemma~\ref{lm:perm}.
  
Finally, we note that 
\[
|\Delta_{(n/B)\cdot h(i)}|\leq \norm{2}{\Delta} =\norm{2}{\wh{\Delta}}=\sqrt{n}\norm{2}{\wh{G}\cdot P_{\sigma, a, q}(\wh{\chi-\chi'})}\leq \norm{2}{\chi} n^{3/2-2c}\leq \norm{2}{\chi} n^{-c},
\]
 where we used \eqref{eq:gh-norm} and the assumption that $c\geq 2$ in the last step. This completes the proof.
\end{proof}

The following lemma is analogous to Lemma 9.4 of~\cite{IKP}, but does not make the assumption that the number of repetitions involved in the quantile operation is a constant. The proof is essentially the same, but is given below for completeness.
\begin{lemma}\label{lm:noisiest-buckets}
For every $\gamma\in (0, 1)$, integers $m, n\geq 1$ such that $n> 4/\gamma$, every sequence $X^1,\ldots, X^n\in \R_+^m$ of random variables with non-negative entries such that $X^j\in \R_+^n$ are independent, $\expect[X^j_i]\leq \nu$ for every $i=1,\ldots, m, j=1,\ldots, n$ and $\nu>0$, the following conditions hold. If for every $i=1,\ldots, m$ 
$$
Y_i=\quant^{\gamma} (X^1,\ldots, X^n),
$$
then for every $U$ between $1$ and $m$
$$
\expect\left[\max_{Q\subseteq [m], |Q|\leq U} \sum_{i\in Q} Y_i\right]\leq U\cdot (4e\nu/\gamma)\cdot (m/U)^{2/(\gamma n)}
$$
\end{lemma}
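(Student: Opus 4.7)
The plan is to reduce the quantity on the left-hand side to an integral of tail probabilities (layer-cake), invoke the tail bound for quantiles from Lemma~\ref{lm:quant-exp}(4) \emph{pointwise in $i$}, and then do a single-variable integral. Concretely, I would write
\[
\max_{|Q|\leq U}\sum_{i\in Q} Y_i \;=\; \sum_{i=1}^U Y_{(i)} \;=\; \int_0^\infty \min\!\bigl(N_t,\,U\bigr)\,dt,\qquad N_t := \bigl|\{i\in[m] : Y_i\geq t\}\bigr|,
\]
where $Y_{(1)}\geq Y_{(2)}\geq\ldots$ are the decreasing order statistics of $Y_1,\ldots,Y_m$. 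Taking expectations and using that $u\mapsto\min(u,U)$ is concave (so Jensen gives $\expect[\min(N_t,U)]\leq \min(\expect[N_t],U)$), together with linearity of expectation, yields
\[
\expect\!\Bigl[\max_{|Q|\leq U}\sum_{i\in Q} Y_i\Bigr] \;\leq\; \int_0^\infty \min\!\bigl(m\cdot \prob[Y_1\geq t],\,U\bigr)\,dt.
\]

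Next I would apply Lemma~\ref{lm:quant-exp}(4) to each coordinate: for fixed $i$, the scalars $X^1_i,\ldots,X^n_i$ are independent with mean at most $\nu$, so after the substitution $s=t\nu/\gamma$ the tail bound reads
\[
\prob[Y_i\geq s]\;\leq\;(0.99\,s\gamma/(e\nu))^{-0.99\gamma n}\qquad\text{for all }s\geq \nu/\gamma.
\]
Define the crossover $T^* := (e\nu/(0.99\gamma))\cdot(m/U)^{1/(0.99\gamma n)}$, i.e.\ the value of $t$ at which $m\cdot \prob[Y_i\geq t]=U$. I would split the integral at $T^*$: on $[0,T^*]$ use the trivial bound $U$ to get contribution $UT^*$; on $[T^*,\infty)$ use the tail to get
\[
\int_{T^*}^\infty m\bigl(0.99\,t\gamma/(e\nu)\bigr)^{-0.99\gamma n}\,dt \;=\; \frac{UT^*}{0.99\gamma n - 1},
\]
where the identity $(T^*)^{0.99\gamma n} = (e\nu/(0.99\gamma))^{0.99\gamma n}\cdot(m/U)$ makes the tail integral reduce cleanly. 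Summing,
\[
\expect\!\Bigl[\max_{|Q|\leq U}\sum_{i\in Q} Y_i\Bigr] \;\leq\; UT^*\cdot\frac{0.99\gamma n}{0.99\gamma n - 1}.
\]

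Finally I would absorb the constants. The hypothesis $n>4/\gamma$ gives $0.99\gamma n > 3.96$, hence $\tfrac{0.99\gamma n}{0.99\gamma n -1}\leq 4/3$, and combining with the $1/0.99$ inside $T^*$ produces a leading constant at most $4e$. For the exponent, $1/(0.99\gamma n)<2/(\gamma n)$ and $m\geq U$ imply $(m/U)^{1/(0.99\gamma n)}\leq (m/U)^{2/(\gamma n)}$, yielding the target bound $U\cdot(4e\nu/\gamma)\cdot(m/U)^{2/(\gamma n)}$.

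The main subtle point I expect is that the $Y_i$ need \emph{not} be independent across $i$ (they are all built from the common family $X^1,\ldots,X^n$), so one cannot get concentration on $N_t$. The trick that makes the argument go through is inserting the truncation $\min(\cdot,U)$ \emph{inside} the integral and pushing the expectation in via Jensen: this reduces everything to the marginal tail bound from Lemma~\ref{lm:quant-exp}(4), which requires only the per-coordinate independence that is given. The rest is a routine computation; one only needs to keep track of the constant factor produced by the condition $n>4/\gamma$ to make sure it fits under $4e/\gamma$.
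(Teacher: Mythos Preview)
Your proposal is correct and follows essentially the same route as the paper: layer-cake representation of the top-$U$ sum, Jensen applied to the concave $u\mapsto\min(u,U)$ to push the expectation inside, the pointwise quantile tail bound from Lemma~\ref{lm:quant-exp}(4), a split at a threshold chosen so that the tail integral balances against $U$, and then absorption of constants. The only cosmetic difference is that the paper first relaxes the tail bound $(0.99t/e)^{-0.99\gamma n}$ to the cleaner $(t/(2e))^{-\gamma n/2}$ before integrating (and the paper mislabels the concave $\min$ as ``convex''), whereas you carry the $0.99$'s through and clean up at the end; both computations land under the stated constant.
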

Note that the lemma assumes that $X^j\in \R^n_+$ are independent, but allows for the coordinates of each $X^j$ to be arbitrarily correlated.
\begin{proof}
First fix $i\in \{1, 2,\ldots, m\}$. By Lemma~\ref{lm:quant-exp}, {\bf (4)} we have 
for every $t\geq 1$ 
\begin{equation*}
\prob[Y\geq t\nu/\gamma]\leq ( 0.99t/e)^{-0.99\gamma n}\leq (t/(2e))^{-\gamma n/2}.
\end{equation*}

We thus have for $t\geq 1$
\begin{equation}\label{eq:20jg2g2}
\expect[|\{i: Y_i\geq t\nu/\gamma\}|]\leq m\cdot (t/(2e))^{-\gamma n/2},
\end{equation}
and hence for every threshold $\theta>0$ one has
\begin{equation}\label{eq:max-bound}
\begin{split}
\expect\left[\max_{Q\subseteq [m], |Q|\leq U} \sum_{i\in Q} Y_i \right]&\leq \expect\left[\max_{Q\subseteq [m], |Q|=U} \sum_{i\in Q} Y_i \right]\text{~~~~~~~~~~~~~~~~~~~~~(since $Y_i\geq 0$ for all $i$)}\\
&=\expect\left[\int_0^\infty \min(U, |\{i: Y_i>\eta\}|) d\eta\right]\\
&\leq \int_0^\infty \min(U, \expect\left[|\{i: Y_i>\eta\}|\right]) d\eta\text{~~~~~(by convexity of $\min(U, x)$ as a function of $x$)}\\
&\leq U\cdot \theta+\int_\theta^\infty \min(U, \expect\left[|\{i: Y_i>\eta\}|\right])  d\eta \\
&\leq U\cdot \theta+\int_\theta^\infty \min(U, m\cdot (\eta \gamma/(2e \nu))^{-\gamma n/2})d\eta \text{~~~~~~~~~~(by~\eqref{eq:20jg2g2}, as long as $\theta\geq \nu/\gamma$)}\\
&\leq U\cdot \theta+\int_\theta^\infty m\cdot (\eta \gamma/(2e \nu))^{-\gamma n/2}d\eta \\
&\leq U\cdot \theta+m\cdot (\gamma/(2e\nu))^{-\gamma n/2}\int_\theta^\infty \eta^{-\gamma n/2}d\eta \\
&\leq U\cdot \theta+m\cdot (\gamma/(2e\nu))^{-\gamma n/2} \frac1{\gamma n/2-1} \theta^{-\gamma n/2+1}\\
&\leq U\cdot \theta+m\cdot (\theta\cdot  \gamma/(2e\nu))^{-\gamma n/2} \cdot \theta \text{~~~~~~~~~(since $\gamma n/2>2$ by assumption)}\
\end{split}
\end{equation}
 We now let $\theta=(2e\nu/\gamma)\cdot (m/U)^{2/(\gamma n)}>\nu/\gamma$, so that 
$$
m\cdot (\theta\cdot \gamma/(2e\nu))^{-\gamma n/2}=k,
$$
 and substituting into~\eqref{eq:max-bound}, we get
\begin{equation*}
\begin{split}
\expect\left[\max_{Q\subseteq [m], |Q|\leq U} \sum_{i\in Q} Y_i \right]&\leq 2U\cdot \theta=U\cdot (4e\nu/\gamma)\cdot (m/U)^{2/(\gamma n)},
\end{split}
\end{equation*}
as required.
\end{proof}
\section{Signal location primitive and its analysis}\label{sec:location}

We reuse the location primitive from~\cite{K16} (\textsc{LocateSignal}, see Algorithm~\ref{alg:location}), but present it here with simplified notation adapted to the 1d setting (thus obviating the need for the $\star$ operation).
As in~\cite{K16}, we will use
\begin{definition}[Balanced set of points]\label{def:balance}
For an integer $\Delta\geq 2$  we say that a (multi)set $\mathcal{Z}\subseteq \nsq$ is {\em $\Delta$-balanced}  if for every $r=1,\ldots, \Delta-1$ at least $49/100$ fraction of elements in the set $\{\omega_{\Delta}^{r\cdot z}\}_{z\in \mathcal{Z}}$ belong to the left halfplane $\{u\in \C: \text{Re}(u)\leq 0\}$ in the complex plane, where $\omega_\Delta=e^{2\pi i/\Delta}$ is the $\Delta$-th root of unity.
\end{definition}

We will also need
\begin{claim}[Claim~2.14 of~\cite{K16}]\label{cl:balanced}
There exists a constant $C>0$ such that for any $\Delta$ a power of two, $\Delta=\log^{O(1)} n$,  and $n$ a power of $2$ the following holds if $\Delta<n$. If elements of a (multi)set $\A\subseteq \nsq\times \nsq$ of size $C\log\log n$ are chosen uniformly at random with replacement from $\nsq\times \nsq$, then with probability at least $1-1/\log^4 n$ one has that the set $\{\beta\}_{(\alpha, \beta)\in \A}$ is $\Delta$-balanced.
\end{claim}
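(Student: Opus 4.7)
\textbf{Proof plan for Claim~\ref{cl:balanced}.}

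The plan is to fix $r\in \{1,\ldots,\Delta-1\}$, show that a single uniformly random $\beta\in \nsq$ puts $\omega_\Delta^{r\beta}$ into the left halfplane with probability at least $1/2$, and then apply a Chernoff bound to the $|\A|=C\log\log n$ independent samples together with a union bound over $r$.

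First I would handle the single-coordinate computation. Since $\Delta$ and $n$ are both powers of $2$ with $\Delta<n$, we have $\Delta\mid n$, so for a uniformly random $\beta\in \nsq$ the residue $\beta\bmod \Delta$ is uniform on $\mathbb{Z}_\Delta$. Let $d=\gcd(r,\Delta)$ and $m=\Delta/d$; since $r\in[1,\Delta-1]$ we have $d<\Delta$ so $m\geq 2$, and $m$ is itself a power of $2$. Then $r\beta\bmod\Delta$ is uniform on the multiples of $d$ in $\mathbb{Z}_\Delta$, which means $\omega_\Delta^{r\beta}=e^{2\pi i (r\beta\bmod\Delta)/\Delta}$ is uniform on the $m$-th roots of unity. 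A direct count shows that exactly $\lfloor m/2\rfloor+1$ of the $m$ roots of unity satisfy $\mathrm{Re}(u)\leq 0$ when $m\geq 4$ (giving probability $1/2+1/m$), and exactly one of two when $m=2$ (giving $1/2$). In either case the probability is at least $1/2$.

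Next I would apply Chernoff concentration. For fixed $r$, let $X^r_{(\alpha,\beta)}=\mathbf{1}\{\mathrm{Re}(\omega_\Delta^{r\beta})\leq 0\}$ for $(\alpha,\beta)\in \A$. These are independent across the $(\alpha,\beta)\in \A$ (since only $\beta$ matters and the $\beta$'s are drawn independently), each has mean $p_r\geq 1/2$, and we want the empirical average to be at least $49/100$. Since $p_r-49/100\geq 1/100$ is a constant gap, a standard Chernoff bound (Theorem~\ref{thm:chernoff}) gives
\begin{equation*}
\Pr\!\left[\tfrac{1}{|\A|}\sum_{(\alpha,\beta)\in \A}X^r_{(\alpha,\beta)}<\tfrac{49}{100}\right]\leq e^{-c|\A|}=e^{-cC\log\log n}=(\log n)^{-cC}
\end{equation*}
for an absolute constant $c>0$.

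Finally, a union bound over $r\in\{1,\ldots,\Delta-1\}$ yields a total failure probability of at most $\Delta\cdot (\log n)^{-cC}$. Using the hypothesis $\Delta=\log^{O(1)}n$, say $\Delta\leq \log^D n$ for some absolute constant $D$, this is $(\log n)^{D-cC}$, which is bounded by $1/\log^4 n$ once $C$ is chosen large enough (namely $cC\geq D+4$). There is essentially no hard step here: the only thing to be careful about is the parity/power-of-two structure in the single-coordinate computation, which handles the edge case $m=2$ where the probability is exactly $1/2$ rather than strictly larger; since $1/2>49/100$, this does not hurt the Chernoff argument.
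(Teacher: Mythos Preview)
Your proposal is correct. The paper does not give its own proof of this claim; it is quoted verbatim from \cite{K16} (Claim~2.14 there), and your argument (uniformity of $r\beta\bmod\Delta$ on a power-of-two root-of-unity orbit, then Chernoff plus a union bound over $r<\Delta=\log^{O(1)}n$) is exactly the standard proof. One small remark: Theorem~\ref{thm:chernoff} as stated in the paper is an upper-tail bound, so to match your lower-tail deviation literally you should either apply it to the complementary indicators $1-X^r_{(\alpha,\beta)}$ (whose mean is at most $1/2$) or invoke the symmetric lower-tail Chernoff bound; either way the $e^{-c|\A|}$ conclusion is unchanged.
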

Since we only use one value of $\Delta$ in the paper (see line~4 in Algorithm~\ref{alg:location}), we will usually say that a set is simply `balanced'  to denote the $\Delta$-balanced property for this value of $\Delta$.
Before we state the algorithm and give the analysis, we need to introduce notation for bounding the influence of tail noise on the location process. We do this in the next section.

\subsection{Analysis of \textsc{LocateSignal}}

\begin{algorithm}[H]
\caption{Location primitive: given a set of measurements corresponding to a single hash function, returns a list of elements in $\nsq$, one per each hash bucket}\label{alg:location}
\begin{algorithmic}[1]
\Procedure{LocateSignal}{$\chi, H, \{m(x, H, \alpha+\h\cdot \beta\}_{(\alpha, \beta)\in \A, \h\in \H}$}\Comment{$H=(\pi, B, F)$}
\State Let $x':=x-\chi$. Compute $\{m(x', H, \alpha+\h\cdot \beta)\}_{(\alpha, \beta)\in \A, \h\in \H}$ using \textsc{HashToBins}, as per Lemma~\ref{l:hashtobins}.
\State $L\gets \emptyset$
\State $\Delta\gets 2^{\lfloor \frac1{2}\log_2 \log_2 n\rfloor}$
\State $N\gets \Delta^{\lceil\log_\Delta n\rceil}$ \Comment{Extend $\wh{x}$ implicitly to $\C^N$ periodically}
\For{$j \in [B]$}\Comment{Loop over all hash buckets, indexed by $j\in [B]$}
\State ${\bf f}\gets 0$
\For {$g=1$ to $\log_\Delta N$} 
\State $\h\gets N\Delta^{-g}$ \Comment{Note that $\h\in \H$}
\State {\bf If}~there exists a unique $r\in [0:\Delta-1]$ such that  
\State~~~~~$\left|\omega_{\Delta}^{-r\cdot \beta}\cdot \omega^{-(N\cdot \Delta^{-g} {\bf f}\cdot \beta}\cdot \frac{m_j(x', H, \alpha+\h\cdot \beta)}{m_j(x', H, \alpha)}-1\right|<1/3$ for at least $3/5$ fraction of $(\alpha, \beta)\in \A$
\State {\bf then}  ${\bf f}\gets {\bf f}+\Delta^{g-1}\cdot r$
\EndFor
\State $L \gets L \cup \left\{\sigma^{-1}{\bf f}\cdot \frac{n}{N}\right\}$ \Comment{Add recovered element to output list}
\EndFor
\State \textbf{return} $L$
\EndProcedure 
\end{algorithmic}
\end{algorithm}

Equipped with the definitions above,  we now prove the following lemma, which yields sufficient conditions for recovery of elements $i\in S$ in \textsc{LocateSignal} in terms of $e^{head}$ and $e^{tail}$.
\begin{lemma}\label{lm:loc}
Let $H=(\pi, B, F)$ be a hashing, and let $\A\subseteq \nsq\times \nsq$. Then for every $S\subseteq \nsq$ and for every $x, \chi\in \C^{\nsq}$ and $x'=x-\chi$, the following conditions hold.
Let $L$ denote the output of 
$$
\Call{LocateSignal}{\chi, H, \{m(x, H, \alpha+\h\cdot \beta)\}_{(\alpha, \beta)\in \A, \h\in \H}}.
$$

Then for any $i\in S$ such that $|x'_i|>n^{-\Omega(c)}$, if there exists $r\in [1:r_{max}]$ such that 
\begin{enumerate}
\item  $e^{head}_{i}(H, x')<|x'_i|/20$;
\item  $e^{tail}_{i}(H, \{\alpha+\h \cdot \beta\}, x')< |x'_i|/20$ for all $\h\in \H$;
\item the set $\{\beta\}_{(\alpha, \beta)\in \A}$ is balanced (as per Definition~\ref{def:balance}),
\end{enumerate}
 then $i\in L$. The time taken by the invocation of \textsc{LocateSignal} is $O(FB\log^2 n+||\chi||_0\log^2 n)$.
\end{lemma}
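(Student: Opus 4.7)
The plan is to analyze the inner loop of \textsc{LocateSignal} on the hash bucket $j := h(i)$ and show that the variable ${\bf f}$ computed there converges digit-by-digit in base $\Delta$ to $f^* := (\sigma i\cdot N/n)\bmod N$, after which the output $\sigma^{-1}{\bf f}\cdot n/N$ computed on line~13 equals $i$.

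\textbf{Stage 1 (phase and noise decomposition).} Define $u(z) := G_{o_i(i)}^{-1}\,m_j(x-\chi,H,z)\,\omega^{-z\sigma i}$, so by the definitions of $e^{head}_i$ and $e^{tail}_i$ in~\eqref{eq:eh-pi}-\eqref{eq:et-pi} combined with Lemma~\ref{l:hashtobins},
\[
  |u(z)-x'_i|\;\leq\; e^{head}_i(H,x') + e^{tail}_i(H,z,x') + n^{-\Omega(c)}.
\]
Using condition~(1) on the head term and the $1/5$-quantile bound from condition~(2) applied separately to $z=\alpha$ and $z=\alpha+N\Delta^{-g}\beta$, a union bound guarantees that at least $3/5$ of pairs $(\alpha,\beta)\in\A$ are \emph{good}, meaning both $|u(\alpha)-x'_i|$ and $|u(\alpha+N\Delta^{-g}\beta)-x'_i|$ are bounded by $|x'_i|/10+n^{-\Omega(c)}$. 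The hypothesis $|x'_i|>n^{-\Omega(c)}$ stabilizes the denominator, so routine algebra on good pairs gives
\[
  \frac{m_j(x',H,\alpha+\h\beta)}{m_j(x',H,\alpha)}\;=\;\omega^{\h\beta\sigma i}(1+\phi),\qquad |\phi|\leq 2/9+n^{-\Omega(c)}<1/3.
\]

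\textbf{Stage 2 (induction on digits of $f^*$).} Write $f^* = \sum_{g'=1}^{\log_\Delta N} a_{g'}\Delta^{g'-1}$ in base $\Delta$. The inductive hypothesis is that, entering iteration $g$, the variable ${\bf f}$ for bucket $j$ equals $\sum_{g'<g} a_{g'}\Delta^{g'-1}$. A short phase computation, discarding integer multiples of $2\pi$ contributed by digits $a_{g'}$ with $g'>g$ via $\beta\in\mathbb{Z}$ and cancelling the already-recovered lower digits with the ${\bf f}$ correction, shows that, for any candidate $r\in[0{:}\Delta-1]$,
\[
  \omega_\Delta^{-r\beta}\cdot\omega^{-N\Delta^{-g}{\bf f}\beta}\cdot\omega^{N\Delta^{-g}\beta\sigma i}\;=\;\omega_\Delta^{(a_g-r)\beta}.
\]
Combining with Stage~1, the test quantity inside line~10 equals $\omega_\Delta^{(a_g-r)\beta}(1+\phi)$ on every good pair.

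\textbf{Stage 3 (correctness and uniqueness of the recovered digit).} For $r=a_g$ the pristine factor is $1$, so the test inequality $|\cdot-1|<1/3$ holds on all good pairs and the pass fraction is $\geq 3/5$, so the test succeeds. For $r\neq a_g$ set $r'=(a_g-r)\bmod\Delta\neq 0$; by the balanced condition~(3) at least $49/100$ of $\beta$'s place $\omega_\Delta^{r'\beta}$ in the left halfplane, where $|\cdot-1|\geq\sqrt 2$, and intersecting with the good-pair set gives test value at distance $\geq\sqrt 2-2/9>1/3$ from $1$ there. A careful inclusion-exclusion between this set and the good-pair set forces the pass fraction to stay strictly below $3/5$ for every incorrect $r$, so the uniqueness clause on line~10 selects exactly $r=a_g$ and line~11 updates ${\bf f}\mathrel{+{=}}a_g\Delta^{g-1}$, closing the induction.

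\textbf{Wrap-up.} After the loop terminates for bucket $j=h(i)$ we have ${\bf f}=f^*$, hence $\sigma^{-1}{\bf f}\cdot n/N = i\in L$ on line~13. The runtime follows from the $O(FB\log n+\|\chi\|_0\log n)$ cost of each \textsc{HashToBins} invocation (Lemma~\ref{l:hashtobins}), repeated $|\H|\cdot|\A|$ times on line~2, together with the $O(|\H|\cdot|\A|\cdot\Delta)$ test-loop cost at each of the $B$ buckets; plugging in $|\H|=O(\log n/\log\log n)$, $|\A|=O(\log\log n)$ and $\Delta=O(\sqrt{\log n})$ yields the claimed $O(FB\log^2 n + \|\chi\|_0\log^2 n)$ bound.

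The principal obstacle is Stage~3: converting the marginal balanced property on $\{\beta\}$ into a joint pair-level lower bound on the failure fraction, since noise depends on $(\alpha,\beta)$ jointly. The specific constants $49/100$ (balanced), $1/20$ (quantile thresholds in conditions~1-2), and $1/3$ (test threshold) are calibrated so that the intersection of ``balanced'' and ``good noise'' pairs leaves enough witnesses to defeat every incorrect $r$ while the correct $r$ still receives the required $3/5$ support.
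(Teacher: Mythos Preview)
Your proposal is correct and takes essentially the same approach as the paper: induction on the base-$\Delta$ digits of $\sigma i$, decomposition of the measurement ratio into the clean phase $\omega^{\h\beta\sigma i}$ times a multiplicative perturbation $1+\phi$ with $|\phi|<1/3$ on the good pairs obtained by a union bound over the two quantile conditions, and the balanced property for uniqueness of the recovered digit. The paper's uniqueness argument is the same as your Stage~3 (it directly asserts that on the $\geq 49/100$ left-halfplane pairs the test quantity is at distance $\geq\sqrt{2}-1/3>1/3$ from $1$, hence fewer than $3/5$ can pass), and the runtime accounting is likewise identical.
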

\begin{proof}
Let $q=\sigma i$ for convenience. We show by induction on $g=1,\ldots, \log_{\Delta} N$ that after the $g$-th iteration of lines~9-12 of Algorithm~\ref{alg:location} we have that ${\bf f}$ coincides with ${\bf q}$ on the bottom $g\cdot \log_2 \Delta$ bits, i.e. ${\bf f}-{\bf q}= 0 \mod \Delta^g$ (note that we trivially have ${\bf f}< \Delta^g$ after iteration $g$).

The {\bf base} of the induction is trivial and is provided by $g=1$.  
We now show the {\bf inductive step}. Assume by the inductive hypothesis that ${\bf f}-{\bf q}= 0 \mod \Delta^{g-1}$, so that
${\bf q}={\bf f}+\Delta^{g-1}(r_0+\Delta r_1+\Delta^2 r_2+\ldots)$ for some sequence $r_0,r_1,\ldots$, $0\leq r_j<\Delta$. Thus,  $(r_0, r_1,\ldots)$ is the expansion of $({\bf q}-{\bf f})/\Delta^{g-1}$ base $\Delta$, and $r_0$ is the least significant digit. We now show that $r_0$ is the unique value of $r$ that satisfies the conditions of lines~10-11 of Algorithm~\ref{alg:location}.

First, we have by~\eqref{eq:uexp1} together with \eqref{eq:eh-pi} and ~\eqref{eq:et-pi} one has for each $(\alpha, \beta)\in \A$ and $\h\in \H$
\begin{equation*}
\begin{split}
\left|G_{o_i(i)}^{-1}m_{h(i)}(x', H, \alpha+\h\cdot \beta)-  x'_i \omega^{(\alpha+\h\cdot \beta) {\bf q}}\right|&\leq  e^{head}_i(H, x')+e^{tail}_i(H, \alpha+\h\cdot \beta, x)+n^{-\Omega(c)}.
\end{split}
\end{equation*}

Let $j:=h(i)$. We will show that $i$ is recovered from bucket $j$. The bounds above imply that 
\begin{equation}\label{eq:gergergre}
\begin{split}
\frac{m_j(x', H, \alpha+\h\cdot \beta)}{m_j(x', H, \alpha)}=\frac{x'_i \omega^{(\alpha+\h\cdot \beta) {\bf q}}+E'}{x'_i \omega^{\alpha {\bf q}}+E''}
\end{split}
\end{equation}
for some $E', E''$ satisfying $|E'|\leq e^{head}_i(H, x')+e^{tail}_i(H, \alpha+\h\cdot \beta, x)+n^{-\Omega(c)}$ and $|E''|\leq e^{head}_i(H, x')+e^{tail}_i(H, \alpha)+n^{-\Omega(c)}$. For all but $1/5$ fraction of $(\alpha, \beta)\in \A$ we have by definition of $e^{tail}$ (see~\eqref{eq:et-pi-a-h}) that {\bf both} 
\begin{equation}\label{eq:etail-bounds-eq-1}
e^{tail}_i(H, \alpha+\h\cdot \beta, x)\leq e^{tail}_i(H, \{\alpha+\h \cdot \beta\}, x)\leq |x'_i|/20
\end{equation}
 and 
\begin{equation}\label{eq:etail-bounds-eq-2}
 e^{tail}_i(H, \alpha, x)\leq e^{tail}_i(H, \{\alpha\}, x)\leq |x'_i|/20.
\end{equation}
In particular, we can rewrite ~\eqref{eq:gergergre} as
\begin{equation}\label{eq:gergergre-2}
\begin{split}
\frac{m_j(x', H, \alpha+\h\cdot \beta)}{m_j(x', H, \alpha)}&=\frac{x'_i \omega^{(\alpha+\h\cdot \beta) {\bf q}}+E'}{x'_i \omega^{\alpha {\bf q}}+E''}\\
&=\frac{\omega^{(\alpha+\h\cdot \beta) {\bf q}}}{\omega^{\alpha {\bf q}}}\cdot\xi\text{~~~where~~}\xi=\frac{1+\omega^{-(\alpha+\h\cdot \beta) {\bf q}}E'/x_i'}{1+\omega^{-(\alpha){\bf q}} E''/x_i'}\\
&=\omega^{(\alpha+\h\cdot \beta) {\bf q}-\alpha {\bf q}}\cdot\xi\\
&=\omega^{(\h\cdot \beta) {\bf q}}\cdot\xi.\\
\end{split}
\end{equation}

Let $\A^*\subseteq \A$ denote the set of values of $(\alpha, \beta)\in \A$ that satisfy the bounds~\eqref{eq:etail-bounds-eq-1} and~\eqref{eq:etail-bounds-eq-2} above.
We thus have for  $(\alpha, \beta)\in \A^*$, combining ~\eqref{eq:gergergre-2} with assumptions {\bf 1-2} of the lemma, that
\begin{equation}\label{eq:bound-1-oigb344tg32t}
|E'|/x_i'\leq  (2/20)+n^{-\Omega(c)}\leq 1/8\text{~~~and~~~~}|E''|/x_i'\leq  (2/20)+n^{-\Omega(c)}\leq 1/8
\end{equation}
for sufficiently large $n$, where $O(c)$ is the word precision of our semi-equispaced Fourier transform computation. Note that we used the assumption that $|x'_i|\geq n^{-\Omega(c)}$.

Writing $(\alpha, \beta)\in\nsq\times \nsq$, we have by~\eqref{eq:gergergre-2} that $\frac{m_j(x', H, \alpha+\h\cdot \beta)}{m_j(x', H, \alpha)}=\omega^{\h\cdot \beta {\bf q}}\cdot\xi$, and since $\h{\bf q}=n\Delta^{-g}{\bf q}$ when $\h=N \Delta^{-g}$ (as in line~8 of Algorithm~\ref{alg:location}), we get 
$$
\frac{m_j(x', H, \alpha+\h\cdot \beta)}{m_j(x', H, \alpha)}=\omega^{\h\cdot \beta {\bf q}}\cdot\xi=\omega^{n\Delta^{-g} \beta {\bf q}}\cdot\xi=\omega^{n\Delta^{-g} \beta {\bf q}}+\omega^{n\Delta^{-g} \beta {\bf q}}(\xi-1). 
$$
We analyze the first term now, and will show later that the second term is small. Since ${\bf q}={\bf f}+\Delta^{g-1}(r_0+\Delta r_1+\Delta^2 r_2+\ldots)$ by the inductive hypothesis, we have, substituting the first term above into the expression in line~10 of Algorithm~\ref{alg:location},
\begin{equation*}
\begin{split}
\omega_\Delta^{-r\cdot \beta}\cdot \omega^{-n\Delta^{-g}{\bf f}\cdot \beta}\cdot \omega^{n\Delta^{-g} \beta {\bf q}}&=\omega_\Delta^{-r\cdot \beta}\cdot \omega^{n\Delta^{-g}({\bf q}-{\bf f})\cdot \beta}\\
&=\omega_\Delta^{-r\cdot \beta}\cdot \omega^{n\Delta^{-g}(\Delta^{g-1}(r_0+\Delta r_1+\Delta^2 r_2+\ldots))\cdot \beta}\\
&=\omega_\Delta^{-r\cdot \beta}\cdot \omega^{(n/\Delta)\cdot (r_0+\Delta r_1+\Delta^2 r_2+\ldots)\cdot \beta}\\
&=\omega_\Delta^{-r\cdot \beta}\cdot \omega_{\Delta}^{r_0\cdot \beta}\\
&=\omega_\Delta^{(-r+r_0)\cdot \beta}.
\end{split}
\end{equation*}
We used the fact that $\omega^{n/\Delta}=e^{2\pi i (n/\Delta)/n}=e^{2\pi i/\Delta}=\omega_\Delta$ and $(\omega_{\Delta})^\Delta=1$. Thus, we have
\begin{equation}\label{eq:92hg34grggggdds}
\omega_{\Delta}^{-r\cdot \beta}\omega^{-(n2^{-g}{\bf f})\cdot \beta}\frac{m_j(x', H, \alpha+\h\cdot \beta)}{m_j(x', H, \alpha)}=\omega_\Delta^{(-r+r_0)\cdot \beta}+\omega_\Delta^{(-r+r_0)\cdot \beta}(\xi-1).
\end{equation}

We now consider two cases. First suppose that $r=r_0$. Then $\omega_\Delta^{(-r+r_0)\cdot \beta}=1$, and it remains to note that  by~\eqref{eq:bound-1-oigb344tg32t} we have $|\xi-1|\leq \frac{1+1/8}{1-1/8}-1\leq 2/7< 1/3$.
Thus every $(\alpha, \beta)\in \A^*$ passes the test in line~11 of Algorithm~\ref{alg:location}. Since $|\A^*|>(3/5)|\A|$ by the argument above, we have that $r_0$ passes the test in line~11. It remains to show that $r_0$ is the unique element in $0,\ldots, \Delta-1$ that passes this test.

Now suppose that $r\neq r_0$. Then by the assumption that $\{\beta\}_{(\alpha, \beta)\in \A}$ is balanced (assumption {\bf 3} of the lemma) at least $49/100$ fraction of $\omega_\Delta^{(-r+r_0)\cdot \beta}$ have negative real part.  This means that for at least $49/100$ of $(\alpha, \beta)\in \A$ we have using triangle inequality
\begin{equation*}
\begin{split}
\left|\left[\omega_\Delta^{(-r+r_0)\cdot \beta}+\omega_\Delta^{(-r+r_0)\cdot \beta}(\xi-1)\right]-1\right|&\geq \left|\omega_\Delta^{(-r+r_0)\cdot \beta}-1\right|-\left|\omega_\Delta^{(-r+r_0)\cdot \beta}(\xi-1)\right|\\
&\geq \left|\mathbf{i}-1\right|-1/3\\
&\geq \sqrt{2}-1/3> 1/3,
\end{split}
\end{equation*}
and hence the condition in line~11 of Algorithm~\ref{alg:location} is not satisfied for any $r\neq r_0$. This shows that location is successful and completes the proof of correctness.

{\bf Runtime.} We perform $|\A|\cdot |\H|=O(\log n)$ invocations of \textsc{HashToBins} in line~1 of the algorithm. Each invocation costs $O(FB\log B+||\chi||_0\log n)$ by Lemma~\ref{l:hashtobins}, for a total runtime of $O(FB\log B \log n+||\chi||_0\log^2 n)$ for line~1.

After this for each of $B$ buckets the algorithm performs decoding in blocks of $\log \Delta$ bits, amounting to $O(\log_\Delta n)$ iterations. The decoding of each block requires looping over $\Delta$ possibilities, and testing each against the evaluation points in $\A$. Since $|\A|=O(\log \log n)$. Thus the total runtime is $O(\log_\Delta n\cdot |\A|\cdot \Delta)=O(\Delta \log n)=O(\log^2 n)$, as $\log \Delta=\Theta(\log\log n)$ and $\Delta=O(\log n)$. The total runtime is thus $O(FB\log^2 n+||\chi||_0\log^2 n)$, as claimed.
\end{proof}

We also get an immediate corollary of Lemma~\ref{lm:loc}. 

\noindent{{\bf Lemma~\ref{cor:loc}} (Restated from section~\ref{sec:loc-tail-noise} \em
For any integer $r_{max}\geq 1$,  for any sequence of $r_{max}$ hashings $H_r=(\pi_r, B, R), r\in [1:r_{max}]$ and evaluation points $\A_r\subseteq \nsq\times \nsq$,  for every $S\subseteq \nsq$ and for every $x, \chi\in \C^{\nsq}, x':=x-\chi$, the following conditions hold.
If for each $r\in [1:r_{max}]$ $L_r\subseteq \nsq$ denotes the output of \Call{LocateSignal}{$\wh{x}, \chi, H_r, \{m(x, H_r, \alpha+\h\cdot \beta)\}_{(\alpha, \beta)\in \A_r, \h\in \H}$}, $L=\bigcup_{r=1}^{r_{max}} L_r$, and the sets $\{\beta\}_{(\alpha, \beta)\in \A_r}$ are balanced  $r\in [1:r_{max}]$, then
\begin{equation}
||x'_{S\setminus L}||_1\leq 20 ||e^{head}_S(\{H_r\}, x')||_1+20 ||e^{tail, \H}_S(\{H_r, \A_r\}, x)||_1+|S|\cdot n^{-\Omega(c)}.\tag{*}
\end{equation}
Furthermore, every element $i\in S$ such that 
\begin{equation}
|x'_i|>20 (e^{head}_i(\{H_r\}, x')+e^{tail, \H}_i(\{H_r, \A_r\}, x))+n^{-\Omega(c)}\tag{**}
\end{equation}
belongs to $L$.
}
\begin{proof}
Suppose that $i\in S$ fails to be located in any of the $R$ calls, and $|x'_i|\geq n^{-\Omega(c)}$. By Lemma~\ref{lm:loc} and the assumption that the sets $\{\beta\}_{(\alpha, \beta)\in \A_r}$ are balanced for all $r\in [1:r_{max}]$ this means that for at least one half of values $r\in [1:r_{max}]$  either {\bf (A)} $e^{head}_{i}(H_r, x')\geq |x'_i|/20$ or {\bf (B)} $e^{tail}_{i}(H_r, \{\alpha+\h\cdot \beta\}_{(\alpha, \beta)\in \A_r}, x)> |x'_i|/20$ for at least one $\h\in \H$. We consider these two cases separately.

\paragraph{Case (A).} In this case we have $e^{head}_{i}(H_s, x')\geq |x'_i|/20$ for at least one half of $r\in [1:r_{max}]$, so 
in particular $e^{head}_i(\{H_r\}, x')\geq \text{quant}^{1/5}_r e^{head}_{i}(H_r, x')\geq |x'_i|/20$.

\paragraph{Case (B).} Suppose that $e^{tail}_{i}(H_r, \{\alpha+\h\cdot \beta\}_{(\alpha, \beta)\in \A_r}, x)> |x'_i|/20$ for some $\h=\h(r)\in \H$ for at least one half of $r\in [1:r_{max}]$ (denote this set by $Q\subseteq [1:r_{max}]$). We then have
\begin{equation*}
\begin{split}
e^{tail, \H}_i(\{H_r, \A_r\}, x)&=\quant^{1/5}_{r\in [1:r_{max}]} e^{tail}_i(H_r, \A_r, x)\\
&=\quant^{1/5}_{r\in [1:r_{max}]} \left[40\mu_{H_r, i}(x)+\sum_{\h\in \H} \left|e^{tail}_i(H_r, \{\alpha+\h\cdot \beta\}_{(\alpha, \beta)\in \A_r}, x)-40\mu_{H_r, i}(x)\right|_+\right]\\
&\geq \min_{r\in Q} \left[40\mu_{H_r, i}(x)+\left|e^{tail}_i(H_r, \{\alpha+\h(r)\cdot \beta\}_{(\alpha, \beta)\in \A_r}, x)-40\mu_{H_r, i}(x)\right|_+\right]\\
&\geq \min_{r\in Q} e^{tail}_i(H_r, \{\alpha+\h(r)\cdot \beta\}_{(\alpha, \beta)\in \A_r}, x)\\
&\geq |x'_i|/20
\end{split}
\end{equation*}
 as required. This completes the proof of {\bf (*)} as well as {\bf (**)}.
\end{proof}

\section{Proof of Lemma~\ref{eq:tail-bound} (tail noise error bounds)}\label{app:tail-bounds}

We will use 
\begin{lemma}[Lemma~6.6 of~\cite{K16}]\label{lm:loc-tail-small}
For any constant $C'>0$ there exists an absolute constant $C>0$ such that for any $x\in \C^n$, any integer $k\geq 1$ and $S\subseteq \nsq$ such that $||x_{\nsq\setminus S}||_\infty\leq C'||x_{\nsq\setminus [k]}||/\sqrt{k}$, if $B\geq 1$, then the following conditions hold for $e^{tail, \H}$ defined with respect to $S$.

If hashings $H_r=(\pi_r, B, F), F\geq 2$ and sets $\A_r, |\A_r|\geq c_{max}$ for $r=1,\ldots, r_{max}$ are chosen at random, then
for every $i\in \nsq$ one has 
$\expect_{\{(H_r, \A_r)\}}\left[e^{tail, \H}_i(\{H_r, \A_r\}, x)\right]\leq  C (40+|\H|2^{-\Omega(c_{max})})  ||x_{\nsq \setminus [k]}||_2/\sqrt{B}$.

\end{lemma}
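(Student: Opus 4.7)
}
My plan is to bound $\expect[e^{tail,\H}_i(\{H_r,\A_r\},x)]$ by first dominating the outer $1/5$-quantile over $r$ by $5$ times the mean, which by identical distribution equals $5\expect[e^{tail,\H}_i(H,\A,x)]$ for a single hashing/sample pair; then bounding the two pieces of $e^{tail,\H}_i(H,\A,x) = 40\mu_{H,i}(x)+\sum_{\h\in\H}|e^{tail}_i(H,\{\alpha+\h\beta\},x)-40\mu_{H,i}(x)|_+$ separately.

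The first piece $40\mu_{H,i}(x)$ is bounded by Jensen's inequality and Lemma~\ref{lm:hashing}{\bf (2)} applied to the restriction $x_{\nsq\setminus S}$ (since $e^{tail}_i$ only sees $j\in\nsq\setminus(S\cup\{i\})$, the relevant noise level is $\mu_{H,i}(x_{\nsq\setminus S})$, which upper-bounds the quantity used to shift in the $|\cdot|_+$):
$$\expect_H[\mu_{H,i}(x_{\nsq\setminus S})]\le \sqrt{\expect_H[\mu^2_{H,i}(x_{\nsq\setminus S})]}\le \sqrt{C\|x_{\nsq\setminus S}\|_2^2/B}.$$
Using the hypothesis $\|x_{\nsq\setminus S}\|_\infty\le C'\|x_{\nsq\setminus[k]}\|_2/\sqrt{k}$ together with $|S|\le k$, I would split $\nsq\setminus S=(\nsq\setminus S)\cap[k]\;\cup\;(\nsq\setminus S)\setminus[k]$ to obtain $\|x_{\nsq\setminus S}\|_2^2\le ((C')^2+1)\|x_{\nsq\setminus[k]}\|_2^2$, yielding the $\|x_{\nsq\setminus[k]}\|_2/\sqrt{B}$ factor with the leading constant $40$.

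For the second piece, fix $H$ and $\h\in\H$. Since the pairs $(\alpha,\beta)\in\A$ are i.i.d.\ uniform in $\nsq\times\nsq$, for each fixed $\h$ the $c_{max}$ evaluation points $\{\alpha+\h\beta\}_{(\alpha,\beta)\in\A}$ are i.i.d.\ uniform in $\nsq$. A direct random-phase computation (as in the proof of Lemma~\ref{lm:hashing}{\bf (3)}) gives $\expect_z[(e^{tail}_i(H,z,x))^2]=\mu^2_{H,i}(x_{\nsq\setminus S})$ since the cross terms vanish. By Jensen, $\expect_z[e^{tail}_i(H,z,x)]\le \mu_{H,i}(x_{\nsq\setminus S})=:\mu^*$. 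Now I invoke Lemma~\ref{lm:quant-exp}{\bf (1)} with $\gamma=1/5$, $n=c_{max}$, and $X_s=e^{tail}_i(H,z_s,x)$, to conclude
$$\expect_z\bigl[\,|e^{tail}_i(H,\{\alpha+\h\beta\},x)-20\mu^*|_+\bigr]\le 5\mu^*\cdot 2^{-\Omega(c_{max})}.$$
Since $40\mu_{H,i}(x)\ge 20\mu_{H,i}(x_{\nsq\setminus S})=20\mu^*$, the same bound holds with $20\mu^*$ replaced by $40\mu_{H,i}(x)$ inside $|\cdot|_+$. Summing this bound over $\h\in\H$ introduces the factor $|\H|$, and taking expectation over $H$ (applying Jensen and Lemma~\ref{lm:hashing}{\bf (2)} as above) gives $5|\H|\cdot 2^{-\Omega(c_{max})}$ times an $O(\|x_{\nsq\setminus[k]}\|_2/\sqrt{B})$ term. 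Combining with the first piece and absorbing the outer factor of $5$ from the quantile-to-mean step into the constant $C$ delivers the claimed bound.

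\paragraph{Main obstacle.} The delicate point is the interpretation of $\mu_{H,i}(x)$ in the definition of $e^{tail,\H}_i$: the definition literally sums over all $j\in\nsq\setminus\{i\}$, so $\mu_{H,i}(x)$ may include head contributions from $S$, whereas the target bound is in terms of $\|x_{\nsq\setminus[k]}\|_2$ only. The key observation is that in the $|\cdot|_+$ term the shift merely needs to dominate a quantity of the form $\const\cdot\mu_{H,i}(x_{\nsq\setminus S})$, which it does (since $\mu_{H,i}(x)\ge \mu_{H,i}(x_{\nsq\setminus S})$), so the argument only ever needs the tail norm. A minor additional care point is the split $\nsq\setminus S=((\nsq\setminus S)\cap[k])\cup((\nsq\setminus S)\setminus[k])$ using $|S|\le k$ to pass from $\|x_{\nsq\setminus S}\|_2$ to $\|x_{\nsq\setminus[k]}\|_2$, which is where the hypothesis $\|x_{\nsq\setminus S}\|_\infty\le C'\|x_{\nsq\setminus[k]}\|_2/\sqrt{k}$ and the constant $C'$ enter.
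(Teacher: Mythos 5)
The paper itself never proves Lemma~\ref{lm:loc-tail-small}: it is imported from Lemma~6.6 of \cite{K16}, so there is no ``paper's own proof'' to compare against, and your argument has to stand on its own. Your overall scheme is the right one: dominate the outer $1/5$-quantile over $r$ by five times the mean, split $e^{tail,\H}_i(H,\A,x)$ into its free-standing $\mu$-piece and the sum of $|\cdot|_+$ pieces, control the $\mu$-piece with Jensen plus Lemma~\ref{lm:hashing}{\bf (2)}, control each $|\cdot|_+$ piece with Lemma~\ref{lm:quant-exp}{\bf (1)} over the i.i.d.\ evaluation points $\alpha+\h\beta$ (correct: for fixed $\h$ these are i.i.d.\ uniform in $\nsq$), and convert $\|x_{\nsq\setminus S}\|_2$ to $O_{C'}(\|x_{\nsq\setminus[k]}\|_2)$ via the $\ell_\infty$ hypothesis. (A side remark: this last conversion does not need $|S|\leq k$ as you claim; the split $\nsq\setminus S=((\nsq\setminus S)\cap[k])\cup((\nsq\setminus S)\setminus[k])$ only uses $|[k]|=k$, and indeed the lemma statement imposes no size constraint on $S$.)

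There is, however, a genuine gap precisely at the point you flag as ``the main obstacle,'' and your proposed resolution does not close it. You correctly observe that $\mu_{H,i}(x)$, as defined in the paper, sums over \emph{all} $j\neq i$, so $\expect_H[\mu_{H,i}(x)]$ can be $\Theta(\|x\|_2/\sqrt{B})\gg\|x_{\nsq\setminus[k]}\|_2/\sqrt{B}$ when $\|x_S\|_2$ dominates. Your fix --- ``the shift merely needs to dominate $\mathrm{const}\cdot\mu_{H,i}(x_{\nsq\setminus S})$ \ldots so the argument only ever needs the tail norm'' --- is valid for the $|\cdot|_+$ summands (an over-large shift only helps there), but it says nothing about the free-standing additive term $40\mu_{H,i}(x)$ in~\eqref{eq:et-pi-a}, which is exactly the term your ``first piece'' is supposed to bound. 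Your displayed estimate
\begin{equation*}
\expect_H[\mu_{H,i}(x_{\nsq\setminus S})]\le \sqrt{C\|x_{\nsq\setminus S}\|_2^2/B}
\end{equation*}
controls $\mu_{H,i}(x_{\nsq\setminus S})$, \emph{not} the $\mu_{H,i}(x)$ that actually appears in the definition. With the paper's literal definitions the claimed bound is false; for the lemma to hold one must read the $\mu_{H,i}(x)$ of~\eqref{eq:et-pi-a} as $\mu_{H,i}(x_{\nsq\setminus S})$ (consistent with its role as the per-bucket variance of the tail noise, and with the setup in \cite{K16}). That interpretive step is the hinge of the whole argument, not a minor care point, and you should state it explicitly rather than leave it implicit in the ``first piece'' computation while simultaneously describing it as an unresolved obstacle.
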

Note that this lemma was stated in~\cite{K16} with slightly different notation ($e^{tail}$ instead of $e^{tail, \H}$).

\begin{proofof}{Lemma~\ref{eq:tail-bound}}
First recall that by Lemma~\ref{lm:loc-tail-small} for every $t\in [1:T]$,  $s\in [1:R_t]$ and $i\in S$ one has 
$$
\expect_{H_{t, s}, \A_{t, s}}\left[e^{tail, \H}_i(H_{t, s}, \A_{t, s}, x)\right]=\nu^2,
$$
where $\nu^2\leq C' ||x_{\nsq\setminus S}||_2/\sqrt{B_t}$ for an absolute constant $C'>0$ (we used the fact that $|\H|=O(\log N)$ and $|\A|=C''\log\log N$ for a sufficiently large absolute constant $C''$).

To upper bound $\expect_{\{H_{t, s}, \A_{t, s}\}}\left[||e^{tail, \H}_{S_t}(\{H_{t, s}, \A_{t, s}\}, x)||_1\right]$, we note that by conditioning on $\E_{partition}$ we have 
$|S_t|\leq 2k\frac{R_0}{R_{t-1}}2^{-2^{(1-\delta)(t-1)}+1}$.
Letting $U:=2k\frac{R_0}{R_{t-1}}2^{-2^{(1-\delta)(t-1)}+1}$ to simplify notation, we get that
\begin{equation}\label{eq:23ht932hh23ht23t3ffcc}
\begin{split}
\expect_{\{H_{t, s}, \A_{t, s}\}}\left[||e^{tail, \H}_{S_t}(H_{t, s}, \A_{t, s}, x)||_1\right]&\leq \expect\left[\max_{Q\subseteq S, |Q|\leq U} ||e^{tail, \H}_Q(H_{t, s}, \A_{t, s}, x)||_1\right]\\
\end{split}
\end{equation}
We now recall that by~\eqref{eq:et}
\begin{equation*}
e^{tail, \H}_i(\{H_{t, s}, \A_{t, s}\}, x):=\quant^{1/5}_{s=1,\ldots, R_t} e^{tail, \H}_i(H_{t, s} \A_{t, s}, x),
\end{equation*}
 and apply Lemma~\ref{lm:noisiest-buckets} with $\gamma=1/5$, $m=|S|, n=R_t$ and 
$$
X_i^s=e^{tail, \H}_i(H_{t, s} \A_{t, s}, x)\text{~~~for~}i\in S\text{~and~}s=1,\ldots, R_t,
$$
so that $\expect_{H_{t, s}, \A_{t, s}}[X_i^s]\leq \nu$ for each $i\in S$, $s=1,\ldots, R_t$. Note that $Y_i:=\quant^{1/5}_{s=1,\ldots, R_t} X_i^s=e^{tail}_i(\{H_{t, s}, z_{t, s}\}, x)$ is exactly the quantity that we are interested in. 
We thus have by Lemma~\ref{lm:noisiest-buckets}
\begin{equation}\label{eq:1tgwoprg11}
\begin{split}
\expect_{\{H_{t, s}, \A_{t, s}\}}\left[\max_{Q\subseteq S, |Q|\leq U} ||e^{tail, \H}_Q(H_{t, s}, \A_{t, s}, x)||_1\right]&=\expect_{\{H_{t, s}, \A_{t, s}\}}\left[\max_{Q\subseteq S, |Q|\leq U} \sum_{i\in Q}Y_i\right]\\
&\leq U\cdot (20e\nu)\cdot \left(|S|/U\right)^{10/R_t}
\end{split}
\end{equation}

Since $R_{t'}=C_1 2^{t'}$ for every $t'$,  $|S|=|S_0|\leq 2k$ and $U=2k\frac{R_0}{R_{t-1}}2^{-2^{(1-\delta)(t-1)}+1}=2k2^{-2^{(1-\delta)(t-1)}+1-(t-1)}$, we have
$$
\left(|S|/U\right)^{10/R_t}=2^{10(2^{(1-\delta)(t-1)}-1+(t-1))/(C_1 2^t)}\leq 2^{10(1+(t-1)/2^t)/C_1}\leq 2^{20/C_1}\leq 2
$$
for all $t\geq 0$ as long as $C_1>20$. Substituting the above into~\eqref{eq:1tgwoprg11}, we get 
\begin{equation*}
\expect\left[\max_{Q\subseteq S, |Q|\leq U} ||e^{tail}_Q(\{H_{t, s}, a_{t, s}\}_{s\in [1:R_{t}]}, x)||_1\right]\leq (40e) \cdot U\cdot \nu.
\end{equation*}

We thus get by combining the above with ~\eqref{eq:23ht932hh23ht23t3ffcc}
$$
\expect_{H_{t, s}, \A_{t, s}}\left[||e^{tail, \H}_{S_t}(H_{t, s}, \A_{t, s}, x)||_1\right]=(40 e C') U ||x_{\nsq\setminus [k]}||_2/\sqrt{B_t}.
$$

We now use assumption {\bf q2} of the lemma to upper bound 
\begin{equation}\label{eq:92hg423g}
\begin{split}
(40e C') U/\sqrt{B_t}&\leq (40e C')\left(2k\frac{R_0}{R_{t-1}}\cdot 2^{-2^{(1-\delta)(t-1)}+1}\right)/\sqrt{C_2 2k/ R_t^2}\\
&\leq \frac{\sqrt{k}}{R_{t-1}}\cdot \frac1{R_t}\cdot \left(\frac{(80e C') C_1}{\sqrt{2C_2}} R_t^2\cdot 2^{-2^{(1-\delta) (t-1)}+1}\right)\\
&= \frac{\sqrt{k}}{R_{t-1}}\cdot \frac1{R_t}\cdot \left(\frac{(80e C') C_1^2}{\sqrt{2C_2}} 2^{2t}\cdot 2^{-2^{(1-\delta)(t-1)}+1}\right)\\
&\leq \frac1{2}\frac{\sqrt{k}}{R_{t-1}}\cdot \frac1{R_t}
\end{split}
\end{equation}
as long as $C_2$ is sufficiently large as a function of $C'$ and $C_1$ (to ensure that $\frac{(80e C') C' C_1^2}{\sqrt{2C_2}} 2^{2t}2^{-2^{(1-\delta)(t-1)}}\leq  1$ for all $t\geq 1$; see Claim~\ref{cl:max-expr}).
Substituting this bound into the upper bound on the expectation above yields $\expect_{\{H_{t, s}, \A_{t, s}\}}\left[||e^{tail, \H}_{S_t}(\{H_{t, s}, \A_{t, s}\}, x)||_1\right]\leq \frac1{R_t}\cdot ||x_{\nsq\setminus [k]}||_2 \sqrt{k}/R_{t-1}$
for every $t\geq 1$. It now follows by Markov's inequality that for every $t\geq 1$
$$
\prob_{\{H_{t, s}\}_{s\in [1:R_t]}}\left[||e^{tail, \H}_{S_t}(\{H_{t, s}, \A_{t, s}\}, x)||_1> \frac1{200}||x_{\nsq\setminus [k]}||_2 \sqrt{k}/R_{t-1}\right]\leq 200/R_t.
$$
By a union bound over $t=1,\ldots, T$ we have
\begin{equation*}
\begin{split}
&\prob_{\{\{H_{t, s}\}_{s\in [1:R_t]}\}_{t=1}^T}\left[||e^{tail, \H}_{S_t}(\{H_{t, s}, \A_{t, s}\}, x)||_1\leq \frac1{200}||x_{\nsq\setminus [k]}||_2 \sqrt{k}/R_{t-1}\text{~for all~}t=1,\ldots, T\right]\\
&\geq 1-\sum_{t=1}^T 200/R_t\geq 1-\sum_{t=1}^T 200/(C_1 2^t) \geq 1-O(1/C_1),
\end{split}
\end{equation*}
which gives the result as long as $C_1$ is larger than a constant, as required. Letting $\E_{small-noise}$ denote the intersection of the success events above completes the proof.
\end{proofof}

\section{Semi-equispaced Fourier transform}\label{sec:semi_equi}

One of the steps of our algorithm is to take the Fourier transform of our current estimate of $x$, so that it can be subtracted off in frequency domain and we can work with the residual.  The \emph{semi-equispaced FFT} provides an efficient method for doing this, and is based on the application of the standard inverse FFT to a filtered and downsampled signal. The following guarantee, which we rely on, was given in \cite[Sec.~12]{IKP}:

\begin{lemma} \emph{\cite[Lemma 12.1, Cor.~12.2]{IKP}} \label{lem:semi_equi_std}
    {\bf (a)} Fix a power of two $n$ and a constant $\delta > 0$.  For every $x \in \C^n$  the procedure \textsc{SemiEquiFFT}($x,k,\delta$) returns a set of values $\{ \wh{y}_j\}_{|j| \le k/2}$ in time $O(\| x\|_0\log (1/\delta)+k \log k)$, satisfying
    $$ |\wh{y}_j - \wh{x}_j| \le \delta \|x\|_2 $$
    for every $j, |j|\leq k/2$.
    
    {\bf (b)} Given two additional parameters $\sigma,\Delta \in [n]$ with $\sigma$ odd, it is possible to compute a set of values $\{ \wh{y}_j\}$ for all $j$ equaling $\sigma j' + \Delta$ for some $j'$ with $|j'| \le k/2$, with the same running time and approximation guarantee.
\end{lemma}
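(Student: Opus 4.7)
The aim is to reduce $k$ consecutive low-frequency samples of $\hat x$ to one length-$\Theta(k)$ FFT, preceded by an $O(\log(1/\delta))$-per-input-nonzero preprocessing step that ``pre-filters'' against aliasing. Take an $(n, n/k, F)$-flat filter $G$ via Lemma~\ref{lem:filter_properties} with sharpness $F = \Theta(\log(1/\delta))$; by construction $G_j \ge 1-(1/4)^{F-1}$ for $|j| \le k/2$, $|G_f|$ decays polynomially for $|f|\ge k$, and the inverse DFT $\hat G$ has time support of length $L = O((n/k)\log(1/\delta))$.

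Setting $B := 2k$, the algorithm forms $y \in \C^B$ directly from the sparse $x$ by
\[
y_i \;=\; \sum_{m\in\supp x}\hat G_{i(n/B)-m}\,x_m,\qquad i\in[B],
\]
which by the convolution theorem is a subsample of the inverse DFT of $G\cdot \hat x$. We then apply a length-$B$ FFT to $y$ and output $\hat x_j \approx c \cdot \hat y_j / G_j$ for $|j|\le k/2$, where $c=c(n,B)$ is an explicit normalizing constant. For the runtime, each $m\in\supp x$ contributes to only those $y_i$ with $i(n/B)-m\in\supp\hat G$, a set of size $|\supp\hat G|/(n/B) = O(\log(1/\delta))$; hence $y$ is assembled in $O(\|x\|_0\log(1/\delta))$ time and the subsequent length-$B$ FFT costs $O(k\log k)$.

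For accuracy, Poisson summation gives $\hat y_j \propto \sum_{\ell\in\Z} G_{j+\ell B}\,\hat x_{j+\ell B}$, so the reconstruction error decomposes into a central sidelobe $((1-G_j)/G_j)|\hat x_j|\le 2(1/4)^{F-1}\|x\|_2$ and an aliasing term $(1/|G_j|) \bigl|\sum_{\ell\ne 0}G_{j+\ell B}\hat x_{j+\ell B}\bigr|$. Bounding the aliasing by Cauchy--Schwarz and Parseval yields $2\|x\|_2\cdot\sqrt{\sum_{\ell\ne 0}G_{j+\ell B}^2}$, and since $|j+\ell B|\ge (2|\ell|-\tfrac12)k$ for $\ell\ne 0$, the polynomial decay of Lemma~\ref{lem:filter_properties}(3) gives $\sum_{\ell\ne 0}G_{j+\ell B}^2=O((1/16)^{F-1})$. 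Thus the total error is $O((1/4)^{F-1})\|x\|_2$, and choosing $F=\Theta(\log(1/\delta))$ delivers the $\delta\|x\|_2$ guarantee. Part (b) then reduces to part (a) by the preprocessing $x'_m := x_{\sigma^{-1}m}\,\omega^{-\Delta\sigma^{-1}m}$ (well defined since $\sigma$ is odd and hence invertible mod $n$), which satisfies $\hat{x'}_{j'} = \hat x_{\sigma j' + \Delta}$, preserves sparsity and $\ell_2$ norm, and costs only $O(\|x\|_0)$ additional time.

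\textbf{Main obstacle.} The delicate piece is controlling the aliasing tail $\sqrt{\sum_{\ell\ne 0}G_{j+\ell B}^2}$ uniformly in $j$ despite the sum running over up to $n/B$ cosets. The polynomial-tail bound in Lemma~\ref{lem:filter_properties}(3) is essential: its $(k/|f|)^{F-1}$ decay makes the sum $n$- and $k$-independent once $F\ge 2$, so the sharpness $F$ needs to scale only with $\log(1/\delta)$ and not with $\log n$ or $\log k$ — this is exactly what keeps the preprocessing cost at $O(\|x\|_0\log(1/\delta))$ rather than something larger.
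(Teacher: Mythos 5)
The paper does not prove this lemma: it imports it verbatim from \cite{IKP} (their Lemma~12.1 and Corollary~12.2), together with a remark that the stated generality follows from the proof given there. There is therefore no in-paper argument to compare against, and what you have supplied is a reconstruction of the cited external result. Your reconstruction is the standard semi-equispaced FFT argument, which is also what \cite{IKP} uses: pre-filter by circular convolution with the compactly supported $\wh{G}$, subsample to length $B=2k$, take a single length-$B$ FFT, and control the aliasing tail through the polynomial decay of $G$ outside the central window. The runtime accounting (each $m\in\supp x$ touches $O(|\supp\wh{G}|/(n/B))=O(\log(1/\delta))$ entries of $y$), the Cauchy--Schwarz/Parseval bound on the aliasing term, and the reduction of part~(b) to part~(a) via the permutation-and-modulation change of variables $x'_m := x_{\sigma^{-1}m}\,\omega^{-\Delta\sigma^{-1}m}$ (valid since $\sigma$ is odd and hence invertible mod $n$) are all correct.

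One minor bookkeeping slip: since you output $c\,\wh{y}_j/G_j$, the ``central sidelobe'' term $\bigl((1-G_j)/G_j\bigr)|\wh{x}_j|$ does not actually appear in the error. Writing $c\,\wh{y}_j=\sum_{\ell}G_{j+\ell B}\wh{x}_{j+\ell B}$ and subtracting $\wh{x}_j$ gives exactly $G_j^{-1}\sum_{\ell\ne 0}G_{j+\ell B}\wh{x}_{j+\ell B}$, i.e.\ the aliasing term alone; the central-sidelobe term would arise only if you returned $c\,\wh{y}_j$ without the $G_j^{-1}$ correction. This is a harmless overcount of the same $O((1/4)^{F-1})\|x\|_2$ order and does not affect the conclusion, but it is worth being precise about which normalization you are carrying through.
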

\begin{remark}
We note that Corollary 12.2 is not stated in this form, but rather for the special case when the sparsity of the signal that we are working with is comparable with the length of the interval. The more general bounds stated above follow immediately from their proof.
\end{remark}

We will also use 
\begin{lemma} \emph{\cite[Lemma 12.3]{IKP}} \label{lem:semi_equi_std_inv}
   Fix a power of two $n$ and a constant $\delta > 0$.  For every integer $k>1$, every $S\subseteq [n]$, $|S|=k$, every $\wh{x} \in \C^n$ such that $\supp (\wh{x})\subseteq [-k, k]$ it is possible to compute 
a set of values $\{ {y}_j\}_{j\in S}$ in time $O(k\log (n/\delta))$ satisfying
    $$ |{y}_j - {x}_j| \le \delta \|x\|_2. $$
    
\end{lemma}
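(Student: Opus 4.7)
The plan is to dualize the forward semi-equispaced FFT of Lemma~\ref{lem:semi_equi_std}: because $\hat x$ has support contained in the small interval $[-k,k]$, the time-domain signal $x$ is a trigonometric polynomial of ``degree'' $k$, so it is completely determined by $\Theta(k)$ equispaced time-domain samples, which can then be interpolated to the arbitrary target set $S$ via a well-localized kernel (a standard NUFFT-style strategy).

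First, take $B$ to be the smallest power of two with $B \ge 4k$ (so $B = \Theta(k)$) and form a length-$B$ vector $\hat z$ by placing $\hat x_f$ in position $f \bmod B$ for $f \in [-k,k]$, and zero elsewhere. A length-$B$ inverse FFT returns, exactly (no aliasing, since $[-k,k] \subset [-B/2, B/2)$), the values $x_{\ell n/B}$ for $\ell \in [B]$, up to a $\sqrt{B/n}$ rescaling (using $\omega_B^{\ell f} = \omega^{(\ell n/B)f}$). This stage costs $O(B \log B) = O(k \log k)$.

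Second, choose an interpolation kernel $K$ of the same flat-filter type as Lemma~\ref{lem:filter_properties}, but with the roles of time and frequency exchanged: $\hat K_f = 1$ for $|f| \le k$, $\hat K_f \in [0,1]$ on $[-B/2, B/2]$, sharpness $F$, and $\hat K_f = 0$ for $|f| > B/2$. The corresponding time-domain sequence $K$ has decay $|K_t| \lesssim (n/(B\,|t|_\circ))^{F-1}$ for $|t|_\circ \ge n/B$. A short Poisson-summation calculation, exploiting $\hat x \cdot \hat K = \hat x$ on the support of $\hat x$ together with $B > 2k$, yields the exact band-limited interpolation identity
\[ x_j \;=\; \frac{\sqrt{n}}{B}\sum_{\ell \in [B]} K\!\left(j - \ell n/B\right)\, x_{\ell n/B}. \]
Truncating this sum to $\ell$ with $|j - \ell n/B|_\circ \le W \cdot (n/B)$ incurs, by the decay of $K$ and Cauchy--Schwarz against the grid sample vector (whose $\ell_2$ norm is $O(\|x\|_2)$ by Parseval applied to Step~1), an error of at most $W^{-\Omega(F)} \cdot \|x\|_2$. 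Picking $F$ a small constant and $W = O(\log(n/\delta))$, or letting $F$ grow mildly with $\log(n/\delta)/\log\log n$, drives this below $\delta\|x\|_2$, and each $j \in S$ uses only $O(W) = O(\log(n/\delta))$ operations, for a total interpolation cost of $O(k\log(n/\delta))$.

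Summing the two stages gives the claimed $O(k \log(n/\delta))$ runtime. The main technical point requiring care is the filter design: $K$ must combine exact unit gain on $[-k,k]$ with polynomial time-decay of sufficient order, together with controlled behavior on the transition band $(k, B/2)$. This is precisely the design problem already solved by Lemma~\ref{lem:filter_properties} (with time and frequency exchanged), so that construction can be reused essentially verbatim; the oversampling $B > 2k$ is what provides the transition band in which $\hat K$ can be made smooth enough to guarantee the claimed time-domain decay.
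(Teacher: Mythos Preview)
The paper does not prove this lemma itself; it merely cites \cite[Lemma~12.3]{IKP}. Your plan---compute $x$ exactly on an oversampled equispaced grid of spacing $n/B$ via a length-$B=\Theta(k)$ inverse FFT, then interpolate to the arbitrary target set $S$ with a well-localized kernel---is the standard non-uniform FFT construction and is the approach taken in the cited reference as well, so in spirit you are doing exactly what the paper relies on.

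One quantitative point needs correction. With sharpness $F$ and window radius $W$, your truncation error is $W^{-\Omega(F)}\|x\|_2$. The first parameter choice you offer, ``$F$ a small constant and $W=O(\log(n/\delta))$'', therefore yields error only $\big(\log(n/\delta)\big)^{-O(1)}\|x\|_2$, which is far larger than $\delta\|x\|_2$. To reach precision $\delta$ with a polynomially decaying filter one must let $F$ scale with $\log(1/\delta)$: the cleanest choice is $W=O(1)$ and $F=\Theta(\log(n/\delta))$, giving error $2^{-\Omega(F)}\le\delta$ and per-point cost $O(F)=O(\log(n/\delta))$ for evaluating the kernel at $O(1)$ grid samples. (Your second option, $F\sim\log(n/\delta)/\log\log n$ with $W=O(\log(n/\delta))$, also works but picks up an extra $\log/\log\log$ factor once the $O(F)$ kernel-evaluation time is accounted for.) A separate minor point: Lemma~\ref{lem:filter_properties} only guarantees passband value $\ge 1-(1/4)^{F-1}$, not exactly $1$, so reusing it ``verbatim'' for $\hat K$ makes the Poisson-summation identity only approximate, incurring an additional $O(4^{-F})\|x\|_2$ error; this is harmless once $F=\Theta(\log(n/\delta))$, but should be stated. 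With these adjustments your argument is complete.
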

\iflong{

\section{Basic theorems}\label{app:basic}
\subsection{Basic identities involving the Fourier transform}

Recall that we use the following normalization of the Fourier transform (as per~\eqref{eq:dft-forward}):
\begin{equation}\label{eq:dft-forward-app}
\hat x_f=\frac1{\sqrt{n}}\sum_{i\in \nsq}  \omega^{-if}x_i \text{~~and~~}x_{j}=\frac1{\sqrt{n}}\sum_{f\in \nsq}  \omega^{jf}\hat x_f
\end{equation}
We also use  $\F$ and  $\F^{-1}$ to denote the forward and inverse Fourier transforms respectively. Covolution is denoted by  $(x*y)_i=\sum_{j\in \nsq} x_{i-j} y_j$.
With this normalization of the Fourier transform the convolution theorem takes form 
\begin{equation}\label{eq:convolution-thm}
\begin{split}
({\mathcal F}^{-1}(\wh{x}\cdot \wh{y}))_i&=\frac1{\sqrt{n}}\sum_{f\in \nsq}  \omega^{if} \wh{x}_f \cdot \wh{y}_f\\
&=\frac1{\sqrt{n}}\sum_{f\in \nsq}  \omega^{if} \frac1{n}\sum_{i', i''\in \nsq} x_{i'}y_{i''} \omega^{-f (i'+i'')}\\
&=\frac1{\sqrt{n}} \sum_{i', i''\in \nsq} x_{i'} y_{i''} \cdot \frac1{n}\sum_{f\in \nsq}\omega^{f (i'+i''-i)}\\
&=\frac1{\sqrt{n}} \sum_{f'\in \nsq} x_{f'} y_{i-i'}\\
\end{split}
\end{equation}


\subsection{Proof of Claim~\ref{cl:max-expr}}

We restate the claim here for convenience of the reader:

{\noindent {\bf Claim~\ref{cl:max-expr}} \em
For every $C_1, C_2> 0, \delta\in (0, 1)$ there exists $C_3$ such that for every $C_4\geq C_3$ one has 
$\frac1{C_4}2^{C_1 t}\cdot 2^{-C_2 2^{(1-\delta)t}+1}\leq 1$ all $t\geq 0$.
}

\begin{proof}
One has $2^{C_1 t}\cdot 2^{-C_22^{(1-\delta)t+1}}=2^{C_1 t-C_2 2^{(1-\delta)t}+1}$, so it suffices to note that since $\delta<1$ by assumption of the claim,
\begin{equation*}
\max_{t\geq 0} (2C_1 t-C_2^{(1-\delta)t})
\end{equation*}
is a constant(that may depend on $C_1, C_2$ and $\delta$. Thus, the claim follows for sufficiently large $C_3$ (as a function of $C_1, C_2, \delta$).

\end{proof}

}
\end{appendix}


\newcommand{\etalchar}[1]{$^{#1}$}

\end{document}